\renewcommand{\Re}{\operatorname{Re}}
\renewcommand{\Im}{\operatorname{Im}}
\DeclareMathOperator{\im}{Im}
\newcommand{\lev}{\mathrm{lev}}
\newcommand{\sym}{\rho}
\newcommand{\fun}{\tau}
\newcommand{\vecmu}{\underline{\mu}}
\newcommand{\vectmu}{\underline{\tilde\mu}}
\newcommand{\snu}{\scriptscriptstyle{[\nu]}}
\newcommand{\ceili}[1]{ \left\lceil #1 \right\rceil}
\newcommand{\floori}[1]{ \left\lfloor #1 \right\rfloor}
\newcommand{\afr}[1]{ \left\langle #1 \right\rangle}
\renewcommand{\Re}{\operatorname{Re}}
\renewcommand{\Im}{\operatorname{Im}}
\newcommand{\bb}[1]{\mathbb{#1}}
\newcommand{\mc}[1]{\mathcal{#1}}
\newcommand{\e}{\varepsilon}
\newcommand{\la}{\lambda}
\newcommand{\C}{\mathbb{C}}
\newcommand{\R}{\mathbb{R}}
 \DeclareMathOperator{\res}{Res}
\theoremstyle{plain}
\newtheorem{theorem}{Theorem}[section]
\newtheorem{lemma}[theorem]{Lemma}
\newtheorem{proposition}[theorem]{Proposition}
\theoremstyle{definition}
\newtheorem{definition}[theorem]{Definition}
\theoremstyle{remark}
\newtheorem{remark}[theorem]{Remark}
\numberwithin{equation}{section}
\definecolor{light}{gray}{.9}
\title{On solutions of the Bethe Ansatz for the Quantum KdV model}
\author{R. Conti, D. Masoero}
\address{Grupo de F\'isica Matem\'atica da Universidade de Lisboa, Edif\'icio C6
Campo Grande, Lisboa, Portugal.}
\email{dmasoero@gmail.com}
\email{riccardo.conti92@gmail.com }
\begin{document}

\pagestyle{plain}

\begin{abstract}
We study the Bethe Ansatz Equations
for the Quantum KdV model, which are also known to be solved by the spectral determinants of a specific
family of anharmonic oscillators called monster potentials (ODE/IM correspondence).

These Bethe Ansatz Equations depend on two parameters 
 identified with the momentum and the degree at infinity of the anharmonic oscillators.
We provide
a complete classification of the solutions with only real and positive roots -- when the degree is greater than $2$ --
in terms of admissible sequences of holes. In particular, we prove
that admissible sequences of holes are naturally parameterised by integer partitions, and
we prove that they are in one-to-one correspondence with solutions of the Bethe Ansatz Equations, if the momentum is large enough.

Consequently, we deduce that the monster potentials are complete, in the sense that every solution of the Bethe Ansatz Equations
coincides with the spectrum of a unique monster potential. This
essentially (i.e. up to gaps in the previous literature)
proves the ODE/IM correspondence for the Quantum KdV model/monster potentials -- which was conjectured by Dorey-Tateo and
Bazhanov-Lukyanov-Zamolodchikov -- when the degree is greater than $2$.\\
Our approach is based on the transformation of the Bethe Ansatz Equations into a
free-boundary nonlinear integral equation -- akin to the equations known in the physics literature as
DDV or KBP or NLIE -- of which we develop the mathematical theory from the beginning.

\end{abstract}

\maketitle

\tableofcontents

\section{Introduction}
The Bethe Ansatz Equations (BAE) are arguably among the most important equations in Mathematical/Theoretical/Condensed-Matter Physics,
as well as the cornerstone of quantum integrability. They originated in the famous paper \cite{bethe31} of H. Bethe, who proposed his Ansatz to solve the XXX model.
Since then, and especially after the works of R. Baxter and of the Leningrad school led by L. Faddeev
(see \cite{baxter16,takhtadzhyan79}),
a wealth of quantum models have been found to be integrable by means of the Bethe Ansatz.\footnote{We refer the reader to \cite{baxter16,faddeev95}
for some historical context.}

In this paper we study the solutions of a two-parameters
family of BAE for 
an infinite dimensional system -- the parameters being the `momentum' $p \geq 0$ and the `degree' $2\alpha >0$.
The unknown is an entire function $Q$
of order $\frac{1+\alpha}{2\alpha}$ and with simple zeroes only, and the 
BAE consist of the following system of infinitely many identities for the zeroes of $Q$, which are referred to as Bethe roots
\begin{equation}
\label{eq:BAEQ} 
e^{- 4 \pi i p } \frac{Q\left(xe^{-\frac{2 \pi i}{1+\alpha}}\right)}{Q\left(xe^{\frac{2 \pi i}{1+\alpha}}\right)} =
-1 \;,\quad \forall x \in \bb{C} \mbox{ such that } Q(x)=0.
\end{equation}
In theoretical physics literature, solutions of the above BAE are shown to provide
(see \cite{doreyreview}, for more details and references)
\begin{itemize}
 \item the eigenvalues of the operator-valued-function $\mc{Q}_+(x)$ of a Conformal Field Theory known
 as Quantum KdV model \cite{baluzaI,baluzaII};
\item the leading asymptotics -- in the thermodynamic limit -- of the edge distribution of Bethe roots (i.e. the scaling limit)
of the XXZ model (with anisotropy $\Delta \in (-1,1)$)  \cite{klumper91}.
\end{itemize}
More precisely, we focus on solutions of the BAE (\ref{eq:BAEQ}) under the conditions that the Bethe roots are all real and positive.
To every such solution one associates a finite set of integers, namely the set of hole-numbers,
which are quantum numbers of the vacancies (holes) in the distribution of Bethe roots.
The admissible sets of hole-numbers are parameterised by a non-negative integer
$N$ and a partition of $N$.
The main result of the present paper is Theorem \ref{thm:mainIntro} below, in which
\begin{itemize}
 \item we prove that -- provided $\alpha>1$ and $p$ 
is sufficiently large --
for every admissible set of hole-numbers, i.e. for every partition, 
there exists a unique solution of the BAE (\ref{eq:BAEQ});
\item we obtain a uniform estimate for the Bethe roots in the large $p$ limit.
\end{itemize}
Our interest in the BAE is motivated by the discovery of Dorey-Tateo \cite{dorey98}, later generalised by
Bazhanov-Lukyanov-Zamolodchikov
\cite{bazhanov01,BLZ04}, that the spectral determinant of a family of anharmonic oscillators, known as
\textit{monster potentials}, also
fulfils the BAE (\ref{eq:BAEQ}).
In this case, it was more precisely conjectured by Bazhanov-Lukyanov-Zamolodchikov in \cite{BLZ04}
that there exists a bijection $\mc{O}$
between the Bethe states of the Quantum KdV model
and the monster potentials. In other words,
the Bethe roots of solutions of the BAE (\ref{eq:BAEQ}) should be the eigenvalues of a self-adjoint (if $p$ is large)
Schr\"odinger operator!\footnote{The ODE/IM correspondence can be thus thought of as a Hilbert-P\'olya conjecture for the
BAE. The distribution of Bethe roots turns out to be much more regular than the distribution of non trivial zeroes of the
zeta function.}
\textit{The fact that the BAE, encoding the integrability of a quantum field theory, provide the spectrum of 
a linear differential operator is the first and most fundamental instance of a
striking phenomenon, known as the ODE/IM correspondence \cite{doreyreview}}.

With the aim of understanding the origin
of such a fascinating phenomenon, we decided to tackle the proof of the ODE/IM correspondence conjecture for the Quantum KdV model/monster potentials. In fact, in our previous paper \cite{coma20} we proved -- assuming the existence of a certain Puiseaux series
as per \cite[Conjecture 5.9]{coma20} --
that monster potentials with $N$ apparent singularities,
are parameterised by partitions of $N$ and we computed the large momentum asymptotics of the eigenvalues,
that turn out to be real and positive in this regime.
As a consequence, combining the result of \cite{coma20} with the results of the present paper
(in particular, comparing the two large momentum asymptotics that we have found),
we prove -- up to \cite[Conjecture 5.9]{coma20} -- that \textit{the monster potentials are complete}:
For every solution of the BAE with real and positive roots, there exists a unique
monster potentials whose spectrum coincide with the given solution, if
$\alpha>1$ and $p$ is large.\footnote{This essentially proves
the ODE/IM correspondence conjecture for the Quantum KdV model/monster potentials, for $\alpha>1$.
However, to complete the definition of the bijection $\mc{O}$, one needs to parameterise the set of Bethe states of Quantum KdV
in terms of partitions and then compute the hole-numbers of the corresponding solutions of the BAE. This has not yet been
computed in the literature. See Section \ref{sec:ODEIMcorr} for a more detailed discussion about this point.}

\subsection{Notation}
\begin{itemize}
\item $\bb{R}^+= \lbrace x \in \bb{R}\,|\, x \geq 0\rbrace$.
\item $p,\alpha$ are real parameters with $\alpha>1$ and $p\geq0$.
\item $n,N,H$ are non-negative integers.
\item We denote by $[\nu]=(\nu_1,\dots,\nu_H)$ a partition of $N$ into $H\leq N$
integers, assumed to be decreasingly ordered.
\item $\ceili{ \cdot }$ is the ceiling function and $\floori{\cdot}$ is the floor function.
\item If $J\subset \bb{R}^+$ is the disjoint union of open, closed or half-closed intervals, we denote by
$\mc{C}^k(J)$ the space of $k-$times differentiable functions with domain $J$. Moreover, we denote by $\chi_J$ the characteristic
function of $J$.
\item $D_x=x \frac{d}{d x}$ is the Euler operator.
In general, we let $D_x^n=\left(x \frac{d}{d x}\right)^n$, where $D_x^{n=0}$ is the identity operator.
We also write $D$ for $D_x$ whenever we do not have to specify the co-ordinate with respect to which we take the derivative.
 \item We use the notation $A \lesssim B$ (resp. $A \gtrsim B $ ) to indicate that $A \leq C\, B$ (resp. $A \geq C \, B $ ),
where $C >0$ is an absolute constant that only depends on fixed parameters.
We also write $A \lesssim_{k} B$ to indicate that the implicit constant depends on a parameter $k$.
\item In long proofs with several estimates, the various estimates have been assigned the tags $(a)$,$(b)$,$(c)$, etc... to
facilitate the reading of the proof.
\end{itemize}

\subsection{The Bethe Ansatz Equations}
In this paper we address the following problems:
\begin{itemize}
    \item The classification of all solutions $Q$ of the BAE (\ref{eq:BAEQ}) such that\\
    (1) $Q$ is \textit{purely real}, that is all its roots are real and positive;\\
    (2) $Q$ is \textit{normalised},\footnote{The choice of the normalisation is made to Gauge-away the symmetry $Q(x) \to C_1 Q(C_2 x)$, with $C_1,C_2 \neq 0$, of
the BAE (\ref{eq:BAEQ}).} namely
\begin{align} \nonumber 
& Q(1)=1 \mbox{ and }
\lim_{x \to +\infty} x^{-\frac{1+\alpha}{2\alpha}}n_Q(x)=1 \;, \\ \label{eq:counting}
& \mbox{with } n_Q(x)=\left\lbrace y\in\R^+\,|\, Q(y)=0 \mbox{ and } y\leq x \right\rbrace \;.
 \end{align}
 (3) The set of holes -- which will be introduced below -- is finite.
 \item The computation of the asymptotic expansions of the above solutions when $p$ is large and positive.\footnote{The reader may wonder why we choose $p \in \bb{R}^+$ and we claim to study the large momentum limit, even though the
BAE are invariant under the transformation $p \to p + \frac{k}{2}$ with $k \in \bb{Z}$. We reassure the reader that
this will be explained in due course in this introduction.}
\end{itemize}
Following the literature \cite{hulthen38,orbach58,yangyang,destri92}, in order to gain further insight into the BAE
we take their logarithm.
To this aim, we introduce the \textit{associated $z$ function}.\footnote{The function $z$ is sometimes called the 
\textit{counting function}, but we choose to
reserve this name for the function defined previously in (\ref{eq:counting}).}
\begin{lemma}\label{lem:z}
To any purely real and normalised solution $Q$ of the BAE (\ref{eq:BAEQ}), we associate the following function
\begin{equation}\label{eq:defzanyalpha}
 z:[0,+\infty) \to \bb{R}:x\mapsto z(x)=-2p 
 + \frac{1}{2 \pi i} \log \frac{Q\left(xe^{-\frac{2 \pi i}{1+\alpha}}\right)}{Q\left(x e^{\frac{2 \pi i}{1+\alpha}}\right)} \;,
\end{equation}
where the branch of the $\log$ is chosen so that $z(0)=-2p$, i.e. $\log 1=0$.
The function $z$ satisfies the following properties:\\
(1) $z$ is real analytic;\\
(2) $z$ admits the following representation
 \begin{equation}\label{eq:defz}
 z(x)=-2p + \sum_{y\in\mc{R}} F_{\alpha}\left(\frac{x}{y}\right) \;,
\end{equation}
where $\mc{R}:=\lbrace y\in\R^+\,|\,Q(y)=0\rbrace $ is the set of roots and
\begin{align}\label{eq:Ffunza}
& F_{\alpha}: [0,+\infty) \to \left[0,\frac{\alpha-1}{1+\alpha}\right] \;, \\ \nonumber
& F_{\alpha}(x)= \frac{1}{2 \pi i} \log \left(\frac{1-x e^{-\frac{2 \pi i}{1+\alpha}}}
{1-x e^{\frac{2 \pi i}{1+\alpha}}} \right) 
    :=\frac{\sin\left(\frac{2 \pi}{1+\alpha}\right)}{\pi}\int_0^x \frac{dy}{1+y^2-2y\cos\left(\frac{2 \pi }{1+\alpha}\right)}\;.
\end{align}\\
(3) $z'(x)>0$ for all $x\in\R^+$;\\ 
(4) $z$ has the asymptotic behaviour
\begin{equation}\label{eq:xaz1}
 \lim_{x\to +\infty} x^{-\frac{1+\alpha}{2\alpha}}z(x)= 1\;.
\end{equation}
(5) The roots of $Q$ satisfy the identities
\begin{equation}\label{eq:logBA}
z(x)-\frac12 \in \bb{Z}\;,\quad \forall x \in \mc{R}\;.
\end{equation}
\end{lemma}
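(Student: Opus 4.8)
The plan is to reduce everything to the Hadamard factorisation of $Q$ and to the elementary analysis of the single kernel $F_\alpha$. Since $\alpha>1$ the order $\beta:=\frac{1+\alpha}{2\alpha}$ lies in $(\tfrac12,1)$, so the canonical product of $Q$ has genus $0$; as $Q$ is entire with only simple, real, positive zeros and $Q(1)=1$, one has $\sum_{y\in\mc{R}}y^{-1}<\infty$ and $Q(x)=Q(0)\prod_{y\in\mc{R}}(1-x/y)$ with $Q(0)\neq0$. Writing $\theta:=\frac{2\pi}{1+\alpha}\in(0,\pi)$, I would first establish the properties of $F_\alpha$: differentiating its defining logarithm yields $F_\alpha'(x)=\frac{\sin\theta}{\pi}\big((x-\cos\theta)^2+\sin^2\theta\big)^{-1}$, which is precisely the integrand in (\ref{eq:Ffunza}); the denominator being strictly positive, $F_\alpha$ is real analytic on $[0,+\infty)$, strictly increasing, with $F_\alpha(0)=0$ and
\[
F_\alpha(+\infty)=\frac{\sin\theta}{\pi}\int_0^{+\infty}\frac{dx}{(x-\cos\theta)^2+\sin^2\theta}=\frac{\pi-\theta}{\pi}=\frac{\alpha-1}{1+\alpha},
\]
which fixes the codomain in (\ref{eq:Ffunza}) and gives the two-sided bound $0\le F_\alpha(w)\le C\min\{w,1\}$.

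To prove the representation (\ref{eq:defz}) I would avoid tracking branches of the logarithm and instead set $g(x):=\sum_{y\in\mc{R}}F_\alpha(x/y)$. The bound on $F_\alpha$ together with $\sum y^{-1}<\infty$ makes this series, and all its term-by-term derivatives, locally uniformly convergent, so $g$ is real analytic; exponentiating and using that the constant $Q(0)$ cancels in the ratio gives
\[
e^{2\pi i\,g(x)}=\prod_{y\in\mc{R}}\frac{1-(x/y)e^{-i\theta}}{1-(x/y)e^{i\theta}}=\frac{Q(xe^{-i\theta})}{Q(xe^{i\theta})}=e^{2\pi i(z(x)+2p)}.
\]
Thus $g-(z+2p)$ is continuous and integer valued, and since both vanish at $x=0$ it is identically zero, which is (\ref{eq:defz}). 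Property (1) is then immediate from the analyticity of $g$; alternatively, for $x\ge0$ the points $xe^{\pm i\theta}$ avoid $\mc{R}\subset\R^+$ so the ratio is analytic and nonvanishing, and the reality of the Taylor coefficients of $Q$ forces it to have modulus $1$, so $z$ is real. Property (3) follows by differentiating (\ref{eq:defz}) term by term, $z'(x)=\sum_{y\in\mc{R}}y^{-1}F_\alpha'(x/y)>0$, every summand being positive. Property (5) is the easiest: at a root $x\in\mc{R}$ the BAE (\ref{eq:BAEQ}) reads $Q(xe^{-i\theta})/Q(xe^{i\theta})=-e^{4\pi ip}$, while (\ref{eq:defzanyalpha}) gives $Q(xe^{-i\theta})/Q(xe^{i\theta})=e^{2\pi i(z(x)+2p)}$; comparing forces $e^{2\pi iz(x)}=-1$, i.e. $z(x)-\tfrac12\in\bb{Z}$.

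The main obstacle is the asymptotics (4), which is where the normalisation (\ref{eq:counting}) enters. The idea is to view (\ref{eq:defz}) as a Stieltjes integral against the counting function, $z(x)+2p=\int_0^{+\infty}F_\alpha(x/t)\,dn_Q(t)$, integrate by parts (the boundary terms vanish because $\beta<1$ and $n_Q$ is eventually $\sim t^\beta$), and substitute $t=x/u$ to obtain the clean form
\[
x^{-\beta}\big(z(x)+2p\big)=\int_0^{+\infty}\frac{n_Q(x/u)}{(x/u)^{\beta}}\,u^{-\beta}\,F_\alpha'(u)\,du.
\]
For each fixed $u>0$ the ratio $n_Q(x/u)/(x/u)^\beta\to1$ as $x\to+\infty$ by (\ref{eq:counting}); the hard part is to justify passing to the limit under the integral, i.e. to produce an integrable dominating function, controlling simultaneously the region of large $u$ (where $x/u$ is small and $n_Q$ eventually vanishes) and the behaviour near $u=0$ (where $u^{-\beta}$ is integrable because $\beta<1$). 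Granting this Abelian step, the limit equals $\int_0^{+\infty}u^{-\beta}F_\alpha'(u)\,du=\frac{\sin\theta}{\pi}\int_0^{+\infty}\frac{u^{-\beta}}{1-2u\cos\theta+u^2}\,du$, a classical Mellin integral equal to $\frac{\pi}{\sin\theta}$ once one inserts $\theta=\frac{2\pi}{1+\alpha}$ and $\beta=\frac{1+\alpha}{2\alpha}$ (the relevant trigonometric factors both reduce to $\cos\frac{\pi}{2\alpha}$ and cancel). Hence $x^{-\beta}z(x)\to1$, which is (\ref{eq:xaz1}). I expect the uniform control needed for the dominated convergence --- rather than the purely algebraic evaluation of the kernel integral --- to be the only genuinely delicate point.
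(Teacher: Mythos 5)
Your proposal is correct, and for items (1), (2), (3) and (5) it runs along the same lines as the paper: Hadamard factorisation in genus zero (the constant cancelling in the ratio), positivity of $F_\alpha'$ for monotonicity, and the observation that $e^{2\pi i z(x)}=-1$ at a root. Your branch-free derivation of (2) -- showing $e^{2\pi i g}=e^{2\pi i (z+2p)}$ and concluding that the continuous integer-valued difference $g-(z+2p)$ vanishes identically because it vanishes at $x=0$ -- is the natural filling-in of the paper's terse ``from which (\ref{eq:defz}) follows''. The genuine divergence is item (4): the paper disposes of it by citing a classical theorem on entire functions of order less than one with regularly distributed positive zeros \cite[Lecture 12, Theorem 1]{levin96}, whereas you reprove this special case directly via the Stieltjes integration by parts and an Abelian limit. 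Your argument is sound, and the one step you flag as ``genuinely delicate'' is in fact routine: since $n_Q$ vanishes below the smallest root of $Q$, is locally bounded, and satisfies (\ref{eq:counting}), one has $M:=\sup_{t>0} n_Q(t)\,t^{-\beta}<\infty$, so $M\,u^{-\beta}F_\alpha'(u)$ is an explicit dominating function, integrable at $u=0$ because $\beta<1$ and at $u=\infty$ because $F_\alpha'(u)=O(u^{-2})$; dominated convergence then applies with no further work. It is also worth noticing that after your substitution the identity reads $z(x)+2p=\bb{K}_{\bb{R}^+}[n_Q](x)$ with $K_\alpha(w)=wF_\alpha'(w)$, so your Abelian step is precisely the $s=\frac{1+\alpha}{2\alpha}$ instance of Proposition \ref{prop:Kzasym}(1) (which the paper proves later for all $s\in(-1,1)$), and your $\cos\frac{\pi}{2\alpha}$ cancellation is the statement $\Phi\left(\alpha,\frac{1+\alpha}{2\alpha}\right)=1$, consistent with the evenness $\Phi(\alpha,s)=\Phi(\alpha,-s)$. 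What each route buys: the paper's citation is shorter and delegates the Tauberian content to the literature; yours is self-contained, elementary, and meshes with the convolution-operator machinery of Section \ref{sec:convoper} that the paper develops anyway, at the modest cost of writing out the dominated-convergence bound sketched above.
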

\begin{proof}
(1) Since $Q$ is a real entire function, then
 $\frac{Q\left(xe^{-\frac{2 \pi i}{1+\alpha}}\right)}{Q\left(x e^{\frac{2 \pi i}{1+\alpha}}\right)} $ is a meromorphic function that, restricted to
 the positive real semi axis, has modulus one. Hence its logarithm is analytic on the real semi axis and it is purely imaginary.\\
 (2) Since $\alpha>1$, the order of $Q$ is $\frac{\alpha+1}{2\alpha}\in\left(\frac{1}{2},1\right)$. Therefore
 after Hadamard Factorisation Theorem we have that
$Q(x)= \prod_{y\in\mc{R}} \left( 1-\frac{x}{y} \right)$, from which (\ref{eq:defz}) follows.\\
(3) From (\ref{eq:Ffunza}), we have that $F'_{\alpha}(x)$ is positive for all $x\in\R^+$. Then, the thesis follows
 from (\ref{eq:defz}).\\
 (4) See \cite[Lecture 12, Theorem 1]{levin96}.\\
 (5) Indeed $\frac{\log(-1)}{2\pi i}-\frac12 \in \bb{Z}$ for every branch of the logarithm.
\end{proof}

\subsection{Roots and Holes}
As we will show later in Proposition \ref{prop:strongz},
the system of equations (\ref{eq:logBA}) are equivalent to the BAE equations (\ref{eq:BAEQ}) if the function $z$
is strictly monotone.
This system is not exactly a system of real equations.
In fact, it states that if $x$ is a root of $Q$ then $z(x)-\frac12$ is an integer,
but it does not specify the value of this integer.
In other words, the BAE are quantisation conditions which do not encode the
information about which numbers are occupied or empty.

However, we can circumvent this problem reasoning as follows.
After Lemma \ref{lem:z} (3), if all roots of $Q$ are real and positive then $z'(x)>0$ for all $x\in\R^+$.
Therefore, the set of admissible quantum numbers is
\begin{align}\label{eq:Zp}
 \bb{Z}_p=\left\lbrace k \in \bb{Z}\,|\, k \geq -2p-\frac12 \right\rbrace \;,
\end{align}
and for each admissible quantum number $k\in \bb{Z}_p$,
we can uniquely associate the real and positive number $x_k$
\begin{equation}\label{eq:xk}
 x_k=z^{-1}\left(k+\frac12\right)\;.
\end{equation}
We name $x_k$ a (Bethe) root if it is a zero of $Q$, a hole otherwise.
Furthermore, we denote by $\bb{H}$ the subset of $\bb{Z}_p$ of hole-numbers
and by $\overline{\bb{H}}$ the set of roots
\begin{equation}
\label{eq:setholenum}
 \bb{H}=\left\lbrace k \in \bb{Z}_p \,|\, Q(x_k) \neq 0 \right\rbrace\;,\quad \overline{\bb{H}}=\bb{Z}_p\setminus\bb{H}\;.
\end{equation}
Once the set of holes is fixed, the system (\ref{eq:logBA}) becomes a well-defined system of equations for the roots
$\{x_k\}_{k\in \overline{\bb{H}}}$. In fact, it reads
\begin{equation*}
z(x_k)-\frac12=k \;,\quad \forall k \in\overline{\bb{H}}\;,
\end{equation*}
or, using the representation (\ref{eq:defz}) for $z$,
\begin{equation}\label{eq:logBAH}
-2p + \sum_{j\in\overline{\bb{H}}} F_{\alpha}\left(\frac{x_k}{x_j}\right)-\frac12 = k\;,\quad \forall k \in\overline{\bb{H}}\;.
\end{equation}
We call the latter system the logarithmic BAE.
In order to proceed further with our analysis we need to understand the structure of the set of holes.
The first coarse characterisation is given by two integer invariants, which we name \textit{sector} and \textit{level}:
Denoting by $\kappa$ the quantum number corresponding to the smallest root,
\begin{equation}\label{eq:kappa}
 \kappa = \min\{\overline{\bb{H}}\}\;,
\end{equation}
and by $\bb{H}_\kappa$ the set of hole-numbers larger than $\kappa$
\begin{equation}
\label{eq:Hkappa}
    \bb{H}_\kappa=\lbrace k \in \bb{H}\,|\, k > \kappa \rbrace \;,\quad H=|\bb{H}_\kappa|\;,
\end{equation}
we define the sector and level as
  \begin{align}\label{eq:secH}
 & \sec_{\bb{H}}=  \kappa + H \;, \\ \label{eq:levH}
 & \lev_{\bb{H}} =-  \frac{\left(\sec_{\bb{H}}-\kappa\right)(\sec_{\bb{H}}+\kappa-1)}{2}+\sum_{k \in \bb{H}_\kappa} k \geq 0\;.
\end{align}
In the above definition, $\sec_{\bb{H}}$ and $\lev_{\bb{H}}$ are defined to be $\infty$ if $\bb{H}$ is not a finite set.
\begin{remark}
We notice here that, since we assume 
that the set of holes is finite, then \textit{we can restrict,
without any loss in generality, to the case of a vanishing sector} $\sec_{\bb{H}}=0$, if $p$ is large enough. In fact, 
the transformation $$(z,p,x_k)\to\left(z+l,p-\frac{l}{2},x_{k+l}\right) \;,\quad l\in\bb{Z}\;,$$ 
is a symmetry of the logarithmic BAE (\ref{eq:logBAH}) which leaves the roots and the level unchanged but
induces a shift of the sector $\sec_{\bb{H}} \to \sec_{\bb{H}}+l$.\footnote{We could otherwise
assume that the parameter $p$ that appears in the BAE belongs to $\left[0,\frac12\right)$, admit any non-negative sector,
and define the effective momentum as $\hat{p}=p+\sec_{\bb{H}}$.}
\end{remark}

Fixed the sector to be $0$, the distinct sets of holes of level $N$ are naturally
parameterised by integer partitions of $N$. In fact, we have the following Lemma,
which was already proven in \cite[Appendix A]{BLZ04}.
\begin{lemma}\label{lem:holespartitions}
 Fix $N\in\bb{N}$ and assume that
 \begin{equation}\label{eq:seclevinequality}
   2p \geq N + \frac12 \;.
 \end{equation}
 Let $\mc{H}_N =\lbrace \bb{H} \subset \bb{Z}_p\,|\,\sec_{\bb{H}}=0 \mbox{ and } \lev_{\bb{H}} =N \rbrace$.
 
 The cardinality
 of $\mc{H}_N$ is the number of integer partitions of $N$. If
 $[\nu]=(\nu_1,\dots,\nu_H)$ with $H\leq N$ is a partition of $N$, the corresponding holes-subset of sector $0$
 and level $N$ is defined as
 \begin{equation}\label{eq:partitiontoholes}
  \bb{H}_{[\nu]}=\left\lbrace k \in \bb{Z}_p\,|\, k < -H \right\rbrace \bigcup_{l=1}^H \left\lbrace -H + \nu_l+(l-1) \right\rbrace \;.
  \end{equation}
\begin{proof}
   
   Let $\bb{H}$ be a finite subset of $\bb{Z}_p$ as per (\ref{eq:setholenum}), and
   let $\kappa$, $\bb{H}_\kappa$ and $H$ be as per (\ref{eq:kappa}) and (\ref{eq:Hkappa}). Since 
   the sector vanishes then $\kappa=-H$ and we can write $\bb{H}_\kappa = \lbrace -H+m_k \rbrace_{k=1}^{H}$ for some
   $m_k\in\bb{Z}_p$ such that $m_{k+1}\geq m_k+1$ for any $k\in\{1,\dots,H-1\}$.
   Setting $\nu_k=m_k-(k-1)$, we have $\bb{H}_\kappa = \lbrace -H+\nu_k+(k-1)
   \rbrace_{k=1}^{H}$ with $\nu_{k+1} \geq \nu_k$ for any $k\in\{1,\dots,H-1\}$.
   By definition $\text{lev}_{\bb{H}}=N$, therefore formula (\ref{eq:levH}) yields
   $$N=-\frac{H(-H-1)}{2}+\sum_{k=1}^H \left(-H+\nu_k+(k-1)\right) \quad\Longleftrightarrow\quad \sum_{k=1}^H \nu_k =N\;. $$
   
   Therefore, we deduce that any finite subset of level $N$ and sector $0$ is
   characterised by a unique partition $[\nu]$ of $N$ in such a way that (\ref{eq:partitiontoholes}) holds.
  \end{proof}

\end{lemma}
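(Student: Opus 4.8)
The plan is to establish a bijection between integer partitions of $N$ and the finite hole-subsets of sector $0$ and level $N$, by showing that each such hole-subset has a canonical combinatorial description that is forced by the constraints. The key observation is that once the sector is fixed to be $0$, the integer $\kappa = \min\{\overline{\bb{H}}\}$ and the cardinality $H$ of $\bb{H}_\kappa$ are rigidly linked by $\sec_{\bb{H}} = \kappa + H = 0$, so $\kappa = -H$. First I would take an arbitrary finite $\bb{H} \subset \bb{Z}_p$ of sector $0$ and level $N$ and record its hole-numbers lying above $\kappa$; since $\bb{Z}_p$ is a set of consecutive integers bounded below and the holes are a discrete set, these can be listed in strictly increasing order as $\bb{H}_\kappa = \{-H+m_k\}_{k=1}^H$ with $m_1 < m_2 < \cdots < m_H$, i.e. $m_{k+1} \geq m_k + 1$.

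Next I would perform the change of variables $\nu_k := m_k - (k-1)$, which is the standard device for converting a strictly increasing integer sequence into a weakly increasing one. The inequality $m_{k+1} \geq m_k + 1$ becomes precisely $\nu_{k+1} \geq \nu_k$, so the $\nu_k$ form a (weakly) monotone sequence, hence — after the decreasing reordering convention for partitions — a genuine partition. The content of the level being $N$ must then be extracted: I would substitute the parametrisation $\bb{H}_\kappa = \{-H+\nu_k+(k-1)\}_{k=1}^H$ into the defining formula \eqref{eq:levH} for $\lev_{\bb{H}}$, with $\sec_{\bb{H}}=0$ and $\kappa=-H$. The quadratic prefactor $-\frac{(\sec_{\bb{H}}-\kappa)(\sec_{\bb{H}}+\kappa-1)}{2}$ specialises to $-\frac{H(-H-1)}{2} = \frac{H(H+1)}{2}$, while $\sum_{k=1}^H\big(-H+\nu_k+(k-1)\big)$ splits into $-H^2 + \sum_k \nu_k + \frac{H(H-1)}{2}$. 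After cancelling the purely combinatorial terms $\frac{H(H+1)}{2} - H^2 + \frac{H(H-1)}{2} = 0$, the identity $\lev_{\bb{H}} = N$ collapses exactly to $\sum_{k=1}^H \nu_k = N$, confirming that $[\nu]$ is a partition of $N$.

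To complete the bijection I would check both directions. Given $\bb{H}$, the above construction produces a well-defined partition $[\nu]$ of $N$, and reading off \eqref{eq:partitiontoholes} recovers $\bb{H}$, so the map to partitions is well-defined and injective. Conversely, given any partition $[\nu]=(\nu_1,\dots,\nu_H)$ of $N$, formula \eqref{eq:partitiontoholes} defines a subset of $\bb{Z}_p$ whose elements above $-H$ are $\{-H+\nu_k+(k-1)\}$; the weak monotonicity $\nu_{k+1}\geq\nu_k$ guarantees these $H$ values are strictly increasing and hence distinct, so $|\bb{H}_\kappa|=H$ and $\kappa=-H$ give $\sec_{\bb{H}}=0$, while the same computation as above yields $\lev_{\bb{H}}=N$. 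The only genuine subtlety — and the one point that requires the hypothesis \eqref{eq:seclevinequality}, namely $2p \geq N + \frac12$ — is that the constructed hole-numbers must actually lie inside the admissible set $\bb{Z}_p = \{k \geq -2p-\frac12\}$; the largest index used is bounded by the size of $N$ and $H$, so the assumption on $p$ is precisely what ensures all the sets $\{k<-H\}$ and the finite list remain legitimate subsets of $\bb{Z}_p$ without running off the lower boundary.

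The main obstacle is conceptual rather than computational: one must be careful that the construction is genuinely forced, i.e. that \emph{every} finite sector-$0$, level-$N$ hole-subset arises this way and no two distinct partitions give the same subset. The combinatorial bookkeeping in the level computation is routine once the substitution $\nu_k=m_k-(k-1)$ is in place, but the bijectivity claim hinges on the observation that fixing $\sec_{\bb{H}}=0$ eliminates all freedom in $\kappa$ and $H$ simultaneously — otherwise one could shift the whole configuration and the correspondence with partitions would fail. I would therefore emphasise the rigidity $\kappa=-H$ as the structural heart of the argument.
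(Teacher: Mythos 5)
Your proposal is correct and follows essentially the same route as the paper's proof: the rigidity $\kappa=-H$ forced by $\sec_{\bb{H}}=0$, the substitution $\nu_k=m_k-(k-1)$ converting the strictly increasing list of hole-numbers into a weakly monotone sequence, and the same cancellation in the level formula reducing $\lev_{\bb{H}}=N$ to $\sum_{k=1}^H\nu_k=N$. Your extra checks --- the explicit converse direction and the remark that the hypothesis $2p\geq N+\frac12$ (with $H\leq N$) is exactly what keeps $-H$, hence the whole configuration, inside $\bb{Z}_p$ --- are points the paper leaves implicit, and you handle them correctly.
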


\begin{figure}
\includegraphics[width=10cm]{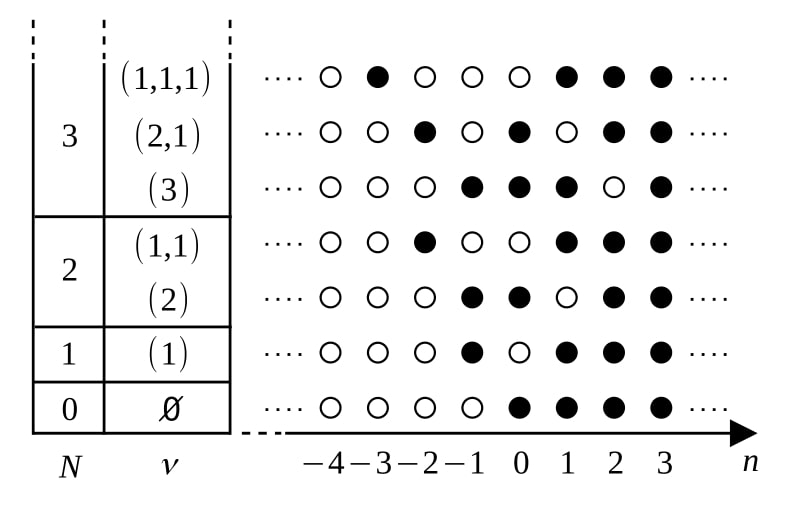}
\caption{Hole subsets of sector $S=0$ and level $N \leq 3$. Black dots are root-numbers, white dots are hole-numbers.}
\end{figure}

We can now state the main result of the present paper.
\begin{theorem}\label{thm:mainIntro}
 Let $N$ be a non-negative integer number and $[\nu]$ a partition of $N$.
 If $p$ is large enough there exists a unique purely real and normalised solution $Q$ of the BAE 
 (\ref{eq:BAEQ}) such that $\bb{H}=\bb{H}_{[\nu]}$, where $\bb{H}_{[\nu]}$ is as per definition (\ref{eq:partitiontoholes}).
 
 Moreover, the Bethe roots fulfil the following estimate
  \begin{align}\label{eq:xkintro}
\left| \frac{x_k(p)}{p^{\frac{2\alpha}{1+\alpha}}}-
A\left[ 1+
\left(\frac{\sqrt{2}\alpha}{(1+\alpha)^{\frac32}}\left(k+\frac12\right)\frac{1}{p} \right)\right] \right| \lesssim_k
\frac{1}{p^2}\;,\quad \forall k\in\bb{Z}_{p}\setminus\bb{}\bb{H}_{[\nu]}\;,
\end{align}
 where $A=(1+\alpha) \left(\frac{1}{\sqrt{\pi\alpha}}
 \frac{\Gamma \left(\frac{1}{2 \alpha }\right)}{\Gamma \left(\frac{1+\alpha}{2 \alpha }\right)}\right)^{\frac{2 \alpha }{\alpha +1}} $.
\end{theorem}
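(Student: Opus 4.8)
\emph{Reduction to a single scalar unknown.} The plan is to fix the hole set $\bb{H}=\bb{H}_{[\nu]}$ of (\ref{eq:partitiontoholes}) once and for all, so that the Bethe roots $\{x_k\}_{k\in\overline{\bb{H}}}$ are encoded in the single strictly increasing counting function $z$ of Lemma \ref{lem:z}. By Proposition \ref{prop:strongz} it suffices to produce a real-analytic monotone $z$ of the form (\ref{eq:defz}) whose roots occupy exactly the prescribed quantum numbers $\overline{\bb{H}}=\bb{Z}_p\setminus\bb{H}_{[\nu]}$ and with the asymptotics (\ref{eq:xaz1}); the purely real, normalised entire function $Q$ is then recovered from the Hadamard product $Q(x)=\prod_{k\in\overline{\bb{H}}}\left(1-\frac{x}{x_k}\right)$. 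In this way the entire statement becomes a self-consistency equation for the scalar unknown $z$.

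\emph{The free-boundary nonlinear integral equation.} Since, at sector zero, the root-numbers are $\overline{\bb{H}}=\{k\geq\kappa\}\setminus\bb{H}_\kappa$ with $\kappa=-H$ and $\bb{H}_\kappa$ the finite set (\ref{eq:Hkappa}) of excitation holes, I would split the root-sum in (\ref{eq:defz}) into the sum over all quantum numbers $\geq\kappa$ minus the finitely many excitation holes, and convert the former into an integral against $\mathrm{d}z$ by Euler--Maclaurin summation, the sampling nodes being the points $x_j=z^{-1}(j+\tfrac12)$ where $z$ meets a half-integer. This turns (\ref{eq:logBAH}) into the nonlinear integral equation
\[
z(x)=-2p+\int_{x_\kappa}^{\infty}F_\alpha\!\left(\tfrac{x}{y}\right)z'(y)\,\mathrm{d}y-\sum_{h\in\bb{H}_\kappa}F_\alpha\!\left(\tfrac{x}{x_h}\right)+\mc{E}[z](x),
\]
whose kernel is $F_\alpha'$, whose \emph{free boundary} is the position $x_\kappa=z^{-1}(\kappa+\tfrac12)$ of the smallest root, whose source is the finite excitation-hole sum, and where $\mc{E}[z]$ collects the subleading Euler--Maclaurin remainder. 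Because $F_\alpha(x/y)$ is invariant under a common rescaling of $x$ and $y$, the integral operator is a multiplicative convolution diagonalised by the Mellin transform --- which is exactly why the Euler operator $D_x$ and the Mellin calculus are the natural tools. A reassuring consistency check is that the bulk power law $z'(x)\sim x^{\frac{1-\alpha}{2\alpha}}$ solves the homogeneous convolution equation precisely when $\melli{F_\alpha'}\!\left(\tfrac{\alpha-1}{2\alpha}\right)=1$, which is the condition singling out the growth exponent $\tfrac{1+\alpha}{2\alpha}$ of (\ref{eq:xaz1}).

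\emph{Rescaling, leading edge profile and the fixed point.} The identities $z(0)=-2p$ and (\ref{eq:xaz1}) force the scale $p^{\frac{2\alpha}{1+\alpha}}$, at which the roots with bounded $k$ lie at the \emph{edge} $x_\kappa$ of the distribution rather than in the bulk; accordingly I would rescale $x=p^{\frac{2\alpha}{1+\alpha}}\xi$ and analyse the leading integral equation as a free-boundary (Wiener--Hopf type) problem. Its explicit Mellin solution fixes the rescaled edge location through the balance $\int_{x_\kappa}^{\infty}F_\alpha(x_\kappa/y)\,z'(y)\,\mathrm{d}y\simeq2p$, producing the constant $A$ and the Beta-integral combination $\tfrac{1}{\sqrt{\pi\alpha}}\,\Gamma\!\left(\tfrac{1}{2\alpha}\right)/\Gamma\!\left(\tfrac{1+\alpha}{2\alpha}\right)$, while the edge slope of $z$ yields the linear coefficient $\tfrac{\sqrt2\,\alpha}{(1+\alpha)^{3/2}}$ of (\ref{eq:xkintro}). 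Writing $z=z_0+r$ with $z_0$ this leading profile, the full equation takes the form $r=\Phi_p(r)$, with $\Phi_p$ gathering the excitation-hole terms, the displacement of the free boundary and the remainder $\mc{E}$; I would then show that for $p$ large $\Phi_p$ is a contraction, with contraction constant $O(p^{-1})$, on a suitable Banach space of weighted $\mc{C}^1$ functions. The Banach fixed point theorem yields a unique $z$, hence, by monotonicity and Proposition \ref{prop:strongz}, a unique purely real normalised $Q$ with $\bb{H}=\bb{H}_{[\nu]}$; inverting $z=z_0+r$ by a first-order Taylor expansion at $x_0=Ap^{\frac{2\alpha}{1+\alpha}}$ gives (\ref{eq:xkintro}), the fixed-point bound on $r$ controlling the remainder by $\lesssim_k p^{-2}$.

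\emph{Main obstacle.} The analytic heart of the argument, which I expect to be the hardest step, is the rigorous control of this free-boundary integral equation uniformly in $p$: quantifying the Euler--Maclaurin remainder $\mc{E}[z]$ when the summation runs over infinitely many roots whose nodes $x_j=z^{-1}(j+\tfrac12)$ depend on the unknown $z$ itself; handling the nonlinear coupling through the moving boundary $x_\kappa$ and the displaced hole positions $x_h$; and establishing the weighted estimates on the Mellin-convolution operator that make $\Phi_p$ a genuine contraction with the correct $O(p^{-1})$ gain. Essentially all the labour lies in developing these estimates from scratch --- as the free-boundary NLIE seems not to have been treated rigorously before --- after which existence, uniqueness and the asymptotics (\ref{eq:xkintro}) follow formally from the fixed-point construction.
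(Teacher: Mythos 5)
Your skeleton matches the paper's (logarithmic BAE $\to$ free-boundary nonlinear integral equation $\to$ explicit Wiener--Hopf/Mellin solution of the linearisation giving $A$ and the edge slope $\frac{\sqrt2\,\alpha}{(1+\alpha)^{3/2}}$ $\to$ fixed point for the perturbation, with the hole positions as finitely many extra scalar unknowns), and your ``Euler--Maclaurin remainder'' is the paper's oscillatory term, obtained there \emph{exactly} via $\ceili{z-\frac12}=z-\afr{z}$ and Riemann--Stieltjes integration by parts (Proposition \ref{prop:intzI}) rather than asymptotically. But your decisive quantitative claim --- that $\Phi_p$ is a contraction with constant $O(p^{-1})$ --- is false, and the failure is structural, not technical. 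The nonlinearity is $\la\mapsto\bb{K}_1\left[\afr{\bar l+\la}\right]$; perturbing $\la$ by $\|\la-\tilde\la\|_\infty$ moves each discontinuity of the sawtooth by $\sim\|\la-\tilde\la\|_\infty/p$, but there are $\sim p$ discontinuities per unit interval, so the two effects cancel and the Lipschitz constant is $\Phi(\alpha,0)+O(p^{-1})=\frac{\alpha-1}{\alpha+1}+O(p^{-1})$ --- an order-one quantity (this is precisely Proposition \ref{prop:differencenonlinear}); only the \emph{value} of the oscillatory integral at a fixed argument enjoys the $O(p^{-1})$ gain (Proposition \ref{prop:oscillatorysingle}). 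Consequently, if you first invert the linearisation and seek $r=\Phi_p(r)$ with $\Phi_p$ containing only the nonlinear remainder and hole terms, the composed Lipschitz constant is at least $\|(\bb{I}-\bb{K}_1)^{-1}\|_{\infty,0}\cdot\frac{\alpha-1}{\alpha+1}=\frac{\alpha-1}{2}$ by (\ref{eq:normResolvent}), which is $\geq1$ for $\alpha\geq3$: your scheme, as described, fails for large degree. The paper's way out is to keep the linear convolution $\bb{K}_1[\la]$ \emph{inside} the map and estimate the recombined quantity $\bb{K}_1\left[\ceili{\bar l+\la-\frac12}\right]$ directly, yielding the $p$-independent contraction rate $\frac{\alpha-1}{\alpha+1}<1$ (valid for all $\alpha>1$), with the $O(p^{-1})$ appearing only as the size of the inhomogeneity $\mc{N}_0[(0,\underline{0})]$, hence of the fixed point --- not as the contraction rate.

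A second genuine gap is uniqueness. The Banach fixed point theorem gives uniqueness only within the chosen small ball and for a fixed free boundary, whereas Theorem \ref{thm:mainIntro} asserts uniqueness among \emph{all} purely real normalised solutions with hole set $\bb{H}_{[\nu]}$. You still need: (i) the a-priori bounds of Lemma \ref{lem:apriorie}, namely $\|\la\|_\infty\leq\left(H+\frac14\right)(\alpha-1)$ and $\|D^n\la\|_\infty\lesssim_n1$, valid for an arbitrary solution; (ii) the bootstrap through the oscillatory estimates (Proposition \ref{prop:uniqueness}) showing that any solution in fact satisfies $\|\la\|_\infty,\|D\la\|_\infty\lesssim p^{-1}$ and that its free boundary lies in $\Omega_{H,Cp^{-1}}$, so that it falls into the basin of the contraction; and (iii) the verification that the solutions produced for different admissible end-points $\omega$ are pairwise equivalent (Theorem \ref{thm:contraction}(3)), since the boundary is itself an unknown. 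You correctly identify the uniform control of the oscillatory remainder as the analytic heart of the matter --- the paper's mechanism is pairwise cancellation over adjacent half-cells, Lemma \ref{lem:lemma4} --- but the contraction-rate claim and the missing global-uniqueness step are the points at which the argument, as proposed, would actually break.
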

It must be noted that the implicit constant
$C_k$ in the estimate (\ref{eq:xkintro}) diverges as $k \to +\infty$.
In Section \ref{sec:mainthm}, we will actually derive much stronger estimates on $Q$ and $z$, that include a uniform estimate on the
position of the roots (see Theorem \ref{cor:main}) in terms of WKB integrals.

\subsection{Organisation of the paper}
The paper is organised as follows.

In Section \ref{sec:IBA}, we show that the logarithmic BAE (\ref{eq:logBAH}) is equivalent to a
\textit{free boundary} nonlinear integral equation for the
function $z$, that we call Integral Bethe Ansatz (IBA). Conversely, we prove that any solution of the IBA equation provides
a solution of the BAE if $z$ is strictly monotone.
We also show how the IBA can be conveniently split into a linearised integral equation (linear IBA) and a
nonlinear perturbation (perturbed IBA). Moreover, we solve the linear IBA by means of WKB integrals.
Finally, we compare the IBA equation with the nonlinear
integral equation studied in the physics literature and known as DDV (Destri-De Vega), KBP (Klumper-Batchelor-Pearce) or NLIE (Non-Linear Integral Equation) equation.

In Section \ref{sec:convoper}, we study many properties of the convolution operator, which appear in the IBA equation.

In Section \ref{sec:osc}, we study the oscillatory integrals which governs the nonlinear perturbation and we prove
that if $Q$ is a normalised solution of the BAE and $z$ its associated function, then
\begin{equation*}
  z(x)=x^{\frac{\alpha+1}{2\alpha}}-p \left(1+\alpha\right)-\frac{\sec_{\bb{H}}}{2}\left(\alpha-1\right)
 + O\left(x^{-\frac{\alpha+1}{2\alpha}}\right) \mbox{ as } x\to+\infty\;.
\end{equation*}
In Section \ref{sec:mainthm}, we use every bit of theory developed to prove our main Theorem.

In Section \ref{sec:ODEIMcorr}, we recall some notion of the theory of the Quantum KdV model and of the monster potentials. We
discuss the state-of-the-art of the ODE/IM correspondence and we establish a bijection among
purely real normalised solutions of the BAE and monster potentials.

\subsection{Note on the literature}
The BAE equations for the XXZ spin chain \cite{orbach58} / six vertex model \cite{lieb1967} 
have a long history and originated in the seminal
paper by H. Bethe \cite{bethe31} (see \cite{baxter16} and \cite{gaudin14} for some historical context).
The BAE (\ref{eq:BAEQ}) we study appeared in the physics literature 
in the context of the scaling
limit of the six vertex model/XXZ chain (see \cite{doreyreview} and references therein), then in the context of the Quantum KdV in the works \cite{baluzaI,baluzaII},
finally in the context of the ODE/IM correspondence in the works \cite{dorey98,bazhanov01}. In this paper,
we follow what seems to be, after the seminal works \cite{klumper91,destri92}, the approach more common in physics,
the one of transforming the BAE into a \textit{free-boundary} nonlinear integral equation, see e.g.
\cite{dunning00,fioravanti97,doreyreview}, among many. Of this equation,
we develop the mathematical theory from the very beginning.

Other approaches are possible. One is the variational approach which was pioneered by \cite{hulthen38,yangyang} and more recently used
in \cite{kozlowski18} to study the bulk asymptotic of roots of the XXZ limit in the thermodynamic limit;
the variational approach could certainly be used to study the edge asymptotic behaviour or, directly,
the logarithmic BAE (\ref{eq:logBAH}). 

A second alternative is the so-called TBA equation, which was studied recently in the mathematical
literature \cite{hilfiker20}; such an approach
has however the drawback that it is limited to the case of $2\alpha \in \bb{N}$ and $\bb{H}=\emptyset$ (ground-state).

Finally, the logarithmic BAE with $\bb{H}=\emptyset$ and $p=\pm \frac{1}{2\alpha+2}$
-- albeit written in a different
form -- was studied in \cite{avila04},\footnote{See \cite{eremenko20} for a recent application of this method.} after \cite{voros97},
using the tools of dynamical systems
and proven to admit a unique solution for every $\alpha>1$, which coincides with the spectrum of
the anharmonic oscillator $V(x)=x^{2\alpha+2}$.
It would be certainly interesting to extend the analysis
of \cite{avila04} to the case of a general $\bb{H}$ and a general $p$.

\subsection*{Acknowledgements}
We are deeply indebted to Stefano Negro, who generously
shared his great expertise on the DDV equation with us and allowed us to start our investigation \cite{stefanoprivate}, and
to Roberto Tateo, who is always available to clarify our doubts.
We are also grateful to Giordano Cotti for many stimulating conversations.

Finally, D.M. thanks Chiara Moneta for continuous support in these difficult times.

R.C. is supported by the FCT Project UIDB/00208/2020.
D.M. is partially supported by the FCT Project PTDC/MAT-PUR/ 30234/2017 ``Irregular
connections on algebraic curves and Quantum Field Theory'', by the
FCT Investigator grant IF/00069/2015 ``A mathematical framework for the ODE/IM correspondence'',
and by the FCT CEEC grant 2021.00091.CEECIND ``The Nonlinear Stokes Phenomenon. A unifying perspective on Integrable Models, Enumerative Geometry, and Special Functions''.

\section{Integral Bethe Ansatz}
\label{sec:IBA}
Here we transform the logarithmic BAE (\ref{eq:logBAH}) into an integral equation for the 
function $z$. To this aim we introduce the integral kernel $K_{\alpha}$
\cite{orbach58}
 \begin{equation} \label{eq:Kfunzal}
K_{\alpha}(x):= x F_{\alpha}'(x)=\frac{\sin\left(\frac{2 \pi}{1+\alpha}\right)}{\pi}
\frac{x}{1+x^2- 2 x \cos\left(\frac{2 \pi}{1+\alpha}\right)} \;,
\end{equation}
which enjoys the symmetry
\begin{equation}\label{eq:Ksymm}
 K_{\alpha}(x)=K_{\alpha}\left(\frac{1}{x}\right)\;.
\end{equation}
If $J\subset \bb{R}^+$ is measurable then we denote by $\bb{K}_J$ the convolution operator
with kernel $K_{\alpha}$
\begin{equation}\label{eq:convop}
 \bb{K}_J[f](x):= \int_J K_{\alpha}\left(\frac{x}{y}\right)\frac{f(y)}{y}dy \;.
\end{equation}
In the case $J=[\omega,+\infty)$ for some $\omega \in \bb{R}^+$, we write $\bb{K}_\omega:=\bb{K}_{[\omega,+\infty)}$.

We also introduce a class of closed subsets of $\bb{R}^+$.
\begin{definition}
\label{def:admsubset}
 A closed subset $I \subset \bb{R}^+$ is said admissible if there exist $2n+1$ with 
 $n\in\bb{N}$ positive real numbers $a_0<b_0<a_1<...<b_{n-1}<\omega$ such that
\begin{equation*}
 I= \bigsqcup_{i=0}^{n-1} \left[a_i,b_i\right] \bigsqcup [\omega,+\infty)\;.
\end{equation*}
\end{definition}

\begin{proposition}\label{prop:intzI}
Let $Q$ be a purely real and normalised solution of the BAE (\ref{eq:BAEQ}), $z$ the associated 
function as per (\ref{eq:defz}), and $\bb{H}$ the set of hole-numbers as per (\ref{eq:setholenum}).
For every admissible subset $I$, the following identity holds
\begin{align}\label{eq:Iidentity}
& z(x)= - 2p + \bb{K}_I\left[\ceili{z-\frac12}\right](x)  +
   \sum_{k \in \overline{\bb{H}}_{\bar{I}} } F_{\alpha}\left(\frac{x}{x_k}\right)-
  \sum_{k \in \bb{H}_I } F_{\alpha}\left(\frac{x}{x_k}\right) \\ \nonumber
&- F_{\alpha}\left(\frac{x}{\omega}\right)\ceili{z(\omega)-\frac12} +
 \sum_{i=0}^{n-1} \left( F_{\alpha}\left(\frac{x}{b_i}\right)\floori{z(b_i)+\frac12}-
 F_{\alpha}\left(\frac{x}{a_i}\right)\ceili{z(a_i)-\frac12}\right)\;,
\end{align}
where $\bb{K}_I$ is as per (\ref{eq:convop}) while $\bb{H}_I$ and $\overline{\bb{H}}_{\bar{I}}$ are as follows
\begin{align}
    &\bb{H}_I= \lbrace k \in \bb{H}\,|\, x_k \in I  \rbrace\;, \quad \overline{\bb{H}}_{\bar{I}}=  \lbrace k \in \overline{\bb{H}}\,|\, x_k \in \R^+\setminus I  \rbrace \;.
\label{eq:HIdef}
\end{align}

\end{proposition}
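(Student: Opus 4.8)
The plan is to convert the discrete sum over roots that defines $z$ into the convolution integral $\bb{K}_I$ by a summation-by-parts, exploiting that the integrand of $\bb{K}_I$ is an exact derivative. First I would record the pointwise identity
$$\frac{1}{y}K_{\alpha}\!\left(\frac{x}{y}\right) = -\frac{d}{dy}F_{\alpha}\!\left(\frac{x}{y}\right),$$
which is immediate from $K_{\alpha}(w)=wF_{\alpha}'(w)$ (definition (\ref{eq:Kfunzal})) and the chain rule. Writing $g(y):=\ceili{z(y)-\frac12}$, this turns each bounded piece of the convolution into $\bb{K}_{[a,b]}[g](x) = -\int_a^b g(y)\,\frac{d}{dy}F_{\alpha}(x/y)\,dy$. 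The second observation is that, since $z$ is strictly increasing (Lemma \ref{lem:z}(3)) with $z(x_m)=m+\frac12$, the function $g$ is a left-continuous step function, constant on each $(x_{m-1},x_m]$ and jumping up by exactly one unit at each node $x_m$, $m\in\bb{Z}_p$ — that is, at each root and each hole.

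Next I would perform the summation by parts on each bounded component $[a_i,b_i]$: splitting the integral at the nodes it contains and using that $g$ is piecewise constant, the telescoping sum collapses to
$$\bb{K}_{[a_i,b_i]}[g](x) = \sum_{x_m \in [a_i,b_i]} F_{\alpha}\!\left(\frac{x}{x_m}\right) + \ceili{z(a_i)-\frac12}F_{\alpha}\!\left(\frac{x}{a_i}\right) - \floori{z(b_i)+\frac12}F_{\alpha}\!\left(\frac{x}{b_i}\right),$$
and likewise for $[\omega,+\infty)$ one gets the corresponding node sum plus the single left-boundary term $+\ceili{z(\omega)-\frac12}F_{\alpha}(x/\omega)$. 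The node sum $\sum_{x_m\in[a_i,b_i]}F_{\alpha}(x/x_m)$ then separates into the contribution of the roots in $I$ and the holes in $I$, which will ultimately produce the terms indexed by $\overline{\bb{H}}_{\bar{I}}$ and $\bb{H}_I$ in (\ref{eq:Iidentity}).

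The delicate point — and the step I expect to be the main obstacle — is pinning down the boundary terms, i.e. justifying the asymmetric choice of $\ceili{\cdot}$ at the left endpoints $a_i,\omega$ and $\floori{\cdot}$ at the right endpoints $b_i$. The cleanest route is to check the two regimes of an endpoint. When $b_i$ is not a node one has $\floori{z(b_i)+\frac12}=\ceili{z(b_i)-\frac12}=g(b_i)$, so the convention is immaterial; when $b_i=x_M$ is a node, that node must be counted in $\sum_{x_m\in[a_i,b_i]}F_{\alpha}$, and reconciling this with the left-limit value $g(b_i)=M$ produces exactly $M+1=\floori{z(b_i)+\frac12}$. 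The left endpoints are treated symmetrically with the ceiling via left-continuity of $g$, and the degenerate case of an interval containing no node is checked directly. For the unbounded component I would verify that the boundary contribution at $+\infty$ vanishes: since $F_{\alpha}(w)=O(w)$ as $w\to 0^+$ (from (\ref{eq:Ffunza})) and $z(y)\sim y^{\frac{1+\alpha}{2\alpha}}$ (Lemma \ref{lem:z}(4)), the product $F_{\alpha}(x/y)\,g(y)$ decays like $y^{\frac{1-\alpha}{2\alpha}}\to 0$ because $\alpha>1$; the same estimate guarantees absolute convergence of $\int_\omega^{+\infty}$.

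Finally I would assemble the pieces. Summing the per-component identities over $I=\bigsqcup_i[a_i,b_i]\sqcup[\omega,+\infty)$ expresses $\bb{K}_I[g]$ as $\sum_{x_m\in I}F_{\alpha}(x/x_m)$ plus the collected boundary terms. Substituting the representation $z(x)=-2p+\sum_{y\in\mc{R}}F_{\alpha}(x/y)=-2p+\sum_{k\in\overline{\bb{H}}}F_{\alpha}(x/x_k)$ from Lemma \ref{lem:z}(2), splitting the full root sum into roots inside $I$ and roots outside $I$ (the latter being $\overline{\bb{H}}_{\bar{I}}$), and replacing the roots-in-$I$ part by the boundary identity above, a rearrangement yields (\ref{eq:Iidentity}). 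All the rearrangements are legitimate because $\bb{H}$ is finite and the defining series for $z$ converges absolutely.
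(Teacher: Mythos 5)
Your proof is correct and takes essentially the same route as the paper: both convert the node sum into $\bb{K}_I\left[\ceili{z-\frac12}\right]$ by parts against the unit-jump step function $\ceili{z(y)-\frac12}$ (the paper phrases this as a Riemann--Stieltjes integration by parts with $d\ceili{z(y)-\frac12}=\sum_k \delta(y-x_k)\,dy$, which is your summation by parts), kill the boundary contribution at $+\infty$ via $F_{\alpha}\left(\frac{x}{y}\right)=O(y^{-1})$ against $\ceili{z(y)-\frac12}=O\left(y^{\frac{1+\alpha}{2\alpha}}\right)$, and then rearrange using $z(x)=-2p+\sum_{k\in\overline{\bb{H}}}F_{\alpha}\left(\frac{x}{x_k}\right)$. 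The only (minor) divergence is in the degenerate endpoints where $z-\frac12\in\bb{Z}$ at $\omega$, $a_i$ or $b_i$: you resolve them directly by case-checking the ceiling/floor conventions via left-continuity of the step function, whereas the paper first proves the identity assuming non-degenerate endpoints and then recovers the general case by shifting the endpoints by $\e$ and letting $\e\to0^+$; both treatments are sound.
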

\begin{proof}
(1) First we prove (\ref{eq:Iidentity}), assuming that $z(\omega)+\frac12 \notin \bb{Z}$,
$z(a_i)+\frac12 \notin \bb{Z}$ and $z(b_i)+\frac12 \notin \bb{Z}$ for all $i\in\{0,\dots,n-1\}$. In particular, this implies
that
$\ceili{z(b_i)-\frac12}=\floori{z(b_i)+\frac12}$ for all $i\in\{0,\dots,n-1\}$.

By definition of $z$, we have that
\begin{equation}\label{eq:zproof} \tag{a}
z(x)= - 2p + \sum_{k \in \overline{\bb{H}}}
F_\alpha\left(\frac{x}{x_k}\right)=- 2p +
\sum_{k \in \overline{\bb{H}}_I} F_\alpha\left(\frac{x}{x_k}\right)+
\sum_{k \in \overline{\bb{H}}_{\bar{I}}} F_\alpha\left(\frac{x}{x_k}\right)\;.
\end{equation}
with $\overline{\bb{H}}_{I}=  \lbrace k \in \overline{\bb{H}}\,|\, x_k \in I  \rbrace$.
Let us study the series
\begin{align}\label{eq:zdelta} \tag{b}
 \sum_{k \in\overline{\bb{H}}_I } 
 F_\alpha\left(\frac{x}{x_k}\right)+\sum_{k \in\bb{H}_I }
 F_\alpha\left(\frac{x}{x_k}\right) = \int_I F_\alpha\left(\frac{x}{y}\right)
 \sum_{k\in\bb{H}_I\cup\overline{\bb{H}}_I} \delta\left(y-x_k\right)   dy \;.
\end{align}
Considering the right-hand-side of the above expression as a Riemann-Stieltjes integral, we write
\begin{align}\label{eq:sumI} \tag{c}
 \sum_{k \in\overline{\bb{H}}_I } 
 F_\alpha\left(\frac{x}{x_k}\right)+\sum_{k \in\bb{H}_I }
 F_\alpha\left(\frac{x}{x_k}\right) =\int_I F_\alpha\left(\frac{x}{y}\right) d \ceili{z(y)-\frac12}\;,
\end{align}
since $d \ceili{z(x)-\frac12}=  \sum_{k \in \bb{Z}_p} \delta\left(x-x_k\right) dx$.
\footnote{The latter formula stems from the fact that $\ceili{z(x)-\frac12}$ is
constant on every interval of the form $(x_{k-1},x_k)$ while
$\lim_{\e \to 0^+}\ceili{z(x_k+\e)-\frac12}-\ceili{z(x_k-\e)-\frac12}=1$.}

Integrating (\ref{eq:sumI}) by parts we obtain
\begin{align}\label{eq:sumFproof} \tag{d}
 & \sum_{k \in\overline{\bb{H}}_I}  F_\alpha\left(\frac{x}{x_k}\right)=-\sum_{k \in\bb{H}_I}
 F_\alpha\left(\frac{x}{x_k}\right) -
 \int_I \left(\frac{d}{dy} F_\alpha\left(\frac{x}{y}\right) \right) \ceili{z(y)-\frac12} dy \\ \nonumber
 &-F_\alpha\left(\frac{x}{\omega}\right)\ceili{z(\omega)-\frac12}+
 \sum_{i=0}^{n-1} \left( F_\alpha\left(\frac{x}{b_i}\right)\ceili{z(b_i)-\frac12}
 -F_\alpha\left(\frac{x}{a_i}\right)\ceili{z(a_i)-\frac12}\right)\;.
\end{align}
In the last passage, we used the fact that 
$\lim_{y \to+\infty} F_\alpha\left(\frac{x}{y}\right)\ceili{z(y)-\frac12}=0$, since
$F_\alpha\left(\frac{x}{y}\right)=O(y^{-1})$ as $y\to+\infty$ while
$\ceili{z(y)-\frac12}=O\left(y^{\frac{\alpha+1}{2\alpha}}\right)$ as $y\to+\infty$, due to Lemma \ref{lem:z}(4).
Finally, combining identities (\ref{eq:zproof}),(\ref{eq:zdelta}) and (\ref{eq:sumFproof}), and using the identity
$\frac{d}{dy} F_{\alpha}\left(\frac{x}{y}\right) =-\frac{1}{y}K_{\alpha}\left(\frac{y}{x}\right)$
that follows from (\ref{eq:Kfunzal}) and (\ref{eq:Ksymm}), we obtain (\ref{eq:Iidentity}).\\
(2) Assume on the contrary that $z(\omega)-\frac12 \in \bb{Z}$ or there exists 
a $j \in \{ 1,\dots, n\}$ such that $z(a_j)-\frac12 \in \bb{Z}$ or
$z(b_j)-\frac12 \in \bb{Z}$.
Now, for any $\e$ sufficiently small we let $I^{\e}$ be the admissible set with $\omega^{\e}=\omega+\e$,
$a_i^{\e}=a_i+\e$, and
$b_i^{\e}=b_i + \e$.
If $\e$ is sufficiently small then $z(\omega^\e)-\frac12\notin \bb{Z}$ as well as
$z(a_i^{\e})-\frac12\notin\bb{Z}$ and $z(b_i^{\e})-\frac12 \notin \bb{Z}$ for all $i\in\{0,\dots,n-1\}$.
Hence by (1), identity (\ref{eq:Iidentity}) holds for the set $I^{\e}$. Now taking the limit  $\e \to 0$ of (\ref{eq:Iidentity}),
we deduce the thesis after some simple computations.
\end{proof}

\subsection{Standard form of the IBA equation}
While the identity (\ref{eq:Iidentity}) holds for any admissible set $I$, in our analysis we
stick to admissible sets of the form $I=[\omega,+\infty)$ with $\omega>0$. For such sets we shall prove the converse of Proposition \ref{prop:intzI}:
if $z$ solves the IBA equation (\ref{eq:Iidentity}) and it is strictly monotone
then it solves the logarithmic BAE (\ref{eq:logBAH}), hence it is the function associated to a purely real normalised solution
of the BAE (\ref{eq:BAEQ}).

As a first step, we write (\ref{eq:Iidentity}) in a more convenient form. 
After Lemma \ref{lem:holespartitions},
if $z$ is a solution of the logarithmic BAE (\ref{eq:logBAH}) whose set of hole-numbers $\bb{H}$ is of zero sector, then
the lowest root number is $\kappa=-H $ for some $H\geq 0$, and $\bb{H}$
coincides with the set $\bb{H}_{[\nu]}$ for some partition $[\nu]=(\nu_1,\dots,\nu_H)$ (see (\ref{eq:partitiontoholes})).
Moreover, if we choose $\omega>0$ such that
$\ceili{z(\omega)-\frac12}=-H$, the subset
of hole-numbers such that the corresponding hole belongs to $I$, i.e. the set $\bb{H}_I$ defined in Proposition \ref{prop:intzI},
can be represented as $\bb{H}_I=\lbrace -H+\sigma(k)\rbrace_{k=1}^H$ where $\sigma(k)=\nu_{H+1-k}-(H+1-k)$.
Therefore after Proposition \ref{prop:intzI},
the following identities hold
\begin{subequations}
  \label{eq:IBAStGen}
 \begin{alignat}{2}
     &z(x)= -2p + \bb{K}_\omega\left[\ceili{z-\frac12}\right](x)+H \, F_\alpha\left(\frac{x}{\omega}\right)-\sum_{k=1}^{H} F_\alpha\left(\frac{x}{h_k}\right)\;,\quad \forall x\in[\omega,+\infty) \;,
     \label{eq:IBASt}\\
     &z(h_k)= \sigma(k) +\frac12 \;,\quad \forall k \in\{1,\dots,H\} .\label{eq:IBASth}
     \end{alignat}
 \end{subequations}

Now, we shall prove the converse of Proposition \ref{prop:intzI}.
\begin{definition}
Fix a $H \geq 0$ and a strictly monotone function $\sigma: \lbrace 1,\dots, H \rbrace \to \bb{Z}$
such that $\sigma(1)>-H$.
 We say that the $(H+2)$-tuple $\{z,\underline{h},\omega\}$, with
 \begin{itemize}
  \item $\omega $ a positive real number;
  \item $z:[\omega,+\infty) \to \bb{R}^+$ a continuous function such that $\lim_{x\to+\infty}x^{-\frac{1+\alpha}{2\alpha}}z(x)=1$;
  \item $\underline{h}=(h_1,\dots,h_H) \in \bb{R}^H$ a vector whose components satisfy the inequalities $0<\omega<h_1<\dots<h_H$;
 \end{itemize}
 is a solution of the standard IBA equation if the system (\ref{eq:IBAStGen}) holds,
  together with the constraint
\begin{align}\label{eq:IBAStka}
& \ceili{z(\omega)-\frac12} =-H\;.
\end{align}
We say moreover that the solution is strictly monotone if $z$ is differentiable and
\begin{equation}
 \label{eq:IBAStSt}
 z'(x)>0\;,\quad \forall x\in[\omega,+\infty)\;.
\end{equation}
Finally, we say that two solutions $\{z,\underline{h},\omega\} $ and $\{\tilde{z},\underline{\tilde{h}},\tilde{\omega}\}$
are equivalent
if $z(x)=\tilde{z}(x)$ for all $x\geq \max\{\omega,\tilde\omega\}$.
\end{definition}

\begin{proposition}\label{prop:strongz}
Fix a $H \geq 0$ and a strictly monotone function $\sigma: \lbrace 1,\dots, H \rbrace \to \bb{Z}$
such that $\sigma(1)>-H$.\\
(1) If $\{z,\underline{h},\omega\}$ is a non-necessarily strictly monotone solution of the standard IBA equation, then
 $z$ extends to a holomorphic function on the domain 
$\mc{D}_{\omega}=\bb{C}
\setminus \left\lbrace e^{\pm \frac{2 \pi i}{\alpha+1}} x \,
|\, x \in[\omega,+\infty)\right\rbrace$, with $z(0)=-2p$.\\
(2) If $\{z,\underline{h},\omega\}$ is a strictly monotone solution of the standard IBA equation,
then the extension of $z$ to the positive real line is a solution of the logarithmic BAE (\ref{eq:logBAH}) with
set of hole-numbers
\begin{equation}
\label{eq:bbHsIBA}
 \bb{H}= \left\lbrace k\in\bb{Z}_p\,|\,k<-H \right\rbrace \cup \sigma\left(\{1,\dots,H\}\right)\;.
\end{equation}
In particular, $\bb{H}=\bb{H}_{[\nu]}$ with $[\nu]=(\nu_1,\dots,\nu_H)$ and $\nu_l=\sigma_{H+1-l}+l$.\\
(3) If $\{z,\underline{h},\omega\}$ is a strictly monotone solution of the standard IBA equation, then
$z$ coincides with the function associated to a purely real and normalised solution of the BAE (\ref{eq:BAEQ}).
  \begin{proof}
(1) (1i) The kernel $K_{\alpha}$ is a rational function of degree two, with simple poles at
 $x=e^{\pm\frac{2 \pi i}{\alpha+1} }$ and simple zeroes
 at $x=0$ and $x=\infty$.
 It is therefore analytic in the domain $\mc{D}_1$ and it is easily seen that
 $\bb{K}_{\omega}\left[\ceili{z-\frac12}\right](x)$ can be analytically continued on the domain
  $\mc{D}_{\omega}$ and that $\lim_{x\to 0}\bb{K}_{\omega}\left[\ceili{z-\frac12}\right](x)=0$.\\
  (1ii) After (\ref{eq:Kfunzal}), $K_{\alpha}(x)=D_x F_{\alpha}(x)$, hence
  the function $F_{\alpha}\left(\frac{x}{y}\right)$ -- with $y=\omega$ or $y=h_k$ -- is analytic in
  the domain $\mc{D}_{\omega}$ for all $y\geq \omega$. Moreover, after formula (\ref{eq:Kfunzal}),
  $F_{\alpha}(0)=0$.
  
  Combining (1i) with (1ii) and using (\ref{eq:IBASt}),
  we deduce the thesis. \\
  (2) After (1),  $z$ extends to a real analytic function on $\bb{R}^+$.
 Moreover, by hypothesis $z'(x)>0$ for all $x \in [\omega,+\infty)$.
 Reverting all the steps of the proof of Proposition \ref{prop:intzI} and using the constraint (\ref{eq:IBAStka}), we get
  \begin{align*}
 \bb{K}_\omega\left[\ceili{z-\frac12}\right]+H\,F_\alpha\left(\frac{x}{\omega}\right)=\sum_{y\in\mc{T}}  F_\alpha\left(\frac{x}{y}\right) \;,
\end{align*}
with $\mc{T}= \left\lbrace x \in [\omega,+\infty)\,|\, z(x)-\frac12 \in \bb{Z}\right\rbrace$. Whence
\begin{equation}\label{eq:propstrongz}
 z(x)=  -2p + \sum_{y\in \mc{R}} F_{\alpha}\left(\frac{x}{y}\right)\;,\quad \forall x \in \bb{R}^+\;,
\end{equation}
with $\mc{R}=\mc{T}\setminus \lbrace h_1,\dots,h_H\rbrace$.
From the above expression we have that $z'(x)>0$ for all $x\in\R^+$.

Since $z$ is strictly monotone, we can define the points $x_k=z^{-1}\left(k+\frac12\right)$ for all
$k \in \bb{Z}_p$. We observe that $x_k\in\mc{T}$ if and only if
$k\in\{k\in\bb{Z}_p\,|\,k\geq H\}$ and, consequently, $x_k\in\mc{R}$ if and only if
$k\in \bb{Z}_p\setminus\bb{H}$ with $\bb{H}$ as per (\ref{eq:bbHsIBA}). Therefore,
we deduce from (\ref{eq:propstrongz}) that $z$ satisfies the logarithmic BAE (\ref{eq:logBAH})
\begin{equation}\label{eq:proplogBAH}
 -2p + \sum_{j\in\bb{Z}_p\setminus\bb{H}} F_{\alpha}\left(\frac{x_k}{x_j}\right)-\frac12=k \;,\quad \forall k \in\bb{Z}_p\setminus\bb{H}\;.
\end{equation}
(3) Let $Q(x)=\prod_{k \in \bb{Z}_p\setminus \bb{H}} \left(1-\frac{x}{x_k} \right)$. Since
$z(x)=x^{\frac{1+\alpha}{2\alpha}}+o\left(x^{\frac{1+\alpha}{2\alpha}}\right)$ as $x\to+\infty$, then
$x_k=k^{\frac{2\alpha}{1+\alpha}}+ o\left(k^{\frac{2\alpha}{1+\alpha}}\right)$ as $k\to+\infty$, hence
the product converges to an entire function of order $\frac{1+\alpha}{2\alpha}$, and
due to (\ref{eq:propstrongz}), $z$ coincides with the function associated to $Q$
by formula (\ref{eq:defz}), namely
$$z(x)=-2p+
\log\frac{Q\left(xe^{-\frac{2 \pi i}{1+\alpha}}\right)}{Q\left(x e^{\frac{2 \pi i}{1+\alpha}}\right)}\;. $$
Moreover, from (\ref{eq:propstrongz}), we have that
\begin{equation*}
 0 \leq z(x)+2p- n_Q(x) \leq |\bb{H}|\;,
\end{equation*}
which implies that $Q(x)$ is normalised.
Finally, if $x_k$ is a root of $Q$, then
$$e^{- 4 \pi i p } \frac{Q\left(x_ke^{-\frac{2 \pi i}{1+\alpha}}\right)}{Q\left(x_k e^{\frac{2 \pi i}{1+\alpha}}\right)} = e^{2 \pi i
z(x_k)}=-1\;, $$
where in the last step we have used equation (\ref{eq:proplogBAH}).
 \end{proof}
\end{proposition}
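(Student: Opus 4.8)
The plan is to read Proposition \ref{prop:strongz} as the exact converse of Proposition \ref{prop:intzI}, specialised to the admissible set $I=[\omega,+\infty)$. Part (1) supplies the holomorphic extension that makes the subsequent manipulations legitimate; part (2) reverses the integration-by-parts chain of Proposition \ref{prop:intzI} to recover the sum representation of $z$ and reads off the hole-numbers; part (3) packages the outcome into an entire $Q$ and verifies its three defining properties (the BAE (\ref{eq:BAEQ}), reality-positivity of the roots, and the normalisation (\ref{eq:counting})).

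For part (1) I would continue each term of the standard IBA equation (\ref{eq:IBASt}) separately. The kernel $K_\alpha$ of (\ref{eq:Kfunzal}) is rational of degree two, with simple poles at $e^{\pm\frac{2\pi i}{\alpha+1}}$ and simple zeroes at $0$ and $\infty$; hence in $\bb{K}_\omega[\ceili{z-\frac12}](x)$ the integrand $K_\alpha(x/y)/y$ has, as $y$ sweeps $[\omega,+\infty)$, poles located precisely along the two rays $e^{\pm\frac{2\pi i}{\alpha+1}}[\omega,+\infty)$. Staying clear of these rays shows the convolution is holomorphic on $\mc{D}_\omega$ and vanishes at $x=0$. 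Since $K_\alpha=D_x F_\alpha$, each $F_\alpha(x/y)$ with $y\in\{\omega,h_1,\dots,h_H\}\subset[\omega,+\infty)$ shares this analyticity domain and satisfies $F_\alpha(0)=0$, so summing the terms of (\ref{eq:IBASt}) gives the extension with $z(0)=-2p$.

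For part (2) I would, reversing steps (a)–(d) of the proof of Proposition \ref{prop:intzI} (here there are no internal intervals, so the only boundary terms sit at $\omega$ and at $\infty$), integrate the convolution by parts backwards. The constraint (\ref{eq:IBAStka}), $\ceili{z(\omega)-\frac12}=-H$, makes the boundary term at $\omega$ merge with the explicit $H\,F_\alpha(x/\omega)$ term, while the contribution at infinity vanishes since $F_\alpha(x/y)=O(y^{-1})$ beats the polynomial growth of $\ceili{z-\frac12}$; this yields
\begin{equation*}
\bb{K}_\omega\!\left[\ceili{z-\tfrac12}\right](x)+H\,F_\alpha\!\left(\tfrac{x}{\omega}\right)=\sum_{y\in\mc{T}}F_\alpha\!\left(\tfrac{x}{y}\right),\qquad \mc{T}=\{x\ge\omega:z(x)-\tfrac12\in\bb{Z}\}.
\end{equation*}
Substituting into (\ref{eq:IBASt}) and cancelling $\sum_k F_\alpha(x/h_k)$ (the $h_k$ lie in $\mc{T}$ by (\ref{eq:IBASth})) gives $z(x)=-2p+\sum_{y\in\mc{R}}F_\alpha(x/y)$ with $\mc{R}=\mc{T}\setminus\{h_1,\dots,h_H\}$. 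Because every $F_\alpha$ is strictly increasing, this representation upgrades the hypothesised monotonicity on $[\omega,+\infty)$ to $z'>0$ on all of $\bb{R}^+$, which is exactly what is needed to define $x_k=z^{-1}(k+\frac12)$ for every $k\in\bb{Z}_p$. The ceiling constraint forces $\mc{T}=\{x_k:k\ge -H\}$, so $\mc{R}$ is indexed by $\{k\ge -H\}\setminus\sigma(\{1,\dots,H\})$; taking complements in $\bb{Z}_p$ produces the hole set (\ref{eq:bbHsIBA}), and the resulting identities are precisely the logarithmic BAE (\ref{eq:logBAH}).

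Part (3) then constructs $Q(x)=\prod_{k\in\bb{Z}_p\setminus\bb{H}}(1-x/x_k)$: the asymptotics $z(x)\sim x^{\frac{1+\alpha}{2\alpha}}$ force $x_k\sim k^{\frac{2\alpha}{1+\alpha}}$, so the product is entire of order $\frac{1+\alpha}{2\alpha}\in(\frac12,1)$, and by construction $z$ is its associated function in the sense of (\ref{eq:defz}); the sandwich $0\le z(x)+2p-n_Q(x)\le|\bb{H}|$ furnishes the normalisation (\ref{eq:counting}), and exponentiating the logarithmic BAE at a root gives $e^{-4\pi i p}Q(x_ke^{-\frac{2\pi i}{1+\alpha}})/Q(x_ke^{\frac{2\pi i}{1+\alpha}})=e^{2\pi i z(x_k)}=-1$, i.e. (\ref{eq:BAEQ}). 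The main obstacle I anticipate is the bookkeeping in part (2): making the backward integration by parts rigorous and, above all, matching the index set of $\mc{R}$ to $\sigma$ so that the extracted hole-numbers are exactly (\ref{eq:bbHsIBA}) with the correct identification $\kappa=-H$ and no off-by-one error.
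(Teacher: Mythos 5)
Your proposal is correct and follows essentially the same route as the paper's own proof: term-by-term analytic continuation for (1), reversal of the integration-by-parts chain of Proposition \ref{prop:intzI} with the constraint (\ref{eq:IBAStka}) absorbing the boundary term at $\omega$ for (2), and the Hadamard-product construction with the sandwich $0\le z(x)+2p-n_Q(x)\le|\bb{H}|$ for (3). Your bookkeeping in (2) is in fact slightly cleaner than the paper's, which contains a typo at the corresponding step (it writes $x_k\in\mc{T}$ iff $k\ge H$ where, as you correctly state, the condition is $k\ge -H$).
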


\begin{remark}
 We notice here that, due to the constraint (\ref{eq:IBAStka}), the boundary of integration $\omega$ of the standard IBA
 is not fixed but depends on the solution of the equation. This is the reason why we say that the standard IBA equation
 is a free-boundary nonlinear integral equation. This fact makes its study particularly challenging. 
\end{remark}

\subsection{Linearised IBA equation}
It is useful to consider $\ceili{z(x)-\frac12}$ as a function oscillating about $z(x)$. For this reason, we define
\begin{equation}\label{eq:afr}
 \afr{\cdot}:\bb{R} \to \left(-\frac{1}{2},\frac12\right]:x\mapsto\afr{x}:=x-\ceili{x-\frac12} \;.
\end{equation}
Whence, in the standard IBA we shall write
\begin{equation*}
 \bb{K}_\omega \left[\ceili{z-\frac12}\right](x)=\bb{K}_\omega [z](x)-
\bb{K}_\omega [\afr{z}](x) \;.
\end{equation*}
Neglecting the nonlinear term $\afr{z}$ and setting $h_1=\dots=h_H=\omega$, from the standard IBA (\ref{eq:IBASt})
we obtain the following linear(ised) standard IBA equation for a function $l:[\omega,+\infty)\to\R^+$
\begin{equation}
\label{eq:IBAStL}
l(x)= -2p + \bb{K}_\omega[l](x)\;,\quad \forall x\in[\omega,+\infty).
\end{equation}

The above equation is a Wiener-Hopf integral equation. It was solved by means of the Mellin transform
(or the Fourier transform if one introduces the variable $\theta=\ln x$) in the physics literature,
see e.g. \cite{baluzaII,fioravanti03}. It was shown to have a unique solution once the normalisation
such that $l(x)=x^{\frac{1+\alpha}{2\alpha}}+o\left(x^{\frac{1+\alpha}{2\alpha}}\right)$ as $x\to+\infty$, and it was shown to admit
an explicit integral representation as the anti-Mellin transform of a ratio of Gamma functions.

Here we follow a different, more direct and convenient path. We provide the explicit solution of the linear IBA equation
in terms of a WKB integral.
Let us first consider the following function 
\begin{equation}
V(\cdot;x): \R^+ \to \R^+\;,\quad V(u;x):=x u^2-u^{2\alpha+2}- 1 \;,\quad x\in\R^+\;.
\label{eq:Vfun}
\end{equation}
In the next Lemma (whose proof is left to the reader), we discuss the properties of the real zeroes of $V(\cdot;x)$ as functions
of $x\in\R^+$.
\begin{lemma}
\label{lem:u12prop}
Let $x\in\R^+$ and define 
\begin{equation}
\label{eq:cutoff}
\sym:=\frac{\alpha+1}{\alpha} \alpha^{\frac{1}{\alpha+1}}\in(1,2]\;.
\end{equation}
The equation $V(u;x)=0$ admits
\begin{itemize}
\item two real positive and distinct roots at $u=u_1\in\left(0,\alpha^{-\frac{1}{2\left(1+\alpha\right)}}\right)$ and $u=u_2\in\left(\alpha^{-\frac{1}{2\left(1+\alpha\right)}},+\infty\right)$ if $x>\sym$;
\item two real positive and coincident roots at $u=\alpha^{-\frac{1}{2\left(1+\alpha\right)}}$ if $x=\sym$;
\item no real roots if $0\leq x<\sym$.
\end{itemize}
Let $u_1$ and $u_2$ be the real functions that fulfil the equation $V(u_k(x);x)=0$ with $k\in\{1,2\}$ for all $x\in[\sym,+\infty)$. Then,\\
(1) $u_1:[\sym,+\infty)\to\left(0,\alpha^{-\frac{1}{2\left(1+\alpha\right)}}\right]$ and $u_2:[\sym,+\infty)\to\left[\alpha^{-\frac{1}{2\left(1+\alpha\right)}},+\infty\right)$ are continuous;\\
(2) $u_1$ and $u_2$ restricted to $(\sym,+\infty)$ are smooth;\\
(3) $u_1$ is strictly decreasing, while $u_2$ is strictly increasing for all $[\sym,+\infty)$;\\
(4) the following asymptotics hold at large positive $x$
\begin{align}
&u_1(x)=x^{-\frac12}+O\left(x^{-\frac32-\alpha}\right)
      \mbox{ as } x\to+\infty \;,\notag \\
&u_2(x)=x^{\frac{1}{2\alpha}}+O\left(x^{-\frac{1}{2\alpha}-1}\right)
      \mbox{ as } x\to+\infty \;;
    \label{eq:u12asy}
\end{align}
(5) the following expansions hold in a right neighborhood of $x=\sym$
\begin{align}
&u_k(x)=\alpha^{-\frac{1}{2\left(1+\alpha\right)}}+\sum_{n\geq 1}(-1)^{kn}c_n\left(x-\sym\right)^{\frac{n}{2}} \mbox{ as } x\to \sym^+ \;,\notag\\
&\mbox{with } \{c_n\}_{n\geq 1}\in\R\;,\quad c_1 = \frac{\alpha ^{-\frac{1}{\alpha +1}}}{\sqrt{2\alpha +2}}\;.
\label{eq:u12lim}
\end{align}
\end{lemma}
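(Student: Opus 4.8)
The plan is to eliminate the parameter $x$ by dividing the equation $V(u;x)=0$ by $u^2>0$, which reduces it to the equivalent relation $x=g(u)$ with $g(u):=u^{2\alpha}+u^{-2}$ for $u\in\R^+$. Every claim in the lemma then follows from the elementary analysis of the single real function $g$. Differentiating, $g'(u)=2\alpha u^{2\alpha-1}-2u^{-3}$, which vanishes exactly at $u_*:=\alpha^{-\frac{1}{2(1+\alpha)}}$ (the unique solution of $\alpha u^{2\alpha+2}=1$); since $g'<0$ on $(0,u_*)$ and $g'>0$ on $(u_*,+\infty)$, the function $g$ is strictly decreasing and then strictly increasing, with $g(u)\to+\infty$ as $u\to 0^+$ and as $u\to+\infty$, and global minimum at $u_*$. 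A direct computation using $\alpha u_*^{2\alpha+2}=1$ gives $g(u_*)=\frac{\alpha+1}{\alpha}\alpha^{\frac{1}{\alpha+1}}=\sym$, and a short check (e.g. the values at $\alpha=1$ and $\alpha\to+\infty$, together with monotonicity in $\alpha$) shows $\sym\in(1,2]$. The trichotomy of the statement is then immediate from the intermediate value theorem: $g=x$ has no solution for $x<\sym$, the single solution $u_*$ for $x=\sym$, and exactly two solutions for $x>\sym$, one in each monotone branch.

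Next I would define $u_1$ and $u_2$ as the inverses of $g$ restricted to $(0,u_*]$ and $[u_*,+\infty)$ respectively, so that $u_1:[\sym,+\infty)\to(0,u_*]$ and $u_2:[\sym,+\infty)\to[u_*,+\infty)$. Properties (1) and (3) are then automatic, since the inverse of a continuous strictly monotone function is continuous and strictly monotone, with $u_1$ decreasing (it inverts the decreasing branch) and $u_2$ increasing. For the smoothness (2) I would invoke the implicit function theorem, observing that differentiating $V(u;g(u))\equiv 0$ gives $\partial_u V=-u^2\,g'(u)$ (because $\partial_x V=u^2$), so $\partial_u V\neq 0$ at every root with $u\neq u_*$; hence on $(\sym,+\infty)$, where both roots differ from $u_*$, the functions $u_k$ are smooth (indeed real-analytic).

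For the large-$x$ asymptotics (4) I would invert $x=g(u)$ by dominant balance on each branch. On the branch $u_1\to 0^+$ the term $u^{-2}$ dominates: writing $x=u_1^{-2}(1+u_1^{2\alpha+2})$ and bootstrapping from $u_1\sim x^{-1/2}$ yields $u_1=x^{-1/2}+O(x^{-3/2-\alpha})$. On the branch $u_2\to+\infty$ the term $u^{2\alpha}$ dominates: writing $x=u_2^{2\alpha}(1+u_2^{-2\alpha-2})$ and bootstrapping from $u_2\sim x^{1/(2\alpha)}$ yields $u_2=x^{1/(2\alpha)}+O(x^{-1/(2\alpha)-1})$.

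The delicate point, and the \textbf{main obstacle}, is the Puiseux expansion (5) near the fold at $u_*$, where $g'(u_*)=0$. Here I would first compute $g''(u_*)$: using $u_*^{2\alpha-2}=\tfrac1\alpha u_*^{-4}$ one finds $g''(u_*)=4(\alpha+1)\alpha^{\frac{2}{\alpha+1}}>0$, so $u_*$ is a nondegenerate minimum. Since $g$ is real-analytic near $u_*$ and $g(u)-\sym$ has a double zero there, I would factor $g(u)-\sym=(u-u_*)^2\psi(u)^2$ with $\psi$ real-analytic and $\psi(u_*)=\sqrt{g''(u_*)/2}>0$; equivalently, $\phi(u):=(u-u_*)\psi(u)$ is analytic with $\phi'(u_*)=\psi(u_*)\neq 0$, hence locally invertible with analytic inverse. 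Setting $t=(x-\sym)^{1/2}\geq 0$, the two roots solve $\phi(u_1)=-t$ and $\phi(u_2)=+t$, the sign being forced by $u_1<u_*<u_2$. Expanding $\phi^{-1}(s)=u_*+\sum_{n\geq 1}c_n s^n$ with $c_1=1/\psi(u_*)=\sqrt{2/g''(u_*)}$ and substituting $s=\mp t$ produces exactly $u_k(x)=u_*+\sum_{n\geq 1}(-1)^{kn}c_n(x-\sym)^{n/2}$, the alternating factor $(-1)^{kn}$ encoding the two branches. Finally $c_1=\sqrt{2/g''(u_*)}=\frac{1}{\sqrt{2\alpha+2}}\,\alpha^{-\frac{1}{\alpha+1}}$, as claimed. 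The only genuine care needed is the bookkeeping of the square-root branch point and the convergence of the series, both of which follow at once from analyticity of $\phi^{-1}$.
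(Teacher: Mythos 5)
The paper offers no proof of this lemma to compare against: the authors explicitly leave the proof to the reader, so correctness of your argument is the only question, and your argument is correct and complete. The reduction of $V(u;x)=0$ to $x=g(u)$ with $g(u)=u^{2\alpha}+u^{-2}$ is the natural move, and all the key computations check out: $g'$ vanishes only at $u_*=\alpha^{-\frac{1}{2(1+\alpha)}}$, and using $\alpha u_*^{2\alpha+2}=1$ one gets $g(u_*)=\frac{\alpha+1}{\alpha}\alpha^{\frac{1}{\alpha+1}}=\sym$ (indeed $\frac{d}{d\alpha}\log\sym=-\frac{\ln\alpha}{(\alpha+1)^2}<0$ for $\alpha>1$, which together with $\sym=2$ at $\alpha=1$ and $\sym\to1$ as $\alpha\to+\infty$ confirms $\sym\in(1,2]$); the trichotomy, the continuity and strict monotonicity of the two inverse branches, and smoothness away from the fold via $\partial_u V=-u^2g'(u)$ on the zero set all follow as you say. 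The bootstrap for (4) gives precisely the stated error exponents, since $u_1^{2\alpha+2}\sim x^{-1-\alpha}$ and $u_2^{-2\alpha-2}\sim x^{-1-\frac{1}{\alpha}}$. Your treatment of the delicate point (5) is also sound: $g$ is real-analytic on $(0,+\infty)$ (as $u^{2\alpha}=e^{2\alpha\ln u}$), so $g(u)-\sym=(u-u_*)^2h(u)$ with $h$ analytic and $h(u_*)=g''(u_*)/2>0$, the map $\phi(u)=(u-u_*)\sqrt{h(u)}$ is an analytic local diffeomorphism, and solving $\phi(u)=\mp\sqrt{x-\sym}$ on the two branches yields a convergent Puiseux series with exactly the sign pattern $(-1)^{kn}$; the constants $g''(u_*)=4(\alpha+1)\alpha^{\frac{2}{\alpha+1}}$ and $c_1=\sqrt{2/g''(u_*)}=\frac{\alpha^{-\frac{1}{\alpha+1}}}{\sqrt{2\alpha+2}}$ agree with the statement. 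This is presumably close to what the authors had in mind when omitting the proof, and your write-up would serve as a valid replacement for it.
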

Let us define
\begin{equation}
S(x):= \begin{cases}
 \displaystyle{\frac{1}{\pi} \int_{u_1(x)}^{u_2(x)} \sqrt{V(u;x)}\,\frac{du}{u}\;,\quad
 x>\frac{\alpha+1}{\alpha} \alpha^{\frac{1}{\alpha+1}}}\\
 \displaystyle{0 \;,\quad 0\leq x\leq\frac{\alpha+1}{\alpha} \alpha^{\frac{1}{\alpha+1}}}
 \end{cases}\label{eq:Jfun}\;.
 \end{equation}
 where $V(\cdot;x)$ is as per (\ref{eq:Vfun}), while $u_1(x)$ and $u_2(x)$ are the two zeroes of $V(\cdot;x)=0$  studied in the Lemma
 above. The linear IBA can be explicitly solved in terms of the function $S$.
 
 In fact, we have the following proposition
\begin{proposition}\label{prop:tauJ}
The function $\fun: \R^+ \to \R^+ $ defined by the formula
 \begin{equation}\label{eq:tauJ}
  \fun(\xi):=2\left(1+\alpha\right) S\left(\frac{\alpha+1}{\alpha} \alpha^{\frac{1}{\alpha+1}}\xi\right)\;,\quad \xi\in\R^+,
 \end{equation}
satisfies the following properties:\\
(1) $\fun:\R^+\to\R^+$ is continuous and $\fun(\xi)=0$ for all $\xi\in\left[0,1\right]$;\\
(2) $\fun$ restricted to $\left[1,+\infty\right)$ is smooth;\\
(3) $\fun$ fulfils the following inhomogeneous integral equation
 \begin{equation}\label{eq:inhomWH}
    \fun(\xi)=-2+\bb{K}_1[\fun](\xi)\;,\quad \forall \xi\in[1,+\infty)\;,
\end{equation}
where $\bb{K}_1:=\bb{K}_{[1,+\infty)}$ is as per (\ref{eq:convop}).

The function $D\fun$ -- with $D$ the Euler operator -- fulfils the associated homogeneous integral equation 
\begin{equation}
\label{eq:homWH}
D_\xi\fun(\xi)=\bb{K}_1[D\fun](\xi) \;,\quad \forall \xi\in[1,+\infty)\;;
\end{equation}
(4) the following asymptotic holds
\begin{equation}
        \fun(\xi)=(A\xi)^{\frac{1+\alpha}{2\alpha}}-\left(1+\alpha\right)+O\left(\xi^{-\frac{1+\alpha}{2\alpha}}\right) \mbox{ as } \xi\to+\infty\;,
        \label{eq:fasy}
    \end{equation}
    with
    \begin{equation}
    \label{eq:Adef}
    A=(1+\alpha) \left(\frac{1}{\sqrt{\pi\alpha}}
 \frac{\Gamma \left(\frac{1}{2 \alpha }\right)}{\Gamma \left(\frac{1+\alpha}{2 \alpha }\right)}\right)^{\frac{2 \alpha }{\alpha +1}}\;.
   \end{equation}
    More generally, for all $n\in\bb{N}^+$
    \begin{equation}
        D_\xi^n\fun(\xi)=\left(\frac{1+\alpha}{2\alpha}\right)^n(A\xi)^{\frac{1+\alpha}{2\alpha}}+
        O\left(\xi^{-\frac{1+\alpha}{2\alpha}}\right) \mbox{ as } \xi\to+\infty\;;
        \label{eq:Dnfasy}
    \end{equation}
(5) the derivative of $\fun$ has a well-defined right limit at $1$,
  \begin{equation}\label{eq:fp1}
    \lim_{\xi \to 1^+}\fun'(\xi)= \frac{\left(1+\alpha\right)^{\frac32}}{\sqrt{2}\alpha} \;;
   \end{equation}
(6) $\fun$ is strictly increasing for all $\xi\in[1,+\infty)$ and we have that
    \begin{equation}\label{eq:infderf}
       \underset{\xi\in[1,+\infty)}{\inf}\left(\xi^{-\frac{1+\alpha}{2\alpha}}D_\xi \fun(\xi)\right)>0\;;
    \end{equation}
\end{proposition}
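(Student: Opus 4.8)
The plan is to dispatch the six properties in an order that front-loads the input from Lemma \ref{lem:u12prop} and isolates the integral equation (3) as the real difficulty. Throughout I write $\gamma=\frac{1+\alpha}{2\alpha}$, $u_*=\alpha^{-\frac{1}{2(1+\alpha)}}$, and I set $P(x):=D_xS(x)=xS'(x)$, so that $D_\xi\fun(\xi)=2(1+\alpha)\,P(\sym\xi)$. Vanishing of $\fun$ on $[0,1]$ is immediate from the definitions of $S$ and $\fun$. For continuity at $\xi=1$ (i.e.\ $x=\sym$) I would use Lemma \ref{lem:u12prop}(5): near the double turning point one has $u_{1,2}(x)=u_*\mp c_1\sqrt{x-\sym}+\cdots$, and since $\sqrt{V(\cdot;x)}$ stays bounded on the shrinking interval $[u_1,u_2]$, $S(x)=O(\sqrt{x-\sym})\to0$. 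Smoothness of $\fun$ on $(1,+\infty)$ follows by differentiating under the integral sign: because $V$ vanishes at the moving endpoints $u_1,u_2$, the Leibniz boundary terms drop out and $S'(x)=\frac{1}{2\pi}\int_{u_1}^{u_2}\frac{u\,du}{\sqrt{V(u;x)}}$; after a substitution sending $[u_1,u_2]$ to a fixed interval and absorbing the square-root endpoint singularities, the integrand becomes jointly smooth in $x$ and the new variable, justifying repeated differentiation. This proves (1) and (2).

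For (4) and (5) I extract the constants directly from the WKB integral. As $x\to+\infty$ the turning points separate as $u_1\sim x^{-1/2}$, $u_2\sim x^{1/(2\alpha)}$ (Lemma \ref{lem:u12prop}(4)); dropping the $-1$ and rescaling $u=x^{1/(2\alpha)}w$ turns the leading part of $S$ into $\frac{x^{\gamma}}{\pi}\int_0^1\sqrt{1-w^{2\alpha}}\,dw$, a Beta integral that evaluates to a ratio of Gamma functions reproducing exactly the constant $A$ of (\ref{eq:Adef}); the next order (coming from the $-1$ in $V$ and from the lower endpoint) gives the constant $-(1+\alpha)$, and the same expansion applied to $D_x^nS$ gives (\ref{eq:Dnfasy}). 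For (5), near $x=\sym$ I approximate $V(u;x)\approx u_*^2(x-\sym)-b(u-u_*)^2$ with $b=-\tfrac12\partial_u^2V|_{u_*}$, so that $S(x)$ is a semicircular (harmonic–oscillator) area, linear in $(x-\sym)$; computing its slope and using $\fun'(1^+)=2(1+\alpha)\,\sym\,S'(\sym^+)$ yields $\frac{(1+\alpha)^{3/2}}{\sqrt2\,\alpha}$.

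Property (3) is the main obstacle. Since $\fun$, hence $D\fun$, vanishes on $[0,1)$, $\bb{K}_1$ acts on it as the full multiplicative convolution with $K_\alpha$; by scale invariance the homogeneous equation (\ref{eq:homWH}) is equivalent to $P=K_\alpha\star P$ on $[\sym,+\infty)$, with $P$ supported on $[\sym,+\infty)$. I would prove this via the Mellin transform. A direct computation gives $\melli{K_\alpha}(s)=\frac{\sin(\pi s\frac{\alpha-1}{\alpha+1})}{\sin(\pi s)}$ on $-1<\Re s<1$, and interchanging the order of integration in $\melli{P}(s)=\int_\sym^\infty P(x)x^{s-1}dx$ (writing the $x$–integral over $\{x\ge u^{2\alpha}+u^{-2}\}$ and evaluating it as a Beta integral) yields the closed form $\melli{P}(s)=c_\alpha\,\frac{\Gamma(-\frac{s}{\alpha+1})\,\Gamma(-\frac{\alpha s}{\alpha+1}-\frac12)}{\Gamma(-s)}$, analytic for $\Re s<-\gamma$. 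Using $1-\melli{K_\alpha}(s)=\frac{2\cos(\frac{\pi\alpha s}{\alpha+1})\sin(\frac{\pi s}{\alpha+1})}{\sin(\pi s)}$ together with the reflection formula, the poles of $\melli{P}$ cancel — the cosine collapses $\Gamma(-\frac{\alpha s}{\alpha+1}-\frac12)$ into a reciprocal Gamma, and the remaining sines combine with $\Gamma(-\frac{s}{\alpha+1})/\Gamma(-s)$ into $(\alpha+1)\Gamma(s)/\Gamma(\frac{s}{\alpha+1})$ — leaving
\[
\big(1-\melli{K_\alpha}(s)\big)\,\melli{P}(s)=-2\pi(1+\alpha)\,c_\alpha\,\frac{\Gamma(s)}{\Gamma\!\big(\tfrac{s}{\alpha+1}\big)\,\Gamma\!\big(\tfrac32+\tfrac{\alpha s}{\alpha+1}\big)}\,.
\]
The right–hand side is analytic for $\Re s>-\gamma$ (the only pole of $\Gamma(s)$ there, at $s=0$, is killed by the zero of $1/\Gamma(\frac{s}{\alpha+1})$), and a Stirling estimate shows it is $O(\sym^{\Re s})$ as $\Re s\to+\infty$; hence its inverse Mellin transform $P-K_\alpha\star P$ is supported in $[0,\sym)$. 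Combined with $\operatorname{supp}P\subset[\sym,+\infty)$ this gives $P=K_\alpha\star P$ on $[\sym,+\infty)$, i.e.\ (\ref{eq:homWH}). Since the Euler operator commutes with multiplicative convolution, $D_\xi(\fun-\bb{K}_1[\fun])=D\fun-\bb{K}_1[D\fun]=0$ on $[1,+\infty)$, so $\fun-\bb{K}_1[\fun]$ is constant there; letting $\xi\to+\infty$ and comparing expansions (the $\xi^{\gamma}$ terms match because $\melli{K_\alpha}(-\gamma)=1$, while the constant terms give $-(1+\alpha)-(-(\alpha-1))=-2$) identifies the constant, proving (\ref{eq:inhomWH}). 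The delicate points are the justification of the Mellin product formula across the overlap strip, the Gamma–function bookkeeping, and the Stirling argument pinning the support bound exactly at $\sym$.

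Finally, for (6), positivity of $\fun'$ on $(1,+\infty)$ is clear from $P(x)=\frac{x}{2\pi}\int_{u_1}^{u_2}\frac{u\,du}{\sqrt{V}}>0$ for $x>\sym$, and the right–derivative at $\xi=1$ is positive by (5), so $\fun$ is strictly increasing on $[1,+\infty)$. For the infimum bound, the map $\xi\mapsto\xi^{-\gamma}D_\xi\fun(\xi)$ is continuous and strictly positive on $[1,+\infty)$, is positive at $\xi=1$ by (5), and tends to $\gamma A^{\gamma}>0$ as $\xi\to+\infty$ by (4); a continuous positive function on $[1,+\infty)$ with a positive limit at infinity has strictly positive infimum, which is (\ref{eq:infderf}).
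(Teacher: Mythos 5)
Your proposal is correct in substance, and for the central item (3) it takes a genuinely different route from the paper. The paper proves \emph{both} (\ref{eq:inhomWH}) and (\ref{eq:homWH}) directly by contour deformation: after Fubini, the inner $y$-integral of $K_\alpha\left(\frac{x}{y}\right)\sqrt{V(u;y)}$ is evaluated around the branch cut, picking up the residues at $y=e^{\pm\frac{2\pi i}{1+\alpha}}x$, and a second Cauchy argument in the $u$-plane (with a small arc at $u=0$ whose residue produces the $\frac{1}{1+\alpha}$, hence the constant $-2$) closes the computation; the homogeneous equation for $DS$ is obtained by the same scheme. You instead verify (\ref{eq:homWH}) by computing the Mellin transform of $DS$ in closed form — your Beta-integral computation is correct: I get $\melli{DS}(s)=\frac{1}{4\sqrt{\pi}(\alpha+1)}\,\Gamma\left(-\frac{s}{\alpha+1}\right)\Gamma\left(-\frac{\alpha s}{\alpha+1}-\frac12\right)/\Gamma(-s)$ for $\Re s<-\frac{1+\alpha}{2\alpha}$, and your Gamma bookkeeping and the Stirling exponent $s\log\sym$ both check out — and you then derive (\ref{eq:inhomWH}) from (\ref{eq:homWH}) by commuting $D$ with $\bb{K}_1$ (legitimate precisely because $\fun(1)=0$ kills the boundary term in the integration by parts, a point worth stating) and pinning the constant at infinity via (4) and $\Phi(\alpha,0)=\frac{\alpha-1}{\alpha+1}$; since you prove (4) directly from the WKB integral before invoking it, there is no circularity. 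In effect you reconstruct, in reverse, the Wiener--Hopf/Mellin solution that the paper deliberately sidesteps (compare the remark containing (\ref{eq:fhatrem})): your route buys contact with that explicit formula, while the paper's contour proof avoids all questions of analyticity strips and transform inversion.

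The one step I would press you on is the contour push in your support argument. Along vertical lines the exponential factors $e^{-\pi|t|/2}$ cancel exactly between numerator and denominator of your closed form, leaving decay only of order $|t|^{-1}$ — consistent with the fact that $DS$ has a jump at $x=\sym$ — so the inverse Mellin integral of $\left(1-\melli{K_\alpha}(s)\right)\melli{DS}(s)$ is not absolutely convergent and the contour cannot be shifted naively. This is fixable by standard means (for instance, divide by $s$ once, i.e. work with an antiderivative of the putative difference $DS-K_\alpha\star DS$, whose transform decays like $|t|^{-2}$, push the contour for that, and differentiate back), but as written it is a genuine gap rather than a mere delicacy, and you should close it explicitly. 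The remaining items are fine and essentially match the paper: your harmonic-oscillator approximation at the double turning point yields the same value $S'(\sym^+)=\frac{\alpha^{-\frac{1}{\alpha+1}}}{2\sqrt{2(\alpha+1)}}$ as the paper's residue computation for (5), and your proof of (6) — continuity and strict positivity of $\xi^{-\frac{1+\alpha}{2\alpha}}D_\xi\fun$ on $[1,+\infty)$, positive value at $\xi=1$ from (5), positive limit $\frac{1+\alpha}{2\alpha}A^{\frac{1+\alpha}{2\alpha}}$ at infinity from (4) — is cleaner than the paper's explicit lower-bound function $f(x)$.
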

\begin{proof}
    (1)-(2) Follow immediately from the definition (\ref{eq:Jfun}) and Lemma \ref{lem:u12prop};\\
    (3) Let us prove (\ref{eq:inhomWH}) first. From the definition (\ref{eq:tauJ}) we have 
    \begin{equation} \label{eq:item31}\tag{a}
    \bb{K}_1[\fun](\xi)=2\left(1+\alpha\right)\bb{K}_{\sym}[S]\left(\sym\xi\right),
    \end{equation}
    where $\sym$ is as per (\ref{eq:cutoff}).
    Let us consider the integral
 \begin{equation*}
     \bb{K}_{\sym}[S](x)=\int_{\sym}^{+\infty} K_\alpha\left(\frac{x}{y}\right)S(y)\frac{dy}{y}=\int_{\sym}^{+\infty}K_\alpha\left(\frac{x}{y}\right)\left(\int_{u_1(y)}^{u_2(y)} \sqrt{V(u;y)}\,\frac{du}{\pi u}\right)\frac{dy}{y}\;.
 \end{equation*}
 After Lemma \ref{lem:u12prop} we have
 \begin{equation*}
 u_1(\sym)=u_2(\sym)=u^*:=\alpha^{-\frac{1}{2\left(1+\alpha\right)}} \;,\quad\lim_{x\to+\infty}u_1(x)=0\;,\quad \lim_{x\to+\infty} u_2(x)=+\infty\;.
 \end{equation*}
 Then, exchanging the order of integration per Fubini Theorem, we get
 \begin{multline} \label{eq:item32} \tag{b}
    \bb{K}_{\sym}[S](x)=\frac{1}{\pi}\lim_{\delta\to0^+}\int_\delta^{+\infty}
  \left(\int_{y^*(u)}^{+\infty}K_\alpha\left(\frac{x}{y}\right)\sqrt{V(u;y)}\,\frac{dy}{y}\right)\frac{du}{u} \\
  =\frac{1}{2\pi}\lim_{\delta\to0^+}\int_\delta^{+\infty} \left[\sqrt{V\left(u;xe^{\frac{2 i \pi }{\alpha +1}}\right)}-\sqrt{V\left(u;xe^{-\frac{2 i \pi }{\alpha +1}}\right)}\right]\frac{du}{u}\\
  =\frac{1}{2\pi}\lim_{\delta\to0^+}\left[\int_{e^{\frac{i\pi}{1+\alpha}}\delta}^{e^{\frac{i\pi}{1+\alpha}}\infty}-\int_{e^{-\frac{i\pi}{1+\alpha}}\delta}^{e^{-\frac{i\pi}{1+\alpha}}\infty}\right]\left(\sqrt{V(u;x)}\,\frac{du}{u}\right)\;,
 \end{multline}
 where
 \begin{equation*}
 y^*(u)=\frac{u^{2\alpha+2}+1}{u^2}\;.
 \end{equation*}
 In the latter expression, we used contour integration in the complex plane to show that
 \begin{equation*}
     \int_{y^*(u)}^{+\infty}K_\alpha\left(\frac{x}{y}\right)\sqrt{V(u;y)}\,\frac{dy}{y}=\frac12\left[\sqrt{V\left(u;xe^{\frac{2 i \pi }{\alpha +1}}\right)}-\sqrt{V\left(u;xe^{-\frac{2 i \pi }{\alpha +1}}\right)}\right]\;.
 \end{equation*}
 More precisely, we choose a closed anti-clockwise contour around the branch cut going from $y^*(u)$ to $+\infty$ (see Figure \ref{fig:contour1}). Then, the result follows from Cauchy's Theorem, taking into account the residues coming from the simple poles of the integrand at $y=e^{\pm\frac{2\pi i}{1+\alpha}}x$.

\begin{figure}[t]
\centering
     \begin{subfigure}{0.3\textwidth}
         \centering
         \includegraphics[width=1.75\textwidth]{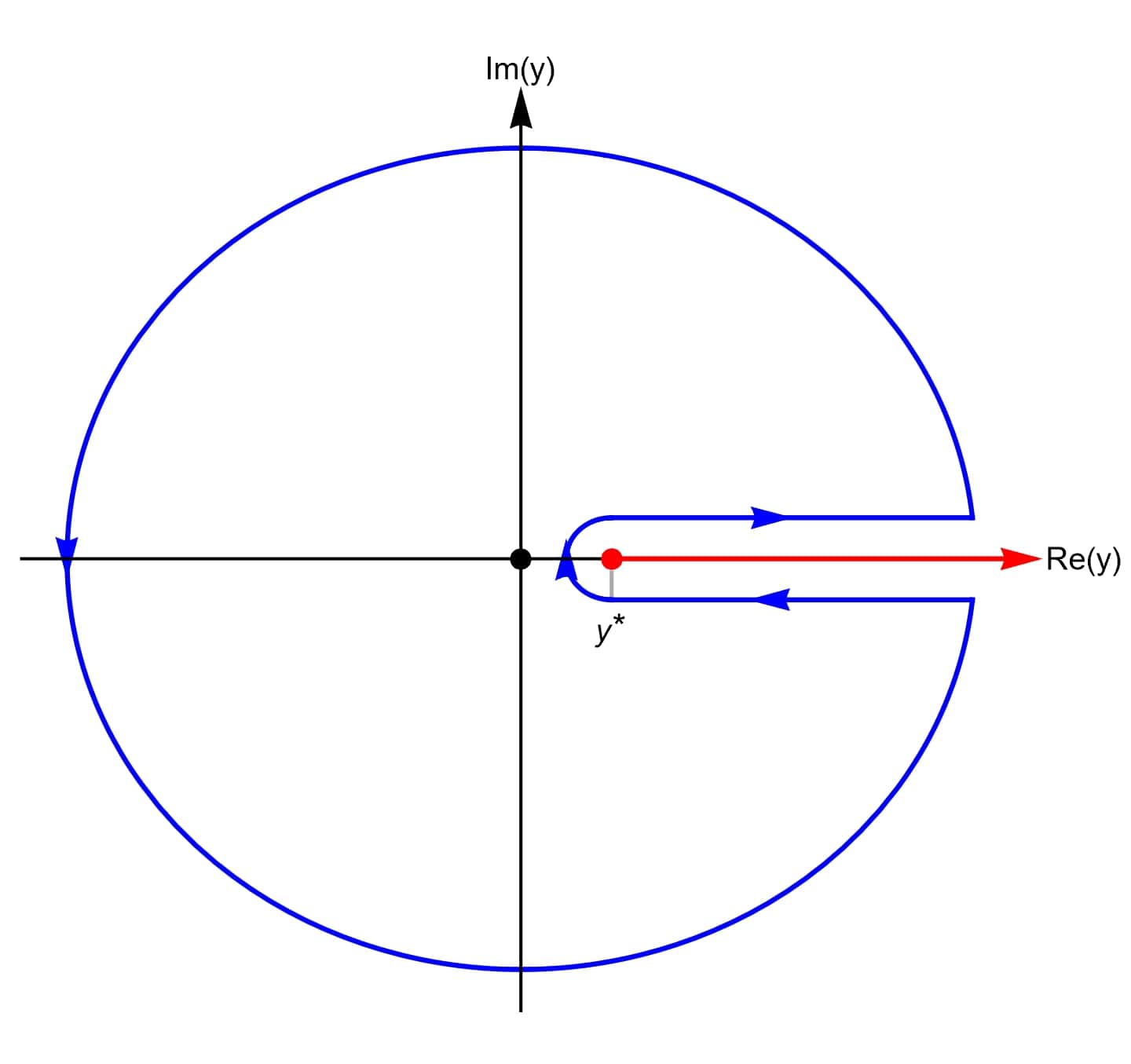}
         \caption{}
         \label{fig:contour1}
     \end{subfigure}
     \hfill
     \begin{subfigure}{0.3\textwidth}
         \centering
         \includegraphics[width=1.75\textwidth]{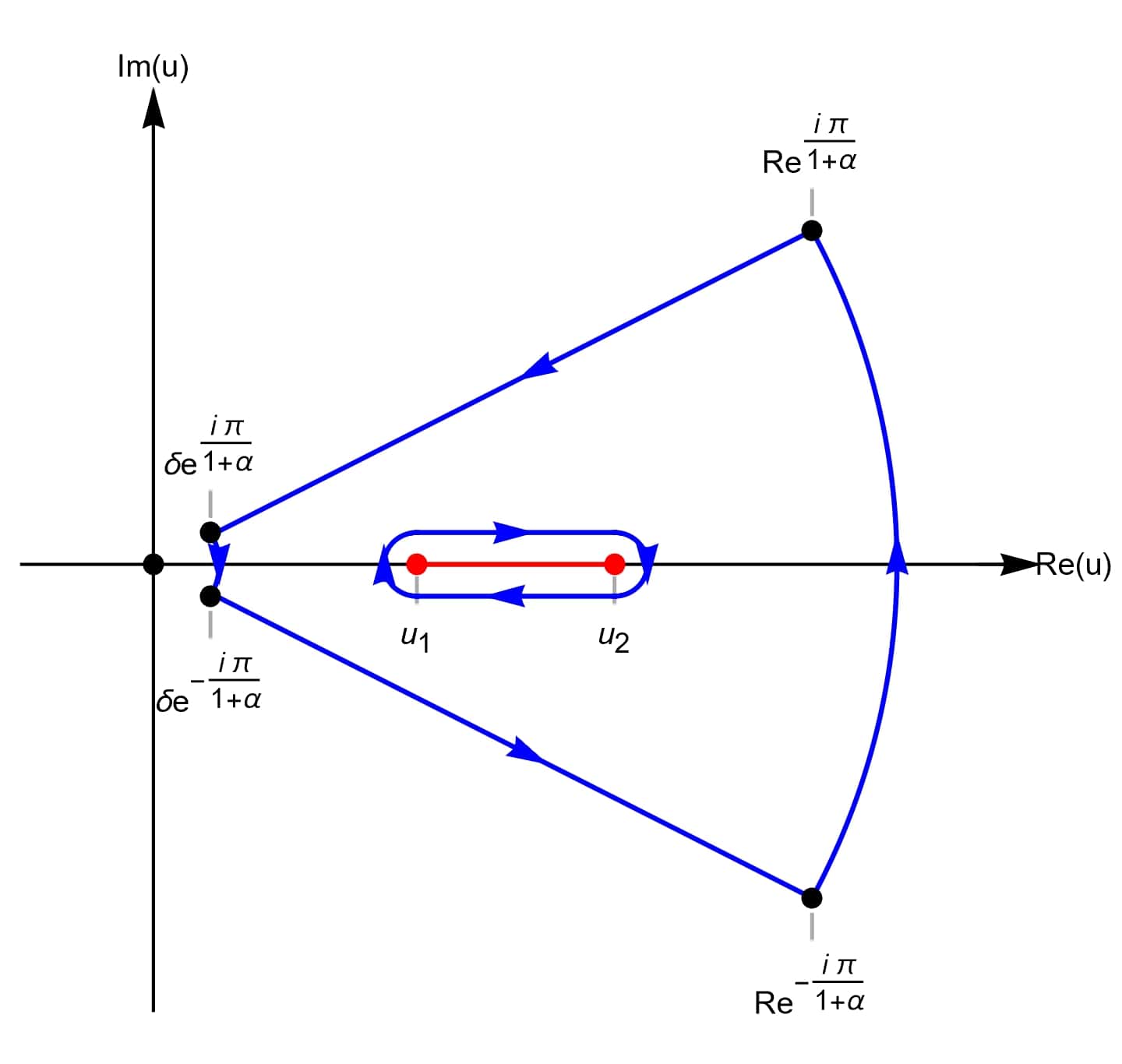}
         \caption{}
         \label{fig:contour2}
     \end{subfigure}
        \caption{Contours in the complex plane.}
        \label{fig:contours}
\end{figure}

To further manipulate the expression obtained in (\ref{eq:item32}), let us consider the integral of the function $\frac{1}{\pi u}\sqrt{V(u;x)}$ along a closed anti-clockwise contour made of (see Figure \ref{fig:contour2}):
\begin{itemize}
\item a clockwise circular arc of radius $\delta$ and amplitude $\frac{2\pi}{1+\alpha}$ around $u=0$, that we label as $\gamma_\delta(0)$;
\item a line going from $\delta e^{-\frac{i\pi}{1+\alpha}}$ to $\infty$ along the direction $-\frac{\pi}{1+\alpha}$;
\item a line going from $\infty$ to $\delta e^{\frac{i\pi}{1+\alpha}}$ along the direction $\frac{\pi}{1+\alpha}$;
\item a clockwise dog bone contour around the cut from $u_1$ to $u_2$.
\end{itemize}
Since all the singularities of the integrand are outside this contour, after Cauchy Theorem we have
\begin{equation*}
    2S(x)+\frac{1}{\pi}\lim_{\delta\to 0^+}\left[\int_{e^{\frac{i\pi}{1+\alpha}}\infty}^{e^{\frac{i\pi}{1+\alpha}}\delta}+
    \int_{\gamma_\delta(0)}+\int_{e^{-\frac{i\pi}{1+\alpha}}\delta}^{e^{-\frac{i\pi}{1+\alpha}}\infty}\right]\left(\sqrt{V(u;x)}\,\frac{du}{u}\right)=0 \;,
\end{equation*}
which implies that
\begin{multline} \label{eq:item33} \tag{c}
    \bb{K}_{\sym}[S](x)=S(x)+\frac{1}{2\pi}\lim_{\delta\to0^+}\int_{\gamma_\delta(0)}\sqrt{V(u;x)}\,\frac{du}{u} \\
    = S(x)-\frac{i}{1+\alpha}\res\left(\frac{\sqrt{V(u;x)}}{u},u=0\right)=S(x)+\frac{1}{1+\alpha}\;,
\end{multline}
where we used the fact that the integral of a function over an anti-clockwise circular arc of amplitude $\theta$ -- and whose radius tends to zero -- centred in a simple pole yields the residue of the function in that pole multiplied by a factor $i\theta$. Finally, plugging (\ref{eq:item33}) in (\ref{eq:item31}) and using the definition (\ref{eq:tauJ}) we obtain (\ref{eq:inhomWH}).

Now, we shall briefly present the proof of (\ref{eq:homWH})
omitting the details of the computations, since it follows the same step of the proof of (\ref{eq:inhomWH}).
From the definition (\ref{eq:tauJ}) we have
\begin{equation}\label{eq:item34} \tag{d}
\bb{K}_1[D\fun](\xi)= 2\left(1+\alpha\right)\bb{K}_\sym[DS](\sym\xi)\;.
\end{equation}
Let us consider the integral
 \begin{multline*}
     \bb{K}_{\sym}[DS](x)=\int_{\sym}^{+\infty} K_\alpha\left(\frac{x}{y}\right)D_yS(y)\frac{dy}{y}=\int_{\sym}^{+\infty} K_\alpha\left(\frac{x}{y}\right)S'(y)\,dy\\
     =\frac{1}{2\pi}\int_{\sym}^{+\infty}K_\alpha\left(\frac{x}{y}\right)\left(\int_{u_1(y)}^{u_2(y)} \frac{u}{\sqrt{V(u;y)}}\,du\right)dy\;,
 \end{multline*}
 where we used Leibniz integral rule to evaluate
\begin{equation} \label{eq:item35} \tag{e}
    S'(x)=\frac{1}{2\pi}\int_{u_1(x)}^{u_2(x)}\frac{u}{\sqrt{V(u;x)}}\,du \;.
\end{equation}
We treat the integral $\bb{K}_{\sym}[DS](x)$ in the same way as $\bb{K}_{\sym}[S](x)$. Performing an analogous computation we arrive at
\begin{equation}\label{eq:item36} \tag{f}
\bb{K}_{\sym}[DS](x)= \frac{1}{4\pi}\left[\int_{0}^{e^{\frac{i\pi}{1+\alpha}}\infty}-\int_{0}^{e^{-\frac{i\pi}{1+\alpha}}\infty}\right]\left(\frac{u}{\sqrt{V(u;x)}}\,du\right)= S(x)\;,
\end{equation}
where in the last equality we used the Cauchy Theorem.  Finally, plugging (\ref{eq:item36}) in (\ref{eq:item34}) and using the definition (\ref{eq:tauJ}) we obtain (\ref{eq:homWH}).

(4) Let us perform the change of variables $u=x^{\frac{1}{2\alpha}}t$ in the integral appearing in the definition of $S(x)$
\begin{equation} \label{eq:item41} \tag{g}
S(x) = \frac{x^{\frac{1+\alpha}{2\alpha}}}{\pi}\int_{x^{-\frac{1}{2\alpha}}u_1(x)}^{x^{-\frac{1}{2\alpha}}u_2(x)} \sqrt{1-t^{2\alpha}}\left(1-\frac{x^{-\frac{1+\alpha}{\alpha}}}{t^2\left(1-t^{2\alpha}\right)}\right)^{\frac12}dt\;.
\end{equation}
After (\ref{eq:u12asy}), we have that $x^{-\frac{1}{2\alpha}}u_1(x)=x^{-\frac{1+\alpha}{2\alpha}}+O\left(x^{-\frac{1+\alpha}{2\alpha}-1-\alpha}\right)$ and $x^{-\frac{1}{2\alpha}}u_2(x)=1+O\left(x^{-\frac{1+\alpha}{\alpha}}\right)$ as $x\to+\infty$, whence
\begin{equation} \label{eq:item42} \tag{h}
    K:=\lim_{x\to+\infty}\left(x^{-\frac{\alpha+1}{2\alpha}}S(x)\right)=\frac{1}{\pi}\int_0^1\sqrt{1-t^{2\alpha}}\,dt=\frac{1}{2 \sqrt{\pi}} \frac{\Gamma\left(\frac{1+2\alpha}{2\alpha}\right)}{\Gamma\left(\frac{1+3\alpha}{2\alpha}\right)}
 \;.
\end{equation}
Using (\ref{eq:item42}) and the definition (\ref{eq:tauJ}), we find
\begin{equation*}
\lim_{\xi\to+\infty}\left(\xi^{-\frac{\alpha+1}{2\alpha}}\fun(\xi)\right) = 2\left(1+\alpha\right)\sym^{\frac{1+\alpha}{2\alpha}} K = A^{\frac{1+\alpha}{2\alpha}}\;,
\end{equation*}
where $A$ is as per (\ref{eq:Adef}), thus proving the leading asymptotic behavior in (\ref{eq:fasy}). Going further, the sub-leading asymptotic behavior yields
\begin{multline} \label{eq:item43} \tag{i}
    \lim_{x\to+\infty} \left(S(x)-Kx^{\frac{\alpha+1}{2\alpha}}\right) = \lim_{x\to+\infty}\left[ -\frac{x^{\frac{\alpha+1}{2\alpha}}}{\pi}\int_0^{x^{-\frac{\alpha+1}{2\alpha}}}\sqrt{1-t^{2\alpha}}\,dt\right.\\
    \left.+\frac{1}{\pi}\sum_{n\geq 1}\frac{\left(-\frac{1}{2}\right)_n}{n!}x^{\left(1-2n\right)\frac{1+\alpha}{2\alpha}}\int_{x^{-\frac{1+\alpha}{2\alpha}}}^1 t^{-2 n} \left(1-t^{2 \alpha }\right)^{\frac12-n}dt \right] = -\frac12\;,
\end{multline}
where we first expanded the square root in (\ref{eq:item41}) around $x=+\infty$, then we interchanged summation and integration.
In the latter formula we used the fact that
\begin{multline*}
\lim_{x\to+\infty} \left[x^{\left(1-2n\right)\frac{1+\alpha}{2\alpha}} \int_{x^{-\frac{1+\alpha}{2\alpha}}}^1 t^{-2 n} \left(1-t^{2 \alpha }\right)^{\frac12-n}dt\right] \\
= \frac{1}{2n-1}\lim_{x\to+\infty} \,_2F_1\left(\frac{2n-1}{2},\frac{1-2n}{2\alpha};1+\frac{1-2n}{2\alpha};x^{-1-\alpha}\right) = \frac{1}{2n-1} \;,\quad \forall n\in\bb{N}\setminus\{0\}\;,
\end{multline*}
and 
\begin{equation*}
\sum_{n\geq 1} \frac{\left(-\frac{1}{2}\right)_n }{\left(1-2n\right)n!} = \frac12\left(\pi-2\right)\;.
\end{equation*}
Using (\ref{eq:item43}) and the definition (\ref{eq:tauJ}), we find
\begin{equation*}
\lim_{\xi\to+\infty}\left(\fun(\xi)-(A\xi)^{\frac{1+\alpha}{2\alpha}}\right) = -\left(1+\alpha\right)\;,
\end{equation*}
which proves the sub-leading asymptotic behavior in (\ref{eq:fasy}). Finally, we have
\begin{equation*}
\lim_{x\to+\infty} \left[x^{\frac{1+\alpha}{2\alpha}}\left(S(x)-Kx^{\frac{1+\alpha}{2\alpha}}+\frac12\right)\right]=-\frac{1}{2\pi}\int_0^1\frac{dt}{t^2\sqrt{1-t^{2\alpha}}}=\frac{1}{2 \sqrt{\pi }}\frac{\Gamma \left(1-\frac{1}{2 \alpha }\right)}{\Gamma \left(\frac{\alpha -1}{2 \alpha }\right)}\;,
\end{equation*}
which yields the thesis.

We omit the detailed proof of (\ref{eq:Dnfasy}), since it is obtained by repeating the same steps of
(\ref{eq:fasy}).\footnote{Moreover, in Lemma \ref{lem:Knbounded} below, we will show how the case $n \geq1$ follows the case $n=0$,
via the study of a family of integral operators that we will introduce in the next section.}
\\
(5) From the definition (\ref{eq:tauJ}), we have
\begin{equation} \label{eq:item51} \tag{j}
\lim_{\xi\to1^+}\fun'(\xi)=2\left(1+\alpha\right)\sym\lim_{x\to \sym^+} S'(x) \;,
\end{equation}
thus we are led to compute $\lim_{x\to \sym^+} S'(x)$. To evaluate the above mentioned limit, it is
convenient to rewrite (\ref{eq:item35}) as a contour integral as follows
\begin{equation*}
    S'(x)=\frac{1}{4\pi}\lim_{\delta\to 0}\left[\int_{-\gamma_\delta(u_1)}+\int_{\gamma_\delta(u_2)}+\int_{u_1+i\delta}^{u_2+i\delta}+\int_{u_2-i\delta}^{u_1-i\delta}\right]\left(\frac{u}{\sqrt{V(u;x)}}\,du\right)\;,
\end{equation*}
where $\gamma_{\delta}(t)=\left\{u\in\C\,|\,u=t+\delta e^{i\theta}\,,\,\theta\in\left[\frac{\pi}{2},\frac{3\pi}{2}\right]\right\}$.

In fact, after Lemma \ref{lem:u12prop}, we know that $u_1(\sym)=u_2(\sym)=u^*:=\alpha^{-\frac{1}{2\left(1+\alpha\right)}}$. Then, 
\begin{multline} \label{eq:item52} \tag{k}
    \lim_{x\to \sym^+}S'(x)=\frac{1}{4\pi}\lim_{\delta\to0^+}\int_{C_\delta(u^*)}\frac{u}{\sqrt{V(u;\sym)}}\,du=\\
    \frac{i}{2}\res\left(\frac{u}{\sqrt{V(u;\sym)}},u=u^*\right) = \frac{\alpha ^{-\frac{1}{\alpha +1}}}{2 \sqrt{2\alpha +2}} \;,
\end{multline}
where $C_\delta(t)=\left\{u\in\C\,|\,u=t+\delta e^{i\theta}\,,\,\theta\in[0,2\pi]\right\}$, and we used the fact that
\begin{equation*}
   \lim_{u\to u^*} \frac{V(u;\sym)}{\left(u-u^*\right)^2}= -2 \alpha ^{\frac{1}{\alpha +1}} (\alpha +1) \;.
\end{equation*}
Finally, plugging (\ref{eq:item52}) in (\ref{eq:item51}) we obtain the thesis.\\
(6) From the definition (\ref{eq:tauJ}), we have
\begin{equation*}
\underset{\xi\in[1,+\infty)}{\inf}\left(\xi^{-\frac{1+\alpha}{2\alpha}}D_\xi \fun(\xi)\right)=\underset{x\in[\sym,+\infty)}{\inf}\left(2\left(1+\alpha\right)\sym^{\frac{1+\alpha}{2\alpha}}x^{1-\frac{1+\alpha}{2\alpha}}S'(x)\right)\;,
\end{equation*}
thus we are led to prove that the right-hand-side is positive. 
From (\ref{eq:item35}), we obtain the following estimate
\begin{multline*}
S'(x)=\frac{1}{2\pi}\int_{u_1(x)}^{u_2(x)}\frac{u}{\sqrt{V(u;x)}}\,du \geq\frac{1}{2\pi}\frac{1}{\sqrt{V\left(u_{max}(x);x\right)}}\int_{u_1(x)}^{u_2(x)} u\,du \\
= \frac{1}{4\pi} \frac{u_2^2(x)-u_1^2(x)}{\sqrt{\left(\frac{x}{\sym}\right)^{\frac{1+\alpha}{\alpha}}-1}}:=f(x) \;,
\end{multline*}
where $u_{max}(x)=\left(\frac{x}{1+\alpha}\right)^{\frac{1}{2\alpha}}$ denotes the position of the maximum of $V(u;x)$ and $u_{max}(x)\in\left(u_1(x),u_2(x)\right)$ for all $x>\sym$. Using (\ref{eq:u12asy}) and (\ref{eq:u12lim}) we find
\begin{align*}
&2\left(1+\alpha\right)\sym^{\frac{1+\alpha}{2\alpha}}\lim_{x\to+\infty}\left(x^{1-\frac{1+\alpha}{2\alpha}}f(x)\right)= 
\frac{(\alpha +1)^{\frac{1}{\alpha }+2}}{2 \pi  \alpha }>1\;,\\ 
&2\left(1+\alpha\right)\sym^{\frac{1+\alpha}{2\alpha}}\lim_{x\to \sym^+}\left(x^{1-\frac{1+\alpha}{2\alpha}}f(x)\right)= 
\frac{\sqrt{2}\left(\alpha +1\right)^{3/2}}{\pi  \alpha }\geq\frac{3}{\pi}\sqrt{\frac32}\;,
\end{align*}
and, since $u_2(x)>u_1(x)$ for all $x>\sym$, we have
\begin{equation*}
\underset{x\in[\sym,+\infty)}{\inf}\left(2\left(1+\alpha\right)\sym^{\frac{1+\alpha}{2\alpha}}x^{1-\frac{1+\alpha}{2\alpha}}S'(x)\right)\geq\underset{x\in[\sym,+\infty)}{\inf}\left(2\left(1+\alpha\right)\sym^{\frac{1+\alpha}{2\alpha}}x^{1-\frac{1+\alpha}{2\alpha}}f(x)\right)>0\;,
\end{equation*}
thus proving the thesis.
\end{proof}

\begin{theorem}\label{prop:fproperty}
The linearised IBA equation (\ref{eq:IBAStL}) admits a unique solution
$l:=l(\cdot;\omega,p):[\omega,+\infty) \to \R$ normalised such that
$\lim_{x\to+\infty}x^{-\frac{1+\alpha}{2\alpha}}l(x;\omega,p)=1$.

It has the following explicit representation
\begin{align} \nonumber
&l(x;\omega,p)= \bar{l}\left(\frac{x}{\omega};\omega,p\right)\;,\\
    &\bar{l}(\xi;\omega,p)=p \fun(\xi)-\frac{2\alpha}{1+\alpha}\left[p-
    \left(\frac{\omega}{A}\right)^{\frac{1+\alpha}{2\alpha}}\right]D_\xi \fun(\xi)\;,
    \label{eq:lomegaground2} 
\end{align}
Moreover,
 \begin{equation}\label{eq:barl1}
    \bar{l}(1;\omega,p)=\sqrt{2\left(1+\alpha\right)}\left[\left(\frac{\omega}{A}\right)^{\frac{1+\alpha}{2\alpha}}-p\right]\;.
\end{equation}
In the above formulae, $\fun$ is the restriction to $[1,+\infty)$ of the function defined by formula (\ref{eq:tauJ}), and
$A$ is as per (\ref{eq:Adef}).
\end{theorem}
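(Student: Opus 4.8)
The plan is to reduce the equation to a fixed domain by scaling, construct the solution explicitly from the function $\fun$ of Proposition \ref{prop:tauJ}, and then settle uniqueness by a Wiener--Hopf argument, which is where the real difficulty lies.

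First I would remove $\omega$ from the domain of integration. Substituting $x=\omega\xi$, $y=\omega\eta$ in $\bb{K}_\omega$ and using that $K_\alpha$ depends only on the ratio $x/y=\xi/\eta$, one checks that for any function of the form $l(x)=\bar l(x/\omega)$ one has $\bb{K}_\omega[l](\omega\xi)=\bb{K}_1[\bar l](\xi)$. Hence (\ref{eq:IBAStL}) is equivalent to the fixed-boundary equation
\begin{equation*}
\bar l(\xi)=-2p+\bb{K}_1[\bar l](\xi)\;,\quad \forall\,\xi\in[1,+\infty)\;,
\end{equation*}
while the normalisation $\lim_{x\to+\infty}x^{-\frac{1+\alpha}{2\alpha}}l(x)=1$ translates into $\lim_{\xi\to+\infty}\xi^{-\frac{1+\alpha}{2\alpha}}\bar l(\xi)=\omega^{\frac{1+\alpha}{2\alpha}}$.

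For existence I would simply exhibit the solution using Proposition \ref{prop:tauJ}(3). By linearity of $\bb{K}_1$, the function $p\,\fun$ solves $p\fun=-2p+\bb{K}_1[p\fun]$ (this is $p$ times (\ref{eq:inhomWH})), while $D\fun$ solves the homogeneous equation (\ref{eq:homWH}); therefore $\bar l=p\,\fun+c\,D\fun$ solves the equation above for every constant $c$. The constant is fixed by matching leading asymptotics: from (\ref{eq:fasy}) and (\ref{eq:Dnfasy}) one has $\xi^{-\frac{1+\alpha}{2\alpha}}\fun(\xi)\to A^{\frac{1+\alpha}{2\alpha}}$ and $\xi^{-\frac{1+\alpha}{2\alpha}}D_\xi\fun(\xi)\to\frac{1+\alpha}{2\alpha}A^{\frac{1+\alpha}{2\alpha}}$, so the normalisation forces
\begin{equation*}
A^{\frac{1+\alpha}{2\alpha}}\Big(p+c\,\tfrac{1+\alpha}{2\alpha}\Big)=\omega^{\frac{1+\alpha}{2\alpha}}
\quad\Longrightarrow\quad
c=-\tfrac{2\alpha}{1+\alpha}\Big[p-\big(\tfrac{\omega}{A}\big)^{\frac{1+\alpha}{2\alpha}}\Big]\;,
\end{equation*}
which is exactly (\ref{eq:lomegaground2}). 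The boundary value (\ref{eq:barl1}) then follows by evaluating at $\xi=1$: by Proposition \ref{prop:tauJ}(1) $\fun(1)=0$, and by Proposition \ref{prop:tauJ}(5) $D_\xi\fun(1)=\fun'(1^+)=\frac{(1+\alpha)^{3/2}}{\sqrt2\,\alpha}$, whence $\bar l(1)=c\,\frac{(1+\alpha)^{3/2}}{\sqrt2\,\alpha}=\sqrt{2(1+\alpha)}\,\big[(\omega/A)^{\frac{1+\alpha}{2\alpha}}-p\big]$.

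The main obstacle is uniqueness. The difference $d$ of two normalised solutions satisfies the homogeneous equation $d=\bb{K}_1[d]$ on $[1,+\infty)$ together with $d(\xi)=o(\xi^{\frac{1+\alpha}{2\alpha}})$, and the claim amounts to showing $d\equiv0$ --- equivalently, that $D\fun$ spans the kernel of $I-\bb{K}_1$ within this growth class. I would prove this by the Wiener--Hopf method: setting $\theta=\log\xi$ turns $\bb{K}_1$ into a convolution on the half-line $[0,+\infty)$, whose Mellin/Fourier symbol is $\hat K_\alpha(\beta)=\int_0^{+\infty}K_\alpha(t)\,t^{-\beta-1}\,dt=\frac{\sin\!\big(\pi\beta\frac{\alpha-1}{1+\alpha}\big)}{\sin(\pi\beta)}$, convergent for $-1<\Re\beta<1$. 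One checks that $\beta=\frac{1+\alpha}{2\alpha}$ is a zero of $1-\hat K_\alpha(\beta)$ --- consistent with the growth $\xi^{\frac{1+\alpha}{2\alpha}}$ of $D\fun$ --- and the dimension of the kernel in a prescribed growth class is governed, via the factorisation of $1-\hat K_\alpha$, by the location of its zeros in the strip. The delicate point, which I would handle using the detailed properties of the convolution operator developed in Section \ref{sec:convoper}, is to show that no zero of $1-\hat K_\alpha$ produces a genuine half-line homogeneous solution of growth strictly slower than $\xi^{\frac{1+\alpha}{2\alpha}}$; this excludes any nonzero $d$ and yields uniqueness. This Fredholm/Wiener--Hopf analysis, rather than the explicit construction above, is where the real work lies.
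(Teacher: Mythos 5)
Your proposal is correct, and on existence and the explicit formulae it coincides with the paper's proof, which consists of the single observation that, after Proposition \ref{prop:tauJ}, the function (\ref{eq:lomegaground2}) solves the linearised equation with the right normalisation: your scaling reduction $\bb{K}_\omega[l](\omega\xi)=\bb{K}_1[\bar l](\xi)$, the determination of $c=-\frac{2\alpha}{1+\alpha}\big[p-(\omega/A)^{\frac{1+\alpha}{2\alpha}}\big]$ by matching (\ref{eq:fasy}) and (\ref{eq:Dnfasy}), and the evaluation $\bar l(1)=c\,\fun'(1^+)$ via $\fun(1)=0$ and (\ref{eq:fp1}) are exactly the computations the paper leaves implicit, and all your constants check out. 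The genuine divergence is uniqueness: the paper does not prove it but cites \cite{baluzaII}, Section 3, whereas you sketch the Wiener--Hopf argument — which is precisely the method of that reference. Your symbol agrees with the Mellin transform $\Phi(\alpha,s)$ of (\ref{eq:completeintegral}) (it is even in $s$, so the sign convention is immaterial), and the zero at $\beta=\frac{1+\alpha}{2\alpha}$ is right since $\frac{\pi(\alpha-1)}{2\alpha}+\frac{\pi(\alpha+1)}{2\alpha}=\pi$; the simplicity of that zero is the content of the paper's footnote to Proposition \ref{prop:Kzasym}(5). Two remarks on your uniqueness plan. First, a shortcut already in Section \ref{sec:convoper}: since $\|\bb{K}_1\|_{\infty,s}=\Phi(\alpha,s)<1$ for $|s|<\frac{1+\alpha}{2\alpha}$ (Proposition \ref{prop:Kzasym}(3,4)), contraction kills any homogeneous solution lying in some $L^\infty_s$ with $|s|<\frac{1+\alpha}{2\alpha}$; the only residual case is a difference $d=o\big(\xi^{\frac{1+\alpha}{2\alpha}}\big)$ belonging to no such space (e.g.\ of size $\xi^{\frac{1+\alpha}{2\alpha}}/\log\xi$), which is exactly the ``delicate point'' you flag and which your proposal asserts rather than proves — the same epistemic status as the paper's citation, so this is not a gap relative to the paper. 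Second, a phrasing slip: $D\fun$ is not $o\big(\xi^{\frac{1+\alpha}{2\alpha}}\big)$, so ``$D\fun$ spans the kernel within this growth class'' should say that the kernel within $O\big(\xi^{\frac{1+\alpha}{2\alpha}}\big)$ is spanned by $D\fun$, whence the kernel within the $o$-class is trivial; your closing sentence states the correct version.
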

\begin{proof}
Uniqueness was already proven in the cited physics literature, see e.g. \cite[Section 3]{baluzaII}.

After Proposition \ref{prop:tauJ}, the function $l(x;\omega,p)$ satisfies the linearised equation and the normalisation.
Equation (\ref{eq:barl1}) follows from (\ref{eq:fp1}) and (\ref{eq:lomegaground2}).
\end{proof}

For the rest of our paper it is crucial that the solution of the linearised IBA equation is strictly monotone, in the
strong sense of the definition below.
\begin{definition}
 Let $f:[1,+\infty) \to \R^+$ be a differentiable function with the asymptotics $f(\xi)=\xi^{\frac{1+\alpha}{2\alpha}}+
 o\left(\xi^{\frac{1+\alpha}{2\alpha}}\right)$ as $\xi \to +\infty$.
 We say that $f$ is strongly strictly monotone if
  $$\underset{\xi\in[1,+\infty)}{\inf}\left(\xi^{-\frac{1+\alpha}{2\alpha}}D_\xi f(\xi)\right)>0\;.$$
\end{definition}
Here we show that
if $\omega$ is chosen in a way such that
$ \bar{l}(1;\omega,p)$ is bounded as $p \to \infty$, then $\bar{l}(\xi;\omega,p)$ is strongly strictly monotone.
\begin{lemma}
\label{lem:linearisedground}
Fix $C>0$ and $H\geq0$. Assume that  $p\geq 
\frac{H+C}{\sqrt{2\left(1+\alpha\right)}}$.\\
(1) The condition $|\bar{l}(1;\omega,p)+H|\leq C$ holds if and only if
\begin{equation}
\omega\in\Omega_{H,C}:=\left[
A\left(p-\frac{H+C}{\sqrt{2\left(1+\alpha\right)}}\right)^{\frac{2\alpha}{1+\alpha}},
A\left(p-\frac{H-C}{\sqrt{2\left(1+\alpha\right)}}\right)^{\frac{2\alpha}{1+\alpha}} \right]\;,
    \label{eq:omegaground}
\end{equation}
where $A$ is as per (\ref{eq:Adef}).\\
(2) There exists a $p_C> 0$ such that
\begin{enumerate}[(i)]
\item for all $(\omega,p)\in\Omega_{H,C}\times[p_C,+\infty)$, the solution
$\bar{l}(\xi;\omega,p)$ is strongly strictly monotone. More precisely,
\begin{equation}\label{eq:barlpunif}
    \underset{\xi\in[1,+\infty)}{\inf} \left(\xi^{-\frac{1+\alpha}{2\alpha}}D_\xi\bar{l}(\xi;\omega,p)\right) \gtrsim p\;;
\end{equation}
\item if $\delta \in [0,M]$ for some arbitrary but fixed $M \geq 0$, then
\begin{equation}\label{eq:1+deltap}
    \bar{l}\left(1+\delta p^{-1};\omega,p\right) -
    \bar{l}(1;\omega,p)=  \frac{\delta\left(1+\alpha\right)^{\frac32}}{\sqrt{2}\alpha}
    + O(p^{-1}) \mbox{ as } p\to+\infty \;,
\end{equation}  
uniformly with respect to $(\delta,\omega,p) \in [0,M] \times  \Omega_{H,C}\times [p_C,+\infty)$.
\end{enumerate}
(3) Let $\omega_H$ be such that $\bar{l}(1;\omega_H,p)=-H$, i.e.
\begin{equation}
\label{eq:omegagroundlim}
    \omega_H:=A\left(p-\frac{H}{\sqrt{2\left(1+\alpha\right)}}\right)^{\frac{2\alpha}{1+\alpha}}.
\end{equation}
For any $k\geq-H$, the equation $\bar{l}(\xi_k;\omega_H,p)=k+\frac12$ admits a unique solution
$\xi_k$ which has the following asymptotics
\begin{equation}
\xi_k(p)=1+\frac{\sqrt{2}\alpha}{\left(1+\alpha\right)^{\frac32}}\left(k+\frac12+H\right)p^{-1}+O\left(p^{-2}\right) \mbox{ as } p\to+\infty \;.
\label{eq:xkexpground}
\end{equation}
\end{lemma}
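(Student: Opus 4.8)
The plan is to reduce everything to part~(2)(i), which carries all the analytic content, and then obtain (1), (2)(ii) and (3) from it by elementary manipulations and Taylor expansion.

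\emph{Part (1) and a key bound.} First I would observe, from (\ref{eq:barl1}), that $\omega\mapsto\bar{l}(1;\omega,p)=\sqrt{2(1+\alpha)}\big[(\omega/A)^{\frac{1+\alpha}{2\alpha}}-p\big]$ is strictly increasing and continuous, so the desired equivalence is purely a matter of monotonicity. Indeed $|\bar{l}(1;\omega,p)+H|\le C$ rearranges to
\begin{equation*}
p-\tfrac{H+C}{\sqrt{2(1+\alpha)}}\le\Big(\tfrac{\omega}{A}\Big)^{\frac{1+\alpha}{2\alpha}}\le p-\tfrac{H-C}{\sqrt{2(1+\alpha)}},
\end{equation*}
and since $p\ge\frac{H+C}{\sqrt{2(1+\alpha)}}$ makes the left member non-negative, raising to the power $\frac{2\alpha}{1+\alpha}$ (monotone on $\R^+$) gives exactly $\omega\in\Omega_{H,C}$ as in (\ref{eq:omegaground}); both implications come for free. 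For the rest I set $B:=p-(\omega/A)^{\frac{1+\alpha}{2\alpha}}=-\bar{l}(1;\omega,p)/\sqrt{2(1+\alpha)}$, so that on $\Omega_{H,C}$ one has the $p$-independent bound $|B|\le\frac{H+C}{\sqrt{2(1+\alpha)}}$. This boundedness of $B$ is the structural fact that drives the whole lemma.

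\emph{Part (2)(i).} Differentiating (\ref{eq:lomegaground2}) gives $D_\xi\bar{l}=p\,D_\xi\fun-\frac{2\alpha}{1+\alpha}B\,D_\xi^2\fun$, hence
\begin{equation*}
\xi^{-\frac{1+\alpha}{2\alpha}}D_\xi\bar{l}(\xi;\omega,p)=p\,\xi^{-\frac{1+\alpha}{2\alpha}}D_\xi\fun(\xi)-\tfrac{2\alpha}{1+\alpha}B\,\xi^{-\frac{1+\alpha}{2\alpha}}D_\xi^2\fun(\xi).
\end{equation*}
For the first term I would invoke (\ref{eq:infderf}): $m:=\inf_{\xi\ge1}\xi^{-\frac{1+\alpha}{2\alpha}}D_\xi\fun(\xi)>0$, so it is $\ge pm$. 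The key point for the second term is that $\xi^{-\frac{1+\alpha}{2\alpha}}D_\xi^2\fun(\xi)$ is \emph{bounded} on $[1,+\infty)$: it has a finite limit at infinity by (\ref{eq:Dnfasy}) with $n=2$, and is continuous up to $\xi=1$ (the WKB action $S$, hence $\fun$, is regular at the turning point $\xi=1$ thanks to the half-integer-power expansion of Lemma \ref{lem:u12prop}(5)). Combined with $|B|\le\frac{H+C}{\sqrt{2(1+\alpha)}}$, the second term is bounded in absolute value by a constant $C'$ independent of $(\omega,p)$. Choosing $p_C$ with $p_C m\ge 2C'$ then yields $\xi^{-\frac{1+\alpha}{2\alpha}}D_\xi\bar{l}\ge pm-C'\ge\frac{m}{2}p$ for all $p\ge p_C$, $\omega\in\Omega_{H,C}$, which is (\ref{eq:barlpunif}).

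\emph{Part (2)(ii).} I would Taylor-expand on the interval $[1,1+\delta p^{-1}]$ of length $\le M/p$. Using $\fun(1)=0$ and integrating $\bar{l}'=p\fun'-\frac{2\alpha}{1+\alpha}B(D_\xi\fun)'$,
\begin{equation*}
\bar{l}(1+\tfrac{\delta}{p};\omega,p)-\bar{l}(1;\omega,p)=p\,\fun\big(1+\tfrac{\delta}{p}\big)-\tfrac{2\alpha}{1+\alpha}B\big[D_\xi\fun(1+\tfrac{\delta}{p})-D_\xi\fun(1)\big].
\end{equation*}
Since $\fun$ is $C^2$ up to $\xi=1$ with $\fun'(1)=\frac{(1+\alpha)^{3/2}}{\sqrt2\alpha}$ by (\ref{eq:fp1}), the first term is $\delta\fun'(1)+O(p^{-1})=\frac{\delta(1+\alpha)^{3/2}}{\sqrt2\alpha}+O(p^{-1})$, uniformly for $\delta\in[0,M]$; the second term is $O(p^{-1})$ because $D_\xi\fun$ is Lipschitz near $1$ and $B$ is bounded. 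This is (\ref{eq:1+deltap}).

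\emph{Part (3), and the main obstacle.} Fixing $C=1$ so that $\omega_H\in\Omega_{H,1}$, part (2)(i) makes $\bar{l}(\cdot;\omega_H,p)$ continuous and strictly increasing from $\bar{l}(1;\omega_H,p)=-H$ to $+\infty$; as $k\ge-H$ gives $k+\frac12>-H$, there is a unique $\xi_k>1$ with $\bar{l}(\xi_k;\omega_H,p)=k+\frac12$. The delicate step — and the main obstacle — is that the sharp expansion (\ref{eq:1+deltap}) is only usable once one already knows $\xi_k-1=O(p^{-1})$, so this scale must first be extracted from the uniform lower bound (2)(i). Concretely, $\bar{l}'(\xi)=\xi^{-1}D_\xi\bar{l}\gtrsim p$ on $[1,2]$, so $\bar{l}(2)-\bar{l}(1)\gtrsim p\to\infty$; hence $\xi_k<2$ for large $p$, and then $k+\tfrac12+H=\bar{l}(\xi_k)-\bar{l}(1)\gtrsim p(\xi_k-1)$ forces $\delta_k:=p(\xi_k-1)\le M$ for some $M=M(k,H)$. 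With this a priori bound I apply (\ref{eq:1+deltap}) at $\delta=\delta_k$:
\begin{equation*}
k+\tfrac12+H=\bar{l}(\xi_k;\omega_H,p)-\bar{l}(1;\omega_H,p)=\frac{\delta_k(1+\alpha)^{3/2}}{\sqrt2\alpha}+O(p^{-1}),
\end{equation*}
whence $\delta_k=\frac{\sqrt2\alpha}{(1+\alpha)^{3/2}}(k+\frac12+H)+O(p^{-1})$ and $\xi_k=1+\delta_k p^{-1}$ is exactly (\ref{eq:xkexpground}). Apart from this bootstrapping, everything rests on the two structural facts already used: boundedness of $B$ on $\Omega_{H,C}$ and global boundedness of $\xi^{-\frac{1+\alpha}{2\alpha}}D_\xi^2\fun$.
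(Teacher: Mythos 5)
Your proof is correct and takes essentially the same route as the paper: part (1) directly from (\ref{eq:barl1}), part (2) from the decomposition $\bar{l}=p\,\fun-\tfrac{2\alpha}{1+\alpha}\bigl[p-(\omega/A)^{\frac{1+\alpha}{2\alpha}}\bigr]D_\xi\fun$ in (\ref{eq:lomegaground2}) combined with the boundedness of $\bar{l}(1;\omega,p)$ on $\Omega_{H,C}$ and the facts (\ref{eq:infderf}), (\ref{eq:Dnfasy}), (\ref{eq:fp1}), and part (3) from (\ref{eq:1+deltap}). Your explicit bootstrap (using (\ref{eq:barlpunif}) to first establish $\xi_k-1=O(p^{-1})$ before invoking the uniform expansion (\ref{eq:1+deltap})) simply makes rigorous the step the paper compresses into ``follows immediately''.
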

\begin{proof}
(1) Follows immediately from (\ref{eq:barl1}) in Theorem \ref{prop:fproperty}.\\
(2i) From equations (\ref{eq:lomegaground2},\ref{eq:barl1}) of Theorem \ref{prop:fproperty} it follows that
\begin{equation*}
 \bar{l}(\xi;\omega,p)= p \fun(\xi)+ \frac{\sqrt{2}\alpha}{\left(1+\alpha\right)^{\frac32}}\bar{l}(1;\omega,p) D_\xi \fun(\xi)\;.
\end{equation*}
Since $\bar{l}(1;\omega,p)$ is bounded if $\omega\in\Omega_{H,C}$ and $D_{\xi}^n \fun(\xi)= O\left(\xi^{\frac{1+\alpha}{2\alpha}}\right)$ as $\xi\to+\infty$ for any $n \geq 0$, then
$D_{\xi}  \bar{l}(\xi;\omega,p)$ is dominated by $pD_{\xi}\fun(\xi)$ if $p$ is large enough. Therefore,
(\ref{eq:barlpunif}) follows from (\ref{eq:infderf}).\\
(2ii) Equation (\ref{eq:1+deltap}) follows from (\ref{eq:fp1}) as proven in Proposition \ref{prop:tauJ} (4).\\
(3) Equation (\ref{eq:xkexpground}) follows immediately from (\ref{eq:1+deltap}).

\end{proof}

\begin{remark}
Solving the linearised IBA equation via the Wiener-Hopf method, see \cite{baluzaII,fioravanti03}, \footnote{A thorough discussion
 of this method can also be found in the first version of this paper on the arXiv.} one naturally
 expresses the solution $\fun$ as an inverse Mellin transform and one deduces the following
 alternative formula
 \begin{align}\nonumber
&\fun(\xi)=\frac{1}{2\pi i}\int_{\delta-i\infty}^{\delta+i\infty} 
\frac{\alpha^{\frac{\alpha  s}{1+\alpha}}}{ 2\sqrt{ \pi} \left(1+\alpha\right)^{s-1} }
\frac{\Gamma \left(-\frac12-\frac{\alpha s}{1+\alpha}\right) \Gamma \left(1-\frac{s}{1+\alpha}\right)}{ s^2  \, \Gamma (-s)}
\,\xi^{-s}ds \;,
\label{eq:fhatrem}
\end{align}
where $\delta$ is an arbitrary real number less than $-\frac{1+\alpha}{2\alpha}$.

That the above formula and the WKB integral (\ref{eq:tauJ}) provides the same function, it is one of the magic of the ODE/IM
correspondence. We will explain
the origin of the formula (\ref{eq:tauJ}) in Section \ref{sec:ODEIMcorr} below.
\end{remark}

\subsection{The perturbed IBA equation and the strategy of the proof of the main theorem}
In this paper we study the standard IBA equation (\ref{eq:IBASt}) as a perturbation of its linearisation, equation (\ref{eq:IBAStL}).

Let then $l(x;\omega,p)$ be the unique normalised solution of (\ref{eq:IBAStL}) studied in Theorem \ref{prop:fproperty}.
For convenience, we define the rescaled function
\begin{equation}
    \bar{l}:[1,+\infty)\to\R^+: \xi\mapsto\bar{l}(\xi):=l(\omega \xi;\omega,p)\;,
\end{equation}
and we introduce new unknowns $\la:[1,+\infty)\to\R$ and $\underline{\mu}=(\mu_1,\dots,\mu_H)$
\begin{align}
& \la(\xi)=z(\omega\xi)-\bar{l}(\xi)\;,\quad \forall \xi\in[1,+\infty)\;, \\
& \mu_k=\frac{h_k}{\omega}-1\;,\quad \forall k\in\{1,\dots,H\}\;,
\end{align}
With respect to these, the standard IBA equation (\ref{eq:IBAStGen}) reads
 \begin{subequations}
\label{eq:IBASteGen}
 \begin{alignat}{2}
     &\la(\xi)= \bb{K}_1\left[\la\right](\xi) - \bb{K}_1\left[\afr{\bar{l}+\la}\right](\xi) -
  \sum_{k=1}^{H} \left[ F_\alpha\left(\frac{\xi}{1+\mu_k}\right)- F_\alpha(\xi)\right],
     \label{eq:IBASte}\\
     &\la(1+\mu_k)  =  \sigma(k)+\frac12 -\bar{l}(1+\mu_k)\;,\quad \forall k \in\{1,\dots,H\}\;, \label{eq:IBAStmu}
     \end{alignat}
 \end{subequations}
 and the inequalities (\ref{eq:IBAStka},\ref{eq:IBAStSt}) read
 \begin{align} \label{eq:IBAekappa}
 &  -H-\frac{1}{2}-\bar{l}(1)<\la(1)\leq -H+\frac{1}{2}-\bar{l}(1) \;,\\
 &  \bar{l}'(\xi)+ \la'(\xi)>0 \;,\quad \forall \xi\in[1,+\infty)\;.
 \label{eq:IBAeine}
 \end{align}
We name (\ref{eq:IBASteGen}) the perturbed IBA equation.
We notice that in case $\bar{l}$ is monotone then (\ref{eq:IBAStmu}) can equivalently (and more conveniently) be rewritten as follows
\begin{equation}\label{eq:IBAStmu2}
 \mu_k= \left(\frac{\bar{l}(1+\mu_k)-\bar{l}(1)}{\mu_k} \right)^{-1} \left( 
  \sigma(k)+\frac12 -\bar{l}(1) - \la(1+\mu_k) 
 \right)\;,\quad k\in\{1,\dots, H\}\;.
\end{equation}
In fact, writing
 $\bar{l}(1+\mu_k)=\left(\frac{\bar{l}(1+\mu_k)-\bar{l}(1)}{\mu_k}\right)\mu_k+
 \bar{l}(1),$ and using the monotonicity of $\bar{l}(\xi)$ to invert the difference quotient,
 (\ref{eq:IBAStmu2}) and (\ref{eq:IBAStmu}) are shown to be equivalent. \\

After Proposition \ref{prop:strongz}, studying purely real and normalised solutions of the BAE is the same
as studying equivalence classes of strictly monotone solutions of the standard IBA equation,
provided that the set of hole-numbers and the parameters
$H,\sigma$ of the standard IBA are related by formula (\ref{eq:bbHsIBA}). 
Therefore, we prove the main Theorem by showing that for any $H$ and $\sigma$ the
standard IBA equation admits a unique solution, up to equivalence, if $p$ is large enough.
Our strategy is based on the separate analysis of the linearised IBA and of the perturbed IBA and follows the
steps briefly illustrated below.
\begin{itemize}
 \item In Section \ref{sec:convoper}, we show that the inequality (\ref{eq:IBAekappa}),
 implies that
 \begin{equation}\label{eq:Dxinst} \tag{S1}
 \|\la\|_{\infty} \leq \left(H+\frac{1}{4}\right)(\alpha-1)\;,\quad \| D^n \la \|_{\infty}\lesssim_n 1\;,\quad \forall n \geq 1\;.
 \end{equation}
 where $\|\cdot\|_\infty$ is the norm of $L^\infty\left([1,+\infty)\right)$. Combining this estimate with the value of $\bar{l}(1)$ that we computed in Theorem \ref{prop:fproperty},
 we deduce that the constraint (\ref{eq:IBAekappa}) induces the following a-priori inequality on the end-point $\omega$
 \begin{align}
 &\left|\frac{\omega}{\omega_H}-1\right| \lesssim p^{-1}\;,\notag\\
 &\omega_H= (1+\alpha) \left(\frac{p}{\sqrt{\pi\alpha}}
 \frac{\Gamma \left(\frac{1}{2 \alpha }\right)}
 {\Gamma \left(\frac{1+\alpha}{2 \alpha }\right)}\right)^{\frac{2 \alpha }{\alpha +1}}
\left[ 1- 
\left(\frac{\sqrt{2}\alpha}{(1+\alpha)^{\frac32}} H p^{-1} \right)\right] \;.\label{eq:omegaHst} \tag{S2}
 \end{align}
\item In Section \ref{sec:osc}, we estimate the oscillatory term $\bb{K}_{1}[\afr{\bar{l}+\e}]$,
where $\bar{l}$ is the rescaled solution of the linearised IBA equation with $\omega$ satisfying the bound (\ref{eq:omegaHst}), and
$\e$ any bounded differentiable function with bounded derivative. We prove that 
\begin{equation}\label{eq:bbKost} \tag{S3}
\left|\bb{K}_{1}[\afr{\bar{l}+\e}](x)\right| \lesssim p^{-1} x^{-\frac{1+\alpha}{2\alpha}} \| \e\|_{\infty}\;.
 \end{equation}
 \item In Section \ref{sec:mainthm}, using estimate (\ref{eq:bbKost}), we
 greatly strengthen estimates (\ref{eq:Dxinst},\ref{eq:omegaHst}) to obtain
 \begin{equation}\label{eq:Dxinlast} \tag{S4}
\| D^n \la \|_{\infty}\lesssim_n p^{-1}\;,\quad 
 \left|\frac{\omega}{\omega_H}-1\right| \lesssim p^{-2}\;.
 \end{equation}
 \item Finally, we show that the perturbed IBA equation (\ref{eq:IBASteGen})
admits a unique strictly monotone solution if $\omega$ and $\lambda$ satisfy the constraints (\ref{eq:Dxinlast}).
Moreover, the solutions that we obtain for different values of $\omega$ are equivalent.
\end{itemize}

\subsubsection{Comparison with the BKP-DDV-NLIE equation}
We briefly compare here the standard IBA equation with the nonlinear equation derived in the physics literature, which is known as Batchelor-Klumper-Pearce (BKP) or
Destri-De Vega (DDV) or simply Nonlinear Integral Equation (NLIE).
The comparison is based on the following identity (proven below)
\begin{equation}\label{eq:zetalim}
 \afr{z(x)}=\lim_{\e \to 0^+}\frac{1}{\pi} \im \log\left( 1+e^{2\pi i\,z(x+i \e)}\right)\;,
\end{equation}
where $\afr{x}=x-\ceili{x-\frac12}$ as per (\ref{eq:afr}), and the branch of the logarithm is chosen so that $\frac{1}{\pi}\im \log \left( e^{-2i \pi p} \cos\left(-2 \pi p\right) \right)
\in \left(-\frac{1}{2},\frac{1}{2}\right]$. From the above identity,
it follows that the standard IBA equation (\ref{eq:IBASt}) can be written as
\begin{align}\nonumber
& z(x)= -2p +\bb{K}_\omega[z](x)
+H \, F_{\alpha}\left(\frac{x}{\omega}\right)-
  \sum_{k=1}^H F_{\alpha}\left(\frac{x}{h_k}\right) 
\\ \label{eq:DDV}
& - \lim_{\e \to 0^+} \frac{1}{\pi} \im \bb{K}_\omega\left[\log\left( 1+e^{2\pi i\,z(\cdot +i \e)}\right)\right](x)  \;.
\end{align}
Fix $H=0$. Starting from the above equation and making some standard manipulations explained for example in
\cite{doreyreview}, one obtains the Destri-De Vega equations for
the ground-state of the Quantum KdV model.

We prove here formula (\ref{eq:zetalim}), assuming for simplicity that $-2p+\frac12 \notin \bb{Z}$.\\
(i) We start with the identity $ \log\left( 1+e^{2\pi i\,z(x)}\right)= \log \left(e^{i \pi z(x)} \cos \left(\pi z(x)\right) \right)
 +\log 2$ and let
 $$\zeta(x)=\lim_{\e \to 0^+}\frac{1}{\pi} \im \log\left( e^{i \pi z(x+i \e)} \cos \left(\pi z(x+i\e)\right)\right),$$
 where the branch of the logarithm is chosen so that
 $$ \zeta(0) =  \frac{1}{\pi}\im \log \left( e^{-2i \pi p} \cos\left(-2 \pi p\right) \right) \in  \left(-\frac{1}{2},\frac{1}{2}\right]
 \quad\Longrightarrow\quad \zeta(0)=\afr{z(0)}.$$
 It is straightforward to see that $\zeta$ has a point-wise limit on $\bb{R}^+ \setminus \lbrace x_k\rbrace_{k \in \bb{Z}_p}$ and
 that the convergence is uniform on any compactsubset of $\bb{R}^+ \setminus \lbrace x_k\rbrace_{k \in \bb{Z}_p}$.\\
 (ii) Now we compare $\zeta$ with $\afr{z}$. By definition of $\zeta$, we have that
 $\zeta(x) \equiv z(x) +\sigma \mod 2$, where $\sigma= 0$ if $\mbox{ sign} \cos (\pi z(x))>0$ and $\sigma=-1$ if
 $\mbox{ sign} \cos (\pi z(x))<0$. Since $z$ and $\zeta$ are continuous in each interval of the form 
 $(x_{k},x_{k+1})$, then for every $k \in \bb{Z}_p$ there exists a $c_k \in \bb{Z}$ such that
 $$
 \zeta(x) - \afr{z(x)} =c_k\;,\quad \forall x \in(x_{k},x_{k+1})\;.
 $$
 By construction $\zeta(0)=\afr{z(0)}$, therefore equation (\ref{eq:zetalim}) holds
 if and only if $\lim_{x \to x_k^-} \zeta(x) -\lim_{x \to x_k^+} \zeta(x)=1$ for all $k \in \bb{Z}_p$.
 Now, since $z'(x)>0$ by hypothesis, then the meromorphic function
 $\frac{d}{dx}\log\left( 1+e^{2\pi i\,z(x)}\right)$ has a simple pole at $x=x_k$. Therefore, using the residue formula we have that
 $$
 \lim_{x \to x_k^-} \zeta(x) -\lim_{x \to x_k^+} \zeta(x)=\frac{1}{2} 2 \pi i \res\left(\zeta'(x),x_k\right)=1\;.
 $$

\section{The convolution operator}
\label{sec:convoper}
In this section we study many properties of the convolution operators
\begin{equation}\label{eq:linoperatordef}
 \bb{K}^{(n)}_J[f](x):= \int_J D_x^n \, K_{\alpha}\left(\frac{x}{y}\right)\frac{f(y)}{y}dy := 
 \int_J  K_{\alpha}\left(\frac{x}{y};n\right)\frac{f(y)}{y}dy\;,
\end{equation}
where $D_x=x \frac{d}{dx}$, $J$ is either an admissible set or $\bb{R}^+$ (which is not admissible), and
$K_{\alpha}(\cdot,n)$ is obtained from the kernel $K_{\alpha}$ (\ref{eq:Kfunzal}), by repeated action of the operator $D_x$:
\begin{equation}\label{eq:Knal}
K_{\alpha}(x;n)= D_x^n  K_{\alpha}(x)\;,\quad
 K_{\alpha}(x)=\frac{\sin\left(\frac{2 \pi}{1+\alpha}\right)}{\pi} \frac{x}{1+x^2- 2 x \cos\left(\frac{2 \pi}{1+\alpha}\right)}\;.
\end{equation}
Let us define the following weighted $L^{\infty}$ spaces.
 \begin{definition}
  For every admissible set $I$ and every $s \in \bb{R}$, we let
  $L^{\infty}_{s}(I)$ be the space of locally bounded functions on $I$ with finite weighted supremum norm
\begin{equation}\label{eq:weightedsup}
 \| f \|_{\infty,s}= \sup_{x \in I} \left| x^{-s} f(x) \right|\;.
\end{equation}
We let  $L^{\infty}_{s}(\bb{R}^+)$ be the space of locally bounded functions on $\R^+$ with finite norm
\begin{equation}\label{eq:weightedsupR}
 \| f \|^+_{\infty,s}=\sup_{x \in \R^+}\left| \left(w(x)\right)^{-s} f(x) \right|\;,\quad w(x)=  \chi_{[0,1)}(x)+ x \chi_{[1,+\infty)}(x) \;.
\end{equation}
 \end{definition}
 Before we enter into details of the properties of the operators $\bb{K}_{\alpha}^{(n)}$, we collect in the Lemma
 below some properties of their kernels, that are needed in the sequel of the paper.
 \begin{lemma}\label{lem:maximumkernel}
(1) The functions $K_{\alpha}(x;n)$ are rational functions, smooth on $\bb{R}^+$, vanishing linearly
at $x=0$ and $x=\infty$.\\
(2) The functions $K_{\alpha}(x;n)$ have the symmetry
\begin{equation}\label{eq:Kansymmetry}
 K_{\alpha}\left(\frac{x}{y};n\right)=(-1)^nK_{\alpha}\left(\frac{y}{x};n\right)\;,\quad\forall x,y \in \bb{R}^+\;.
\end{equation}
(3) Fix $\alpha>1$. The following estimates hold
 \begin{align} \label{eq:maxkernel2}
    & \sup_{x \in \bb{R}^+} \left| K_{\alpha}(x;n) \right|\lesssim_n 1\;, \\
  & \label{eq:maximumkernel}
 \sup_{y \in \bb{R}^+} \left| \frac{1}{y}K_{\alpha}\left(\frac{x}{y};n\right) \right|\lesssim_n    x^{-1} \;,\\ \label{eq:maxkernel3}
  &  \sup_{y \in \bb{R}^+} \left|\frac{d}{dy}\left[\frac{1}{y}K_{\alpha}\left(\frac{x}{y};n\right)\right]\right| \lesssim_n x^{-2} \;, \\ \label{eq:maxkernel4}
   &  \sup_{x \in \bb{R}^+} \left|\frac{d}{dy}\left[\frac{1}{y}K_{\alpha}\left(\frac{x}{y};n\right)\right]\right|\lesssim_n y^{-2}  \;.
 \end{align}
 \begin{proof}
 (1) By direct inspection, the property holds for $n=0$. Since the action of the Euler operator $D_x$ preserves
 the order of vanishing at $x=0$ and $x=+\infty$, then (1) holds for every $n\in\bb{N}$.\\
 (2) For $n=0$, (\ref{eq:Kansymmetry}) is the same as (\ref{eq:Ksymm}). Acting with the Euler operator
 on both sides of (\ref{eq:Ksymm}), we obtain the thesis.\\
 (3) The estimate (\ref{eq:maxkernel2}) follows immediately from (1).
 Since $K_\alpha(x;n)$ vanishes linearly at $x=0$, we have
 \begin{equation*}
 \sup_{y \in \bb{R}^+} \left| \frac{x}{y}K_{\alpha}\left(\frac{x}{y};n\right) \right|=\sup_{z \in \bb{R}^+}\left|\frac{1}{z}K_\alpha(z;n)\right|\lesssim_n 1 \;,
 \end{equation*}
 from which (\ref{eq:maximumkernel}) follows. 
 It can be easily shown that
 \begin{equation*}
 K_\alpha(x;n)=\frac{\sin \left(\frac{2 \pi }{\alpha +1}\right)}{\pi }x+\frac{2^n \sin \left(\frac{4 \pi }{\alpha +1}\right)}{\pi }x^2+O(x^3) \mbox{ as } x\to0^+ \;,
 \end{equation*}
 whence $K_\alpha(x;n+1)-K_\alpha(x;n)=\frac{2^n \sin \left(\frac{4 \pi }{\alpha +1}\right)}{\pi }x^2+O(x^3)$ as $x\to0^+$. Then,
 \begin{multline*}
 \sup_{y \in \bb{R}^+} \left|x\frac{\partial}{\partial y}\left[\frac{x}{y}K_{\alpha}\left(\frac{x}{y};n\right)\right]\right| = \sup_{z \in \bb{R}^+} \left|\frac{d}{dz}\left[\frac{1}{z}K_{\alpha}(z;n)\right]\right|\\
 =\sup_{z \in \bb{R}^+} \left|\frac{1}{z^2}\left[K_{\alpha}(z;n+1)-K_\alpha(z;n)\right]\right| \lesssim_n 1 \;,
 \end{multline*}
 from which (\ref{eq:maxkernel3}) follows.
 A simple computation gives
 \begin{equation*}
 \sup_{x \in \bb{R}^+} \left|y^2\frac{\partial}{\partial y}\left[\frac{1}{y}K_{\alpha}\left(\frac{x}{y};n\right)\right]\right| = \sup_{z \in \bb{R}^+} \left|K_{\alpha}(z;n+1)-K_{\alpha}(z;n)\right|\lesssim_n 1\;,
 \end{equation*}
 from which (\ref{eq:maxkernel4}) follows.
 \end{proof}

\end{lemma}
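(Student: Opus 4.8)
The plan is to handle the three parts in sequence, leaning throughout on two structural facts that make all the computations transparent: how the Euler operator $D_x=x\frac{d}{dx}$ acts on monomials, and how it interacts with the inversion $x\mapsto 1/x$. For part (1), I would first dispose of the case $n=0$ by direct inspection. The denominator of $K_\alpha$ equals $\bigl(x-\cos\tfrac{2\pi}{1+\alpha}\bigr)^2+\sin^2\tfrac{2\pi}{1+\alpha}$, which is strictly positive on $\bb{R}^+$ because $\alpha>1$ forces $\tfrac{2\pi}{1+\alpha}\in(0,\pi)$; hence $K_\alpha$ is rational and smooth on $\bb{R}^+$, with numerator linear in $x$, so it vanishes linearly at $x=0$ and decays like $x^{-1}$ at $x=\infty$. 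For general $n$ the key remark is that $D_x$ maps rational functions to rational functions and preserves the leading order of vanishing at $0$ and at $\infty$, since $D_x x^k=k\,x^k$ makes monomials eigenfunctions. An induction on $n$ then gives (1).

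For part (2) I would record the single algebraic identity that $D_x$ anti-commutes with the inversion $\iota:f(x)\mapsto f(1/x)$, namely $D_x\bigl(f(1/x)\bigr)=-(D_xf)(1/x)$, which is a one-line chain-rule computation. Iterating yields $D_x^n\iota=(-1)^n\iota\,D_x^n$. Applying $D_x^n$ to the base symmetry $K_\alpha(x)=K_\alpha(1/x)$ of (\ref{eq:Ksymm}) then produces $K_\alpha(x;n)=(-1)^nK_\alpha(1/x;n)$, and the substitution $x\mapsto x/y$ is exactly the claimed symmetry (\ref{eq:Kansymmetry}).

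For part (3), estimate (\ref{eq:maxkernel2}) is immediate from (1): a rational function smooth on $\bb{R}^+$ and vanishing at both endpoints is bounded, with a bound depending only on $n$ and the fixed $\alpha$. For (\ref{eq:maximumkernel}) I would change variables $z=x/y$, reducing the supremum to $x^{-1}\sup_z|z\,K_\alpha(z;n)|$, and the latter is finite because $z\,K_\alpha(z;n)\to 0$ at the origin and tends to a constant at infinity, again by the vanishing orders in (1). The two derivative estimates both rest on the recursion $z\,\tfrac{d}{dz}K_\alpha(z;n)=K_\alpha(z;n+1)$, which rewrites $\tfrac{d}{dz}\bigl[\tfrac1z K_\alpha(z;n)\bigr]=\tfrac{1}{z^2}\bigl[K_\alpha(z;n+1)-K_\alpha(z;n)\bigr]$. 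Combining this with the inversion symmetry to perform the appropriate change of variables ($z=y/x$), estimate (\ref{eq:maxkernel3}) reduces to boundedness of $z^{-2}[K_\alpha(z;n+1)-K_\alpha(z;n)]$, and (\ref{eq:maxkernel4}) to boundedness of $K_\alpha(z;n+1)-K_\alpha(z;n)$ itself. The former is controlled by observing that the two kernels share the same linear term $\tfrac{\sin(2\pi/(1+\alpha))}{\pi}\,z$ at the origin, so their difference is $O(z^2)$ there and $O(z^{-1})$ at infinity, whence division by $z^2$ stays bounded; the latter follows directly from (\ref{eq:maxkernel2}).

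There is no genuine analytic obstacle here — it is essentially a bookkeeping exercise. The only points demanding care are the sign $(-1)^n$ in the symmetry and, above all, keeping track of which variable each supremum is taken over: the same derivative is $\lesssim_n x^{-2}$ when the supremum is over $y$ but $\lesssim_n y^{-2}$ when it is over $x$, and these two statements are interchanged precisely by the inversion symmetry of part (2). Arranging each change of variables so that it lands on the correct homogeneous weight is the one spot where one must proceed attentively.
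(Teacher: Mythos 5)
Your proposal is correct and follows essentially the same route as the paper: induction on the Euler operator's action for (1), applying $D^n$ to the base symmetry (with the anticommutation $D_x f(1/x)=-(D_xf)(1/x)$, which the paper leaves implicit) for (2), and for (3) the same change of variables plus the identity $\frac{d}{dz}\bigl[\frac1z K_\alpha(z;n)\bigr]=\frac{1}{z^2}\bigl[K_\alpha(z;n+1)-K_\alpha(z;n)\bigr]$, with the cancellation of the common linear term at the origin (which the paper makes explicit via the expansion $K_\alpha(x;n)=\frac{\sin(2\pi/(\alpha+1))}{\pi}x+\frac{2^n\sin(4\pi/(\alpha+1))}{\pi}x^2+O(x^3)$) controlling the division by $z^2$. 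Your closing remark about which variable each supremum ranges over, and the two derivative bounds being exchanged by the inversion symmetry, is exactly the one delicate point, and you handle it as the paper does.
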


\subsection*{The operator $\bb{K}_J$.}
We analyse first the operator $\bb{K}_J:=\bb{K}_J^{(0)}$
which is the most important to our analysis. The following family of integrals will be often used:

For every $r\geq0$, $\alpha >1$ and $s \in \bb{C}$ with $ \Re s\in(-\infty,1)$, we define the incomplete integral
\begin{align}\label{eq:incompletedef}
 \Phi(\alpha,s;r)=\frac{\sin\left(\frac{2 \pi}{1+\alpha}\right)}{\pi}\int_r^{+\infty} \frac{t^{s}}{1+t^2-2t\cos\left(\frac{2 \pi}{1+\alpha}\right) } \,dt\;,
 \end{align}
 and, if $ \Re s\in (-1,1)$, we also define the complete integral
 \begin{align}\label{eq:completedef}
 \Phi(\alpha,s)=\frac{\sin\left(\frac{2 \pi}{1+\alpha}\right)}{\pi}\int_0^{+\infty} \frac{t^{s}}{1+t^2-2t\cos\left(\frac{2 \pi}{1+\alpha}\right)} \, dt\;.
 \end{align}
 The latter coincides with the Mellin transform of the integral kernel $K_{\alpha}(x)$.

The above integrals can be computed in closed form. In fact, we have the following Lemma.
 \begin{lemma}
We have that
\begin{equation}\label{eq:incompleteintegral}
  \Phi(\alpha,s;r)=  \frac{r^{s}}{2\pi i s} \left[ {}_2F_1\left(1,-s;-s+1;\frac{\kappa - i \sigma}{r}\right) -
  {}_2F_1\left(1,-s;-s+1;\frac{\kappa + i \sigma}{r}\right) \right]\;,
\end{equation}
where $\sigma=\sin\left(\frac{2 \pi}{1+\alpha}\right)$ and $\kappa=\cos\left(\frac{2 \pi}{1+\alpha}\right)$. In particular, when $s$ is real
\begin{equation}\label{eq:incompleteintegralreal}
  \Phi(\alpha,s;r)=  \frac{r^{s}}{\pi s} \Im\left[ {}_2F_1\left(1,-s;-s+1;\frac{\kappa - i \sigma}{r}\right)\right]\;.
\end{equation}
Moreover,
\begin{equation}\label{eq:completeintegral}
  \Phi(\alpha,s)=\lim_{r \to 0^+} \Phi(\alpha,s;r)= 
  \frac{\sin\left(\frac{\pi(\alpha-1)}{1+\alpha}s\right)}{\sin\left(\pi s\right)} \;
  \mbox{ and }\;\Phi(\alpha,0)=\frac{\alpha-1}{\alpha+1}\;.
\end{equation}

  \begin{proof}
  Let $\sigma=\sin\left(\frac{2 \pi}{1+\alpha}\right)$ and $\kappa=\cos\left(\frac{2 \pi}{1+\alpha}\right)$. Then,
  \begin{multline*}
\Phi(\alpha,s;r)= \frac{\sigma}{\pi}\int_r^{+\infty} \frac{t^{s}}{1+t^2-2 \kappa t}\,dt =
\frac{1}{2 \pi i}\int_r^{+\infty} \left( \frac{t^{s}}{t-\kappa-i \sigma} - \frac{t^{s}}{t-\kappa+i \sigma}\right) dt
\\ 
= \frac{r^s}{2 \pi i s}  \left[ {}_2F_1\left(1,-s;-s+1;\frac{\kappa - i \sigma}{r}\right) -
    {}_2F_1\left(1,-s;-s+1;\frac{\kappa + i \sigma}{r}\right) \right]\;,
  \end{multline*}
  where $ _2F_1\left(a,b;c;z\right)$ is the hypergeometric function with domain $|\arg{(-z)}|<\pi$. In order to compute
  $\Phi(\alpha,s)$ we need to evaluate the hypergeometric function when the last argument tends to infinity in the direction $\kappa-i \sigma= e^{-\frac{2 \pi i}{1+\alpha}}$.
  To this aim we use the following well-known formula \cite[Equation (17), Chapter 2]{bateman1}
\begin{multline*}
 _2F_1\left(a,b;c;-z^{-1}\right)  =  \frac{\Gamma(c)\Gamma(b-a)}{\Gamma(b)\Gamma(c-a)} z^a {}_2F_1\left(a,1-c+a;1-b+a;z\right)  \\
+ \frac{\Gamma(c)\Gamma(a-b)}{\Gamma(a)\Gamma(c-b)} z^b {}_2F_1\left(b,1-c+b;1-a+b;z\right) \;,\quad b-a \notin \bb{Z}\;,
\end{multline*}
to deduce that, for each $s\in\C\setminus\{0\}$ with $\Re s \in(-1,1)$
\begin{equation*}
\Phi(\alpha,s)=\frac{1}{2 \pi i s} \Gamma(-s+1)\Gamma(1+s) \left[ (-\kappa+ i \sigma)^{s} -(-\kappa- i \sigma)^{s}\right] = 
\frac{\sin \left( \frac{\pi (\alpha-1)}{1+\alpha} s\right)}{\sin\left( \pi s\right)}\;.
\end{equation*}
In order to obtain the latter identity we used well-known functional equations for the $\Gamma$ function \cite[Equations (1.6), Chapter 1.3]{bateman1}.
Finally, $\Phi(\alpha,0)$ is obtained by continuity.
  \end{proof}
 \end{lemma}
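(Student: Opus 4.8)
The plan is to reduce the one-parameter integral to the classical Euler integral representation of the Gauss hypergeometric function by partial fractions, and then to recover the complete integral as the $r\to 0^+$ boundary value. First I would factor the quadratic denominator: writing $\theta=\frac{2\pi}{1+\alpha}$, $\kappa=\cos\theta$, $\sigma=\sin\theta$, one has $1+t^2-2\kappa t=(t-e^{i\theta})(t-e^{-i\theta})=(t-\kappa-i\sigma)(t-\kappa+i\sigma)$, so
\begin{equation*}
\frac{\sigma}{\pi}\,\frac{1}{1+t^2-2\kappa t}=\frac{1}{2\pi i}\left(\frac{1}{t-\kappa-i\sigma}-\frac{1}{t-\kappa+i\sigma}\right).
\end{equation*}
This reduces everything to the elementary integral $I(a):=\int_r^{+\infty}\frac{t^s}{t-a}\,dt$ at $a=\kappa\pm i\sigma=e^{\pm i\theta}$, with $\Phi(\alpha,s;r)=\frac{1}{2\pi i}\big(I(\kappa+i\sigma)-I(\kappa-i\sigma)\big)$.

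Next I would compute $I(a)$. The substitution $t=r/v$ sends the tail to the unit interval and yields $I(a)=r^s\int_0^1 v^{-s-1}\big(1-(a/r)v\big)^{-1}dv$, which is exactly Euler's integral for ${}_2F_1$; matching $b=-s$, $c=1-s$ and using $B(-s,1)=-\tfrac1s$ gives $I(a)=-\frac{r^s}{s}\,{}_2F_1(1,-s;1-s;a/r)$. Since $\alpha>1$ forces $\theta\in(0,\pi)$, the argument $a/r$ stays in the principal domain $|\arg(-a/r)|<\pi$, so the hypergeometric is unambiguous. Substituting the two values of $a$ produces (\ref{eq:incompleteintegral}); and for real $s$ the conjugation symmetry ${}_2F_1(1,-s;1-s;\bar z)=\overline{{}_2F_1(1,-s;1-s;z)}$ collapses the difference into an imaginary part, giving (\ref{eq:incompleteintegralreal}).

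Finally, for the complete integral I would take $r\to 0^+$, letting the hypergeometric argument tend to infinity along the rays $e^{\pm i\theta}/r$. Here I would invoke the large-argument connection formula cited from Bateman: its second term carries the factor $z^{b}=z^{-s}$, which after multiplication by $r^s$ produces the finite, $r$-independent contributions $(-\kappa\pm i\sigma)^{s}$, while its first term is $O(r)$ and vanishes for $\Re s>-1$. Collecting the two rays gives $(-\kappa+i\sigma)^s-(-\kappa-i\sigma)^s=2i\sin\!\big(s(\pi-\theta)\big)$, and the reflection formula $\Gamma(s)\Gamma(1-s)=\pi/\sin\pi s$ then simplifies the Gamma prefactor $\Gamma(1-s)\Gamma(1+s)$ to $1/\sin(\pi s)$, yielding $\Phi(\alpha,s)=\sin\!\big(\tfrac{\pi(\alpha-1)}{1+\alpha}s\big)/\sin(\pi s)$ since $\pi-\theta=\tfrac{\pi(\alpha-1)}{1+\alpha}$; the value $\Phi(\alpha,0)=\tfrac{\alpha-1}{\alpha+1}$ follows by L'Hôpital.

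The main obstacle will be the analytic-continuation bookkeeping: the Euler representation, hence each $I(a)$, converges only for $\Re s<0$, whereas the stated range is $\Re s<1$. I would resolve this by observing that both $\Phi(\alpha,s;r)$ and the right-hand side of (\ref{eq:incompleteintegral}) are holomorphic in $s$ on $\{\Re s<1\}$ — the apparent pole at $s=0$ being removable because the bracketed difference of hypergeometrics vanishes there — so the identity theorem propagates the formula from $\Re s<0$ to the whole strip. The secondary delicate point is tracking the branches of $z^{-s}$ and of the connection formula as $r\to 0$; keeping $\theta\in(0,\pi)$ ensures that every relevant argument stays off the negative real axis, so the principal branches used throughout remain mutually consistent.
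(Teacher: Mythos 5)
Your proof is correct and follows essentially the same route as the paper: the same partial-fraction splitting of the kernel, the same reduction of each tail integral to ${}_2F_1\left(1,-s;1-s;\cdot\right)$, and the same Bateman connection formula plus Gamma-function identities for the $r\to 0^+$ limit. The only difference is that you make explicit two details the paper leaves implicit — the Euler-integral derivation (via $t=r/v$) of the ${}_2F_1$ evaluation, and the identity-theorem continuation in $s$ from $\Re s<0$, where the two tail integrals converge separately, to the full strip $\Re s<1$ — which tightens the argument rather than changing the method.
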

 
In the following Proposition, we compute the norm of the operator $\bb{K}_J$ on $L^{\infty}_s(J)$ with $s\in(-1,1)$, 
and the norm of the resolvent $\left(\bb{I}-\bb{K}_J\right)^{-1}$ on $L_s^\infty(J)$
with $s\in\left(-\frac{1+\alpha}{2\alpha},\frac{1+\alpha}{2\alpha}\right)$, where $\bb{I}$ is the identity operator.
\begin{proposition}\label{prop:Kzasym}
(1) Let $J=\bb{R}^+$ or an admissible interval, $s \in(-1,1)$ and $f:\bb{R}^+ \to \bb{R}$ be a locally bounded function.  We have that
\begin{equation}\label{eq:limitsrespect}
 \lim_{x \to +\infty}  x^{-s} f(x)= c \in \bb{R}^+ \quad\Longrightarrow\quad \lim_{x \to +\infty} x^{-s}\bb{K}_J[f](x)= \Phi(\alpha,s) c\;,
 \end{equation}
where $\Phi(\alpha,s)$ is as per (\ref{eq:completeintegral}).\\
(2) Let $I$ be an admissible interval, $s\in(-\infty,1)$ and $f \in L^{\infty}_{s}(I)$. Then,
\begin{equation}\label{eq:puntualestimate}
 \left| x^{-s} \bb{K}_I[f](x) \right| \leq \Phi\left(\alpha,s; \frac{\omega}{x}\right) \| f\|_{\infty,s}\;,
\end{equation}
where $\omega=\underset{x\in I}{\inf} \{x\}$ and $\Phi(\alpha,s; r)$ is as per (\ref{eq:incompleteintegralreal}).\\
(3) For all $s \in(-1,1)$, $\bb{K}_I$ is a continuous operator on the space $L^{\infty}_{s}(I)$, and
\begin{equation}\label{eq:normKI}
 \| \bb{K}_I \|_{\infty,s}= \Phi(\alpha,s) = \frac{\sin\left(\pi\frac{\alpha-1}{1+\alpha}s\right)}{\sin\left(\pi s\right)}\;.
\end{equation}
(4) For all $s\in \left(-\frac{\alpha+1}{2\alpha},\frac{\alpha+1}{2\alpha}\right)$,
the operator $\bb{I}-\bb{K}_I$ is invertible on $L^{\infty}_{s}(I)$ and 
\begin{equation}\label{eq:normResolvent}
  \| \left(\bb{I}_I-\bb{K}_I \right)^{-1} \|_{\infty,s} = \frac{1}{1-\Phi(\alpha,s)} =
  \frac{\sin\left(\pi\frac{\alpha-1}{1+\alpha}s\right)}
  {\sin\left(\pi s\right)- \sin\left(\pi\frac{\alpha-1}{1+\alpha}s\right)} \;.
\end{equation}
(5) Let $\omega >0$ and $g \in L^{\infty}_s([\omega,+\infty))$, with
$s \in  \left(-\frac{\alpha+1}{2\alpha},\frac{\alpha+1}{2\alpha}\right)$.
The general solution of the linear integral equation
\begin{equation}\label{eq:fKfg}
 f=\bb{K}_{\omega}[f]+g \;,\quad f \in L^{\infty}_{\frac{\alpha+1}{2\alpha}}\left([\omega,+\infty)\right)\;,
\end{equation}
is given by the formula
\begin{equation}\label{eq:affinesolution}
 f(x)= \left(\bb{I}-\bb{K}_{\omega}\right)^{-1} [g]+ c x\, \fun'(\omega x) \;,\quad c \in \bb{R}.
\end{equation}
In the above equation, $\fun$ is as per (\ref{eq:tauJ}) and  $\left(\bb{I}-\bb{K}_{\omega}\right)^{-1}$ is the resolvent operator on the space
$L^{\infty}_s\left([\omega,+\infty)\right)$.
Hence if $f$ is a solution of equation (\ref{eq:fKfg}), there exists a $c' \in \bb{R}$
such that
\begin{equation}\label{eq:subtracteds}
 f(x) - c'x^{\frac{1+\alpha}{2\alpha}} \in L^{\infty}_s\left([\omega,+\infty)\right)\;.
\end{equation}\\
(6) Let $\omega>0$ and $g:[\omega,+\infty) \to \bb{R}$ be a continuous function such that
\begin{equation*}
 \lim_{x \to + \infty} x^{-s} g(x) = d \in \R\;, \mbox{ for some } s \in  \left(-\frac{\alpha+1}{2\alpha},\frac{\alpha+1}{2\alpha}\right)\;.
\end{equation*}
Let $f \in L^{\infty}_{\frac{\alpha+1}{2\alpha}}\left([\omega,+\infty)\right)$ be a solution of (\ref{eq:fKfg}).
There exists a $c \in \bb{R}$
such that
\begin{equation}\label{eq:subtractedlimit}
 \lim_{x\to +\infty} x^{-s} \left( f(x) - c x^{\frac{1+\alpha}{2\alpha}}\right)= \frac{\sin\left(\pi\frac{\alpha-1}{1+\alpha}s\right)}
  {\sin\left(\pi s\right)- \sin\left(\pi\frac{\alpha-1}{1+\alpha}s\right)} d\;.
\end{equation}

 \begin{proof}
In this proof we use the short-hand notations:
$K:=K_{\alpha}$, $\kappa=\cos\left(\frac{2 \pi}{1+\alpha}\right)$ and $\sigma=\sin\left(\frac{2 \pi}{1+\alpha}\right)$.\\
(1) Since the kernel is bounded, it follows that if $[a,b]\subset{\bb{R}^+}$ then
\begin{equation}\label{eq:KL1}
 f \in L^{1}([a,b]) \quad\Longrightarrow\quad \int_a^b K\left(\frac{t}{x}\right) \frac{f(t)}{t}dt = O(x^{-1}) \mbox{ as } x \to + \infty \;,
\end{equation}
whence
\begin{equation*}
 \bb{K}_J[f](x)  - \bb{K}_{\bb{R}^+}[f](x)=O(x^{-1}) \mbox{ as } x\to+\infty\;.
\end{equation*}
Therefore, without loss in generality, we can restrict to the case $J=\bb{R}^+$.

We fix $\e\in(0,1)$ and write
 \begin{align}\label{eq:splitintas}
 x^{-s}\int_0^{\infty} K\left(\frac{y}{x}\right)\frac{f(y)}{y} dy = x^{-s}\int_{0}^{x^{\e}} K\left(\frac{y}{x}\right)\frac{f(y)}{y} dy +
 x^{-s} \int_{x^{\e}}^{+\infty}  K\left(\frac{y}{x}\right)\frac{f(y)}{y} dy\;.
  \end{align}
  We analyse the first term on the right-hand-side of (\ref{eq:splitintas}) and prove that it converges to $0$, when $\e$ is chosen appropriately.
  Due to the hypothesis on $f$, there exist $C,D>0$ such that $\sup_{u\in[0,y]}|f(u)| \leq C + D y^{\max\lbrace s,0\rbrace} $.
 Therefore, using (\ref{eq:maximumkernel}) we obtain
 \begin{multline*}
  x^{-s}\left|\int_{0}^{x^{\e}} K\left(\frac{y}{x}\right)\frac{f(y)}{y} dy \right| \leq x^{-s-1}\int_0^{x^{\e}}  \left( C + D y^{\max\lbrace s,0\rbrace}\right) dy \\ 
  \leq \frac{x^{-s+\e-1}}{\pi\sigma}
  \left( C+ D x^{\e \max\lbrace s,0\rbrace}\right) \;.
  \end{multline*}
  The above expression vanishes as $x \to 0$ for all $\e \in (0,1)$ if $s \in[0,1)$, and for all $\e \in (0,1+s)$ if $s\in(-1,0)$.
  
  We now analyse the second term in the right-hand-side of (\ref{eq:splitintas}).
  We have
  \begin{multline*}
 x^{-s}\int_{x^{\e}}^{\infty} K\left(\frac{y}{x}\right)\frac{f(y)}{y} dy = x^{-s}\int_{x^{-1+\e}}^{\infty} K(u)\frac{f(u x)}{u} du =
 \frac{\sigma}{\pi}\int_{x^{-1+\e}}^{\infty}  \frac{x^{-s} f(ux) }{1+u^2-2 \kappa u} du\\
 = \frac{\sigma}{\pi} \int_{x^{-1+\e}}^{\infty}  \frac{c\, u^{s}}{1+u^2-2 \kappa u} du - \frac{\sigma}{\pi} \int_{x^{-1+\e}}^{\infty}
 \frac{c\,u^{s}-x^{-s} f(ux) }{1+u^2-2 \kappa u} du \;.
  \end{multline*}
Since  $\frac{u^{s}}{1+u^2-2 \kappa u}$ is integrable on $\bb{R}^+$ for all $s\in(0,1)$, we have that
\begin{equation*}
 \lim_{x \to +\infty} \frac{\sigma}{\pi} \int_{x^{-1+\e}}^{\infty}  \frac{u^{s}}{1+u^2-2 \kappa u} du =
 \frac{\sigma}{\pi} \int_0^{\infty}  \frac{u^{s}}{1+u^2-2 \kappa u} du =\Phi(\alpha,s)\;,
\end{equation*}
Moreover, writing
\begin{align*}
& \frac{\sigma}{\pi} \int_{x^{-1+\e}}^{\infty}
 \frac{- c\, u^{s}+x^{-s} f(ux) }{1+u^2-2 \kappa u} du = \frac{\sigma}{\pi} \int_{x^{-1+\e}}^{\infty}
 \frac{u^{s}}{1+u^2-2 \kappa u}  \left( (ux)^{-s} f(ux)-c\right) du\;,
 \end{align*}
 we deduce that
 \begin{align*}
&  \left|\frac{\sigma}{\pi} \int_{x^{-1+\e}}^{\infty}
 \frac{- c\, u^{s}+x^{-s} f(ux) }{1+u^2-2 \kappa u} du\right| \leq \left(\sup_{u \in [x^{-1+\e},+\infty)} |(ux)^{-s} f(ux)-c|\right) \Phi(\alpha,s) \;.
\end{align*}
By hypothesis on $f$, we have that
$$\lim_{x \to + \infty} \sup_{u \in [x^{-1+\e},+\infty)} |(ux)^{-s} f(ux)-c|=
\lim_{x \to + \infty} \sup_{t \in [x^{\e},+\infty)} |t^{-s} f(t)-c|=0 \;,\quad \forall \e >0\;,$$
from which it follows that $
 \lim_{x \to +\infty}\frac{\sigma}{\pi} \int_{x^{-1+\e}}^{\infty}
 \frac{c\,u^{s}-x^{-s} f(ux) }{1+u^2-2 \kappa u} du =0$, for all $\e>0$. Therefore equation (\ref{eq:limitsrespect}) is proven.\\
(2) Since
\begin{multline*}
  \left| x^{-s} \bb{K}_I[f](x) \right| \leq x^{-s} \|f\|_{\infty,s} \int_{I} K_{\alpha}\left(\frac{x}{y}\right)y^{s-1} dy \\
  \leq x^{-s} \|f\|_{\infty,s} \int_{\omega}^{\infty} K_{\alpha}\left(\frac{x}{y}\right) y^{s-1} dy \leq   \|f\|_{\infty,s}
 \int_{\omega/x}^{\infty} K_{\alpha}(t)\,t^{s-1} dy\;,
\end{multline*}
the thesis follows directly from the definition of $\Phi(\alpha,s;r)$.\\
(3) Since $\Phi(\alpha,s;r)< \Phi(\alpha,s)$ for all $r>0$, the thesis follows directly from equations (\ref{eq:limitsrespect}) and (\ref{eq:puntualestimate}).\\
(4) The norm of $\bb{K}_I$ on $L^{\infty}_s(I)$ is $\Phi(\alpha,s)= \frac{\sin\left(\frac{\pi(\alpha-1)}{1+\alpha}s\right)}{\sin\left(\pi s\right)}$.
The latter function is smaller than one provided $s\in\left(-\frac{1+\alpha}{2\alpha},\frac{1+\alpha}{2\alpha}\right)$.
Therefore $\bb{I}-\bb{K}_I $ is invertible on $L^{\infty}_s(I)$ and the resolvent is
given by the Neumann series
$$\left(\bb{I}-\bb{K}_I \right)^{-1} = \sum_{n\geq0} \left(\bb{K}_I\right)^n\;.$$
The thesis follows immediately from the latter identity. \\
(5) The kernel of $\bb{I}-\bb{K}_{\omega}$
on $L^{\infty}_{\frac{\alpha+1}{2\alpha}}\left([\omega,+\infty)\right)$
has dimension one, see \cite[Section3]{baluzaII}.\footnote{The fact that the kernel of $\bb{I}_\omega-\bb{K}_\omega$
 has dimension one follows from the fact that the function $1-\Phi(\alpha,s)$ has a simple zero at $s=\frac{1+\alpha}{2\alpha}$,
 that is the Mellin transform of the resolvent has a simple pole at $s=-\frac{1+\alpha}{2\alpha}$.}
 After Proposition (\ref{prop:tauJ}) it is generated by the function $x\fun'(\omega x)$. Moreover,
 From Proposition (\ref{prop:tauJ}), we know that
$x\fun'(\omega x)= \frac{\alpha+1}{2\alpha} \omega^{\frac{\alpha-1}{2\alpha}} x^{\frac{1+\alpha}{2\alpha}} +
O\left(x^{-\frac{1+\alpha}{2\alpha}}\right)$ as $x\to+\infty$, from which equation (\ref{eq:subtracteds}) follows. \\
(6) By hypothesis $g \in L^{\infty}_{s}\left([\omega,+\infty)\right)$, hence $f$ is as per formula
(\ref{eq:affinesolution}). Reasoning as above, we deduce that
$f(x) = c x^{\frac{1+\alpha}{2\alpha}} +G + O\left(x^{-\frac{1+\alpha}{2\alpha}}\right)$ as $x \to +\infty$, for some $c>0$, with
$G=\left(\bb{I}-\bb{K}_\omega\right)^{-1}[g]$.
After (\ref{eq:limitsrespect}) we have that
$$\lim_{x \to +\infty} x^{-s}  \left(\bb{K}_\omega\right)^n(x)= \left(\Phi(\alpha,s)\right)^n d \;.$$
Since $G$ is explicitly given by the Neumann series
$\sum_{n\geq0} \left(\bb{K}_\omega\right)^n[g]$, the above computation yields the thesis.
 \end{proof}

\end{proposition}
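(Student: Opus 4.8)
The plan is to exploit that $\bb{K}_J$ is a multiplicative convolution on $\R^+$, so that under $y=ux$ the kernel becomes scale invariant and all the spectral data are encoded in its Mellin transform $\Phi(\alpha,s)$, computed in closed form in the preceding Lemma. I would prove the six parts in the stated order, as each feeds the next. The positivity of $K_\alpha$ and the explicit incomplete integral $\Phi(\alpha,s;r)$ will furnish sharp one-sided bounds, while $\Phi(\alpha,s)$ itself will give the exact operator norms.

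For (1) I would first reduce to $J=\R^+$: the contribution of integration over any bounded interval is $O(x^{-1})$ by the kernel bound (\ref{eq:maximumkernel}) (since $f$ is $L^1$ on bounded sets), hence negligible after multiplying by $x^{-s}$. For $J=\R^+$ I substitute $y=ux$, use the symmetry (\ref{eq:Ksymm}) to write $x^{-s}\bb{K}_{\R^+}[f](x)=\int_0^{+\infty}K_\alpha(u)\,u^{s-1}\,(ux)^{-s}f(ux)\,du$, and split at $u=x^{-1+\e}$. On the tail the factor $(ux)^{-s}f(ux)$ tends uniformly to $c$, so that piece converges to $c\int_0^{+\infty}K_\alpha(u)u^{s-1}\,du=c\,\Phi(\alpha,s)$ (the integral converges precisely because $K_\alpha$ vanishes linearly at $0$ and $\infty$, i.e. for $s\in(-1,1)$); the piece near $u=0$ is dominated using the linear vanishing of $K_\alpha$ at the origin together with the a priori growth of $f$, and vanishes once $\e$ is chosen in the range dictated by the sign of $s$. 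Part (2) is a direct majorisation: bound $|f(y)|\le\|f\|_{\infty,s}y^s$, enlarge the domain to $[\omega,+\infty)$, and change variables $t=y/x$ to recognise $\Phi(\alpha,s;\omega/x)$. Parts (3) and (4) are then spectral corollaries: (\ref{eq:puntualestimate}) with $\Phi(\alpha,s;r)<\Phi(\alpha,s)$ gives $\|\bb{K}_I\|_{\infty,s}\le\Phi(\alpha,s)$, while testing (1) against $x^s\in L^\infty_s(I)$ shows the bound is attained in the limit, so the norm equals $\Phi(\alpha,s)$; from the closed form one checks $\Phi(\alpha,s)<1$ exactly on $s\in(-\tfrac{\alpha+1}{2\alpha},\tfrac{\alpha+1}{2\alpha})$, whence $\bb{I}-\bb{K}_I$ is invertible by the Neumann series with norm $\sum_n\Phi(\alpha,s)^n=(1-\Phi(\alpha,s))^{-1}$, the matching lower bound again coming from the test function $x^s$.

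The genuinely delicate point, and the main obstacle, is (5): the borderline exponent $s=\tfrac{\alpha+1}{2\alpha}$ where $\Phi(\alpha,s)=1$ and the Neumann series diverges. Here I would exhibit an explicit kernel element. Rescaling the homogeneous integral equation (\ref{eq:homWH}) for $D\fun$ established in Proposition \ref{prop:tauJ}(3) shows that $x\mapsto (D\fun)(x/\omega)=x\,\fun'(x/\omega)/\omega$ solves $f=\bb{K}_\omega[f]$ on $[\omega,+\infty)$, and its leading asymptotics $\propto x^{\frac{1+\alpha}{2\alpha}}$ from Proposition \ref{prop:tauJ}(4) place it in $L^\infty_{\frac{1+\alpha}{2\alpha}}$. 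That this spans the \emph{entire} kernel is the Wiener--Hopf/Mellin fact that $1-\Phi(\alpha,s)$ has a \emph{simple} zero at $s=\tfrac{\alpha+1}{2\alpha}$, which I would quote (as the paper does via \cite{baluzaII}) rather than re-derive. The general solution of $f=\bb{K}_\omega[f]+g$ is then a particular solution $(\bb{I}-\bb{K}_\omega)^{-1}[g]$, computed in the strictly smaller space $L^\infty_s$ where (4) applies, plus an arbitrary multiple of this homogeneous solution; comparing leading asymptotics yields (\ref{eq:subtracteds}).

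Finally, for (6) I would write $f=G+c'\,(D\fun)(\cdot/\omega)$ with $G=(\bb{I}-\bb{K}_\omega)^{-1}[g]\in L^\infty_s$, choose $c$ so as to cancel the $x^{\frac{1+\alpha}{2\alpha}}$ growth carried by the homogeneous term, and compute $\lim_{x\to+\infty}x^{-s}G(x)$ by applying (1) termwise to the Neumann series $G=\sum_{n\ge0}\bb{K}_\omega^n[g]$, the interchange of limit and sum being justified by the geometric decay $\|\bb{K}_\omega^n[g]\|_{\infty,s}\le\Phi(\alpha,s)^n\|g\|_{\infty,s}$, and summing the resulting geometric series in $\Phi(\alpha,s)$ to obtain the stated constant multiple of $d$. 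Throughout, the recurring difficulty is keeping the critical exponent $\tfrac{\alpha+1}{2\alpha}$ under control: it is simultaneously the source of the one-dimensional kernel in (5) and the reason the resolvent bound in (4) is sharp only on the open interval, so the argument must be organised to work in the open range $s\in(-\tfrac{\alpha+1}{2\alpha},\tfrac{\alpha+1}{2\alpha})$ while extracting the endpoint behaviour by hand.
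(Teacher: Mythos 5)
Your proposal is correct and follows essentially the same route as the paper's own proof: the same reduction to $J=\bb{R}^+$ and splitting at $x^{\e}$ in (1), the same direct majorisation for (2), the Neumann series for (3)--(4), the citation of the one-dimensional kernel (spanned by the rescaled $D\fun$ solving the homogeneous equation (\ref{eq:homWH})) for (5), and the termwise application of (1) to the Neumann series for (6). The only differences are cosmetic improvements on your part --- making explicit the test function $x^s$ that shows the norm in (3) is attained, and the geometric-decay justification for interchanging limit and sum in (6), which the paper glosses over.
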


\subsection*{The operator $\bb{K}^{(n)}_J$ with $n\geq 1$}

Here we establish the analogous of Proposition \ref{prop:Kzasym}
for the operators $\bb{K}_J^{(n)}$ with $n\geq 1$.
\begin{lemma}\label{lem:Knbounded}
(1) Let $I$ be an admissible set.
 For all $s \in(-1,1)$, $\bb{K}^{(n)}_I$ is a continuous operator on the space $L^{\infty}_{s}(I)$, and
\begin{equation} \label{eq:Psiasn}
 \| \bb{K}^{(n)}_I \|_{\infty,s}= \Psi(\alpha,s;n) := \int_{0}^{\infty}\left|K_{\alpha}(x;n)\right|x^{s-1} dx < \infty \;.
\end{equation}
(2) Let $J=\bb{R}^+$ or an admissible interval,  $s \in (-1,1)$ and $f:\bb{R}^+ \to \bb{R}$ be a locally bounded function. Then,
 \begin{equation}\label{eq:limitsrespectn}
 \lim_{x \to +\infty}  x^{-s} f(x)= c \in \bb{R}^+ \quad\Longrightarrow\quad \lim_{x \to +\infty} x^{-s}\bb{K}^{(n)}_J[f](x)=
 s^n \Phi(\alpha,s) c\;,
 \end{equation}
where $\Phi(\alpha,s)$ is as per (\ref{eq:completeintegral}).\\
(3) Let $\omega >0$. Assume that $ D^k g \in L^{\infty}_s([\omega,+\infty))$, with
$s \in  \left(-\frac{\alpha+1}{2\alpha},\frac{\alpha+1}{2\alpha}\right)$, for all $k\in\{0,\dots,n\}$.
Let $f$ be a solution of the linear integral equation
\begin{equation}\label{eq:fKfgn}
 f=\bb{K}_{\omega}[f]+g\;,\quad f \in L^{\infty}_{\frac{\alpha+1}{2\alpha}}\left([\omega,+\infty)\right)\;.
\end{equation}
There exists a $c \in \bb{R}$
such that
\begin{equation}\label{eq:subtractedlimitn}
 D_{x}^k\left( f(x) - c x^{\frac{1+\alpha}{2\alpha}}\right) \in L^{\infty}_s([\omega,+\infty))\;,\quad \forall k\in\{0,\dots,n\}\;.
\end{equation}

\begin{proof}
 (1) To prove (\ref{eq:Psiasn}), it is enough to notice that $K_{\alpha}(x;n)=O(x)$ as $x \to 0^+$ and $K_{\alpha}(x;n)=O(x^{-1})$ as $x \to +\infty$.\\
 (2) The proof of (\ref{eq:limitsrespectn}) follows the same steps of the proof of the case $n=0$, given in Proposition
 \ref{prop:Kzasym} (1). It is therefore omitted.\\
 (3) After Proposition \ref{prop:intzI} (5), the case $n=0$ follows. A simple computation shows that
 $D^k_x\left(\bb{K}_{\omega}[y^{\frac{1+\alpha}{2\alpha}}](x)-x^{\frac{1+\alpha}{2\alpha}}\right) = O(x^{-1})$
 for all $k\geq0$. Hence, defining $\hat{f}=f-c x^{\frac{1+\alpha}{2\alpha}}$, we have that
 $$
 D^k \hat{f}= \bb{K}^{(n)}_{\omega}[\hat{f}] + \hat{g}^{(k)}\;,\quad \hat{f},\hat{g}^{(k)} \in L^{\infty}_s\left([\omega,+\infty)\right)\;,
 $$
 where $\hat{g}^{(k)}(x)=D_x^k g(x)+ D^k_x\left(\bb{K}_{\omega}[y^{\frac{1+\alpha}{2\alpha}}](x)-x^{\frac{1+\alpha}{2\alpha}}\right) $.
 Since $\bb{K}_{\omega}$ is a bounded operator on $L^{\infty}_s\left([\omega,+\infty)\right)$, the thesis follows.
\end{proof}
\end{lemma}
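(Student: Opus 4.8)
The plan is to reduce all three parts to the already-established case $n=0$ of Proposition \ref{prop:Kzasym}, using two structural facts recorded in Lemma \ref{lem:maximumkernel}: the scaling symmetry $K_\alpha(x/y;n)=(-1)^nK_\alpha(y/x;n)$, and the commutation $\bb{K}^{(n)}_J=D_x^n\circ\bb{K}_J$, which holds because $K_\alpha(\cdot;n)$ is smooth and vanishes linearly at $0$ and $\infty$, so $D_x$ may be taken under the integral sign. For part (1) I would bound the weighted norm directly: for $f\in L^\infty_s(I)$ one has $|f(y)|\leq\|f\|_{\infty,s}\,y^s$, hence
\[
\left|x^{-s}\bb{K}^{(n)}_I[f](x)\right|\leq\|f\|_{\infty,s}\,x^{-s}\int_I\left|K_\alpha\!\left(\tfrac{x}{y};n\right)\right|y^{s-1}\,dy .
\]
The substitution $u=y/x$ followed by the symmetry (\ref{eq:Kansymmetry}) converts the right-hand integral into $\|f\|_{\infty,s}\int_{I/x}|K_\alpha(u;n)|\,u^{s-1}\,du\leq\Psi(\alpha,s;n)\|f\|_{\infty,s}$; finiteness of $\Psi(\alpha,s;n)$ for $s\in(-1,1)$ is exactly the linear vanishing of $K_\alpha(\cdot;n)$ at the two ends, which makes the integrand behave like $u^{s}$ near $0$ and $u^{s-2}$ near $\infty$. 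Sharpness of the norm follows, as in Proposition \ref{prop:Kzasym}(3), by testing against functions asymptotic to $y^{s}\operatorname{sign}K_\alpha(\cdot;n)$ and letting $x\to+\infty$ so that $I/x$ exhausts $\bb{R}^+$.

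For part (2) I would replay verbatim the splitting argument of Proposition \ref{prop:Kzasym}(1): after reducing to $J=\bb{R}^+$ (the difference being $O(x^{-1})$), one cuts the integral at $y=x^\e$, the inner region contributing $o(x^{s})$ by the bound (\ref{eq:maximumkernel}), and the outer region reducing under $y=ux$ to $c\int_0^\infty K_\alpha(1/u;n)\,u^{s-1}\,du$. By the symmetry this equals $c(-1)^n\int_0^\infty K_\alpha(u;n)u^{s-1}\,du$, and integrating $D_u$ by parts $n$ times (the boundary terms vanishing for $s\in(-1,1)$) gives $\int_0^\infty K_\alpha(u;n)u^{s-1}\,du=(-s)^n\Phi(\alpha,s)$, so the constant is $(-1)^n(-s)^n\Phi(\alpha,s)=s^n\Phi(\alpha,s)$ as claimed. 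The conceptual source of the factor $s^n$ is simply $\bb{K}^{(n)}_J=D_x^n\bb{K}_J$ together with $D_x^nx^{s}=s^nx^{s}$, i.e.\ part (2) is the $n$-fold Euler derivative of the $n=0$ asymptotic.

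Part (3) is where the real work lies, and I would argue by bootstrap on $k$. The $n=0$ conclusion is already available from Proposition \ref{prop:Kzasym}(5)--(6): there is a constant $c$ with $\hat f:=f-c\,x^{\frac{1+\alpha}{2\alpha}}\in L^\infty_s([\omega,+\infty))$. The crucial point is that the exponent $\beta=\frac{1+\alpha}{2\alpha}$ is exactly the one making $\Phi(\alpha,\beta)=1$, so $x^\beta$ is a genuine fixed point of the full-line operator, $\bb{K}_{\bb{R}^+}[y^\beta]=x^\beta$; subtracting the tail over $[0,\omega]$ then gives
\[
\bb{K}_\omega[y^\beta](x)-x^\beta=-\int_0^\omega K_\alpha\!\left(\tfrac{x}{y}\right)y^{\beta-1}\,dy ,
\]
and since $\sup_{y}\big|\tfrac1y K_\alpha(x/y;k)\big|\lesssim_k x^{-1}$ by (\ref{eq:maximumkernel}), each Euler derivative $D_x^k$ of this tail is $O(x^{-1})$ for every $k\geq0$. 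Inserting $f=\hat f+c\,x^\beta$ into the integral equation yields $\hat f=\bb{K}_\omega[\hat f]+\tilde g$ with $\tilde g=c\big(\bb{K}_\omega[y^\beta]-x^\beta\big)+g$, whose derivatives $D^k\tilde g$ all lie in $L^\infty_s$ for $k\leq n$ — the $D^kg$ by hypothesis, and the $O(x^{-1})$ correction because $x^{-1}\in L^\infty_s([\omega,+\infty))$ whenever $s>-1$, which holds as $s\in(-\tfrac{1+\alpha}{2\alpha},\tfrac{1+\alpha}{2\alpha})\subset(-1,1)$. Commuting $D^k$ under the convolution gives $D^k\hat f=\bb{K}^{(k)}_\omega[\hat f]+D^k\tilde g$, and since $\hat f\in L^\infty_s$ and $\bb{K}^{(k)}_\omega$ is bounded on $L^\infty_s$ by part (1), the right-hand side is in $L^\infty_s$; hence so is $D^k\hat f$ for each $k\leq n$.

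I expect the only genuinely non-formal step to be the decay claim $D_x^k\big(\bb{K}_\omega[y^\beta]-x^\beta\big)=O(x^{-1})$ in part (3): it rests on recognising $x^\beta$ as the unit eigenfunction of $\bb{K}_{\bb{R}^+}$ and then controlling the finite-interval tail and all its Euler derivatives uniformly, for which the kernel estimates (\ref{eq:maximumkernel})--(\ref{eq:maxkernel4}) of Lemma \ref{lem:maximumkernel}(3) are precisely tailored. Everything else amounts to bounded operators acting on the single fixed space $L^\infty_s$, so once this decay is secured the bootstrap closes at once.
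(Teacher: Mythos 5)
Your proposal is correct and follows essentially the same route as the paper's proof: a direct kernel bound via the symmetry (\ref{eq:Kansymmetry}) for (1), the same cut-and-rescale argument as Proposition \ref{prop:Kzasym}(1) for (2), and for (3) the subtraction $\hat f=f-c\,x^{\frac{1+\alpha}{2\alpha}}$ followed by a bootstrap in $k$ using boundedness of $\bb{K}^{(k)}_\omega$ on $L^\infty_s$. You moreover supply exactly the details the paper leaves implicit --- the integration-by-parts identity $\int_0^\infty K_\alpha(u;n)u^{s-1}\,du=(-s)^n\Phi(\alpha,s)$, which combined with the $(-1)^n$ from the symmetry yields the factor $s^n$, and the eigenfunction identity $\bb{K}_{\bb{R}^+}\bigl[y^{\frac{1+\alpha}{2\alpha}}\bigr](x)=x^{\frac{1+\alpha}{2\alpha}}$ (since $\Phi\bigl(\alpha,\tfrac{1+\alpha}{2\alpha}\bigr)=1$) that underlies the $O(x^{-1})$ tail estimate --- and you correctly read the paper's miscited ``Proposition \ref{prop:intzI}(5)'' for the $n=0$ case as Proposition \ref{prop:Kzasym}(5)--(6).
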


As a Corollary of the above Lemma, we deduce
a simple but important a-priori estimate on the $L^{\infty}_s$ norms of the solutions
to the standard IBA equation (\ref{eq:IBAStGen}).

\begin{lemma}\label{lem:lemma1e2}\label{lem:apriorie}
Let $\{z,\underline{h},\omega\}$ be a solution of the standard IBA equation (\ref{eq:IBASt}).\\
(1) We have that
 \begin{align}
& \tilde{z}_n(x):=
D^n_x z(x)- \left(\frac{\alpha+1}{2\alpha}\right)^n x^{\frac{1+\alpha}{2\alpha}}
\in L^{\infty}\left([\omega,+\infty)\right)\;,\quad \forall n \geq 0\;,\label{eq:apriori1}
\end{align}
(2) Let $\la(\xi)=z(\omega\xi)-\bar{l}(\xi,\omega,p)$ with $\xi\in[1,+\infty)$ -- where $\bar{l}(\xi,\omega,p)$ is the unique normalised solution of the linear IBA equation (\ref{eq:IBAStL}), as per Theorem \ref{prop:fproperty}.

We have that
\begin{equation}\label{eq:apren}
\|D^n \la\|_{\infty} \lesssim_n 1 \;,\quad \forall n \geq 0\;.
\end{equation}
In particular,
  \begin{align}
  \label{eq:apre}
 & \|\la\|_{\infty} \leq \left(H+\frac{1}{4}\right)(\alpha-1)\;.
 \end{align}
\begin{proof}
(1) The standard IBA equation (\ref{eq:IBASt}) is an equation of the form 
\begin{align*}
    &f=\bb{K}_{\omega}[f]+g\;,\\
    &\mbox{with }g=-2p+\bb{K}_{\omega}[\afr{f}]- H F_\alpha\left(\frac{x}{\omega}\right)+ \sum_{k=1}^H F_\alpha\left(\frac{x}{h_k}\right)\;.
\end{align*}
The thesis follows from applying Lemma \ref{lem:Knbounded} (3). In this case,
$c=1$ by hypothesis on $z$, and $D^n g \in L^{\infty}\left([\omega,+\infty)\right)$ for all $n \geq 0$.
In fact, $D^n\bb{K}_{\omega}[\afr{z}]=
\bb{K}^{(n)}_{\omega}[\afr{z}]\in L^{\infty}\left([\omega,+\infty)\right)$
since $\afr{z}$ is bounded, and $D^n_x F_\alpha\left(\frac{x}{y}\right)=O(x^{-1})$ for all $y >0$. \\

(2) The same proof of (1) works.
The perturbed IBA equation (\ref{eq:IBASte}) is of the form
\begin{align*}
&\la= (\bb{I}-\bb{K}_{\omega})^{-1} [g]\;,\\
&\mbox{with } g= \bb{K}_{\omega}[\afr{z}]- H F_\alpha\left(\frac{x}{\omega}\right)+ \sum_{k=1}^H F_\alpha\left(\frac{x}{h_k}\right)\;. 
\end{align*}
We have that
$\|\bb{K}_{\omega}[\afr{z}]\|_{\infty}\leq\frac{1}{2} \| \bb{K}_{\omega} \|_{\infty}$ since $\afr{z}=\frac12$,
and $\|F_\alpha(x/y) \|_{\infty}=\frac{\alpha-1}{\alpha+1}$ as per (\ref{eq:Ffunza}).
Therefore we can apply Lemma \ref{lem:Knbounded}(3) to obtain (\ref{eq:apren}) --
notice that in this case
$c=0$, since by hypothesis $z$ and $l$ have the same normalisation at infinity.

Using the norm of the resolvent in the space $L^{\infty}$, as given by formula (\ref{eq:normResolvent}),
we obtain the estimate (\ref{eq:apre}).

\end{proof}

\end{lemma}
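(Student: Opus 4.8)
The plan is to read both identities as \emph{linear} fixed-point equations of the shape $f=\bb{K}_\omega[f]+g$ (with $\omega=1$ after rescaling), in which the oscillatory term $\afr{z}$ and the hole contributions $F_\alpha(\cdot/h_k)$ are absorbed into the source $g$. The crucial structural fact is that $g$ is bounded \emph{together with all its Euler derivatives}, so the statement will follow at once from the regularity result Lemma \ref{lem:Knbounded}(3), while the explicit constant in (\ref{eq:apre}) will come from the resolvent norm of Proposition \ref{prop:Kzasym}(4) evaluated at $s=0$. The whole argument sits on top of Section \ref{sec:convoper} and is essentially bookkeeping.

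For part (1), I would use $\ceili{z-\frac12}=z-\afr{z}$ to rewrite the standard IBA equation (\ref{eq:IBASt}) as $z=\bb{K}_\omega[z]+g$ with
\[
 g(x)=-2p-\bb{K}_\omega[\afr{z}](x)+H\,F_\alpha\!\left(\tfrac{x}{\omega}\right)-\sum_{k=1}^H F_\alpha\!\left(\tfrac{x}{h_k}\right),
\]
the precise signs being immaterial for what follows. I then check that $D^k g\in L^\infty([\omega,+\infty))$ for every $k\ge 0$: the constant $-2p$ is bounded and is annihilated by $D$; by Lemma \ref{lem:Knbounded}(1) one has $D^k\bb{K}_\omega[\afr{z}]=\bb{K}^{(k)}_\omega[\afr{z}]\in L^\infty$ because $\afr{z}\in L^\infty_0$; and each of the finitely many $D^k_x F_\alpha(x/y)$ equals $K_\alpha(x/y;k-1)=O(x^{-1})$, hence is bounded on $[\omega,+\infty)$. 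Since $\alpha>1$ gives $\frac{\alpha+1}{2\alpha}>\frac12$, the weight $s=0$ lies in $\left(-\frac{\alpha+1}{2\alpha},\frac{\alpha+1}{2\alpha}\right)$, so Lemma \ref{lem:Knbounded}(3) applies and produces a constant $c$ with $D^k_x\big(z-c\,x^{\frac{1+\alpha}{2\alpha}}\big)\in L^\infty$ for all $k$; the normalisation $\lim_{x\to+\infty}x^{-\frac{1+\alpha}{2\alpha}}z(x)=1$ forces $c=1$. As $x^{\frac{1+\alpha}{2\alpha}}$ is an eigenfunction of $D_x$ with eigenvalue $\frac{\alpha+1}{2\alpha}$, this is exactly (\ref{eq:apriori1}).

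For part (2) I would proceed identically, after noting that $\bar{l}+\la=z(\omega\,\cdot)$, so $\afr{\bar{l}+\la}=\afr{z(\omega\,\cdot)}$ is bounded. The perturbed IBA equation (\ref{eq:IBASte}) reads $\la=\bb{K}_1[\la]+g$ with $g=-\bb{K}_1[\afr{\bar{l}+\la}]-\sum_{k=1}^H\big[F_\alpha(\tfrac{\xi}{1+\mu_k})-F_\alpha(\xi)\big]$, and the same three observations give $D^k g\in L^\infty([1,+\infty))$, with norms that are \emph{independent of $p$ and $\omega$}. The one genuine point of care is that the growth constant produced by Lemma \ref{lem:Knbounded}(3) is now $c=0$: both $z(\omega\xi)$ and $\bar{l}(\xi)$ have leading behaviour $(\omega\xi)^{\frac{1+\alpha}{2\alpha}}$, so $\la=o(\xi^{\frac{1+\alpha}{2\alpha}})$ and no $\xi^{\frac{1+\alpha}{2\alpha}}$ term can survive. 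Hence $D^n\la\in L^\infty$ with a bound depending only on $n,H,\alpha$, which is (\ref{eq:apren}).

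Finally, for the explicit estimate (\ref{eq:apre}) I would write $\la=(\bb{I}-\bb{K}_1)^{-1}[g]$ and use the resolvent norm at $s=0$: by (\ref{eq:normResolvent}) together with $\Phi(\alpha,0)=\frac{\alpha-1}{\alpha+1}$ from (\ref{eq:completeintegral}) one gets $\|(\bb{I}-\bb{K}_1)^{-1}\|_{\infty,0}=\frac{1}{1-\Phi(\alpha,0)}=\frac{\alpha+1}{2}$. Estimating the source through $\|\afr{\bar{l}+\la}\|_\infty\le\frac12$, $\|\bb{K}_1\|_{\infty,0}=\Phi(\alpha,0)=\frac{\alpha-1}{\alpha+1}$ and $\|F_\alpha\|_\infty=\frac{\alpha-1}{\alpha+1}$ from (\ref{eq:Ffunza}) (bounding the $2H$ functions $F_\alpha$ separately) yields $\|g\|_\infty\le\big(\tfrac12+2H\big)\frac{\alpha-1}{\alpha+1}$, whence $\|\la\|_\infty\le\frac{\alpha+1}{2}\cdot\big(\tfrac12+2H\big)\frac{\alpha-1}{\alpha+1}=\big(H+\tfrac14\big)(\alpha-1)$. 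The only steps demanding attention are the correct identification of the constant $c$ ($c=1$ for $z$ versus $c=0$ for $\la$) and the uniform boundedness of $\bb{K}^{(k)}_\omega[\afr{z}]$, which both rest on $\afr{z}$ taking values in $\left(-\frac12,\frac12\right]$ and on the $L^\infty$-continuity of the operators $\bb{K}^{(k)}$ established in Section \ref{sec:convoper}.
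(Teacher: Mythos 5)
Your proposal is correct and follows essentially the same route as the paper: both recast the standard and perturbed IBA equations as $f=\bb{K}_\omega[f]+g$ with a source $g$ that is bounded together with all its Euler derivatives, apply Lemma \ref{lem:Knbounded}(3) with $c=1$ for $z$ (forced by the normalisation) and $c=0$ for $\la$, and then derive (\ref{eq:apre}) from the resolvent norm (\ref{eq:normResolvent}) at $s=0$. Your explicit bookkeeping $\|g\|_\infty\le\left(\tfrac12+2H\right)\tfrac{\alpha-1}{\alpha+1}$ combined with $\|(\bb{I}-\bb{K}_1)^{-1}\|_{\infty,0}=\tfrac{\alpha+1}{2}$ reproduces exactly the paper's constant $\left(H+\tfrac14\right)(\alpha-1)$, and in passing you fix the (immaterial) sign slips in the paper's displayed expression for $g$.
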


\section{Oscillatory integrals and a-priori estimates}
\label{sec:osc}
Since we study the IBA equation as a perturbation of its linearisation, we need to estimate
the magnitude of the perturbing nonlinear term $\bb{K}_1[\afr{\bar{l}+
\la}]$, where $\bar{l}$ is the solution of the linearised standard IBA (\ref{eq:IBAStL})
and $\la$ a putative solution of the perturbed IBA (\ref{eq:IBASteGen}).

To be more precise, there are two kinds of integrals that we need to study.
The first kind are integrals of the form
\begin{equation*}
 \int_{1}^{\infty}K_{\alpha}\left(\frac{x}{y};n\right)\afr{pf(y)}\frac{dy}{y}\;,\quad \alpha>1\;,
\end{equation*}
where $K_\alpha(\cdot;n)$ is as per (\ref{eq:Knal}), while $f:[1,+\infty)\to\R$ is such that
$f(x)=c \, x^{\frac{\alpha+1}{2\alpha}}+ O\left(x^{\frac{\alpha+1}{2\alpha}}\right)$ as $x \to +\infty$ for some $c>0$
(plus some regularity conditions). The second kind of integrals are of the form
\begin{equation*}
\int_1^{\infty}K_{\alpha}\left(\frac{x}{y};n\right)\left(\ceili{p f(y)+\e(y)-\frac12}-\ceili{p f(y)+\tilde{\e}(y)-\frac12}\right)
 \frac{dy}{y}\;,\quad \alpha>1\;,
\end{equation*}
where $f$ satisfies the same conditions as above, while $\e,\tilde{\e}$ are bounded perturbations.

The study of these integrals is the analytical cornerstone of our method of analysis and it will lead us
to prove the following important results:
\begin{itemize}
 \item any strictly monotone solution of the IBA equation has asymptotic
  behaviour $z(x)=x^{\frac{1+\alpha}{2\alpha}}- p(\alpha+1) + O\left(x^{-\frac{1+\alpha}{2\alpha}}\right)$ as $x \to +\infty$;
  \item the $L^{\infty}$ norm of solutions of the perturbed IBA equation and its derivatives are $O(p^{-1})$ as $p\to+\infty$;
\item the perturbed IBA equation is well-posed.
\end{itemize}
Not violating the principle no pain no gain,
both kinds of integrals poses some serious analytical challenges. In fact they are convolutions of functions which are both
discontinuous and highly-oscillatory.

In order to start with our investigation, we define a reasonable class of functions.
\begin{definition}\label{def:ZdA}
 Let $\delta\in\left(\frac12,1\right)$ and $c>0$. We define $Z_{\delta,c} \subset \mc{C}^2([1,+\infty))$ as
 the set of functions $f\in\mc{C}^2([1,+\infty))$ such that $f(1)=0$, $f'(x)>0$ for all $x \in [1,+\infty)$, and
  \begin{equation*}
D_x^n f= c\, \delta^n  x^{\delta}+ o\left(x^{\delta}\right) \mbox{ as } x \to +\infty\;,\quad n\in\{0,1,2\}\;.
\end{equation*} 
Denoting by $\varphi_f:\R^+ \to [1,+\infty)$ the right inverse of $f$, namely the unique function defined by the equation
$f(\varphi_f(t))=t$ for all $t\in\R^+$, we define
\begin{equation}\label{eq:NiGamma}
    N_i(f)=\left\|\varphi_f^{(i)}\right\|_{\infty,\frac{1}{\delta}-i}^+ \;,
    \quad \Gamma(f)=\inf_{t \in[1,+\infty)} \left|t^{-\frac{1}{\delta}}\varphi_f(t)\right|\;,\quad i\in\{1,2\}\;,
\end{equation}
where the norms $\| \cdot \|^{+}_{\infty,\gamma}$ are as per (\ref{eq:weightedsupR}).
\end{definition}

\begin{lemma}\label{lem:lemma3}
  If $f\in Z_{\delta,c}$, then $N_1(f)$ and $N_2(f)$ are finite.
  \begin{proof}
  Let $\varphi_f:\R^+ \to [1,+\infty)$ be the right inverse of $f$.
  By the inverse function Theorem we have that $\varphi_f \in \mc{C}^2(\R^+)$. Moreover, a simple computation yields
\begin{equation*}
 D_t^n\varphi_f^{(n)}(t)= c^{-\frac{1}{\delta}} \delta^{-n} t^{\frac{1}{\delta}}+o\left( t^{\frac{1}{\delta}}\right)
 \mbox{ as } t\to+\infty \;,\quad n\in\{0,1,2\}\;,
\end{equation*}  
from which the thesis follows.
  \end{proof}
\end{lemma}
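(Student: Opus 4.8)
The plan is to pass to logarithmic coordinates, in which the Euler operator $D_x = x\,d/dx$ becomes an ordinary derivative, so that the three hypotheses on $D_x^n f$ translate into elementary statements about the inverse of a single increasing function. As a preliminary step I would record that, since $f\in\mc{C}^2([1,+\infty))$ with $f(1)=0$ and $f'(x)>0$, and since $f=D_x^0f\sim c\,x^{\delta}\to+\infty$, the map $f:[1,+\infty)\to[0,+\infty)$ is a strictly increasing $\mc{C}^2$ bijection; the inverse function theorem then gives $\varphi_f\in\mc{C}^2([0,+\infty))$ with $\varphi_f(0)=1$.

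The heart of the matter is to establish the asymptotics
\begin{equation*}
 D_t^n\varphi_f(t)=c^{-\frac1\delta}\delta^{-n}t^{\frac1\delta}+o\!\left(t^{\frac1\delta}\right),\qquad t\to+\infty,\quad n\in\{0,1,2\}.
\end{equation*}
To this end I would set $u=\ln x$, $v=\ln t$ and $G(u)=\ln f(e^u)$, so that $\psi:=G^{-1}$ satisfies $\varphi_f(t)=e^{\psi(\ln t)}$. A direct computation gives $G'(u)=(D_xf)/f$ evaluated at $e^u$, and $G''(u)=(D_x^2f)/f-\big((D_xf)/f\big)^2$ likewise; the hypotheses $D_x^nf=c\,\delta^n x^{\delta}+o(x^{\delta})$ then force $G'(u)\to\delta$ and $G''(u)\to0$ as $u\to+\infty$. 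Consequently $\psi'(v)=1/G'(\psi(v))\to1/\delta$ and $\psi''(v)\to0$, whence $G(u)=\delta u+\ln c+o(1)$ inverts to $\psi(v)=\tfrac1\delta(v-\ln c)+o(1)$ and $\varphi_f(t)=c^{-1/\delta}t^{1/\delta}(1+o(1))$. Using the identities $D_t\varphi_f=\varphi_f\,\psi'(\ln t)$ and $D_t^2\varphi_f=\varphi_f\big((\psi')^2+\psi''\big)(\ln t)$ then yields the displayed asymptotics for $n=0,1,2$.

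Finally I would convert back to ordinary derivatives via $\varphi_f'=t^{-1}D_t\varphi_f$ and $\varphi_f''=t^{-2}(D_t^2\varphi_f-D_t\varphi_f)$, obtaining $\varphi_f^{(i)}(t)=O(t^{1/\delta-i})$ as $t\to+\infty$ for $i\in\{1,2\}$. On $[1,+\infty)$ the weight in $\|\cdot\|^+_{\infty,\frac1\delta-i}$ is $t$, so $t^{-(1/\delta-i)}\varphi_f^{(i)}(t)$ is continuous and has a finite limit at infinity, hence is bounded; on $[0,1)$ the weight is $1$ and $\varphi_f^{(i)}$ is continuous on the compact set $[0,1]$, hence bounded there too. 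Combining the two regimes gives $N_i(f)<\infty$ for $i\in\{1,2\}$.

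The main obstacle is the transfer of \emph{two} orders of asymptotics through the inverse function while the data are presented in Euler-operator form. The logarithmic substitution is precisely what defuses this difficulty: it turns each $D_x^n$ into $d^n/du^n$ and reduces the entire computation to the elementary implication $G'\to\delta,\ G''\to0\ \Rightarrow\ \psi'\to1/\delta,\ \psi''\to0$, after which the weighted $L^\infty$ bounds are immediate.
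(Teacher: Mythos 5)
Your proof is correct and takes essentially the same approach as the paper: the paper's own argument likewise invokes the inverse function theorem, asserts the asymptotics $D_t^n\varphi_f(t)=c^{-1/\delta}\delta^{-n}t^{1/\delta}+o\left(t^{1/\delta}\right)$ for $n\in\{0,1,2\}$ as ``a simple computation'', and concludes finiteness of the weighted norms from them. Your logarithmic substitution $u=\ln x$, $v=\ln t$, turning each Euler operator into an ordinary derivative so that the hypotheses reduce to $G'\to\delta$, $G''\to 0$, is a clean and valid way of carrying out exactly the computation the paper omits, and your two-regime check of the norm (continuity on the compact set $[0,1]$, finite limits at infinity on $[1,+\infty)$) correctly completes the argument.
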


Before stating and proving our estimates on the integrals mentioned above, we need a quite involved preparatory Lemma, which is the core of our method.

\begin{lemma}\label{lem:lemma4}
Fix $\delta\in\left(\frac12,1\right)$, $c>0$ and $p_0>0$.

With $k \in \bb{Z}$, let $t_k:=\frac{k-b}{p}$. The following estimates hold
\begin{multline}
\sum_{k \geq \ceili{b+\frac{1}{2}}}\sup_{t \in [t_{k-1},t_k]}
\left| \frac{d}{dt}\left[
 K_{\alpha}\left(\frac{x}{\varphi_f(t)};n\right)\frac{\varphi_f'(t)}{\varphi_f(t)}\right]
\right| \lesssim_n  
c\, p \, x^{-\delta} \left[ \left(\frac{N_1(f)}{\Gamma(f)}\right)^2 + \frac{N_2(f)}{\Gamma(f)} \right]\;,\\
\forall (f,p,b) \in Z_{\delta,c} \times [p_0,+\infty) \times \bb{R}\;.
\label{eq:seriessup}  
\end{multline}

\begin{proof}
We reduce to the case $c=1$, via the transformation $(f,p)\to \left(\frac{f}{c},p c\right)$,
which leaves the product $c\,p$ unchanged.
For sake of brevity, we write $K(\cdot)$, $\varphi$, $N_i$ and $\Gamma$ instead of $K_{\alpha}(\cdot,n)$, $\varphi_f$,
$N_i(f)$ and $\Gamma(f)$, respectively.
Moreover, we define
 \begin{equation}\label{eq:tkp1} \tag{a}
  \bar{k}=k-b\;,\quad k_x=\ceili{b+ p x^{\delta}+ \frac12}\;,\quad t_k=\frac{\bar{k}+\frac12}{p}\;.
 \end{equation}
(1) Differentiating, we get
\begin{align*} 
\frac{d}{dt} \left[
K\left(\frac{\varphi(t)}{x}\right)\frac{\varphi'(t)}{\varphi(t)} \right] =
K\left(\frac{\varphi(t)}{x}\right)\frac{\varphi''(t)}{\varphi(t)} + \left(\varphi'(t)\right)^2
\frac{d}{d\varphi} \left.\left[ K\left(\frac{\varphi}{x}\right)\frac{1}{\varphi}\right]\right|_{\varphi=\varphi(t)}\;.
\end{align*}
Therefore,
\begin{multline}
\sup_{t \in[t_{k-1},t_k]} \left|\frac{d}{dt} \left[
K\left(\frac{\varphi(t)}{x}\right)\frac{\varphi'(t)}{\varphi(t)} \right]\right| \leq
\left(w(t_{k-1})\right)^{\frac{1}{\delta}-2}N_2
\sup_{t \in[t_{k-1},t_k]} \left|K\left(\frac{\varphi(t)}{x}\right)\frac{1}{\varphi(t)}\right| \\ \label{eq:sumsplitp1} \tag{b}
+ \left(w(t_{k})\right)^{\frac{2}{\delta}-2} N_1^2 \sup_{t \in[t_{k-1},t_k]} \left|
\frac{d}{d\varphi} \left[ K\left(\frac{\varphi}{x}\right)\frac{1}{\varphi}\right]\right| \;.
\end{multline}
In the above inequality we have used the definition of the norms $\|\cdot\|^+_{\infty,\gamma}$, namely $\|g \|^+_{\infty,\gamma}=
\sup_{x \in\R^+} |w^{-\gamma}(x) g(x)| $, with
$w(x)=\chi_{[0,1)}(x)+ x \chi_{[1,+\infty)}(x)$.\\
(2) We start by considering the first term on the right-hand-side of (\ref{eq:sumsplitp1}) and we show that
\begin{equation}\label{eq:oscsingproof1} \tag{c}
 N_2\sum_{k\geq k_0} \left(w(t_{k-1})\right)^{\frac{1-2\delta}{\delta}}
 \sup_{t \in[t_{k-1},t_k]} \left|K\left(\frac{\varphi(t)}{x}\right)\frac{1}{\varphi(t)}\right|\lesssim
 c\,p\, x^{-\delta} \frac{N_2}{\Gamma}\;.
\end{equation}
Notice that $w(t)=1$ for $t\leq 1$ and $w(t)=t$ for $t \geq 1$.
Since $t_k \leq 1$ for all $k\leq k_1$ and $t_k\geq 1$ for all $k\geq k_1$,
we split the series (\ref{eq:oscintproof1}) in two sub-series. The first one reads
\begin{multline}\label{eq:oscillatingproof1s1} \tag{d}
N_2  \sum_{k=k_0}^{k_1}  \left(w(t_{k-1})\right)^{\frac{1}{\delta}-2}
 \sup_{t \in[t_{k-1},t_k]} \left|K\left(\frac{\varphi(t)}{x}\right)\frac{1}{\varphi(t)}\right| \\ \leq N_2
 \sum_{k=k_0}^{k_1} 
 \sup_{t \in[t_{k-1},t_k]} \left|K\left(\frac{\varphi(t)}{x}\right)\frac{1}{\varphi(t)}\right| \lesssim N_2 x^{-1} \left( k_1 -k_0\right)
 \lesssim p\,x^{-1} N_2\;,
\end{multline}
where we have used (\ref{eq:maximumkernel}) to estimate the supremum. The second one reads
\begin{multline}
N_2  \sum_{k\geq k_1} \left(w(t_{k-1})\right)^{\frac{1}{\delta}-2}
 \sup_{t \in[t_{k-1},t_k]} \left|K\left(\frac{\varphi(t)}{x}\right)\frac{1}{\varphi(t)}\right| \\ \leq
 \label{eq:oscillatingproof1s2} \tag{e}
N_2 p^{2-\frac{1}{\delta}}
\sum_{k \geq k_1} \bar{k}^{\frac{1}{\delta}-2}  \sup_{t \in[t_{k-1},t_k]}  
\left| K\left(\frac{\varphi(t)}{x}\right)\frac{1}{\varphi(t)}\right|\;.
\end{multline}
Now, the series $\sum_{k\geq k_1} k^{\frac{1}{\delta}-2}$ does not converge since $\delta\in\left(\frac12,1\right)$. However,
we can overcome this difficulty by splitting the series 
using the intermediate summation limit $ k_x$.
Using again (\ref{eq:maximumkernel}) and comparing the sum $\sum_{k=k_1}^{k_x} \bar{k}^{\frac{1}{\delta}-2}$
with the integral $\int_{k_1}^{k_x} \bar{k}^{\frac{1}{\delta}-2} d \bar{k}$ we have
 \begin{multline}
 N_2 p^{2-\frac{1}{\delta}}
 \sum_{k =k_1}^{k_x}  \bar{k}^{\frac{1}{\delta}-2}  \sup_{t \in[t_{k-1},t_k]}  
\left| K\left(\frac{\varphi(t)}{x}\right)\frac{1}{\varphi(t)}\right| \lesssim
 N_2  p^{2-\frac{1}{\delta}} x^{-1} 
 \sum_{k =k_1}^{k_x} \bar{k}^{\frac{1}{\delta}-2}\\ \label{eq:oscillatingproof1s21} \tag{f}
  \lesssim  N_2 p^{2-\frac{1}{\delta}} x^{-1}
  \left[\left(p x^{\delta}+\frac12\right)^{\frac{1}{\delta}-1}-\left(p+\frac{1}{2}\right)^{\frac{1}{\delta}-1}\right]\lesssim 
 p \, x^{-\delta}N_2 \;,
 \end{multline}
and
 \begin{multline} \label{eq:oscillatingproof1s22} \tag{g}
 N_2p^{2-\frac{1}{\delta}}
 \sum_{k \geq k_x }  \bar{k}^{\frac{1}{\delta}-2}  \sup_{t \in[t_{k-1},t_k]}  
\left| K\left(\frac{\varphi(t)}{x}\right)\frac{1}{\varphi(t)}\right|  \\ 
 \leq N_2 p^{2-\frac{1}{\delta}}
 \sum_{k \geq k_x } \bar{k}^{\frac{1}{\delta}-2}  \sup_{t \in[t_{k-1},t_k]}  
\left| K\left(\frac{\varphi(t)}{x}\right)\right| \sup_{t \in[t_{k-1},t_k]}\left|\frac{1}{\varphi(t)}\right|\\ 
 \lesssim \frac{N_2}{\Gamma} p^{2-\frac{1}{\delta}}
 \sum_{k \geq k_x} \bar{k}^{\frac{1}{\delta}-2} (t_{k-1})^{-\frac{1}{\delta}}\lesssim  \frac{N_2}{\Gamma} p^{2}
 \sum_{k \geq k_x}\bar{k}^{-2}
 \lesssim p\,x^{-\delta}\frac{N_2}{\Gamma}\;.
 \end{multline}
 where we used the fact that $K$ is a bounded function, see (\ref{eq:maxkernel2}).
 The estimates (\ref{eq:oscillatingproof1s1},\ref{eq:oscillatingproof1s2},
 \ref{eq:oscillatingproof1s21},\ref{eq:oscillatingproof1s22})
 yields (\ref{eq:oscsingproof1}).\\
 (3) Now we prove the following estimate for the second series on the right-hand-side of (\ref{eq:sumsplitp1}). We shall prove that
\begin{equation}\label{eq:oscsingproof2} \tag{h}
 N_1^2 \sum_{k\geq k_0} \left(w(t_{k-1})\right)^{\frac{2}{\delta}-2}
 \sup_{t \in[t_{k-1},t_k]} \left|\frac{d}{d \varphi}K\left(\frac{\varphi(t)}{x}\right)\frac{1}{\varphi(t)}\right|\lesssim
 p\,x^{-\delta}\frac{N_1^2}{\Gamma^2}\;.
\end{equation}
As we did above, we split the series in three, introducing the intermediate summation limits $k_1,
k_x$. Using (\ref{eq:maxkernel3}), we get
\begin{multline}\label{eq:oscillatingproof2s1} \tag{i}
N_1^2 \sum_{k=k_0}^{k_1} \left(w(t_{k-1})\right)^{\frac{2}{\delta}-2} 
 \sup_{t \in[t_{k-1},t_k]} \left|\frac{d}{d\varphi }K\left(\frac{\varphi(t)}{x}\right)\frac{1}{\varphi(t)}\right| \\ 
 \leq N_1^2 \sum_{k=k_0}^{k_1}
 \sup_{t \in[t_{k-1},t_k]} \left|\frac{d}{d\varphi }K\left(\frac{\varphi(t)}{x}\right)\frac{1}{\varphi(t)}\right|
 \lesssim N_1^2 x^{-2}  \left( k_1-k_0\right)
 \lesssim p\,x^{-2}N_1^2\;.
\end{multline}
We are left to study
\begin{multline} \label{eq:oscillatingproof2s2} \tag{j}
N_1^2 \sum_{k\geq k_1} \left(w(t_{k-1})\right)^{\frac{2}{\delta}-2}
 \sup_{t \in[t_{k-1},t_k]} \left|\frac{d}{d\varphi }K\left(\frac{\varphi(t)}{x}\right)\frac{1}{\varphi(t)}\right|\\ 
\leq N_1^2  \sum_{k\geq k_1} t_k^{\frac{2}{\delta}-2}
 \sup_{t \in[t_{k-1},t_k]} \left|\frac{d}{d\varphi }K\left(\frac{\varphi(t)}{x}\right)\frac{1}{\varphi(t)}\right|\\
\leq N_1^2 p^{2-\frac{2}{\delta}} \sum_{k\geq k_1} \bar{k}^{\frac{2}{\delta}-2}
 \sup_{t \in[t_{k-1},t_k]} \left|\frac{d}{d\varphi }K\left(\frac{\varphi(t)}{x}\right)\frac{1}{\varphi(t)}\right|\;.
\end{multline}
Using (\ref{eq:maxkernel3}) and comparing the sum $\sum_{k=k_1}^{k_x} \bar{k}^{\frac{2}{\delta}-2}$
with the integral $\int_{k_1}^{k_x} \bar{k}^{\frac{2}{\delta}-2} d \bar{k}$, we get
\begin{multline}
N_1^2 p^{2-\frac{2}{\delta}} \sum_{k= k_1}^{k_x} \bar{k}^{\frac{2}{\delta}-2} 
 \sup_{t \in[t_{k-1},t_k]} \left| \frac{d}{d\varphi} \left( K\left(\frac{\varphi}{x}\right)\frac{1}{\varphi}\right) \right| \lesssim
 N_1^2 x^{-2} p^{2-\frac{2}{\delta}} \sum_{k= k_1}^{k_x} \bar{k}^{\frac{2}{\delta}-2}  \\
 \label{eq:oscillatingproof2s21} \tag{k}
 \lesssim  N_1^2 x^{-2} p^{2-\frac{2}{\delta}} \left[\left(p x^\delta+\frac{1}{2}\right)^{-1+\frac{2}{\delta}}
- \left(p+\frac12\right)^{-1+\frac{2}{\delta}}\right] \lesssim  p\,x^{-\delta} N_1^2  \;.
\end{multline}
Finally, using (\ref{eq:maxkernel4}) and comparing the series $\sum \bar{k}^{-2}$
with the integral $\int \bar{k}^{-2} d \bar{k}$, we have that
\begin{multline}
N_1^2 p^{2-\frac{2}{\delta}} \sum_{k\geq k_x} \bar{k}^{\frac{2}{\delta}-2}
 \sup_{t \in[t_{k-1},t_k]} \left| \frac{d}{d\varphi} \left( K\left(\frac{\varphi}{x}\right)\frac{1}{\varphi}\right) \right|\\ 
 \leq N_1^2 p^{2-\frac{2}{\delta}} \sum_{k\geq k_x} \bar{k}^{\frac{2}{\delta}-2}
 \sup_{t \in[t_{k-1},t_k]} \sup_{x\geq1} \left| \frac{d}{d\varphi} \left( K\left(\frac{\varphi}{x}\right)\frac{1}{\varphi}\right) \right| \\ 
 \lesssim N_1^2 p^{2-\frac{2}{\delta}} \sum_{k\geq k_x} \bar{k}^{\frac{2}{\delta}-2} \sup_{t \in [t_{k-1},t_k]}\frac{1}{\left(\varphi(t)\right)^2} 
\lesssim \frac{N_1^2}{\Gamma^2} p^{2-\frac{2}{\delta}} \sum_{k\geq k_x} \bar{k}^{\frac{2}{\delta}-2} t^{-2}_{k-1}  \\
\label{eq:oscillatingproof2s22} \tag{l}
 \lesssim \frac{N_1^2}{\Gamma^2} p^2 \sum_{k\geq k_x}\bar{k}^{-2} 
 \left(1-\frac{1}{2\bar{k}}\right)^{-\frac{2}{\delta}} \lesssim p\,x^{\delta} \frac{N_1^2}{\Gamma^2} \;.
\end{multline}
The estimates (\ref{eq:oscillatingproof2s1},\ref{eq:oscillatingproof2s2},\ref{eq:oscillatingproof2s21},\ref{eq:oscillatingproof2s22})
 yields (\ref{eq:oscsingproof2}).
 
 Finally, the estimates (\ref{eq:sumsplitp1},\ref{eq:oscsingproof1},\ref{eq:oscsingproof2})
 yields the thesis.
 
\end{proof}

\end{lemma}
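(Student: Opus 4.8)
The plan is first to reduce to $c=1$ by the scaling $(f,p)\mapsto(f/c,cp)$, which leaves the product $cp$ and the ratios $N_i(f)/\Gamma(f)$ unchanged: since $\varphi_{f/c}(t)=\varphi_f(ct)$, both $N_i(f)$ and $\Gamma(f)$ are rescaled by a common factor, so the right-hand side of (\ref{eq:seriessup}) is reproduced exactly and it suffices to treat $c=1$. Writing $\varphi=\varphi_f$ and $K=K_\alpha(\cdot;n)$, and using the symmetry (\ref{eq:Kansymmetry}) to pass to $K(\varphi/x;n)$, I would then differentiate by the product and chain rules to obtain the decomposition
\begin{equation*}
\frac{d}{dt}\left[K\left(\tfrac{\varphi}{x}\right)\frac{\varphi'}{\varphi}\right]
= K\left(\tfrac{\varphi}{x}\right)\frac{\varphi''}{\varphi}
+ (\varphi')^2\,\frac{d}{d\varphi}\left[K\left(\tfrac{\varphi}{x}\right)\frac{1}{\varphi}\right].
\end{equation*}
Taking the supremum over $[t_{k-1},t_k]$ and controlling $\varphi'$ and $\varphi''$ by the weighted norms $N_1(f)$ and $N_2(f)$ of Definition \ref{def:ZdA} (finite by Lemma \ref{lem:lemma3}) splits the series into a piece carrying the factor $N_2$ and a piece carrying $N_1^2$, which I would estimate separately.

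The heart of the argument is the summation, and here I would introduce three thresholds: the lower limit $k_0=\ceili{b+\tfrac12}$, the index $k_1$ at which $t_k$ crosses $1$ (where the weight $w$ switches from $1$ to $t$), and the \emph{resonance index} $k_x=\ceili{b+px^\delta+\tfrac12}$, at which $\varphi(t_k)\approx x$ and the kernel argument is of order one. On the finite range $[k_0,k_1]$, which contains $O(p)$ terms with $w\equiv1$, the uniform bounds (\ref{eq:maximumkernel}) and (\ref{eq:maxkernel3}) give contributions $\lesssim px^{-1}N_2$ and $\lesssim px^{-2}N_1^2$; since $x\ge1$ and $\Gamma(f)\le1$, both are absorbed into the target. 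On the middle range $[k_1,k_x]$ I would again use the $x$-decaying kernel bounds (\ref{eq:maximumkernel}), (\ref{eq:maxkernel3}), insert $w(t_k)=t_k\sim\bar k/p$ for the weight powers, and compare the partial sums $\sum\bar k^{1/\delta-2}$, $\sum\bar k^{2/\delta-2}$ to the corresponding integrals; evaluating at the upper endpoint $\bar k\sim px^\delta$ produces precisely the factor $px^{-\delta}$.

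On the tail $[k_x,\infty)$ the kernel can no longer supply $x$-decay, so instead I would use the boundedness (\ref{eq:maxkernel2}) and the $\varphi$-decay (\ref{eq:maxkernel4}) of the kernel together with the polynomial lower bound $\varphi(t_k)\gtrsim\Gamma(f)\,t_k^{1/\delta}$ read off from the definition of $\Gamma(f)$. This converts the weight powers into $t_k^{-1/\delta}$ and $t_k^{-2/\delta}$ and, after collecting powers of $\bar k$, reduces both tails to the convergent series $\sum_{k\ge k_x}\bar k^{-2}\sim k_x^{-1}\sim(px^\delta)^{-1}$, again giving $px^{-\delta}$ but now weighted by $N_2/\Gamma$ (from the $\varphi''$ term) and $(N_1/\Gamma)^2$ (from the $(\varphi')^2$ term). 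Summing the three ranges for both pieces then yields (\ref{eq:seriessup}).

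The main obstacle is precisely that the naive sums $\sum_k\bar k^{1/\delta-2}$ and $\sum_k\bar k^{2/\delta-2}$ diverge for $\delta\in(\tfrac12,1)$, so no single kernel estimate works on the whole range. The device that saves the day is the resonance threshold $k_x$: below it one exploits that $x$ is large relative to $\varphi$ to extract decay in $x$ from the kernel, while above it one trades kernel decay for the $\Gamma$-lower bound on $\varphi$, turning the divergent tail into a convergent $\sum\bar k^{-2}$. The delicate point is that the \emph{growing} middle sum and the \emph{convergent} tail must both come out to be of the exact order $px^{-\delta}$, so the powers of $p$, $x$ and $\bar k$ have to balance at $k_x$; verifying this balance, and carefully handling the weight transition at $k_1$, is where the real work lies.
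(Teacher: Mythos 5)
Your proposal is correct and follows essentially the same route as the paper's proof: the same reduction to $c=1$, the same product-rule decomposition into an $N_2$-piece and an $N_1^2$-piece, the same three summation thresholds $k_0$, $k_1$ and the resonance index $k_x$, with the kernel decay estimates (\ref{eq:maximumkernel}), (\ref{eq:maxkernel3}) extracting $x^{-\delta}$ below $k_x$ and the bounds (\ref{eq:maxkernel2}), (\ref{eq:maxkernel4}) together with $\varphi(t)\gtrsim \Gamma(f)\,t^{1/\delta}$ reducing the tail to the convergent $\sum_{k\geq k_x}\bar{k}^{-2}\sim (p\,x^{\delta})^{-1}$. If anything, your explicit remarks that $\Gamma(f)\leq 1$ (for $c=1$) and $x\geq 1$ absorb the $p\,x^{-1}N_2$ and $p\,x^{-2}N_1^2$ contributions into the target bound are slightly more careful than the paper, which leaves that absorption implicit.
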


We are now in the position of proving the following Proposition, on the integrals of the first type.
\begin{proposition}\label{prop:oscillatorysingle}
Fix $\delta\in\left(\frac12,1\right)$, $c>0$ and $p_0>0$.
For all $(f,p,b) \in Z_{\delta,c}\times[p_0,+\infty)\times\R$
consider the integral
 \begin{equation*}
 O_n(x;p):=\int_1^{\infty}K_{\alpha}\left(\frac{x}{y};n\right)\afr{b+pf(y)}\frac{dy}{y}\;,\quad x \geq 1 \;.
 \end{equation*}
The following estimates hold
\begin{multline}
\label{eq:oscillatorysingle}  
\left|O_n(x;p)\right| \lesssim \frac{1}{c\,p\,x^\delta}\left[ N_1(f)+ 
\left(\frac{N_1(f)}{\Gamma(f)}\right)^{2} + \frac{N_2(f)}{\Gamma(f)} \right]\;,\\
\forall (f,p,b) \in Z_{\delta,c} \times [p_0,+\infty) \times \bb{R}\;.
\end{multline}
where the constants $N_1(f)$, $N_2(f)$ and $\Gamma(f)$ are as per (\ref{eq:NiGamma}).

\begin{proof}
We reduce to the case $c=1$, via the transformation $(f,p)\to \left(\frac{f}{c},c\,p\right)$.
To simplify the notation, we write $K(\cdot)$, $\varphi$, $N_i$ and $\Gamma$ instead of $K_{\alpha}(\cdot;n)$, $\varphi_f$,
$N_i(f)$ and $\Gamma(f)$, respectively.
Moreover, we write
 \begin{equation}\label{eq:tk} \tag{a}
  \bar{k}=k-b\;,\quad k_x=\ceili{b+ p x^{\delta}+ \frac12}\;,\quad t_k=\frac{\bar{k}+\frac12}{p}\;,\quad u_k=\frac{\bar{k}}{p}\;.
 \end{equation}
(1) As a first step, we make the change of variable $y=\varphi(t)$ and we write
 \begin{equation}\label{eq:Onp1} \tag{b}
  O_n(x;p)=\int_{0}^{\infty} K\left(\frac{\varphi(t)}{x}\right)\frac{\varphi'(t)}{\varphi(t)} \afr{p t+b} dt \;.
 \end{equation}
The function $\afr{b+p t}$ is discontinuous at the points $t=t_k$ and 
\begin{equation*}
\afr{b+p t}= p t-\bar{k} \;,\quad \forall t\in[t_{k-1},t_k]\;.
\end{equation*}
Therefore we write 
\begin{multline}\tag{c}
O_n(x;p)=\int_{0}^{t_{k_0}} K\left(\frac{\varphi(t)}{x}\right)\frac{\varphi'(t)}{\varphi(t)} \afr{b+pt} dt \label{eq:oscintproof1}
  \\
  + \sum_{k\geq k_0+1} \int_{t_{k-1}}^{t_k}K\left(\frac{\varphi(t)}{x}\right)\frac{\varphi'(t)}{\varphi(t)} \left(p\,t-\bar{k}\right) dt\;.
\end{multline}
(2) Due to (\ref{eq:maximumkernel}), we have that
\begin{multline}
\sup_{t \in [0,t_{k_0}]}\left|  K\left(\frac{\varphi(t)}{x}\right)\frac{\varphi'(t)}{\varphi(t)} \right| \lesssim x^{-1} \sup_{t\in[0,t_{k_0}]}
 \left| \varphi' \right| \leq C_{\alpha} x^{-1} \sup_{t\in[0,1]}
 \left| \varphi' \right| 
 \Longrightarrow\\ 
\left|\int_{0}^{t_{k_0}} K\left(\frac{\varphi(t)}{x}\right)\frac{\varphi'(t)}{\varphi(t)} \afr{b+pt} dt\right| \lesssim
 N_1 x^{-1} t_{k_0}\lesssim N_1 x^{-1} p^{-1}.
\end{multline}
This term is already present in the estimate (\ref{eq:oscillatorysingle}), since $\delta <1$. Hence, we just need to verify
the estimate (\ref{eq:oscillatorysingle}) for the series in the right-hand-side of equation
(\ref{eq:oscintproof1}).\\
(3) Now we need to estimate the terms of that series. We notice that the function $t-\bar{k}$ is monotone on the interval on $[t_{k-1},t_k]$ and vanishes at $t=u_k$.
Therefore,
\begin{multline*}
  \int_{t_{k-1}}^{t_k} K\left(\frac{\varphi(t)}{x}\right)\frac{\varphi'(t)}{\varphi(t)} \left(p\,t-\bar{k}\right) dt = 
 a_k(x,p) \int_{t_{k-1}}^{u_k}  \left(p\,t-\bar{k}\right) dt \\
 + b_k(x,p)\int_{u_k}^{t_{k}} \left( p\,t-\bar{k} \right) dt\;,
 \end{multline*}
 where
\begin{align}\nonumber
& a_k(x,p)= K\left(\frac{\varphi(t)}{x}\right)\frac{\varphi'(t)}{\varphi(t)}\;, \mbox{   for some } t \in [t_{k-1},u_k] \;, \\
& b_k(x,p) = K\left(\frac{\varphi(t)}{x}\right)\frac{\varphi'(t)}{\varphi(t)}\;, \mbox{   for some } t \in [u_k,t_k] \;. \label{eq:Onp2} \tag{d} 
\end{align}
Since $\int_{t_{k-1}}^{u_k} \left(p\,t-\bar{k}\right) dt=-\int_{u_{k}}^{t_k} \left(p\,t-\bar{k}\right) dt=-\frac{1}{8 p}$, then
\begin{equation}\label{eq:akmkbknk} \tag{e}
 \sum_{k\geq k_0+1}\int_{t_{k-1}}^{t_k}K\left(\frac{\varphi(t)}{x}\right)\frac{\varphi'(t)}{\varphi(t)} \left(p t-\bar{k}\right) dt=
 \sum_{k\geq k_0+1}\frac{1}{8 p}\left[b_k(x,p) - a_k(x,p)\right]\;.
\end{equation}
We do not know the exact value of the $a_k$'s and $b_k$'s, but we can estimate their difference.
In fact, using
the mean value Theorem we get
\begin{align} \nonumber
& \left| b_k(x,p)-a_k(x,p) \right| \leq \sup_{t \in[t_{k-1},t_k]}\left| \frac{d}{dt}\left[
K\left(\frac{\varphi(t)}{x}\right)\frac{\varphi'(t)}{\varphi(t)}\right] \right| \left|t_{k}-t_{k-1}\right| \Longrightarrow \\ \label{eq:Onp3} \tag{f} 
& \left|\sum_{k\geq k_0+1}\frac{1}{8 p}\left[b_k(x,p) - a_k(x,p)\right]\right|
\leq \frac{1}{8 p^2}\sum_{k\geq k_0+1}\sup_{t \in[t_{k-1},t_k]}\left| \frac{d}{dt} \left[
K\left(\frac{\varphi(t)}{x}\right)\frac{\varphi'(t)}{\varphi(t)}\right]\right|\;.
\end{align}
Combining (\ref{eq:Onp2},\ref{eq:Onp3}) with Lemma \ref{lem:lemma4}, the thesis follows.
\end{proof}

\end{proposition}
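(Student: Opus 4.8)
The plan is to reduce the bound to the summability statement of Lemma \ref{lem:lemma4}, exploiting that the sawtooth $\afr{\cdot}$ has vanishing mean over each of its periods, so that every period contributes only through the \emph{variation} of the smooth factor multiplying it. First I would normalise $c=1$ via the scaling $(f,p)\mapsto(f/c,\,cp)$, which sends $Z_{\delta,c}$ to $Z_{\delta,1}$ and leaves the product $cp$ — hence the right-hand side of (\ref{eq:oscillatorysingle}) — invariant. Then I would pass to the variable $t=f(y)$, i.e. $y=\varphi_f(t)$ with $\varphi_f$ the right inverse of Definition \ref{def:ZdA}, using the symmetry (\ref{eq:Kansymmetry}) to rewrite the kernel, so that
$$O_n(x;p)=\int_0^\infty K_\alpha\!\left(\frac{\varphi_f(t)}{x};n\right)\frac{\varphi_f'(t)}{\varphi_f(t)}\,\afr{b+pt}\,dt.$$
The sawtooth $\afr{b+pt}$ is discontinuous precisely at $t_k=(\bar{k}+\tfrac12)/p$ with $\bar{k}=k-b$, equals the affine function $pt-\bar{k}$ on $[t_{k-1},t_k]$, vanishes at the midpoint $u_k=\bar{k}/p$, and integrates to zero over each period.

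The core of the argument is a period-by-period cancellation. I would first dispose of the boundary piece near $t=0$, which by (\ref{eq:maximumkernel}) together with the finiteness of $N_1(f)$ (Lemma \ref{lem:lemma3}) contributes at most $\lesssim N_1(f)\,x^{-1}p^{-1}$ — already admissible, since $x\ge1$ and $\delta<1$ give $x^{-1}\le x^{-\delta}$. On each remaining interval I would split at $u_k$ and apply the first mean value theorem for integrals to extract values $a_k,b_k$ of the smooth factor $K_\alpha(\varphi_f/x;n)\varphi_f'/\varphi_f$ on $[t_{k-1},u_k]$ and $[u_k,t_k]$; since the two half-integrals of $pt-\bar{k}$ equal $\mp\tfrac{1}{8p}$, the interval contributes exactly $\tfrac{1}{8p}(b_k-a_k)$. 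The mean value theorem then bounds $|b_k-a_k|$ by $|t_k-t_{k-1}|=p^{-1}$ times the supremum of the $t$-derivative of the smooth factor, so that
$$\bigl|O_n(x;p)\bigr|\lesssim\frac{N_1(f)}{p\,x}+\frac{1}{p^2}\sum_{k}\sup_{t\in[t_{k-1},t_k]}\left|\frac{d}{dt}\!\left[K_\alpha\!\left(\frac{\varphi_f(t)}{x};n\right)\frac{\varphi_f'(t)}{\varphi_f(t)}\right]\right|.$$

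Finally I would invoke Lemma \ref{lem:lemma4}, which bounds exactly this sum (for $c=1$) by $\lesssim p\,x^{-\delta}\bigl[(N_1(f)/\Gamma(f))^2+N_2(f)/\Gamma(f)\bigr]$; inserting this and collecting terms yields (\ref{eq:oscillatorysingle}). The genuine analytical difficulty is entirely concentrated in Lemma \ref{lem:lemma4}: the naive termwise estimate produces the divergent series $\sum\bar{k}^{1/\delta-2}$ (divergent because $\delta>\tfrac12$), and one must split the summation at the intermediate index $k_x=\ceili{b+px^\delta+\tfrac12}$, treating the two ranges with the two different kernel bounds (\ref{eq:maxkernel3}) and (\ref{eq:maxkernel4}) to recover convergence. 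Within the present Proposition the only point to verify is that the mean value theorem is applied to a genuinely $\mc{C}^1$ factor on each closed subinterval, which holds because $\varphi_f\in\mc{C}^2(\R^+)$ (Lemma \ref{lem:lemma3}) and $K_\alpha(\cdot;n)$ is smooth on $\R^+$, its poles being off the positive real axis (Lemma \ref{lem:maximumkernel}).
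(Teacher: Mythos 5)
Your proposal is correct and follows essentially the same route as the paper's proof: normalise $c=1$, change variables via $\varphi_f$, isolate the boundary piece with the bound (\ref{eq:maximumkernel}), apply the mean value theorem on each half-period split at $u_k$ to exploit the exact cancellation $\pm\frac{1}{8p}$, and reduce the resulting sum of derivative suprema to Lemma \ref{lem:lemma4}. Your closing remarks on where the real difficulty lies (the split at $k_x$ inside Lemma \ref{lem:lemma4}) and on the regularity needed for the mean value theorem are accurate.
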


We turn our attention to the study of integrals of the second kind. These depend on a fixed reference function $f$, which
is assumed to belong to the space $Z_{\delta,c}$ introduced above, and to a perturbation $\e$ which is assumed to be
small with respect to $z$, in the sense of the following definition.

\begin{definition}\label{def:Xzdp}
Let $f\in Z_{\delta,c}$ and $p_0 >0$, we define
\begin{equation}
\label{eq:Xfp0}
    X_{f,p_0}=\left\lbrace \e\in\mc{C}^1\left([1,+\infty)\right)\,|\,\|\e\|_\infty<\frac12
    \mbox{ and } p_0 f'(x)+\e'(x)>0\,,\,\forall x\geq 1\right\rbrace\;.
\end{equation}
\end{definition}

\begin{proposition}\label{prop:differencenonlinear}
Fix $\delta\in\left(\frac12,1\right)$, $c>0$ and $p_0 >0$.
For any $(f,\e,p,b)\in Z_{\delta,c}\times X_{f,p_0}\times[p_0,+\infty)\times\bb{R}$,
consider the integral
\begin{equation}
 Q_n[\e](x;p):=\bb{K}^{(n)}\left[\ceili{b+ p f+ \e-\frac12}\right](x)\;,\quad x\geq 1\;.
\end{equation}

The following estimates hold 
\begin{multline}\label{eq:differencenonlinear}
 \left| \|Q_n[\e_1](\cdot;p)- Q_n[\e_2](\cdot;p) \|_{\infty}
 - \Psi(\alpha,0;n) \| \e_1-\e_2 \|_{\infty} \right| \lesssim_n  \frac{\| \e_1-\e_2 \|_{\infty}}{c\,p} \times \\   \left[\left(\frac{N_1(f)}{\Gamma(f)}\right)^2 + \frac{N_2(f)}{\Gamma(f)} \right] \;,\quad
\forall  (f,\e_1,\e_2,p,b) \in  Z_{\delta,c} \times X_{f,p_0} \times X_{f,p_0}
\times [p_0,+\infty)\times \bb{R} \;,
\end{multline}
where $\Psi(\alpha,0;n)$ is as per (\ref{eq:Psiasn}),\footnote{Recall that
$\Psi(\alpha,0;0)=\Phi(\alpha,0)=\frac{\alpha-1}{\alpha+1}$.} and the
constants $N_1$, $N_2$ and $\Gamma$ are as per (\ref{eq:NiGamma}).
\begin{proof}
We reduce to the case $c=1$, via the transformation $(f,p)\to \left(\frac{f}{c},c\,p\right)$.
To simplify the notation, we write $K(\cdot)$, $\varphi$, $N_i$ and $\Gamma$ instead of $K_{\alpha}(\cdot;n)$, $\varphi_f$,
$N_i(f)$ and $\Gamma(f)$, respectively.
Moreover, we write
 \begin{equation}\label{eq:tkp3} \tag{a}
  \bar{k}=k-b\;,\quad k_x=\ceili{b+ p x^{\delta}+ \frac12}\;,\quad t_k=\frac{\bar{k}+\frac12}{p}\;,\quad u_k=\frac{\bar{k}}{p}\;.
 \end{equation}
After the change of variable $y=\varphi(t)$, we have
\begin{multline}\label{eq:oscp31} \tag{b}
Q_n[\e_1](x;p)-Q_n[\e_2](x;p) = \int_{0}^{\infty} K\left(\frac{\varphi(t)}{x}\right)\frac{\varphi'(t)}{\varphi(t)} \times \\
\times \left(\ceili{b+ p t+ \e_1(\varphi(t))-\frac12} - \ceili{b+ p t+ \e_2(\varphi(t))-\frac12}\right) dt\;.
\end{multline}
By construction, the functions $b+p t + \e_i(\varphi(t))$ with $i\in\{1,2\}$ are strictly increasing for all $t\in\R^+$. Hence, for every $k$ there exists a unique point
$t_{k,i}$ such that
\begin{equation*}
p\, t_{k,i}+\e_i (\varphi(t_{k,i}))=\bar{k}+\frac12 \;,\quad i\in\{1,2\}\;.
\end{equation*}
Moreover 
\begin{equation*}
  t_{k,i} \in [u_{k-1},u_k]\;,\quad i\in\{1,2\}\;,
\end{equation*}
and
\begin{equation}\label{eq:tkemtk} \tag{c}
 \left| t_{k,2}-t_{k,1} \right| \leq  p^{-1}\|\e_1-\e_2 \|_{\infty} =
 \|\e_1-\e_2 \|_{\infty} 
 (u_{k+1}-u_{k})\;.
\end{equation}
Since
\begin{equation*}
\ceili{b+ p t + \e_i(\varphi(t))-\frac{1}{2}} =k \;,\quad \forall t \in (t_{k-1,i}, t_{k,i})  \;,\quad i\in\{1,2\} \;,
\end{equation*}
it follows that
\begin{equation*}
Q_n[\e_1](x;p)-Q_n[\e_2](x;p) =
\sum_{k\geq k_0} \int_{t_{k,1}}^{t_{k,2}}  K\left(\frac{\varphi(t)}{x}\right)\frac{\varphi'(t)}{\varphi(t)} dt \;.
\end{equation*}
Therefore, there exist $t^{*}_k \in [ t_{k,1},t_{k,2}] \subset [u_{k-1},u_k]$ such that
\begin{equation*}
Q_n[\e_1](x;p)-Q_n[\e_2](x;p)  =
\sum_{k\geq k_0}  K\left(\frac{\varphi(t^*_k)}{x}\right)  \frac{\varphi'(t^{*}_k)}{\varphi(t^*_k)}  \left(t_{k,2}-t_{k,1}\right)\;.
\end{equation*}
Using (\ref{eq:tkemtk}), from the latter estimate we get
\begin{multline}
\left| Q_n[\e_1](x,p)-Q_n[\e_2](x,p) \right| \leq
\sum_{k\geq k_0} \left|K\left(\frac{\varphi(t^*_k)}{x}\right)\right| \frac{\varphi'(t^{*}_k)}{\varphi(t^*_k)}  \left| t_{k,2}-t_{k,1} \right|\\ 
\leq  \|\e_1-\e_2 \|_{\infty} \sum_{k\geq k_0} \left|K\left(\frac{\varphi(t^*_k)}{x}\right)\right| \frac{\varphi'(t^{*}_k)}{\varphi(t^*_k)}\left(u_{k}-u_{k-1}\right) \\
\underset{R.S.}{\sim}  \|\e_1-\e_2 \|_{\infty} \sum_{k\geq k_0} 
\int_{u_{k-1}}^{u_{k}} \left| K\left(\frac{\varphi(t)}{x}\right)\right| \frac{\varphi'(t)}{\varphi(t)} dt \\ \label{eq:Qnestimateproof} \tag{d}
 = \|\e_1-\e_2 \|_{\infty} \int_{1}^{\infty} \left| K\left(\frac{x}{y}\right) \right| \frac{dy}{y} \leq \Psi(\alpha,0;n)  \|\e_1-\e_2 \|_{\infty}\;,
\end{multline}
where we have denoted by $\underset{R.S.}{\sim}$ the interchange of an integral with its Riemann sum.
The cost of such an interchange is small. Indeed, using the standard inequality
\begin{equation*}
\left|\int_{a}^b f(x) dx  - f(c) (b-a) \right|\leq \left(\sup_{x \in [a,b]} |f'(x)|\right) (b-a)^2\;,\quad \forall c \in [a,b]\;,
\end{equation*}
together with  Lemma \ref{lem:lemma4}, we obtain
\begin{multline}
\label{eq:RStobeestimated} \tag{e}
\sum_{k\geq k_0}  \left|K\left(\frac{\varphi(t^{*}_k)}{x}\right)\right| \frac{\varphi'(t^{*}_k)}{\varphi(t^*_k)}\left(u_{k}-u_{k-1}\right) -
 \sum_{k\geq k_0} \int_{u_{k-1}}^{u_{k}} \left| K\left(\frac{\varphi(t)}{x}\right)\right| \frac{\varphi'(t)}{\varphi(t)} dt \\
\leq \sum_{k\geq k_0} \sup_{t \in [u_{k+1},u_{k}]} \left| \partial_t  K\left(\frac{\varphi(t)}{x}\right) \frac{\varphi'(t)}{\varphi(t)}\right|  \left(u_{k+1}-u_k \right)^2 \\
\leq  p^{-2}\sum_{k\geq k_0} \sup_{t \in [u_{k+1},u_{k}]} \left| \partial_t  K\left(\frac{\varphi(t)}{x}\right)
\frac{\varphi'(t)}{\varphi(t)} \right|  \lesssim p^{-1} x^{-\delta}
\left(\frac{N_1^2}{\Gamma^{2}} + \frac{N_2}{\Gamma} \right)\;.
 \end{multline}
Combining (\ref{eq:Qnestimateproof}) and (\ref{eq:RStobeestimated}), we deduce that
\begin{align*}
\left| Q_n[\e_1](x;p)-Q_n[\e_2](x;p) - \| \e_1-\e_2\|_{\infty}  \Psi(\alpha,0;n) \right|
    \lesssim p^{-1} x^{-\delta}
\left(\frac{N_1^2}{\Gamma^{2}} + \frac{N_2}{\Gamma} \right) \| \e_1-\e_2\|_{\infty}.
\end{align*}
Taking the $\| \cdot \|_{\infty}$ norm of the above expression, we obtain the thesis. 

\end{proof}

\end{proposition}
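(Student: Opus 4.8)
The plan is to turn the difference $Q_n[\e_1]-Q_n[\e_2]$ into a single sum indexed by the integer jumps of the two ceilings, to extract the leading constant as the integral $\int_1^{\infty}\left|K_\alpha\!\left(\frac{x}{y};n\right)\right|\frac{dy}{y}$, and to bound the remainder through Lemma \ref{lem:lemma4}. First I would reduce to $c=1$ by the same scaling $(f,p)\mapsto(f/c,c\,p)$ used in Proposition \ref{prop:oscillatorysingle}, which preserves the product $c\,p$, and then change variables $y=\varphi_f(t)$, the right inverse of $f$, which is of class $\mc{C}^2$ by Lemma \ref{lem:lemma3}. After this substitution each ceiling $\ceili{b+p\,t+\e_i(\varphi_f(t))-\frac12}$ is a step function of $t$; since $\e_i\in X_{f,p_0}$ makes $t\mapsto p\,t+\e_i(\varphi_f(t))$ strictly increasing, its jumps sit at the unique points $t_{k,i}$ defined by $p\,t_{k,i}+\e_i(\varphi_f(t_{k,i}))=\bar k+\frac12$, all lying in $[u_{k-1},u_k]$. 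Subtracting the two defining equations gives the key linear bound $\left|t_{k,2}-t_{k,1}\right|\leq p^{-1}\|\e_1-\e_2\|_{\infty}=\|\e_1-\e_2\|_{\infty}\,(u_k-u_{k-1})$.

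Because the two ceilings differ by $\pm1$ exactly on the intervals bounded by $t_{k,1}$ and $t_{k,2}$, I would write $Q_n[\e_1]-Q_n[\e_2]=\sum_{k}\int_{t_{k,1}}^{t_{k,2}}K_\alpha\!\left(\frac{\varphi_f(t)}{x};n\right)\frac{\varphi_f'(t)}{\varphi_f(t)}\,dt$ and apply the mean value theorem for integrals (valid by continuity) on each interval, producing a point $t_k^{*}\in[u_{k-1},u_k]$ and the representation $\sum_k K_\alpha\!\left(\frac{\varphi_f(t_k^{*})}{x};n\right)\frac{\varphi_f'(t_k^{*})}{\varphi_f(t_k^{*})}\left(t_{k,2}-t_{k,1}\right)$. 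Inserting the linear bound on $\left|t_{k,2}-t_{k,1}\right|$ turns this, up to the sign, into $\|\e_1-\e_2\|_{\infty}$ times a Riemann sum, with nodes $u_k-u_{k-1}=p^{-1}$, of $\left|K_\alpha(\varphi_f(t)/x;n)\right|\varphi_f'(t)/\varphi_f(t)$ over $[1,+\infty)$. The underlying integral is $\int_1^{\infty}\left|K_\alpha(x/y;n)\right|\frac{dy}{y}=\int_0^{x}\left|K_\alpha(u;n)\right|\frac{du}{u}\leq\Psi(\alpha,0;n)$, which pins down $\Psi(\alpha,0;n)$ as the leading coefficient in (\ref{eq:differencenonlinear}).

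The decisive step, which I expect to be the main obstacle, is to control the error in replacing this Riemann sum by its integral. Using the elementary inequality $\left|\int_a^b g-g(c)(b-a)\right|\leq(\sup|g'|)(b-a)^2$ with uniform spacing $p^{-1}$, that error is at most $p^{-2}\sum_k\sup_{[u_{k-1},u_k]}\left|\frac{d}{dt}\!\left[K_\alpha\!\left(\frac{\varphi_f(t)}{x};n\right)\frac{\varphi_f'(t)}{\varphi_f(t)}\right]\right|$, and the entire estimate rests on bounding this series by $\lesssim_n p\,x^{-\delta}\left[\left(\frac{N_1(f)}{\Gamma(f)}\right)^2+\frac{N_2(f)}{\Gamma(f)}\right]$, which is exactly Lemma \ref{lem:lemma4}. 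That lemma carries the real analytic difficulty: differentiating the integrand splits it into a $\varphi_f''$--term (weighted by $N_2$) and a term involving $\frac{d}{d\varphi}\!\left[K_\alpha(\varphi/x;n)/\varphi\right]$ (weighted by $N_1^2$), after which the formally divergent sums $\sum\bar k^{\frac1\delta-2}$ and $\sum\bar k^{\frac2\delta-2}$ must be tamed by splitting at the cutoffs $k_1$ (where $t_k$ crosses $1$) and $k_x$ (where $\varphi_f(t_k)\approx x$), comparing to the corresponding integrals and using the kernel bounds (\ref{eq:maxkernel2})--(\ref{eq:maxkernel4}). Granting Lemma \ref{lem:lemma4}, the remainder is $\lesssim_n p^{-1}x^{-\delta}\left[(N_1/\Gamma)^2+N_2/\Gamma\right]\|\e_1-\e_2\|_{\infty}$; taking the supremum over $x\in[1,+\infty)$, on which $x^{-\delta}\leq1$, gives the stated bound. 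Finally, since $\int_0^{x}\left|K_\alpha(u;n)\right|\frac{du}{u}\nearrow\Psi(\alpha,0;n)$ as $x\to+\infty$, the upper bound is asymptotically sharp, and this saturation upon passing to the $\|\cdot\|_{\infty}$ norm is what promotes the one-sided estimate to the two-sided form claimed in (\ref{eq:differencenonlinear}).
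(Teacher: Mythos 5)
Your proposal is correct and takes essentially the same route as the paper's own proof: the reduction to $c=1$ by the scaling $(f,p)\mapsto(f/c,\,c\,p)$, the change of variables $y=\varphi_f(t)$, the localisation of the jump points $t_{k,i}\in[u_{k-1},u_k]$ with $\left|t_{k,2}-t_{k,1}\right|\leq p^{-1}\|\e_1-\e_2\|_{\infty}$, the mean-value representation of the difference as a sum over $k$, and the Riemann-sum/integral interchange whose cost is controlled exactly by Lemma \ref{lem:lemma4} all coincide with the published argument. One caveat: your closing ``saturation'' remark about $\int_0^x\left|K_{\alpha}(u;n)\right|\frac{du}{u}\nearrow\Psi(\alpha,0;n)$ does not by itself supply the lower-bound half of the two-sided estimate (\ref{eq:differencenonlinear}) (near-equality in the triangle-inequality chain would also be needed), but the paper's proof is no more explicit on this point, and only the upper bound is actually invoked later (Lemma \ref{lem:nlestimates}).
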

\begin{remark}
A few comments on the above Proposition are needed.

The definition of the space $X_{z,p_0}$ is tailored to enforce the contractiveness,
with respect to the norm $L^{\infty}$, of the nonlinear operator $Q_n[\e]$,
when $p$ is sufficiently large. In fact,
\begin{itemize}
 \item The integrand $\ceili{b+pz+\e-\frac12}$ is piece-wise constant. The condition $\| \e \|_{\infty} < \frac12$
 ensures that $\e-\tilde{\e}<1$ which in turns allows us to locate the discontinuity of
 $\ceili{b+pz+\e-\frac12}-\ceili{b+pz+\tilde{\e}-\frac12}$ in terms of $\|\e-\tilde{\e}\|_{\infty}$.
 This condition could in principle be relaxed
 at the cost of a much more complicated result. However there is no point in pursuing this road, when 
 studying the well-posedness of the perturbed IBA in the large momentum limit, since we will be able to
 prove the any solution $\la$ of the perturbed IBA satisfies the a-priori estimate
 $\|\la\|_{\infty,s}\lesssim p^{-1}$ for all $ s \in \left(-\frac{1+\alpha}{2\alpha},0\right]$.
 \item We can prove the above Proposition substituting the norm $\| \cdot \|_{\infty}$ with the norm $\| \cdot \|_{\infty,s}$
for any $ s \in \left[-\frac{1+\alpha}{2\alpha},0\right]$. This makes the proof much more complicated and it is also unnecessary
for the same reason stated above.
 \item The operator $Q_n[\e]$ is discontinuous at $\e$ if
$p f'(x^*)+\e'(x^*)\leq 0$ for some $x^*$ such that $p f(x^*)+\e(x^*)+b-\frac12 \in \bb{Z}$.
 Therefore the condition $p f'(x)+\e'(x)> 0$ cannot be relaxed, if we want the operator to be contractive (hence continuous).
\end{itemize}

\end{remark}

\subsection{Functions belonging to $Z_{\delta,c}$}
We have shown that any solution of the linearised IBA equation (\ref{eq:IBAStL})
can be expressed in terms of the function $\fun$, defined in (\ref{eq:omegaground}).

After Theorem \ref{prop:fproperty}, we know that $\fun\in Z_{\frac{1+\alpha}{2\alpha},c}$
with $c=A^{\frac{1+\alpha}{2\alpha}}$, where $A$ is as per (\ref{eq:Adef}).
Here, we show that $ \fun+ p^{-1}\left(\e-\e(1)\right)\in Z_{\frac{1+\alpha}{2\alpha},c}$,
if the perturbation $\e:[1,+\infty)\to\R^+$, and its first two derivatives, are bounded with respect to
some weighted $L^{\infty}$ norm.

\begin{lemma}\label{lem:varphiperturbed} \label{cor:smallpertubation}
Let  $C>0$ and $s\leq\frac{1+\alpha}{2\alpha}$. We define the set
\begin{align} \nonumber
& \bb{D}_{s,n}(C)=\left\lbrace f\in\mc{C}^n\left([1,+\infty)\right) \,|\,
\|D^k f\|_{\infty,s}\leq C\, \mbox{ and } \right. \\ 
& \left. \exists d \in \bb{R} \,|\,
\lim_{x \to +\infty}\xi^{-\frac{1+\alpha}{2\alpha}} D^k_{\xi} f(\xi)= d \left( \frac{1+\alpha}{2\alpha} \right)^n \,,\,
\forall k\in\{0,\dots,n\}\right\rbrace\;.
\end{align}
Notice that the second condition is automatically satisfied if $s<\frac{1+\alpha}{2\alpha}$, in which case 
$d=0$.\\
(1) Let $\fun$ be the function defined by equation (\ref{eq:tauJ}).
For any $C>0$ and $s\leq\frac{1+\alpha}{2\alpha}$, there exists a $p_{s,C}>0$ such that
\begin{align}\label{eq:fepZ}
& f_{\e,p}:=\fun+p^{-1}\left(\e-\e(1)\right) \in Z_{\delta,c} \mbox{ with }
c=A^{\frac{1+\alpha}{2\alpha}}+ p^{-1} d\;,\\ \label{eq:xaDxbounded}
& \underset{\xi \in[1,+\infty)}{\inf}
    \left(\xi^{-\frac{1+\alpha}{2\alpha}}D_\xi f_{\e,p}(\xi)\right) \gtrsim 1 \;, \\
& \label{eq:phiebounds}
   N_{1,2}(f_{\e,p})\lesssim 1\;,\quad 
\Gamma(f_{\e,p}) \gtrsim 1 \;, 
\end{align}
for all $(p,\e) \in [p_{s,C},+\infty)\times \bb{D}_{s,2}(C)$.
 In the above statements $A$ is as per (\ref{eq:Adef}), and $N_1$, $N_2$ and $\Gamma$ as per (\ref{eq:NiGamma}).\\
(2) Fix furthermore $H\geq0$ and $\widehat{C} >0$, and assume that $p\geq \frac{H+\widehat{C}}{\sqrt{2\left(1+\alpha\right)}}$. Let $\bar{l}(\xi):=\bar{l}(\xi;\omega,p)$ be
 the rescaled solution of the linearised standard IBA equation, defined as per (\ref{eq:lomegaground2}),
 and $\Omega_{H,\widehat{C}}$ be as per (\ref{eq:omegaground}).\footnote{$\Omega_{H,\widehat{C}}$ is the interval of $\omega$'s such that $\bar{l}(1,\omega,p)+H \in [-C,C]$.}
 
 For any $C,\widehat{C}>0$, there exists a $p_{C,\widehat{C}}$ such that
 \begin{align}
&     \bar{f}_{\e,p}:=p^{-1}\left(\bar{l}+\e-\bar{l}(1)-\e(1)\right)\in Z_{\frac{1+\alpha}{2\alpha},c} \mbox{ with }
c=A^{\frac{1+\alpha}{2\alpha}}\left(1+ \frac{\bar{l}(1)}{p (2\alpha+2)^{\frac12}}\right)\;,
     \label{eq:perturbedZaA} \\
& \label{eq:N1N2Gbounded}
N_{1,2}\left(\bar{f}_{\e,p}\right)\lesssim 1 \;,\quad \Gamma\left(\bar{f}_{\e,p}\right) \gtrsim 1\;,
 \end{align}
for all $(\omega,p,\e)\in\Omega_{H,\widehat{C}}\times[p_{C,\widehat{C}},+\infty)\times \bb{D}_{0,2}(C)$, and
 \begin{equation}\label{eq:perturbedXpl}
 \e \in X_{\bar{f}_0,p_{C,\widehat{C}}} \;, 
 \end{equation}
 for all $(\omega,\e)\in\Omega_{H,\widehat{C}}\times\left\lbrace f\in\bb{D}_{0,1}(C)\,|\, \|f\|_\infty < \frac12 \right\rbrace$.
\end{lemma}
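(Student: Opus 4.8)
The plan is to verify, in both parts, the three defining conditions for membership in $Z_{\delta,c}$ with $\delta:=\frac{1+\alpha}{2\alpha}$ --- the normalisation $f(1)=0$, strict positivity $f'>0$, and the leading asymptotics $D^n f=c\,\delta^n\xi^{\delta}+o(\xi^{\delta})$ for $n\in\{0,1,2\}$ which pins down $c$ --- and then, separately, to bound the inverse-function quantities $N_1,N_2,\Gamma$ uniformly in the parameters. The normalisation is immediate from the shape of $f_{\e,p}$ and $\bar f_{\e,p}$: subtracting the value at $\xi=1$ and using $\fun(1)=0$ (Proposition \ref{prop:tauJ}(1)) forces $f_{\e,p}(1)=\bar f_{\e,p}(1)=0$.

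For part (1) I would write $\xi^{-\delta}D_\xi f_{\e,p}=\xi^{-\delta}D_\xi\fun+p^{-1}\xi^{-\delta}D_\xi\e$. The first summand is bounded below by the positive constant $\inf_\xi(\xi^{-\delta}D_\xi\fun)$ of Proposition \ref{prop:tauJ}(6), while $\e\in\bb{D}_{s,2}(C)$ gives $|D_\xi\e(\xi)|\le C\xi^{s}$ with $s\le\delta$, so the second summand is $O(p^{-1})$ on $[1,+\infty)$; choosing $p_{s,C}$ large makes the first term dominate, yielding both $f_{\e,p}'>0$ and the uniform bound (\ref{eq:xaDxbounded}). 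For the asymptotics I would invoke (\ref{eq:Dnfasy}), which gives $D^n\fun=\delta^nA^{\delta}\xi^{\delta}+O(\xi^{-\delta})$, and add $p^{-1}D^n\e=p^{-1}d\,\delta^n\xi^{\delta}+o(\xi^{\delta})$ coming from the limit built into $\bb{D}_{s,2}(C)$ (this contribution is altogether $o(\xi^{\delta})$ when $s<\delta$, in which case $d=0$); summing reproduces $c\,\delta^n\xi^{\delta}+o(\xi^{\delta})$ with $c=A^{\delta}+p^{-1}d$, establishing (\ref{eq:fepZ}).

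The main obstacle, and the only genuinely technical point, is the uniform control (\ref{eq:phiebounds}) of $N_1,N_2,\Gamma$: finiteness is free from Lemma \ref{lem:lemma3}, but I need bounds independent of $(p,\e)$. The route is to upgrade the one-sided estimate already obtained into two-sided derivative bounds $D_\xi f_{\e,p}\asymp\xi^{\delta}$ and $|D_\xi^2 f_{\e,p}|\lesssim\xi^{\delta}$ (the upper bounds follow from (\ref{eq:Dnfasy}) together with $\|D^k\e\|_{\infty,s}\le C$), whence $f_{\e,p}'\asymp\xi^{\delta-1}$, $|f_{\e,p}''|\lesssim\xi^{\delta-2}$, and $f_{\e,p}(\xi)\asymp\xi^{\delta}$ for large $\xi$. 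Writing $\varphi=f_{\e,p}^{-1}$ and differentiating $f_{\e,p}(\varphi(t))=t$ gives $\varphi'=1/f_{\e,p}'(\varphi)$ and $\varphi''=-f_{\e,p}''(\varphi)(\varphi')^{2}/f_{\e,p}'(\varphi)$; feeding in the two-sided bounds produces $\varphi(t)\asymp t^{1/\delta}$, $\varphi'(t)\asymp t^{1/\delta-1}$ and $|\varphi''(t)|\lesssim t^{1/\delta-2}$, which are exactly the weighted-norm statements $N_1,N_2\lesssim1$ and $\Gamma\gtrsim1$ (the behaviour near $t=0$ being harmless, since $\varphi(0)=1$ and $w\equiv1$ there). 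Keeping every implicit constant independent of $(p,\e)$ throughout is what requires care.

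For part (2) I would run the same scheme on $\bar f_{\e,p}=p^{-1}(\bar l+\e-\bar l(1)-\e(1))$, replacing $\fun$ by the explicit representation $\bar l=p\fun+\tfrac{\sqrt2\alpha}{(1+\alpha)^{3/2}}\bar l(1)\,D_\xi\fun$ derived from Theorem \ref{prop:fproperty} (equations (\ref{eq:lomegaground2}), (\ref{eq:barl1})). The uniform lower bound $\inf_\xi(\xi^{-\delta}D_\xi\bar l)\gtrsim p$ of (\ref{eq:barlpunif}), valid on $\Omega_{H,\widehat{C}}$, gives $\xi^{-\delta}D_\xi\bar f_{\e,p}\gtrsim1$ once $p\ge p_{C,\widehat{C}}$ (since $\|D^k\e\|_{\infty,0}\le C$ contributes only $O(p^{-1})$), and (\ref{eq:Dnfasy}) gives $p^{-1}D^n\bar l=\delta^nA^{\delta}\xi^{\delta}\bigl(1+\tfrac{\sqrt2\alpha}{(1+\alpha)^{3/2}}\delta\,p^{-1}\bar l(1)\bigr)+o(\xi^{\delta})$; since $\delta\cdot\tfrac{\sqrt2\alpha}{(1+\alpha)^{3/2}}=(2\alpha+2)^{-1/2}$ this reproduces the constant $c$ of (\ref{eq:perturbedZaA}), positive for $p$ large because $\bar l(1)$ is bounded on $\Omega_{H,\widehat{C}}$. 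The inverse-function argument above then yields (\ref{eq:N1N2Gbounded}) verbatim. Finally, for (\ref{eq:perturbedXpl}) I would note that $\|\e\|_\infty<\tfrac12$ is assumed, while $D_\xi\bar f_0=p^{-1}D_\xi\bar l\gtrsim\xi^{\delta}$ gives $\bar f_0'(\xi)\gtrsim\xi^{\delta-1}$ with a constant independent of $p$, and $|\e'(\xi)|\le C\xi^{-1}$, so that $p_{C,\widehat{C}}\bar f_0'(\xi)+\e'(\xi)\gtrsim p_{C,\widehat{C}}\xi^{\delta-1}-C\xi^{-1}>0$ on $[1,+\infty)$ after enlarging $p_{C,\widehat{C}}$ --- which is precisely the membership $\e\in X_{\bar f_0,p_{C,\widehat{C}}}$.
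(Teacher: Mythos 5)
Your proposal is correct and follows essentially the same route as the paper: part (1) rests on Proposition \ref{prop:tauJ} (in particular (\ref{eq:infderf}) and (\ref{eq:Dnfasy})) together with the definition of $\bb{D}_{s,n}(C)$, and part (2) rests on Theorem \ref{prop:fproperty} and Lemma \ref{lem:linearisedground}; the only (immaterial) difference is that the paper handles (2) by noting $\bar{l}-p\fun$ is a bounded multiple of $D\fun$, hence reduces to (1) with $d=A^{\frac{1+\alpha}{2\alpha}}\bar{l}(1)(2\alpha+2)^{-\frac12}$, whereas you rerun the argument directly on $\bar{f}_{\e,p}$ and obtain the same constant $c$. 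Your explicit inverse-function computations for $N_1$, $N_2$, $\Gamma$ simply fill in details the paper leaves implicit via Lemma \ref{lem:lemma3}.
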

\begin{proof}
(1) Follows directly from the properties of the function $\fun$ and the definition of the sets $\bb{D}_{n,s}(C)$.
In particular (\ref{eq:xaDxbounded}) follows from (\ref{eq:infderf}).\\
(2) After Lemma \ref{lem:linearisedground}, $\bar{l}-\fun \in D_{2,\frac{1+\alpha}{2\alpha}}(C)$ with
$d=A^{\frac{1+\alpha}{2\alpha}} \frac{\bar{l}(1)}{ (2\alpha+2)^{\frac12}}$. The thesis then follows from (1).
\end{proof}

\subsection{Large $x$ asymptotics}
Here we use the analysis of the oscillatory integral,
to prove the following a-priori large $x$ estimate for $p$ fixed.
\begin{theorem}\label{thm:apriori}
Let $Q$ be a purely real and normalised solution of the BAE. Denote by $z$ its associated function, by $\bb{H}$ its set of hole-numbers and by
$\sec_{\bb{H}}$ the corresponding sector -- as defined per (\ref{eq:secH}).\\
(1) For all $\delta>0$
\begin{equation}\label{eq:aprioristrong}
 z(x)=x^{\frac{\alpha+1}{2\alpha}}-p \left(1+\alpha\right)-\frac{\sec_{\bb{H}}}{2}\left(\alpha-1\right)
 + O\left(x^{-\frac{\alpha+1}{2\alpha}+\delta}\right) \mbox{ as } x\to+\infty\;.
\end{equation}
(2) Assume $\sec_{\bb{H}}=0$. Let $\omega \in \bb{R}^+$ such that $\ceili{z(\omega)-\frac12}=-H$,
$l(x;\omega,p)$ be the solution of the linearised standard IBA equation (\ref{eq:IBAStL}),
and $\la(\xi)= z(\xi \omega)-l(\xi\omega;\omega,p)$. Then, for every $n\in\bb{N}$ 
$$D^n\la \in L^{\infty}_s([1,+\infty)) \;,\quad \forall s\in\left(-\frac{1+\alpha}{2\alpha},0\right]\;.$$
\end{theorem}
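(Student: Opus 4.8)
The plan is to derive both statements from the a-priori control already in hand (Lemma~\ref{lem:apriorie}) together with the oscillatory estimate of Proposition~\ref{prop:oscillatorysingle}, proving Part (2) first and then reading off Part (1) from it, after reducing to $\sec_{\bb{H}}=0$ via the shift symmetry of the logarithmic BAE. The common engine is a single bound on the perturbing integral, which I set up first.

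\emph{Core step.} Assume $\sec_{\bb{H}}=0$ and choose $\omega$ with $\ceili{z(\omega)-\frac12}=-H$, so that $z$ solves the standard IBA (\ref{eq:IBASt}) and $\afr{z(\omega\,\cdot)}$ is exactly the integrand of the perturbation. Writing $f(\eta)=p^{-1}\bigl(z(\omega\eta)-z(\omega)\bigr)$ and $b=z(\omega)$ one has $\afr{z(\omega\eta)}=\afr{b+pf(\eta)}$, and by Lemma~\ref{lem:apriorie}(1) (which gives $D^n_x z=\left(\frac{1+\alpha}{2\alpha}\right)^n x^{\frac{1+\alpha}{2\alpha}}+O(1)$) the function $f$ lies in $Z_{\delta,c}$ with $\delta=\frac{1+\alpha}{2\alpha}\in\left(\frac12,1\right)$ and $c=p^{-1}\omega^{\frac{1+\alpha}{2\alpha}}$; by Lemma~\ref{lem:lemma3} the constants $N_1(f),N_2(f),\Gamma(f)$ are finite and positive. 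Proposition~\ref{prop:oscillatorysingle} then gives, for every $n$,
\begin{equation*}
\Bigl|\bb{K}^{(n)}_1[\afr{z(\omega\,\cdot)}](\xi)\Bigr|=|O_n(\xi;p)|\lesssim \frac{1}{c\,p\,\xi^{\delta}}\Bigl[N_1+\tfrac{N_1^2}{\Gamma^2}+\tfrac{N_2}{\Gamma}\Bigr]\lesssim \xi^{-\frac{1+\alpha}{2\alpha}},
\end{equation*}
since $c\,p=\omega^{\frac{1+\alpha}{2\alpha}}$. This is the analytic heart of the argument and the step I expect to be the real obstacle: everything downstream is soft, but it rests on checking the $Z_{\delta,c}$ hypotheses and on the delicate (already proven) Proposition~\ref{prop:oscillatorysingle}.

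\emph{Part (2).} Since $\la=z(\omega\,\cdot)-\bar l$ solves the perturbed IBA (\ref{eq:IBASte}), I would write $\la=\bb{K}_1[\la]+\hat g$ with $\hat g=-\bb{K}_1[\afr{\bar l+\la}]-\sum_{k=1}^{H}\bigl[F_\alpha(\frac{\xi}{1+\mu_k})-F_\alpha(\xi)\bigr]$ and use $\afr{\bar l+\la}=\afr{z(\omega\,\cdot)}$. By the core step $\bb{K}^{(k)}_1[\afr{z(\omega\,\cdot)}]=O(\xi^{-\frac{1+\alpha}{2\alpha}})$, while $F_\alpha(\frac{\xi}{1+\mu_k})-F_\alpha(\xi)$ and all its $D$-derivatives are $O(\xi^{-1})$ (because $F_\alpha(x)=\frac{\alpha-1}{1+\alpha}+O(x^{-1})$ and $DF_\alpha=K_\alpha$, with $K_\alpha(x;m)=O(x^{-1})$). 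Hence $D^k\hat g\in L^{\infty}_s$ for every $k$ and every $s\geq-\frac{1+\alpha}{2\alpha}$. Applying Lemma~\ref{lem:Knbounded}(3) on $[1,+\infty)$ for any $s\in\left(-\frac{1+\alpha}{2\alpha},0\right]$ produces a $c$ with $D^n(\la-c\,\xi^{\frac{1+\alpha}{2\alpha}})\in L^{\infty}_s$; since $\la$ is bounded by Lemma~\ref{lem:apriorie}(2), necessarily $c=0$, giving $D^n\la\in L^{\infty}_s$. The open endpoint is sharp: it is exactly at $s=-\frac{1+\alpha}{2\alpha}$ that $\bb{I}-\bb{K}_1$ loses invertibility, so the $O(\xi^{-\frac{1+\alpha}{2\alpha}})$ decay of the source can be transferred to $\la$ only for $s$ strictly above the threshold.

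\emph{Part (1).} For $\sec_{\bb{H}}=0$ I would combine Part (2) with the linear asymptotics: by Theorem~\ref{prop:fproperty} and Proposition~\ref{prop:tauJ}(4), $l(x;\omega,p)=x^{\frac{1+\alpha}{2\alpha}}-p(1+\alpha)+O(x^{-\frac{1+\alpha}{2\alpha}})$ (the $\omega$-dependence cancels in the two leading orders), while $\la(x/\omega)=O(x^{-\frac{1+\alpha}{2\alpha}+\delta})$ for every $\delta>0$ by Part (2); adding $z=l+\la(\cdot/\omega)$ yields (\ref{eq:aprioristrong}) with $\sec_{\bb{H}}=0$. For general sector I would invoke the symmetry $(z,p)\mapsto(z-\sec_{\bb{H}},\,p+\frac{\sec_{\bb{H}}}{2})$ of the logarithmic BAE, which leaves the roots and the normalisation untouched and, as $\sec_{\bb{H}}\in\bb{Z}$ leaves (\ref{eq:BAEQ}) itself unchanged, sends $z$ to a bona fide sector-$0$ solution $\tilde z$ with momentum $\tilde p=p+\frac{\sec_{\bb{H}}}{2}$. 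The sector-$0$ case gives $\tilde z(x)=x^{\frac{1+\alpha}{2\alpha}}-\tilde p(1+\alpha)+O(x^{-\frac{1+\alpha}{2\alpha}+\delta})$, and substituting $z=\tilde z+\sec_{\bb{H}}$ turns the constant into $-\tilde p(1+\alpha)+\sec_{\bb{H}}=-p(1+\alpha)-\frac{\sec_{\bb{H}}}{2}(\alpha-1)$, as claimed.
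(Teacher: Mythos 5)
Your proposal is correct, and it runs on the same machinery as the paper's proof: the oscillatory estimate of Proposition \ref{prop:oscillatorysingle} applied to $\afr{z}$ (your ``core step'' is exactly the paper's Lemma \ref{prop:zetaas}, except that the paper splits the integral at a point $X_M$ and applies the proposition with $p=1$, $b=0$, whereas you fold the momentum into $f=p^{-1}\left(z(\omega\,\cdot)-z(\omega)\right)$, $b=z(\omega)$, $c=p^{-1}\omega^{\frac{1+\alpha}{2\alpha}}$ --- equivalent normalisations of the same bound, and your verification of the $Z_{\delta,c}$ hypotheses via Lemma \ref{lem:apriorie}(1) is sound), together with the weighted-$L^\infty$ linear theory and the sector-shift reduction. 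The genuine difference is organisational. The paper proves Part (1) directly from the standard IBA, writing it as $f=\bb{K}_\omega[f]+g$ with $g+2p\in L^{\infty}_{-\frac{1+\alpha}{2\alpha}}$ and extracting the constant $-p(1+\alpha)$ from the abstract resolvent limit of Proposition \ref{prop:Kzasym}(5,6) applied at $s=0$ to the constant forcing $-2p$ (the Neumann series contributes the factor $\left(1-\Phi(\alpha,0)\right)^{-1}=\frac{\alpha+1}{2}$), and then proves Part (2) separately via Lemma \ref{lem:Knbounded}(3). You instead prove (2) first and read (1) off from the explicit WKB representation of $\bar{l}$ (Theorem \ref{prop:fproperty} and Proposition \ref{prop:tauJ}(4)); your observation that $\frac{2\alpha}{1+\alpha}\cdot\frac{1+\alpha}{2\alpha}=1$ makes the $\omega$-dependence cancel, giving $l(x;\omega,p)=x^{\frac{1+\alpha}{2\alpha}}-p(1+\alpha)+O\left(x^{-\frac{1+\alpha}{2\alpha}}\right)$, is correct. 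Your route obtains the constant more explicitly at the price of invoking the full solution of the linearised equation, while the paper's route needs only the resolvent asymptotics; both rest on the same estimates, and both share the implicit requirement that an $\omega$ with $\ceili{z(\omega)-\frac12}=-H$ exists. One point where you improve on the paper's text: your shift $(z,p)\mapsto\left(z-\sec_{\bb{H}},\,p+\frac{\sec_{\bb{H}}}{2}\right)$ is the correct symmetry --- the paper's proof writes $p^*=p-\frac{S}{2}$, a sign typo, since that choice would produce the constant $-p(1+\alpha)+\frac{S}{2}(\alpha+3)$ rather than the stated $-p(1+\alpha)-\frac{S}{2}(\alpha-1)$; your computation $-\tilde{p}(1+\alpha)+\sec_{\bb{H}}=-p(1+\alpha)-\frac{\sec_{\bb{H}}}{2}(\alpha-1)$ is the consistent one, and your remark that $\sec_{\bb{H}}\in\bb{Z}$ leaves (\ref{eq:BAEQ}) itself invariant is precisely what legitimises treating the shifted $z$ as the associated function of the same $Q$ with momentum $p+\frac{\sec_{\bb{H}}}{2}$.
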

Before proving Theorem \ref{thm:apriori}, we need a preparatory Lemma
\begin{lemma}\label{prop:zetaas}
Let $z$ be a solution of the standard IBA equation, then
\begin{equation}\label{eq:Kzetaas}
\left|\bb{K}^{(n)}_\omega[\afr{z}](x)\right| \lesssim x^{-\frac{\alpha+1}{2\alpha}}.
\end{equation}
\begin{proof} 
After Lemma \ref{lem:lemma1e2}, $D_x z=\frac{\alpha+1}{2\alpha} x^{\frac{\alpha+1}{2\alpha}}+ O(x^{\e})$ as $x \to +\infty$,
for any $\e>0$. Hence, there exists $X_M\geq 0$ such that
$\underset{x\in[X_M,+\infty)}{\inf}\{x^{-\frac{\alpha+1}{2\alpha}} z'(x)\}>0$. Thus, we write
\begin{multline*}
 \bb{K}_\omega[\afr{z}](x) = \int_{\omega}^{X_M}
  K_\alpha\left(\frac{x}{y}\right) \afr{z(y)} \frac{dy}{y} + \int_{X_M}^{\infty} K_\alpha\left(\frac{x}{y}\right) \afr{z(y)} \frac{dy}{y} \notag \\
 =\int_{\omega}^{X_M}
  K_\alpha\left(\frac{x}{y}\right) \afr{z(y)} \frac{dy}{y}+ \int_{1}^{\infty} K_\alpha\left(\frac{x}{yX_M}\right) \afr{\tilde{z}(y)} \frac{dy}{y}\;,
\end{multline*}
where $\tilde{z}(\xi)=z(\xi X_M)$ for all $\xi \in [1,+\infty)$. Due to (\ref{eq:KL1}), we have
\begin{equation*}
    \int_{\omega}^{X_M}
  K_\alpha\left(\frac{x}{y}\right) \afr{z(y)} \frac{dy}{y}=O(x^{-1}) \mbox{ as } x\to+\infty \;,
\end{equation*}
while the integral
\begin{equation*}
\int_{1}^{\infty} K_\alpha\left(\frac{x}{yX_M}\right) \afr{\tilde{z}(y)} \frac{dy}{y}\;,
\end{equation*}
fulfils the hypothesis of Proposition \ref{prop:oscillatorysingle} with $p=1$ and $b=0$, since $\tilde{z}\in Z_{\frac{1+\alpha}{2\alpha}}$, whence the thesis.
\end{proof}
\end{lemma}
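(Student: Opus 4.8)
The plan is to exploit that $\afr{z}$ is bounded, with $|\afr{z}|\le\tfrac12$, yet oscillates around zero as $z$ sweeps through half-integers, so that the decay of the convolution is produced entirely by cancellation. That cancellation has already been isolated in Proposition \ref{prop:oscillatorysingle}, so the whole task reduces to cutting off a compact piece, rescaling the tail, and checking the hypotheses of that Proposition.

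First I would record the regularity and growth of $z$ that place it in the class $Z_{\delta,c}$ with $\delta=\frac{1+\alpha}{2\alpha}\in\left(\tfrac12,1\right)$. By Lemma \ref{lem:lemma1e2}, equation \eqref{eq:apriori1}, for every $n\ge 0$ one has $D^n z(x)=\left(\frac{1+\alpha}{2\alpha}\right)^n x^{\frac{1+\alpha}{2\alpha}}+O(1)$; in particular $D^n z=\delta^n x^{\delta}+o\left(x^\delta\right)$ for $n\in\{0,1,2\}$ and $z'(x)>0$ for $x$ large. Hence there is an $X_M$ such that $z$ is strictly increasing on $[X_M,+\infty)$ with $\inf_{x\ge X_M}x^{-\delta}z'(x)>0$.

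Next I would split
\[
\bb{K}^{(n)}_\omega[\afr{z}](x)=\int_{\omega}^{X_M}K_\alpha\!\left(\tfrac{x}{y};n\right)\afr{z(y)}\tfrac{dy}{y}+\int_{X_M}^{\infty}K_\alpha\!\left(\tfrac{x}{y};n\right)\afr{z(y)}\tfrac{dy}{y}.
\]
The compact piece is handled by brute force: $\afr{z}$ is integrable on $[\omega,X_M]$ and $\frac1y K_\alpha\!\left(\tfrac{x}{y};n\right)\lesssim_n x^{-1}$ uniformly by \eqref{eq:maximumkernel}, so it is $O(x^{-1})$, which is $\lesssim x^{-\delta}$ because $\delta<1$. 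For the tail I would rescale $y=X_M\xi$, obtaining $\int_1^\infty K_\alpha\!\left(\tfrac{x}{X_M\xi};n\right)\afr{\tilde z(\xi)}\tfrac{d\xi}{\xi}$ with $\tilde z(\xi)=z(X_M\xi)$. Since the Euler operator commutes with dilations, $D^n_\xi\tilde z=\left(D^n z\right)(X_M\xi)$, so setting $f=\tilde z-\tilde z(1)$ gives $f(1)=0$, $\afr{\tilde z}=\afr{\tilde z(1)+f}$, and $f\in Z_{\delta,c}$ with $c=X_M^{\delta}$; its constants $N_1(f),N_2(f),\Gamma(f)$ are finite by Lemma \ref{lem:lemma3}. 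This integral is exactly $O_n(\cdot\,;p)$ of Proposition \ref{prop:oscillatorysingle}, evaluated at $x/X_M$, with $p=1$ and $b=\tilde z(1)$, so it is $\lesssim (x/X_M)^{-\delta}\lesssim x^{-\delta}$. Adding the two contributions yields \eqref{eq:Kzetaas}.

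The analytic heart --- turning a bounded, highly oscillatory integrand into a convolution that decays like $x^{-\delta}$ --- is entirely contained in Proposition \ref{prop:oscillatorysingle}, so the only real work here is verifying its hypotheses: that the rescaled, recentred $z$ is $\mc{C}^2$, strictly increasing, and has the prescribed $x^\delta$ asymptotics through its first two $D$-derivatives. All of these come from Lemma \ref{lem:lemma1e2}; the single point requiring care is upgrading ``strictly increasing for large $x$'' to the uniform lower bound that defines $\Gamma(f)$, which is precisely what determines the choice of $X_M$.
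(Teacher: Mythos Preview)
Your proof is correct and follows essentially the same route as the paper's: split off a compact piece (handled by \eqref{eq:maximumkernel} or equivalently \eqref{eq:KL1}), rescale the tail, and invoke Proposition \ref{prop:oscillatorysingle}. In fact you are slightly more careful than the paper's own write-up: you correctly recentre $f=\tilde z-\tilde z(1)$ and take $b=\tilde z(1)$ so that the condition $f(1)=0$ of Definition \ref{def:ZdA} is met (the paper writes $b=0$, which is a minor slip), and you treat general $n$ explicitly rather than only $n=0$.
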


\begin{proof}[Proof of Theorem \ref{thm:apriori}]
(1) We notice that it is sufficient to prove the thesis assuming that the sector vanishes. In fact, if $S:= \sec_{\bb{H}}$
is the sector of $z$, then
$z-S$ is a solution of the logarithmic BAE of vanishing sector and with momentum $p^*=p-\frac{S}{2}$.
Since by assumption
$z-S=x^{\frac{\alpha+1}{2\alpha}}-(p-\frac{S}{2}) \left(1+\alpha\right)
 + O\left(x^{-\frac{\alpha+1}{2\alpha}+\e}\right)$ then
 $z=x^{\frac{\alpha+1}{2\alpha}}-p \left(1+\alpha\right)-\frac{S}{2}\left(\alpha-1\right)
 + O\left(x^{-\frac{\alpha+1}{2\alpha}+\e}\right)$.
 
 The standard IBA equation (\ref{eq:IBASt}) is an equation of the form $f=\bb{K}_{\omega}[f]+g$, for an $f
 \in L^{\infty}_{\frac{1+\alpha}{2\alpha}}$
 with
 $$g(x)=-2p+\bb{K}_{\omega}[\afr{z}](x)- H F_\alpha\left(\frac{x}{\omega}\right)+ \sum_{k=1}^H F_\alpha\left(\frac{x}{h_k}\right)\;.$$ Using Lemma \ref{prop:zetaas} and the explicit expression
 for $F_\alpha$ (\ref{eq:Ffunza}), we easily obtain that
$ g(x) + 2p \in L^{\infty}_{-\frac{1+\alpha}{2\alpha}}$. Hence,
applying Proposition \ref{prop:Kzasym}(5,6) to the above equation, we obtain the thesis. \\
(2) In this case, we reason as above but we apply the estimate of Lemma \ref{lem:Knbounded}(3) to the perturbed IBA
equation (\ref{eq:IBASte}).\\
\end{proof}

\section{Proof of the Main Theorem}
\label{sec:mainthm}
We recall that in Proposition \ref{prop:strongz} we have established a bijection between
\begin{itemize}
\item Solutions of the BAE (\ref{eq:BAEQ})
whose set of holes-numbers $\bb{H}$ has sector $\sec_{\bb{H}}=0$, namely $\bb{H}=\bb{H}_{\snu}$
for a possibly empty partition $[\nu]$ as per formula (\ref{eq:partitiontoholes}).
 \item Equivalence classes of
 strictly monotone solutions of the IBA equation (\ref{eq:IBAStGen})
satisfying the constraint (\ref{eq:IBAStka}).
\end{itemize}
We have also established a dictionary between the data of the set of hole-numbers $\bb{H}_{\snu}$
and the data that define the standard IBA equation, namely a non-negative integer $H$ and
a strictly increasing function $\sigma:\lbrace 1,\dots, H \rbrace \to \bb{Z}$ such that $\sigma(1)>-H$.

In this Section we prove our main Theorem by showing that, fixed $H$ and $\sigma$,
if $p$ is large enough there exists a unique -- up to equivalence -- strictly solution of the standard IBA
equation (\ref{eq:IBAStGen}) satisfying the constraint (\ref{eq:IBAStka}).

The proof is divided in two steps:\\
(1) \textit{Existence}: We prove that, if $p$ is large enough, for each
$\omega \in \Omega_{H,\frac14}$ -- with $\Omega_{H,C}$ as per (\ref{eq:omegaground}) --
the perturbed IBA equation (\ref{eq:IBASteGen}) admits a unique solution $(\la^*(\cdot;\omega,p),\vecmu^*(\omega,p))$
such that $\| \la^*(\cdot;\omega,p) \|_{\infty}<\frac14$, which we verify to be 
strictly monotone.
Moreover, we show that the corresponding solutions of
of the IBA equation $z(\cdot;\omega,p)=l(\cdot;\omega,p)+\la^*\left(\frac{\cdot}{\omega};\omega,p\right)$, with $\omega \in
\Omega_{H,\frac14}$ are all equivalent. \\
(2) \textit{Uniqueness}: Assume that $Q$ is a purely real and normalised solution of the BAE
whose set of hole numbers $\bb{H}_Q$ coincide with the one parameterised by the data $H$ and $\sigma$. Let
$z$ be its associated function and $\omega^*$ the point such that $z(\omega^*;\omega^*,p)=-H$.
In Proposition \ref{prop:uniqueness}, we prove that
$\omega^* \in \Omega_{H,Cp^{-1}}$, for some $C>0$, and $z$ coincides with the solution of the standard IBA constructed at step (1).

\subsection{Existence}
We denote, as customary, by $l(x;\omega,p)$, the unique normalised solution of the linearised standard IBA
(\ref{eq:IBAStL}), studied in Theorem \ref{prop:fproperty}, and $\bar{l}(\xi):=\bar{l}(\xi;p,\omega)$ 
the function $\bar{l}(\xi)=l(\xi \omega;\omega,p)$.
Recall that $\Omega_{H,\frac14}$ is the interval of $\omega$'s
such that $\left|\bar{l}(1)+H\right| \leq \frac14 $. Recall moreover that,
assuming that $p$ is large enough, $\bar{l}(\xi)$ is strictly monotone for all $\omega \in \Omega_{H,\frac14}$,
as guaranteed by Lemma \ref{cor:smallpertubation}.

Here we prove the existence of a solution of the perturbed IBA equation (\ref{eq:IBASteGen}) \footnote{or, more conveniently,
system (\ref{eq:IBASte},\ref{eq:IBAStmu2}) since
$\bar{l}(\xi)$ is strictly monotone.}. To this aim, we consider the perturbed IBA equation
as the equation for the fixed points of the following $(\omega,p)-$family of nonlinear maps
$\mc{N}=\left( \mc{N}_0, \mc{N}_1,\dots, \mc{N}_H\right)$
\begin{align}\nonumber 
& \mc{N} : \mc{C}^0([1,+\infty))\oplus \bb{R}^H 
\to  \mc{C}^0([1,+\infty)) \oplus \bb{R}^H \\ \label{eq:Ndef}
& X:=(\la,\vecmu)\mapsto \left( \mc{N}_0[X], \mc{N}_1[X],\dots, \mc{N}_H[X] \right),\\ \label{eq:N0def}
& \mc{N}_0[X](\xi):= \bb{K}_1[\la](\xi) - \bb{K}_1[\afr{\bar{l}+\la}](\xi) -
  \sum_{k=1}^{H} \left[ F_\alpha\left(\frac{\xi}{1+\mu_k}\right)- F_\alpha(\xi)\right] \;,\\ \label{eq:Nkdef}
& \mc{N}_k[X]:= \left(\frac{\bar{l}(1+\mu_k)-\bar{l}(1)}{\mu_k} \right)^{-1} \left(  
 \sigma(k)+\frac12- \bar{l}(1)- \la(1+\mu_k) \right).
\end{align}
In equation (\ref{eq:Nkdef}), $\left(\frac{\bar{l}(1+\mu_k)-\bar{l}(1)}{\mu_k} \right):= \bar{l}'(1)$ if $\mu_k=0$.

We want to prove that the above map is contractive in a \textit{neighborhood} of zero. Let us introduce a convenient metric to work with.
\begin{definition}
We denote by $X:=(\la,\vecmu)$ a point in $\mc{C}^0([1,+\infty))\oplus \bb{R}^H$. For every $\alpha>1$, we define the following norm on
  $\mc{C}^0([1,+\infty))\oplus \bb{R}^H$ 
  \begin{align}\label{eq:normC0RH}
   & \|X\|:= \| \la \|_{\infty} +2 G_{\alpha} \sum_{k=1}^H \left| \mu_k \right|\;,\\
    \label{eq:Gal}
 & G_{\alpha}:=\sup_{x\in\R^+}\left|F'_{\alpha}(x)\right|=
\begin{cases}
 \frac{1}{\pi}\sin\left(\frac{2 \pi}{1+\alpha}\right)\;,\quad 1<\alpha\leq 3 \\
\frac{1}{\pi\sin\left(\frac{2 \pi}{1+\alpha}\right)}\;,\quad \alpha>3
\end{cases}\;.
  \end{align}
 \end{definition}

 \begin{definition}\label{def:Brho}
 For any $\rho>0$, let $B_{\rho} \subset \mc{C}^0([1,+\infty))\oplus \bb{R}^H$ be the set of all
 $X:=(\la,\vecmu)$ of norm less than $\rho$, such that
  \begin{align*}
  & \mu_k \geq 0\;,\quad \forall k\in\{1,\dots,H\}\;, \\
& \la\in\mc{C}^1\left([1,+\infty)\right) \mbox{ and } \| D\la\|_{\infty} < \tilde{\rho}:= 2 \left( \Psi(\alpha,0,1)  +  \frac{ L_{\alpha}}{G_{\alpha}} \right)\rho\;,
  \end{align*}
where $\Psi(\alpha,s,n)$ is as per (\ref{eq:Psiasn}) and
\begin{align}\label{eq:Hal}
 L_{\alpha} :=\sup_{x\in\R^+}\left| K'_{\alpha}(x) \right|= 
\begin{cases}
 \frac{1}{\pi}\sin\left(\frac{2 \pi}{1+\alpha}\right)\;,\quad 1<\alpha\leq 3\\
\displaystyle{\frac{-\sin \left(\frac{2 \pi }{\alpha+1}\right) \left(2 \cos\left(\frac{2 \pi  (\alpha+3)}{3 (\alpha+1)}\right)+1\right)}
{\pi\left[2 \cos \left(\frac{2 \pi  (\alpha+3)}{3 (\alpha+1)}\right)+3-4 \cos \left(\frac{2 \pi }{\alpha+1}\right) \cos \left(\frac{\pi  (\alpha+3)}{3 (\alpha+1)}\right)\right]^2}} \;,\quad \alpha>3
\end{cases}\;.
\end{align}
 \end{definition}

 \begin{theorem}\label{thm:contraction}
  There exist positive constants $p_0,C>0$ such that\\
  (1) for all
  $(\omega,p,\rho) \in \Omega_{H,\frac14} \times [p_0,+\infty)\times \left[C p^{-1},\frac14\right]$, the map 
  $\mc{N}$ sends $B_{\rho}$ into itself and it is contractive, so that
  the map $\mc{N}$ has a unique fixed point $X^*:=\left(\la^*(\cdot;\omega,p),\vecmu^*(\omega,p)\right)$.\footnote{If
  we choose $\Omega_{H,t}$ with $0\leq t<\frac12$ instead of 
  $\Omega_{H,\frac14}$, we obtain that the map is contractive on
  $(p,\omega,\rho) \in [p_0,\infty[\times \Omega_{H,t} \times [C p^{-1},\frac12-t]$ }\\
  (2) The unique fixed point $X^*:=\left(\la^*(\cdot;\omega,p),\vecmu^*(\omega,p)\right)$
  fulfils the following estimates
  \begin{align}\label{eq:esol}
   & \|D^n\la^*\|_{\infty} \lesssim_s p^{-1} \;,\quad \forall s  \in \left(-\frac{1+\alpha}{2\alpha},0\right]\;,\quad n\in\{0,1\}\;,\\ \label{eq:musol}
   & \left| \mu_k^*-\hat{\mu}_k\right| \lesssim p^{-2} \;,\quad \forall k= 1 \dots H, \\\label{eq:mu*}
   & \hat{\mu}_k:=\frac{\sqrt{2}\alpha }{p\left(1+\alpha\right)^{\frac32}}\left(\sigma(k)+\frac12-\bar{l}(1)\right).
  \end{align}
  (3) Let 
  \begin{equation}\label{eq:zomegathmcont}
   z(\cdot;\omega,p)=\bar{l}\left(\frac{\cdot}{\omega};\omega,p\right)+
   \la^*\left(\frac{\cdot}{\omega};\omega,p\right)\;, \quad h_k(\omega,p) = \omega \left(1 +\mu_k^*(\omega,p)\right).
  \end{equation}
  The triple $\left(z(\cdot;\omega,p), \underline{h}(\omega,p),\omega\right)$ is a strictly monotone solution of the
  standard IBA equation (\ref{eq:IBAStGen}).
  
  Moreover, for every $\omega_1,\omega_2 \in \Omega_{H,\frac14}$, we have $z(x;\omega_1,p)=z(x;\omega_2,p)$ for
   all $x \geq \max\{\omega_1,\omega_2\}$.  In other words, the solutions $z(x,\omega,p)$ for $\omega \in  \Omega_{H,\frac14}$
   are all pairwise equivalent.\\
 (4) Let $x_k$ with $k \in \bb{Z}_p \setminus \bb{H}$, denote a root of the (unique up to equivalence) solution $z$.
 The following
  asymptotic holds
  \begin{align}\label{eq:xkexpexcited}
\left| x_k(p)\,p^{-\frac{2\alpha}{1+\alpha}} -
A\,\left[ 1+
\left(\frac{\sqrt{2}\alpha}{(1+\alpha)^{\frac32}}\left(k+\frac12\right)p^{-1} \right)\right] \right| \lesssim_k
p^{-2}\;,\quad \forall k\in\bb{Z}_{p}\setminus\bb{H}\;,
\end{align}
 where $A$ is as per (\ref{eq:Adef}).
 \end{theorem}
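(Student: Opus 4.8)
The plan is to solve the perturbed IBA equation (\ref{eq:IBASteGen}) by a Banach fixed point argument for the map $\mc{N}$ of (\ref{eq:N0def})--(\ref{eq:Nkdef}), on the complete space $(B_\rho,\|\cdot\|)$ of Definition \ref{def:Brho} with the norm (\ref{eq:normC0RH}). The structural starting point is the identity obtained by writing $\afr{\bar{l}+\la}=\bar{l}+\la-\ceili{\bar{l}+\la-\frac12}$ and cancelling the $\la$-independent piece $\bb{K}_1[\bar{l}]$, namely
$\mc{N}_0[X]=-\bb{K}_1[\afr{\bar{l}}]+\bigl(\bb{K}_1[\ceili{\bar{l}+\la-\frac12}]-\bb{K}_1[\ceili{\bar{l}-\frac12}]\bigr)-\sum_{k}\bigl[F_\alpha(\tfrac{\xi}{1+\mu_k})-F_\alpha(\xi)\bigr]$.
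The middle bracket is exactly $Q_0[\la]-Q_0[0]$ in the notation of Proposition \ref{prop:differencenonlinear}, taken with reference function $f=\bar{f}_{0,p}$ and shift $b=\bar{l}(1)$, which lies in $Z_{\frac{1+\alpha}{2\alpha},c}$ with controlled $N_{1,2},\Gamma$ by Lemma \ref{cor:smallpertubation}(2). Thus differences $\mc{N}_0[X_1]-\mc{N}_0[X_2]$ are governed by $Q_0[\la_1]-Q_0[\la_2]$ plus smooth $F_\alpha$-terms, while the base point reads $\mc{N}_0[0]=-\bb{K}_1[\afr{\bar{l}}]$; this is what makes the estimates of Sections \ref{sec:convoper}--\ref{sec:osc} directly applicable.

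For part (1) I would establish invariance and contractivity separately. The base point is small: Proposition \ref{prop:oscillatorysingle} applied to $\bar{f}_{0,p}$ gives $\|\mc{N}_0[0]\|_\infty\lesssim p^{-1}$, and since $\mc{N}_k[0]=(\bar{l}'(1))^{-1}(\sigma(k)+\tfrac12-\bar{l}(1))$ with difference quotient $\sim\frac{(1+\alpha)^{3/2}}{\sqrt2\alpha}p$ by Lemma \ref{lem:linearisedground} (equation (\ref{eq:1+deltap})), one gets $\|\mc{N}[0]\|\lesssim p^{-1}$. The contraction estimate is where the norm is tailored: Proposition \ref{prop:differencenonlinear} yields $\|Q_0[\la_1]-Q_0[\la_2]\|_\infty\le(\tfrac{\alpha-1}{\alpha+1}+O(p^{-1}))\|\la_1-\la_2\|_\infty$, whereas the $\vecmu$-dependence of $\mc{N}_0$ contributes, relative to the weighted part $2G_\alpha\sum_k|\mu_k|$ of the norm, a coefficient $\sup_x|K_\alpha(x)|/(2G_\alpha)<\tfrac12$ (finite sup by Lemma \ref{lem:maximumkernel}), and both components of $\mc{N}_k$ are $O(p^{-1})$-Lipschitz because of the prefactor $(\bar{l}'(1))^{-1}=O(p^{-1})$. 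Hence for large $p$ the contraction constant is $L=\max(\tfrac{\alpha-1}{\alpha+1},\tfrac12)+O(p^{-1})<1$, and $\|\mc{N}[0]\|\lesssim p^{-1}\le(1-L)\rho$ for $\rho\ge Cp^{-1}$ with $C$ large gives $\mc{N}(B_\rho)\subseteq B_\rho$. The side constraints are checked using $\hat\mu_k>0$ (which follows from $\sigma(k)>-H$ and $\bar{l}(1)\approx-H$ on $\Omega_{H,\frac14}$) for $\mu_k\ge0$, and the $n=1$ kernel bounds of Lemma \ref{lem:maximumkernel} (the constants $\Psi(\alpha,0,1)$ and $L_\alpha$ defining $\tilde\rho$) for $\|D\la\|_\infty<\tilde\rho$.

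Given the fixed point $X^*$, part (2) follows quantitatively: $\|X^*\|\le\|\mc{N}[0]\|/(1-L)\lesssim p^{-1}$ gives the $L^\infty$ bounds, which I would upgrade to the weighted and first-derivative bounds (\ref{eq:esol}) by feeding $\la^*$ into the resolvent form $\la^*=(\bb{I}-\bb{K}_\omega)^{-1}[g]$ and invoking Proposition \ref{prop:Kzasym} together with Theorem \ref{thm:apriori}(2); the sharp estimate (\ref{eq:musol}) comes from replacing the difference quotient by $\frac{(1+\alpha)^{3/2}}{\sqrt2\alpha}p(1+O(p^{-1}))$ via (\ref{eq:1+deltap}) and using $\la^*(1+\mu_k^*)=O(p^{-1})$. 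For part (3), strict monotonicity holds because $D\bar{l}\gtrsim p\,\xi^{\frac{1+\alpha}{2\alpha}}$ (Lemma \ref{lem:linearisedground}(2i)) dominates $D\la^*=O(p^{-1})$, so the triple solves the standard IBA by Proposition \ref{prop:strongz}; equivalence across $\omega\in\Omega_{H,\frac14}$ I would get by extending each solution to a BAE-solution and exploiting that the identity (\ref{eq:Iidentity}) of Proposition \ref{prop:intzI} holds for every admissible set, so restriction to $[\max(\omega_1,\omega_2),\infty)$ produces a small-$\la$ solution for the larger endpoint, forced by uniqueness to coincide with the constructed one. Finally part (4) follows by solving $z(x_k)=k+\frac12$ in rescaled variables with $\omega=\omega_H$: the root $\xi_k^*=x_k/\omega_H$ equals the linear root of Lemma \ref{lem:linearisedground}(3) up to $O(p^{-2})$, and multiplying by $\omega_H=A(p-\tfrac{H}{\sqrt{2(1+\alpha)}})^{\frac{2\alpha}{1+\alpha}}$ from (\ref{eq:omegagroundlim}) makes the $H$-dependent shifts cancel, leaving precisely (\ref{eq:xkexpexcited}).

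The main obstacle I anticipate is the contraction estimate, specifically keeping the coupling between the function unknown $\la$ and the finitely many hole positions $\vecmu$ subcritical. The diagonal $\la$-block is tamed by $\tfrac{\alpha-1}{\alpha+1}<1$ and the $\vecmu$-block is $O(p^{-1})$, but the off-diagonal block feeding $\vecmu$ into $\mc{N}_0$ through the terms $F_\alpha(\tfrac{\xi}{1+\mu_k})$ is only $O(1)$; it is precisely the weight $2G_\alpha$ in (\ref{eq:normC0RH}), measured against $\sup_x|K_\alpha(x)|$, that renders this block strictly contractive. Making all constants line up uniformly in $(\omega,p)\in\Omega_{H,\frac14}\times[p_0,\infty)$ while simultaneously preserving the auxiliary constraint $\|D\la\|_\infty<\tilde\rho$ (which needs the $n=1$ oscillatory estimate) is the delicate bookkeeping at the heart of the argument.
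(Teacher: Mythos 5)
Your proposal is correct and follows essentially the same route as the paper's proof: the same decomposition of $\mc{N}_0$ into the base oscillatory term $-\bb{K}_1[\afr{\bar{l}}]$, the $Q_0$-difference from Proposition \ref{prop:differencenonlinear}, and the Lipschitz $F_\alpha$-terms, yielding the contraction constant $\max\left\{\frac{\alpha-1}{\alpha+1},\frac12\right\}+O(p^{-1})$ in the weighted norm, followed by the same resolvent upgrade for (\ref{eq:esol}), the same difference-quotient refinement for (\ref{eq:musol}), dominance of $D\bar{l}\gtrsim p$ for monotonicity, uniqueness-transport for equivalence across $\Omega_{H,\frac14}$, and the $\omega_H$--(\ref{eq:1+deltap}) computation for (4). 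The only slip is cosmetic: the off-diagonal Lipschitz constant of $\mc{N}_0$ in $\vecmu$ is $G_\alpha=\sup_x|F'_\alpha(x)|$ by Lemma \ref{lem:Fa-Fb}, not $\sup_x|K_\alpha(x)|$, so the coefficient relative to the weight $2G_\alpha$ is exactly $\frac12$ rather than strictly below it, which is still absorbed into the contraction constant.
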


 In order to prove the Theorem we need 3 preparatory Lemmas.
\begin{lemma}\label{lem:Fa-Fb}
 Let $F_\alpha\in L^\infty\left([1,+\infty)\right)$ defined as per (\ref{eq:Ffunza}). For all $a,b \in \bb{R}^+$\\
 \begin{align}\label{eq:Fa-Fb}
  & \left\|F_{\alpha}\left(\frac{x}{a}\right)-F_{\alpha}\left(\frac{x}{b}\right)\right\|_{\infty,-1} \leq \left| a-b \right| G_{\alpha}\;,\\ \label{eq:Ka-Kb}
  & \left\|D_x F_{\alpha}\left(\frac{x}{a}\right)-D_xF_{\alpha}\left(\frac{x}{b}\right)\right\|_{\infty,-1} \leq \left| a-b \right| L_{\alpha}\;,
 \end{align}
 where $G_{\alpha},L_{\alpha}$ are as per (\ref{eq:Gal},\ref{eq:Hal}) and $\|\cdot\|_{\infty,-1}$ is the norm of $L^\infty_{-1}\left([1,+\infty)\right)$.
\begin{proof}

From the definition (\ref{eq:Ffunza}) of $F_{\alpha}$, it is easy to show that
\begin{equation*}
 F_{\alpha}\left(\frac{x}{a}\right)-F_{\alpha}\left(\frac{x}{b}\right) =  F_{\alpha}\left(\frac{b}{x}\right)-F_{\alpha}\left(\frac{a}{x}\right)\;.
\end{equation*}
Consequently, using the mean value Theorem, we get
\begin{multline*}
x \left| F_{\alpha}\left(\frac{x}{a}\right)-F_{\alpha}\left(\frac{x}{b}\right) \right| = x \left| F_{\alpha}\left(\frac{a}{x}\right)-F_{\alpha}\left(\frac{b}{x}\right) \right| \leq x \left| \frac{a}{x} - \frac{b}{x} \right|
\sup_{t\in\left[\frac{a}{x},\frac{b}{x}\right]} \left|F'_{\alpha}(t)\right| \\\leq \left| a-b \right|\sup_{t\in\R^+} \left|F'_{\alpha}(t)\right|=\left|a-b\right|G_\alpha\;.
\end{multline*}
An analogous computation yields
\begin{equation*}
 \left|D_x F_{\alpha}\left(\frac{x}{a}\right)-D_xF_{\alpha}\left(\frac{x}{b}\right)\right| \leq |a-b| \sup_{x\in\R^+} | K'_{\alpha}(x)|=|a-b|L_{\alpha}\;.
\end{equation*}
\end{proof}

\end{lemma}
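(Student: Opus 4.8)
The plan is to exploit the reflection symmetry of the kernel, equation (\ref{eq:Ksymm}), in order to turn the growing weight $x$ built into the norm $\|\cdot\|_{\infty,-1}$ into a decaying factor $x^{-1}$, after which both bounds reduce to the mean value theorem together with the suprema $G_\alpha$ and $L_\alpha$ recorded in (\ref{eq:Gal}) and (\ref{eq:Hal}). Recall that $\|f\|_{\infty,-1}=\sup_{x\geq1}|x\,f(x)|$, so the content of (\ref{eq:Fa-Fb}) and (\ref{eq:Ka-Kb}) is really an $O(x^{-1})$ decay of the two differences, with the implicit constant linear in the separation $|a-b|$.

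First I would establish the functional identity $F_\alpha(x)+F_\alpha(1/x)=\tfrac{\alpha-1}{1+\alpha}$ for all $x>0$. Since $K_\alpha=D_xF_\alpha$ by (\ref{eq:Kfunzal}), one has $F_\alpha'(x)=K_\alpha(x)/x$, so the derivative of $x\mapsto F_\alpha(x)+F_\alpha(1/x)$ equals $x^{-1}\bigl(K_\alpha(x)-K_\alpha(1/x)\bigr)$, which vanishes identically by (\ref{eq:Ksymm}); the value of the constant is then read off by letting $x\to+\infty$ and using $\lim_{x\to+\infty}F_\alpha(x)=\tfrac{\alpha-1}{1+\alpha}$ and $F_\alpha(0)=0$ from (\ref{eq:Ffunza}). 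This identity yields the crucial rewriting $F_\alpha(x/a)-F_\alpha(x/b)=F_\alpha(b/x)-F_\alpha(a/x)$, in which the arguments now carry $x$ in the denominator. Applying the mean value theorem to the right-hand side gives $|F_\alpha(b/x)-F_\alpha(a/x)|\le \tfrac{|a-b|}{x}\sup_{t>0}|F_\alpha'(t)|=\tfrac{|a-b|}{x}G_\alpha$, and multiplying by $x$ and taking the supremum over $x\ge1$ proves (\ref{eq:Fa-Fb}).

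For (\ref{eq:Ka-Kb}) I would proceed identically. The chain rule for the Euler operator gives $D_xF_\alpha(x/c)=K_\alpha(x/c)$ for any $c>0$, so by (\ref{eq:Ksymm}) the difference $D_xF_\alpha(x/a)-D_xF_\alpha(x/b)$ equals $K_\alpha(a/x)-K_\alpha(b/x)$; the mean value theorem then bounds it by $\tfrac{|a-b|}{x}\sup_{t>0}|K_\alpha'(t)|=\tfrac{|a-b|}{x}L_\alpha$, and multiplication by $x$ finishes the estimate.

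The only genuinely non-routine point is the reflection step: it is what converts the unfavourable weight $x$ into the factor $x^{-1}$ produced by the mean value bound, exactly matching the weight in $\|\cdot\|_{\infty,-1}$. Everything else — the explicit values of the suprema $G_\alpha$ and $L_\alpha$ (the latter requiring the critical-point analysis already summarised in (\ref{eq:Hal})) and the elementary mean value estimates — is assumed or immediate, so I do not expect any serious obstacle beyond carefully tracking on which variable the reflection $x\mapsto 1/x$ acts.
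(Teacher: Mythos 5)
Your proposal is correct and follows essentially the same route as the paper: reflect the arguments via the kernel symmetry (\ref{eq:Ksymm}) so that $F_{\alpha}\left(\frac{x}{a}\right)-F_{\alpha}\left(\frac{x}{b}\right)=F_{\alpha}\left(\frac{b}{x}\right)-F_{\alpha}\left(\frac{a}{x}\right)$ (and $D_xF_{\alpha}\left(\frac{x}{a}\right)=K_{\alpha}\left(\frac{a}{x}\right)$), then apply the mean value theorem so the factor $\frac{|a-b|}{x}$ cancels the weight in $\|\cdot\|_{\infty,-1}$. Your derivation of the identity $F_{\alpha}(x)+F_{\alpha}\left(\frac{1}{x}\right)=\frac{\alpha-1}{1+\alpha}$ simply makes explicit the step the paper dismisses as ``easy to show''.
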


\begin{lemma}\label{lem:incrration}
 Let $\vecmu,\vectmu\in\bb{R}^H$ such that $\mu_k,\tilde{\mu}_k\in[0,c]$ for all $k\in\{1,\dots,H\}$, for some $c>0$.
 There exists a $p_0>0$ such that for all $(\omega,p)\in\Omega_{H,\frac14}\times[p_0,+\infty)$ 
 \begin{align} \nonumber
& \left|\left(\frac{\bar{l}(1+\tilde\mu_k)-\bar{l}(1)}{\tilde\mu_k} \right)^{-1}  -
\left(\frac{\bar{l}(1+\mu_k)-\bar{l}(1)}{\mu_k} \right)^{-1} \right| \lesssim p^{-1} |\tilde\mu_k-\mu_k|\;, \\\label{eq:incrration1}
&  \left|\left(\frac{\bar{l}(1+\mu_k)-\bar{l}(1)}{\mu_k} \right)^{-1}\right| \lesssim p^{-1}\;,\quad \forall k\in\{1,\dots,H\}\;.
\end{align}
\begin{proof}
Let
$a=\left(\inf_{\xi \in [1,1+c]} | \bar{l}'(\xi) |\right)^{-1}$ and $b=\sup_{\xi \in [1,1+c]} | \bar{l}''(\xi) |$.

The mean-value Theorem yields the following estimates
\begin{align*}
& \left|\left(\frac{\bar{l}(1+\mu_k)-\bar{l}(1)}{\mu_k} \right)^{-1}\right| \leq a\;,\\
& \left|\left(\frac{\bar{l}(1+\tilde\mu_k)-\bar{l}(1)}{\tilde\mu_k} \right)^{-1}  -
\left(\frac{\bar{l}(1+\mu_k)-\bar{l}(1)}{\mu_k} \right)^{-1} \right| \leq   
a \left(1-\left(1+ ab |\mu_k-\tilde\mu_k|\right)^{-1}\right)\;.
\end{align*}

After Lemma \ref{cor:smallpertubation}, we have that $a\lesssim p^{-1}$ and
$b\lesssim p$, whence the thesis is proven.
\end{proof}

\end{lemma}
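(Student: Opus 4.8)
The plan is to regard the two quantities in (\ref{eq:incrration1}) as a value and an increment of the single real function
\[
 \Delta(\mu):=\frac{\bar{l}(1+\mu)-\bar{l}(1)}{\mu}\,,\qquad \Delta(0):=\bar{l}'(1)\,,\qquad \mu\in[0,c]\,,
\]
and to control it through the two quantities $a=\big(\inf_{\xi\in[1,1+c]}|\bar{l}'(\xi)|\big)^{-1}$ and $b=\sup_{\xi\in[1,1+c]}|\bar{l}''(\xi)|$. First I would record two elementary consequences of the mean value Theorem. By the mean value Theorem $\Delta(\mu)=\bar{l}'(1+\theta\mu)$ for some $\theta\in(0,1)$, whence $|\Delta(\mu)|\ge a^{-1}$ and so $|\Delta(\mu)^{-1}|\le a$. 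Moreover, writing $\Delta(\mu)=\int_0^1\bar{l}'(1+t\mu)\,dt$ and estimating the integrand again by the mean value Theorem shows that $\mu\mapsto\Delta(\mu)$ is Lipschitz on $[0,c]$, with $|\Delta(\tilde\mu)-\Delta(\mu)|\le b\,|\tilde\mu-\mu|$.

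The heart of the matter is then to pin down the sizes of $a$ and $b$, uniformly in $\omega\in\Omega_{H,\frac14}$, as $p\to+\infty$. For $a$ I would invoke the strong strict monotonicity estimate (\ref{eq:barlpunif}) of Lemma \ref{lem:linearisedground}: since $\inf_{\xi\ge 1}\xi^{-\frac{1+\alpha}{2\alpha}}D_\xi\bar{l}(\xi)\gtrsim p$ and $\bar{l}'(\xi)=\xi^{\frac{1+\alpha}{2\alpha}-1}\big(\xi^{-\frac{1+\alpha}{2\alpha}}D_\xi\bar{l}(\xi)\big)$, with the prefactor $\xi^{\frac{1+\alpha}{2\alpha}-1}$ bounded below by a positive constant on the compact interval $[1,1+c]$, one gets $\inf_{[1,1+c]}|\bar{l}'|\gtrsim p$, i.e. $a\lesssim p^{-1}$. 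For $b$ I would use the explicit decomposition $\bar{l}(\xi;\omega,p)=p\,\fun(\xi)+\frac{\sqrt{2}\alpha}{(1+\alpha)^{3/2}}\,\bar{l}(1;\omega,p)\,D_\xi\fun(\xi)$ established in the proof of Lemma \ref{lem:linearisedground}(2i). Differentiating twice and using that $\bar{l}(1;\omega,p)$ is bounded on $\Omega_{H,\frac14}$ (by the very definition of that set), while $\fun$, and hence $D_\xi\fun$, is smooth on $[1,+\infty)$ by Proposition \ref{prop:tauJ}(2), one finds that $|\bar{l}''(\xi)|$ is bounded by $p$ times a constant uniformly on $[1,1+c]$, that is $b\lesssim p$.

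Finally I would assemble the pieces. The bound (\ref{eq:incrration1}) is immediate: $|\Delta(\mu_k)^{-1}|\le a\lesssim p^{-1}$. For the remaining (difference) estimate I would write
\[
 \Delta(\tilde\mu_k)^{-1}-\Delta(\mu_k)^{-1}=\frac{\Delta(\mu_k)-\Delta(\tilde\mu_k)}{\Delta(\mu_k)\,\Delta(\tilde\mu_k)}\,,
\]
bound the numerator by $b\,|\tilde\mu_k-\mu_k|$ and the denominator from below by $a^{-2}$, so that $|\Delta(\tilde\mu_k)^{-1}-\Delta(\mu_k)^{-1}|\le a^2 b\,|\tilde\mu_k-\mu_k|\lesssim p^{-2}\cdot p\,|\tilde\mu_k-\mu_k|=p^{-1}\,|\tilde\mu_k-\mu_k|$, which is the claim; the threshold $p_0$ is then taken large enough that all the $\lesssim$ bounds above hold simultaneously. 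Equivalently, one may avoid the integral representation of $\Delta$ and bound the increment directly by $a\big(1-(1+ab\,|\tilde\mu_k-\mu_k|)^{-1}\big)$, which does not exceed $a^2b\,|\tilde\mu_k-\mu_k|$. I expect the only genuinely non-routine step to be extracting the sharp orders $a\lesssim p^{-1}$ and $b\lesssim p$ from the earlier description of $\bar{l}$; once these are secured, the statement follows from the two mean value estimates alone.
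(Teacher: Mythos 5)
Your proposal is correct and follows essentially the same route as the paper: the same two quantities $a=\bigl(\inf_{\xi\in[1,1+c]}|\bar{l}'(\xi)|\bigr)^{-1}$ and $b=\sup_{\xi\in[1,1+c]}|\bar{l}''(\xi)|$, the same mean-value estimates (your bound $a^2b\,|\tilde\mu_k-\mu_k|$ dominates the paper's $a\bigl(1-(1+ab\,|\tilde\mu_k-\mu_k|)^{-1}\bigr)$, as you note yourself), and the same conclusion from $a\lesssim p^{-1}$ and $b\lesssim p$. The only difference is in sourcing these last two bounds: the paper cites Lemma \ref{cor:smallpertubation}, whereas you derive $a\lesssim p^{-1}$ from (\ref{eq:barlpunif}) and $b\lesssim p$ from the explicit decomposition $\bar{l}=p\,\fun+\frac{\sqrt{2}\alpha}{(1+\alpha)^{3/2}}\,\bar{l}(1)\,D_\xi\fun$ together with the smoothness of $\fun$ on $[1,+\infty)$ (Proposition \ref{prop:tauJ}) and the boundedness of $\bar{l}(1)$ on $\Omega_{H,\frac14}$ — an equally valid and, if anything, more explicit justification of the same facts.
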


\begin{lemma}\label{lem:nlestimates}
Let $X,\widetilde{X}\in\mc{C}^0([1,+\infty))\oplus \R^H$ with
$X:=(\la,\vecmu)$ and $\widetilde{X}:=(\tilde\la,\vectmu)$. There exist $p_0,C >0$ such that for any
$(\omega,p,\rho)\in\Omega_{H,\frac14}\times[p_0,+\infty)\times \left[C p^{-1},\frac14\right)$ we have
\begin{align}\label{eq:nltobep1}
 &\text{(1)} \quad\|\mc{N}_0[(0,\underline{0})]\|_\infty \lesssim p^{-1}\;, \\ \nonumber
 &\text{(2)}\quad \left| \left\|\mc{N}_0[X]-\mc{N}_0[\widetilde{X}]\right\|_{\infty}-
 \frac{\alpha-1}{\alpha+1} \| \la-\tilde{\la} \|_{\infty} - G_{\alpha} \sum_{k=1}^H | \mu_k-\tilde\mu_k | \right|
 \lesssim p^{-1} \| \la-\tilde{\la} \|_{\infty}  \;,\\ \label{eq:nltobep2}
 &\\ \nonumber
 &\text{(3)}\quad \left|\mc{N}_k[X]-\mc{N}_k[\widetilde{X}]\right| \lesssim p^{-1} \left[
   \| \la-\tilde{\la}\|_{\infty} +\left(\rho +\tilde{\rho}+ \sigma(H)+H+\frac14\right)|\mu_k -\tilde\mu_k| \right]\;,\\
   \label{eq:nltobep2bis}
 &\\
   \label{eq:Nk0mu*}
&\text{(4)} \quad \left| \mc{N}_k[(0,\underline{\hat\mu})]-\hat\mu_k \right| \lesssim p^{-2}\;,\quad \forall k\in\{1,\dots,H\}\;,
\\ \label{eq:nltobep3}
 &\text{(5)} \quad  \| D\mc{N}_0[X] \|_{\infty} < \left( 2 \Psi(\alpha,0,1) + 
 \frac{L_{\alpha}}{G_{\alpha}}\right) \rho :=\tilde{\rho},
 \end{align}
 where $\Psi(\alpha,0,n)$ is as per (\ref{eq:Psiasn}).\\
\begin{proof}
This Lemma is a direct application of Propositions \ref{prop:oscillatorysingle} and \ref{prop:differencenonlinear}, and of Lemma
\ref{cor:smallpertubation}, i.e. of all the theory
that we have developed in this paper.
Recall that due to Lemma \ref{cor:smallpertubation}, there exists a $p_0>0$ such that
for all $(\omega,p)\in \Omega_{H,\frac14}\times [p_0,+\infty)$ we have
$p^{-1}\left(\bar{l}-\bar{l}(1)\right) \in Z_{\frac{1+\alpha}{2\alpha},c}$ with $c=A^{\frac{1+\alpha}{2\alpha}}\left(1+\frac{\bar{l}(1)}{p\sqrt{2\left(1+\alpha\right)}}\right)$ and $A$ as per
(\ref{eq:Adef}). Moreover, the quantities $N_1$, $N_2$ and $\Gamma$ defined as per (\ref{eq:NiGamma}) can be chosen independently of $(\omega,p)$, namely
 \begin{equation*}
N_1\left(p^{-1}\left(\bar{l}-\bar{l}(1)\right)\right) \lesssim 1\;,\quad N_2\left(p^{-1}\left(\bar{l}-\bar{l}(1)\right)\right) \lesssim 1 \;,\quad \Gamma\left(p^{-1}\left(\bar{l}-\bar{l}(1)\right)\right) \gtrsim 1 \;.
 \end{equation*}
(1) Equation (\ref{eq:nltobep1}) follows directly from Proposition \ref{prop:oscillatorysingle}.\\
(2) Simple arithmetic yields 
\begin{multline*}
\mc{N}_0[X](\xi)-\mc{N}_0[\widetilde{X}](\xi) =\bb{K}_1\left[\ceili{\bar{l}+\la}\right](\xi)-\bb{K}_1\left[\ceili{\bar{l}+\tilde{\la}}\right](\xi) \\
+ \sum_{k=1}^H \left[ F_{\alpha}\left(\frac{\xi}{1+\tilde\mu_k}\right) - F_{\alpha}\left(\frac{\xi}{1+\mu_k}\right)\right]\;.
\end{multline*}
The norm of the right hand-side of the above identity can be estimated using Proposition \ref{prop:differencenonlinear} and (\ref{eq:Fa-Fb}).
In fact, due to Proposition \ref{prop:differencenonlinear},
\begin{equation*}
 \left\|\bb{K}_1\left[\ceili{\bar{l}+\frac12+\la}\right]-\bb{K}_1\left[\ceili{\bar{l}+\frac12+\tilde{\la}}\right] \right\|_{\infty}
 \leq \left(\Phi(\alpha,0) +\frac{C}{p}\right)\| \la-\tilde{\la} \|_{\infty}\;.
\end{equation*}
Moreover, due to (\ref{eq:Fa-Fb}),
\begin{equation*}
 \left\|\sum_{k=1}^H \left[F_{\alpha}\left(\frac{\cdot}{1+\tilde\mu_k}\right) - F_{\alpha}\left(\frac{\cdot}{1+\mu_k}\right) \right]\right\|_{\infty}
 \leq G_{\alpha}\sum_{k=1}^H|\mu_k-\tilde\mu_k|.
\end{equation*}
Combining the latter two estimates and using the fact that $\Phi(\alpha,0)=\frac{\alpha-1}{\alpha+1}$ (see (\ref{eq:completeintegral})) we obtain (\ref{eq:nltobep2}).\\
(3) We begin with the decomposition
  \begin{multline*}
    \mc{N}_k[X]-\mc{N}_k[\widetilde{X}]=\left(\frac{\bar{l}(1+\tilde\mu_k)-\bar{l}(1)}{\tilde\mu_k} \right)^{-1}
   \left( \tilde{\la}(1+\tilde\mu_k)-\la(1+\tilde\mu_k)\right)
   \\ 
    +\left(\frac{\bar{l}(1+\tilde\mu_k)-\bar{l}(1)}{\tilde\mu_k} \right)^{-1}  \left(\la(1+\tilde\mu_k)-\la(1+\mu_k)\right)  \\
    + \left[\left(\frac{\bar{l}(1+\tilde\mu_k)-\bar{l}(1)}{\tilde\mu_k} \right)^{-1}  -
\left(\frac{\bar{l}(1+\mu_k)-\bar{l}(1)}{\mu_k} \right)^{-1} \right] \left( \la(1+\mu_k) +\bar{l}(1) - \sigma(k)-\frac{1}{2}\right).
  \end{multline*}
  We also notice that
  \begin{equation}\nonumber
   \left|\la(1+\tilde\mu_k)-\la(1+\mu_k)\right|\leq  \left|\mu_k-\tilde{\mu}_k\right| \sup_{\xi\in[1,+\infty)}\la'(\xi)\leq
  \tilde{\rho} \left|\mu_k-\tilde{\mu}_k\right|
  \end{equation}
Combining the two latter estimates and using Lemma \ref{lem:incrration}, we obtain equation (\ref{eq:nltobep2bis}).\\
(4)
Equation (\ref{eq:Nk0mu*}) follows directly from (\ref{eq:1+deltap}) of Lemma \ref{lem:linearisedground}.\\
(5) 
Simple arithmetic yields
\begin{multline*}
D_\xi\mc{N}_0[X](\xi)-D_\xi\mc{N}_0[(0,\underline{0})](\xi) =\bb{K}_1^{(1)}\left[\ceili{\bar{l}+\la}\right](\xi)-\bb{K}_1^{(1)}\left[\ceili{\bar{l}}\right](\xi) \\
+ \sum_{k=1}^H \left[ F_{\alpha}\left(\xi\right) - F_{\alpha}\left(\frac{\xi}{1+\mu_k}\right)\right]\;,
\end{multline*}
whence, after Proposition \ref{prop:differencenonlinear} and Lemma \ref{lem:Fa-Fb},
it follows that there exist $C',p_0>0$ such that, for all $p\geq p_0'$
\begin{multline*}
\|D\mc{N}_0[X] -D\mc{N}_0[(0,\underline{0})]\|_{\infty}\leq \sum_{k=1}^H\left\|D F_\alpha\left(\frac{\cdot}{1+\mu_k}\right)-D F_\alpha\right\|_\infty\\
+\left\|\bb{K}_1^{(1)}\left[\ceili{\bar{l}+\la}\right]-
\bb{K}_1^{(1)}\left[\ceili{\bar{l}}\right]\right\|_\infty
\leq \left(\Psi(\alpha,0,1) + \frac{C'}{p}\right)\| \la \|_{\infty} + L_\alpha\sum_{k=1}^H|\mu_k|\;.
\end{multline*}
Moreover, from Proposition \ref{prop:oscillatorysingle}, it follows that 
 \begin{equation*}
\|D \mc{N}_0[(0,\underline{0})]\|_{\infty} \lesssim p^{-1} 
\end{equation*}
Hence, using the triangle inequality, we deduce that there exists a $C'>0$ such that
\begin{equation*}
   \|D\mc{N}_0[X]\|_{\infty} < \rho \left(  \Psi(\alpha,0,1) +   \frac{L_{\alpha}}{G_{\alpha}}\right)+
   C'\,p^{-1},
\end{equation*}
from which the thesis follows.

\end{proof}

\end{lemma}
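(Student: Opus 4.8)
The plan is to obtain all five estimates as corollaries of the three analytical results already proved---Proposition \ref{prop:oscillatorysingle} for a single oscillatory integral, Proposition \ref{prop:differencenonlinear} for differences of rounded nonlinear integrals, and Lemma \ref{cor:smallpertubation} for membership of the rescaled linear solution in $Z_{\delta,c}$---together with the elementary Lipschitz bounds of Lemma \ref{lem:Fa-Fb} and the difference-quotient bounds of Lemma \ref{lem:incrration}. The essential preliminary is Lemma \ref{cor:smallpertubation}(2): for $(\omega,p)\in\Omega_{H,\frac14}\times[p_0,+\infty)$ the function $p^{-1}\bigl(\bar{l}-\bar{l}(1)\bigr)$ belongs to $Z_{\frac{1+\alpha}{2\alpha},c}$ with $c=A^{\frac{1+\alpha}{2\alpha}}\bigl(1+O(p^{-1})\bigr)$ and with constants $N_1,N_2\lesssim 1$, $\Gamma\gtrsim 1$ \emph{uniformly} in $(\omega,p)$. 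This uniformity is exactly what converts the $p$-dependent right-hand sides of Propositions \ref{prop:oscillatorysingle}--\ref{prop:differencenonlinear} into the clean $p^{-1}$ decay recorded in (1)--(5).

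For (1) I would observe that $\mc{N}_0[(0,\underline{0})]=-\bb{K}_1[\afr{\bar{l}}]$, rewrite $\afr{\bar{l}(y)}=\afr{b+pf(y)}$ with $f=p^{-1}(\bar{l}-\bar{l}(1))$ and $b=\bar{l}(1)$, and apply Proposition \ref{prop:oscillatorysingle}; the uniform control of $N_1,N_2,\Gamma,c$ then delivers $\lesssim p^{-1}$. Estimate (2) is the crux. Using $\afr{x}=x-\ceili{x-\frac12}$, the linear contributions $\bb{K}_1[\la-\tilde\la]$ cancel, leaving $\mc{N}_0[X]-\mc{N}_0[\widetilde X]=\bigl(\bb{K}_1[\ceili{\bar{l}+\la-\frac12}]-\bb{K}_1[\ceili{\bar{l}+\tilde\la-\frac12}]\bigr)+\sum_{k=1}^H\bigl[F_\alpha(\frac{\cdot}{1+\tilde\mu_k})-F_\alpha(\frac{\cdot}{1+\mu_k})\bigr]$. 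The rounded part is precisely $Q_0[\la]-Q_0[\tilde\la]$, so Proposition \ref{prop:differencenonlinear} fixes its norm at $\Phi(\alpha,0)\,\|\la-\tilde\la\|_\infty=\frac{\alpha-1}{\alpha+1}\|\la-\tilde\la\|_\infty$ up to an error $\lesssim p^{-1}\|\la-\tilde\la\|_\infty$, while Lemma \ref{lem:Fa-Fb} bounds the $F_\alpha$-sum by $G_\alpha\sum_k|\mu_k-\tilde\mu_k|$; the triangle inequality assembles these into (2).

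For (3) I would split $\mc{N}_k[X]-\mc{N}_k[\widetilde X]$ into the three natural pieces isolating, in turn, the variation of $\la$ at the fixed point $1+\tilde\mu_k$, the shift of the evaluation point $1+\mu_k\mapsto 1+\tilde\mu_k$ (controlled by $\|D\la\|_\infty<\tilde\rho$, valid on $B_\rho$), and the variation of the inverse difference quotient; Lemma \ref{lem:incrration} supplies the $\lesssim p^{-1}$ prefactor for the last two, and the bracketed factor $\rho+\tilde\rho+\sigma(H)+H+\frac14$ records crude bounds on $|\la(1+\mu_k)+\bar{l}(1)-\sigma(k)-\frac12|$. Estimate (4) follows by evaluating $\mc{N}_k$ at $(0,\underline{\hat\mu})$ and invoking the expansion (\ref{eq:1+deltap}) of Lemma \ref{lem:linearisedground}: the difference quotient has leading value $\frac{(1+\alpha)^{3/2}}{\sqrt{2}\,\alpha}$, whose reciprocal is the constant defining $\hat\mu_k$, so that $\mc{N}_k[(0,\underline{\hat\mu})]=\hat\mu_k(1+O(p^{-1}))=\hat\mu_k+O(p^{-2})$ since $\hat\mu_k=O(p^{-1})$. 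For (5) I would differentiate, reducing to $D_\xi\mc{N}_0[X]-D_\xi\mc{N}_0[(0,\underline{0})]=\bb{K}_1^{(1)}[\ceili{\bar{l}+\la-\frac12}]-\bb{K}_1^{(1)}[\ceili{\bar{l}-\frac12}]+\sum_k[F_\alpha(\cdot)-F_\alpha(\frac{\cdot}{1+\mu_k})]$, apply Proposition \ref{prop:differencenonlinear} with $n=1$ and the derivative bound (\ref{eq:Ka-Kb}), and combine with $\|D\mc{N}_0[(0,\underline{0})]\|_\infty\lesssim p^{-1}$ (Proposition \ref{prop:oscillatorysingle} with $n=1$); for $p\geq p_0$ the $p^{-1}$ remainders are absorbed, yielding the strict inequality with constant $\tilde\rho$.

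The main obstacle is estimate (2): it does not suffice to bound the increment of the nonlinear part of $\mc{N}_0$, one must identify its \emph{exact} leading Lipschitz constant $\Phi(\alpha,0)=\frac{\alpha-1}{\alpha+1}$, since it is the strict inequality $\frac{\alpha-1}{\alpha+1}<1$---holding precisely for $\alpha>1$---that will later force $\mc{N}$ to be a contraction. This is why Proposition \ref{prop:differencenonlinear} is phrased as a two-sided estimate centred at $\Psi(\alpha,0;n)\|\e_1-\e_2\|_\infty$ rather than a mere upper bound, and why the uniform-in-$(\omega,p)$ control of the $Z_{\delta,c}$ constants from Lemma \ref{cor:smallpertubation} is indispensable: any $O(1)$ slack in the error would push the Lipschitz constant past the contraction threshold.
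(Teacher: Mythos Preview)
Your proposal is correct and follows essentially the same approach as the paper: the same preliminary invocation of Lemma \ref{cor:smallpertubation}(2) to place $p^{-1}(\bar l-\bar l(1))$ in $Z_{\frac{1+\alpha}{2\alpha},c}$ with uniform constants, the same decompositions for each of the five items, and the same applications of Propositions \ref{prop:oscillatorysingle}--\ref{prop:differencenonlinear}, Lemma \ref{lem:Fa-Fb}, Lemma \ref{lem:incrration}, and equation (\ref{eq:1+deltap}). Your additional remark about why the two-sided form of Proposition \ref{prop:differencenonlinear} is essential for (2)---because the exact leading constant $\frac{\alpha-1}{\alpha+1}<1$ is what enables the later contraction argument---is well observed and matches the paper's implicit logic.
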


The proof of Theorem \ref{thm:contraction} is essentially a corollary of Lemma \ref{lem:nlestimates}.
 \begin{proof}[Proof of Theorem \ref{thm:contraction}]
 (1) After (\ref{eq:nltobep2},\ref{eq:nltobep2bis}), we  have that there exists a $\widetilde{C}>0$ such that
 \begin{equation*}
  \left\|\mc{N}[X]-\mc{N}[\widetilde{X}]\right\|\leq  \left(\frac{\alpha-1}{\alpha+1} + \frac{\widetilde{C}}{p}\right)\| \la-\tilde{\la} \|_{\infty}  + \frac{\widetilde{C}}{p} \left(\rho +\tilde{\rho}\right)
   |  \mu_k -\tilde\mu_k|\;.
 \end{equation*}
 Since $\rho,\tilde{\rho}$ are bounded, it follows that there exists another $\widetilde{C}>0$ such that for all $\rho<\frac12$,
 \begin{equation}\label{eq:NXYth}
\left\|\mc{N}[X]-\mc{N}[\widetilde{X}]\right\|\leq \left( \max\left\{\frac{\alpha-1}{\alpha+1},\frac12\right\}+ \frac{C}{p} \right) \left\|X-\widetilde{X}\right\| \;,\quad\forall X,\widetilde{X}\in B_\rho \;.
 \end{equation}
 Since $\frac{\alpha-1}{\alpha+1}<1$, the map $\mc{N}$ is contractive on $B_{\rho}$ for every $\rho <\frac12$.
 
 We now have to prove that there exists a $C>0$ such that $\mc{N}\left[B_{\rho}\right] \subset B_{\rho}$ holds
 for all $\rho$ such that $Cp^{-1}\leq\rho<\frac14$. The proof goes as follows:
 \begin{itemize}
 \item Using (\ref{eq:nltobep1},\ref{eq:NXYth}), if $\rho<\frac14$ we have
\begin{equation*}
 \|\mc{N}[X]\|\leq \frac{\widetilde{C}}{p}+ \left( \max\left\{\frac{\alpha-1}{\alpha+1},\frac14\right\}+\frac{\widetilde{C}}{p} \right) \| X\| \;.
\end{equation*}
Hence $\|\mc{N}[X]\|<\rho$ for all $X \in B_{\rho}$ if
\begin{equation*}
  \frac{\widetilde{C}}{p}+ \left( \max\left\{\frac{\alpha-1}{\alpha+1},\frac14\right\}+\frac{\widetilde{C}}{p} \right) \rho <\rho \;,
\end{equation*}
It follows that there exists another $C'>0$ such that the above inequality is satisfied whenever $\rho \geq C p^{-1}$.
 \item By construction, we have that $\sigma(k)+\frac12+H \geq \frac12$.
 Moreover, $\bar{l}(1)+H \leq\frac14 $, by definition of $\Omega_{H,\frac{1}{4}}$.
 Finally $\la(1+\mu_k)\leq-\frac14$, by definition of $B_{\frac14}$. Hence 
 $\sigma(k)+\frac12- \bar{l}(1)- \la(1+\mu_k) \geq 0$. Finally $\left(\frac{\bar{l}(1+\mu_k)-\bar{l}(1)}{\mu_k} \right)^{-1}>0$
 for the hypothesis on $p$. It follows that $\mc{N}_k[X]\geq 0$ for all $X \in B_{\rho}$.
  \item Due to (\ref{eq:nltobep3}),
 the inequality $\|D\mc{N}_0[X] \|_{\infty} <\tilde{\rho}$ is satisfied if $\rho \in [C p^{-1},\frac14[$
 for some $C>0$. 
 \end{itemize}
(2) Because of (1) the map $\mc{N}$ has a unique fixed point $X^*=(\la^*(\cdot;\omega,p),\vecmu^*(\omega,p))$ for all
$X\in B_{\rho}$, with $ Cp^{-1}\leq \rho <\frac12$, and this belongs to the intersection of all these sets which coincide
with $B_{Cp^{-1}}$.

It means that $\|\la^*(\cdot;\omega,p)\|_{\infty}$, $\| D\la^*(\cdot;\omega,p)\|_{\infty}$ and $|\mu_k^*|$ are $O(p^{-1})$.

In order to prove equation (\ref{eq:esol}), we notice that 
$\la(\xi,\omega)$ satisfies the equation (\ref{eq:IBASte})
\begin{align}
 &\la(\xi)-\bb{K}_1[\la](\xi)= g(\xi)\;,\notag\\
 &g(\xi)=- \bb{K}_1\left[\afr{\bar{l}+\la}\right](\xi) -
  \sum_{k=1}^{H} \left( F_\alpha\left(\frac{\xi}{1+\mu_k}\right)- F_\alpha(\xi)\right) .
  \label{eq:pertIBAproof}
\end{align}
Using the estimate on $\|\bb{K}_1[\bar{l}+\la]\|_{\infty,-\frac{1+\alpha}{2\alpha}}$ in Proposition
\ref{prop:oscillatorysingle} and the estimate on $\left\|F_\alpha(\cdot)-F_\alpha\left(\frac{\cdot}{1+\mu_k}\right)\right\|_{\infty,-1}$
in Lemma \ref{lem:Fa-Fb}, we have that the forcing term $g$ satisfies the estimate
\begin{equation*}
    \|g\|_{\infty,-\frac{1+\alpha}{2\alpha}} \lesssim p^{-1}\;,
\end{equation*}
Now (\ref{eq:esol}) with $n=0$ follows from the fact that 
the operator $\bb{I}-\bb{K}_1$ is invertible on $L^{\infty}_{s}\left([1,+\infty)\right)$ with $-\frac{1+\alpha}{2\alpha}<s\leq 0$, as per Proposition
\ref{prop:Kzasym} (4). The case $n=1$ is proven applying the Euler operator $D$ to both side of the equality (\ref{eq:pertIBAproof}) and
following the same considerations.

To prove (\ref{eq:musol}),
we use (\ref{eq:nltobep2bis},\ref{eq:Nk0mu*}):
 \begin{multline*}
   \left|\mu_k^*-\hat\mu_k\right|\leq
   \left|\mc{N}_k[(\la,\vecmu^*)]-\mc{N}_k[(\la,\hat\vecmu)]\right|+ 
   \left|\mc{N}_k[(\la,\hat\vecmu)-\mc{N}_k[(0,\hat\vecmu)]\right|\\ +\left|\mc{N}_k[(0,\hat\vecmu)]-\hat\mu_k \right| \lesssim p^{-2}.
  \end{multline*}

(3) For every $\omega \in \Omega_{H,\frac14}$, $\la^*$ satisfies
 the constraints (\ref{eq:IBAekappa},\ref{eq:IBAeine}). In fact, by hypothesis on $\omega$, $|\bar{l}(1)+H|\leq \frac14$,
the constraint (\ref{eq:IBAekappa}) follows from estimate (\ref{eq:esol}) with $n=0$;
the constraint (\ref{eq:IBAeine}) follows from the estimate (\ref{eq:esol}) with $n=1$, together with Lemma \ref{cor:smallpertubation}.

Since $\la^*$ satisfies
 the constraints (\ref{eq:IBAekappa},\ref{eq:IBAeine}),
 then the triple $\left(z(\cdot;\omega,p), \underline{h}(\omega,p),\omega\right)$ is a strictly monotone solution of the
standard IBA equation (\ref{eq:IBAStGen}) which satisfies the constraint (\ref{eq:IBAStka}). 

Now let $\omega_1,\omega_2 \in \Omega_{H,\frac14}$ such that $\omega_2\geq \omega_1$ and define
$\tilde{X}(\omega_2):=(\tilde{\la}(\xi),\tilde{\vecmu})$, where
$\tilde{\la}(\xi)=z(\xi \omega_2,\omega_1)-\bar{l}(\xi,\omega_2,p)$
and $\tilde{\mu}_k=\mu_k(\omega_1)+\frac{\omega_1}{\omega_2}-1$. 
By construction $\tilde{X}(\omega_2)$ is a fixed point of the map $\mc{N}$, with $\omega=\omega_2$.
Moreover a simple computation yields that $\|\tilde{X}_{\omega_2}\|_{\infty} \lesssim p^{-1}$.
Since, after 1., the map has a unique small fixed point, then
$\tilde{\la}=\la(\xi,\omega_2,p)$. The thesis follows.\\
(4) Equation (\ref{eq:xkexpexcited}) follows from the expression of $\omega_H$
(\ref{eq:omegagroundlim}) and the equation (\ref{eq:1+deltap}).
 
\end{proof}

\subsection{Uniqueness}

With the same notation of Theorem \ref{thm:contraction}.
\begin{proposition}
\label{prop:uniqueness}
Fix a non-negative integer $H$ and
a strictly increasing function $\sigma:\lbrace 1,\dots, H \rbrace \to \bb{Z}$ such that $\sigma(1)>-H$.

Assume that $(z,\underline{h},\omega^*)$ is a strictly monotone
solution of the standard IBA equation (\ref{eq:IBAStGen}) such that $z(\omega^*)=-H$.
Let  $l(x,\omega^*,p)$ be the unique normalised solution of the linearised IBA equation (\ref{eq:IBAStL}),
studied in Theorem \ref{prop:fproperty}.
Define $ \la(\xi)=z(\xi \omega^*)-l(\xi \omega^*,\omega^*,p)$,
and $\mu_k= \frac{h_k}{\omega^*}-1$.

There exist $p^*,C>0$ such that for all $p\geq p^*$
\begin{equation}
 \|(\la,\vecmu)\| \in B_{C p^{-1}} \mbox{ and } \omega^* \in \Omega_{H,C p^{-1}},
\end{equation}
where the norm $\|\cdot \|$ is as per (\ref{eq:normC0RH}), and $\Omega_{H,C}$ as per 
(\ref{eq:omegaground}).
Whence,
$z$ coincides with the solution of the standard IBA
equation obtained via the contraction Theorem \ref{thm:contraction}.
\begin{proof}
We use the notation $\bar{l}(\xi):=\bar{l}(\xi,\omega^*,p)$.

Since $\la$ satisfies the perturbed IBA equation
(\ref{eq:IBASte}), after Lemma (\ref{lem:apriorie}) it fulfils the a-priori estimate
\begin{align}\label{eq:eDeproof} \tag{a}
 \|D^n\la\|_{\infty} \lesssim_n 1 \;,\quad \forall n\in\bb{N}\;.
\end{align}
Restricting to the case $n=0$ and using the fact that $z(\omega^*)=-H$, the latter estimate
implies that
\begin{equation*}
 \left| \bar{l}(1)+ H \right| \lesssim 1 \;.
\end{equation*}
Therefore by Lemma \ref{lem:linearisedground},
$\omega^* \in \Omega_{H,C}$ for some $C>0$ (independent on $p$).
Hence, after Lemma \ref{cor:smallpertubation} and equation (\ref{eq:eDeproof}),
we have that 
\begin{equation*}
   \bar{f}_{\la,p}:= p^{-1}\left(\bar{l}-\bar{l}(1)+\la-\la(1)\right) \in Z_{\frac{1+\alpha}{2\alpha},c} \;,
   \mbox{ with } \left|c A^{-\frac{1+\alpha}{2\alpha}}-1 \right| \lesssim p^{-1}\;,
\end{equation*}
and 
\begin{equation*}
    N_1\left(\bar{f}_{\la,p}\right),N_2\left(\bar{f}_{\la,p}\right)\lesssim 1\;,\quad \Gamma\left(\bar{f}_{\la,p}\right)\gtrsim 1 \;,
\end{equation*}
In the above equations $Z_{\frac{1+\alpha}{2\alpha},c}$ is as per Definition \ref{def:ZdA}, $A$ as per (\ref{eq:Adef}), and
$N_1$, $N_2$ and $\Gamma$ as per (\ref{eq:NiGamma}). Therefore, after
Lemma \ref{cor:smallpertubation} and Proposition \ref{prop:oscillatorysingle}, we have that
 \begin{align}\label{eq:infderproof} \tag{b}
 & \left|\left(\frac{\bar{l}(1+\mu_k)-\bar{l}(1)}{\mu_k} \right)^{-1}\right|\lesssim p^{-1}\;,\quad \forall k\in\{1,\dots,H\}\;,\\
 \label{eq:oscproof} \tag{c}
 &\| \bb{K}_1[\afr{\bar{l}+\la}] \|_{\infty} \lesssim p^{-1}\;,\quad \| \bb{K}_1^{(1)}[\afr{\bar{l}+\la}]\|_{\infty} \lesssim p^{-1} \;.
\end{align}
Using the equation (\ref{eq:Nkdef}) for $\mu_k$, namely $\mu_k=\mc{N}_k[(\la,\vecmu)]$, and the
estimate (\ref{eq:infderproof}), we obtain that
\begin{equation}\label{eq:muklastproof} \tag{d}
\left|\mu_{k}\right| \lesssim p^{-1} \;,\quad \forall k\in\{1,\dots,H\}\;.
\end{equation}
 The latter estimate, combined with Lemma \ref{lem:Fa-Fb}, implies that
\begin{align}\label{eq:FamFa1proof} \tag{e}
& \left\| F_{\alpha}\left(\frac{\cdot}{1+\mu_k}\right)-
 F_{\alpha}\left(\cdot\right)\right\|_{\infty} \lesssim p^{-1}\;,\\ \label{eq:KamKa1proof} \tag{f}
 & \left\| K_{\alpha}\left(\frac{\cdot}{1+\mu_k}\right)-
 K_{\alpha}\left(\cdot\right)\right\|_{\infty} \lesssim p^{-1} .
\end{align}
Finally, using (\ref{eq:IBASte}), estimates (\ref{eq:oscproof},\ref{eq:FamFa1proof},\ref{eq:KamKa1proof})
 imply that
\begin{equation}\label{eq:edelastproof} \tag{g}
 \|\la\|_{\infty} \lesssim p^{-1}\;,\quad
\|D\la\|_{\infty} \lesssim p^{-1}\;.
\end{equation}
The estimate (\ref{eq:edelastproof}), combined with the relation
$z(\omega^*)=-H$, implies that
$$|\bar{l}(1,\omega^*,p)+H| \lesssim p^{-1} \Longrightarrow \omega^* \in \Omega_{H,C p^{-1}} \mbox{ for some } C>0 ,$$ 
with $C$ independent on $p$.\\
The estimates (\ref{eq:muklastproof}) and (\ref{eq:edelastproof}) imply that
$(\la,\vecmu)\in B_{Cp^{-1}}$ for some $C>0$ (independent on $p$).

The thesis is proven.
\end{proof}
\end{proposition}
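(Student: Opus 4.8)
The plan is a bootstrap argument: the general a-priori estimate only sees the scale $O(1)$, so I would first use it to localise $\omega^*$ and $\vecmu$ on the coarse $O(1)$ scale, and only afterwards feed this information into the oscillatory estimate of Section~\ref{sec:osc} to collapse everything to the sharp scale $O(p^{-1})$.

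First I would apply Lemma~\ref{lem:apriorie} to $\la$, which solves the perturbed equation~(\ref{eq:IBASte}); since $z(\omega^*)=-H$ gives $\ceili{z(\omega^*)-\frac12}=-H$, the lemma applies and yields $\|D^n\la\|_\infty\lesssim_n 1$ for every $n$. Evaluating at $\xi=1$ and using $\la(1)=z(\omega^*)-\bar{l}(1)=-H-\bar{l}(1)$, the case $n=0$ forces $|\bar{l}(1)+H|\lesssim 1$, so Lemma~\ref{lem:linearisedground} places $\omega^*\in\Omega_{H,C}$ for a constant $C$ independent of $p$. Next, the matching condition~(\ref{eq:IBAStmu}) reads $\bar{l}(1+\mu_k)=\sigma(k)+\frac12-\la(1+\mu_k)$; since $\sigma(k)$ is fixed and $\|\la\|_\infty\lesssim 1$, the increment $\bar{l}(1+\mu_k)-\bar{l}(1)$ is nonnegative and $O(1)$. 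Comparing this with the lower bound $\bar{l}'(\xi)\gtrsim p\,\xi^{\frac{1-\alpha}{2\alpha}}$ coming from the strong monotonicity~(\ref{eq:barlpunif}) of Lemma~\ref{lem:linearisedground}, integration over $[1,1+\mu_k]$ gives $(1+\mu_k)^{\frac{1+\alpha}{2\alpha}}-1\lesssim p^{-1}$, i.e. $\mu_k\lesssim p^{-1}$, and in particular the inverse difference quotient $(\frac{\bar{l}(1+\mu_k)-\bar{l}(1)}{\mu_k})^{-1}\lesssim p^{-1}$.

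With $\omega^*\in\Omega_{H,C}$ and the $O(1)$ bounds on $\la,D\la,D^2\la$ secured, I would invoke Lemma~\ref{cor:smallpertubation}(2) to conclude that the rescaled profile $\bar{f}_{\la,p}=p^{-1}(\bar{l}-\bar{l}(1)+\la-\la(1))$ belongs to $Z_{\frac{1+\alpha}{2\alpha},c}$ with $|cA^{-\frac{1+\alpha}{2\alpha}}-1|\lesssim p^{-1}$ and, crucially, with $N_1,N_2\lesssim 1$ and $\Gamma\gtrsim 1$ uniformly in $p$. Proposition~\ref{prop:oscillatorysingle} then delivers the decisive gain of one power of $p$: $\|\bb{K}_1[\afr{\bar{l}+\la}]\|_\infty\lesssim p^{-1}$ and likewise for its $D$-derivative. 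Combining $\mu_k\lesssim p^{-1}$ with Lemma~\ref{lem:Fa-Fb} upgrades the hole contributions to $\|F_\alpha(\frac{\cdot}{1+\mu_k})-F_\alpha(\cdot)\|_\infty\lesssim p^{-1}$ and the analogous bound for $K_\alpha$. Finally I would read~(\ref{eq:IBASte}) as $(\bb{I}-\bb{K}_1)[\la]=g$ with $g=-\bb{K}_1[\afr{\bar{l}+\la}]-\sum_k(F_\alpha(\frac{\cdot}{1+\mu_k})-F_\alpha(\cdot))$; the preceding bounds give $\|g\|_{\infty,-\frac{1+\alpha}{2\alpha}}\lesssim p^{-1}$, and inverting $\bb{I}-\bb{K}_1$ on the weighted space via Proposition~\ref{prop:Kzasym}(4) yields $\|\la\|_\infty,\|D\la\|_\infty\lesssim p^{-1}$. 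Re-inserting $\|\la\|_\infty\lesssim p^{-1}$ into $\la(1)=-H-\bar{l}(1)$ tightens the localisation to $\omega^*\in\Omega_{H,Cp^{-1}}$, and together with $\mu_k\lesssim p^{-1}$ this gives $(\la,\vecmu)\in B_{Cp^{-1}}$; since Theorem~\ref{thm:contraction} provides a unique fixed point of $\mc{N}$ in any such ball, $z$ must agree with the solution constructed there.

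The hard part is the unavoidable circularity of the scheme. The $p^{-1}$ gain is produced only by the oscillatory cancellation of Proposition~\ref{prop:oscillatorysingle}, whose hypotheses require the profile to lie in $Z_{\delta,c}$ with the constants $N_1,N_2,\Gamma$ controlled uniformly in $p$; but that uniform control is itself available only after the coarse localisation $\omega^*\in\Omega_{H,C}$ and the $O(1)$ a-priori bounds have been established. The delicate ordering is therefore to first pin $\omega^*$ inside a window of $O(1)$ width, then extract $\mu_k\lesssim p^{-1}$ purely from the strong monotonicity of $\bar{l}$ (whose derivative is of size $p$ near $\xi=1$), and only then let the oscillatory machinery contract the whole picture down to the $O(p^{-1})$ scale. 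A direct attempt to prove smallness from Lemma~\ref{lem:apriorie} alone cannot succeed, since that estimate is insensitive to the large parameter $p$.
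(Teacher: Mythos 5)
Your proposal is correct and follows essentially the same bootstrap as the paper's proof: coarse $O(1)$ bounds from Lemma \ref{lem:apriorie}, coarse localisation $\omega^*\in\Omega_{H,C}$, then the oscillatory gain of Proposition \ref{prop:oscillatorysingle} (enabled by Lemma \ref{cor:smallpertubation}), the $p^{-1}$ bounds on the hole terms via Lemma \ref{lem:Fa-Fb}, resolvent inversion via Proposition \ref{prop:Kzasym} to reach the $O(p^{-1})$ scale, and finally identification with the fixed point of Theorem \ref{thm:contraction}. The only (harmless) variation is that you obtain $|\mu_k|\lesssim p^{-1}$ by integrating the strong-monotonicity lower bound $\bar{l}'(\xi)\gtrsim p\,\xi^{\frac{1-\alpha}{2\alpha}}$ against the matching condition (\ref{eq:IBAStmu}), whereas the paper reads off the same bound from the fixed-point form $\mu_k=\mc{N}_k[(\la,\vecmu)]$ together with the inverse-difference-quotient estimate -- two equivalent uses of Lemma \ref{lem:linearisedground}.
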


The main Theorem of our paper is now a corollary of Theorem \ref{thm:contraction} and Proposition \ref{prop:uniqueness}.
\begin{theorem}\label{cor:main}
Let $N$ be a non-negative integer number and $[\nu]$ a partition of $N$.
 If $p$ is large enough there exists a unique purely real and normalised solution $Q$ of the BAE such that $\bb{H}_Q=\bb{H}_{\snu}$.
 
 Let $\omega_H$ be as per (\ref{eq:omegagroundlim}), $l(x):=l(x;\omega_H,p)$ be the corresponding
 solution of the linearised IBA equation (\ref{eq:IBAStL}) and
 $\hat{x}_k=l^{-1}(k+\frac12)$, with $k\in\bb{Z}_p\setminus\bb{H}_{\snu}$.
  The roots $\{x_k\}_{k\in\bb{Z}_p\setminus\bb{H}_{\snu}}$ of $Q$ fulfil the following uniform estimate
  \begin{align}\label{eq:xkfinalest}
\left| \frac{x_k(p)}{ \hat{x}_k(p)}-1 \right| \lesssim C(k/p)
p^{-2}\;,\quad \forall k\in\bb{Z}_{p}\setminus\bb{}H_{\snu}\;,
\end{align}
where $C:\R^+\to \R^+$ is a bounded function such that $C(x)=O\big(x^{-2+\e}\big)$ as $x \to +\infty$, for all
$\e>0$.
Moreover, fixed $k$, the following non-uniform estimate holds
 \begin{align}\label{eq:xkfinal}
\left| x_k(p) \, p^{-\frac{2\alpha}{1+\alpha}} -
A\left[ 1+
\left(\frac{\sqrt{2}\alpha}{(1+\alpha)^{\frac32}}\left(k+\frac12\right)p^{-1} \right)\right] \right| \lesssim_k
p^{-2}\;,\quad \forall k\in\bb{Z}_{p}\setminus\bb{H}\;.
\end{align}
\end{theorem}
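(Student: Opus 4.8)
The plan is to deduce Theorem~\ref{cor:main} as a corollary of Theorem~\ref{thm:contraction} and Proposition~\ref{prop:uniqueness}, using the bijection of Proposition~\ref{prop:strongz}. First I would fix the partition $[\nu]=(\nu_1,\dots,\nu_H)$ of $N$ and translate it, through the dictionary recorded after Proposition~\ref{prop:strongz}, into the data $(H,\sigma)$ of the standard IBA equation: $H$ is the number of parts and $\sigma:\{1,\dots,H\}\to\bb{Z}$ is the strictly increasing function determined by $\sigma(H+1-l)=\nu_l-l$, which satisfies $\sigma(1)>-H$. For existence, I would apply Theorem~\ref{thm:contraction} with $\omega=\omega_H$, which lies in $\Omega_{H,\frac14}$ since $\bar l(1;\omega_H,p)=-H$ by definition~(\ref{eq:omegagroundlim}): this yields, for $p$ large, a strictly monotone solution $(z(\cdot;\omega_H,p),\underline h,\omega_H)$ of the standard IBA satisfying the constraint~(\ref{eq:IBAStka}), and Proposition~\ref{prop:strongz}(3) converts it into a purely real and normalised solution $Q$ of the BAE whose hole set is exactly $\bb{H}_{\snu}$.

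For uniqueness, I would start from an arbitrary purely real normalised solution $Q$ with $\bb{H}_Q=\bb{H}_{\snu}$, form its associated $z$ via Lemma~\ref{lem:z}, and pick $\omega^*$ with $z(\omega^*)=-H$. By Proposition~\ref{prop:strongz} this is a strictly monotone standard IBA solution with the same $(H,\sigma)$, so Proposition~\ref{prop:uniqueness} places $(\la,\vecmu)\in B_{Cp^{-1}}$ and $\omega^*\in\Omega_{H,Cp^{-1}}\subset\Omega_{H,\frac14}$, whence it coincides with the fixed point of Theorem~\ref{thm:contraction}. Because the solutions produced for different admissible $\omega$ are pairwise equivalent, the functions $z$ they define agree on a tail $[x_0,+\infty)$; since each extends to a real-analytic function on $\R^+$ (Proposition~\ref{prop:strongz}(1)), they agree on all of $\R^+$, and the Hadamard product over the common root set $\{z^{-1}(k+\tfrac12)\}_{k\notin\bb{H}}$ reconstructs one and the same normalised $Q$.

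The non-uniform estimate~(\ref{eq:xkfinal}) requires no further work: it is literally estimate~(\ref{eq:xkexpexcited}) of Theorem~\ref{thm:contraction}(4). For the uniform estimate~(\ref{eq:xkfinalest}), I would set $\xi_k=x_k/\omega_H$ and $\hat\xi_k=\hat x_k/\omega_H$ and use $z=\bar l+\la^*$ after rescaling by $\omega_H$ (per~(\ref{eq:zomegathmcont})) together with $z(x_k)=l(\hat x_k)=k+\tfrac12$ to obtain the exact identity $\bar l(\hat\xi_k)-\bar l(\xi_k)=\la^*(\xi_k)$. Inverting this through the strong monotonicity bound $\xi^{-\frac{1+\alpha}{2\alpha}}D_\xi\bar l\gtrsim p$ from Lemma~\ref{lem:linearisedground}(2i), and feeding in the weighted bound $|\la^*(\xi)|\lesssim_s p^{-1}\xi^{s}$ from~(\ref{eq:esol}) with $s$ chosen just above $-\frac{1+\alpha}{2\alpha}$, gives $\bigl|\xi_k/\hat\xi_k-1\bigr|\lesssim_s p^{-2}\,\xi_k^{\,s-\frac{1+\alpha}{2\alpha}}$. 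Finally I would convert $\xi_k$ to $k/p$ using the leading asymptotics $\bar l(\xi)\sim A^{\frac{1+\alpha}{2\alpha}}p\,\xi^{\frac{1+\alpha}{2\alpha}}$ (equivalently $\omega_H\sim Ap^{\frac{2\alpha}{1+\alpha}}$ and the normalisation of $l$), so that $\xi_k\sim (k/p)^{\frac{2\alpha}{1+\alpha}}/A$ for large $k/p$; this turns the prefactor into a function $C(k/p)$ with $C(x)=O\bigl(x^{-2+\e}\bigr)$, the exponent $-2+\e$ being exactly what the edge value $s=-\frac{1+\alpha}{2\alpha}+\e''$ produces, and $C$ bounded because $\xi_k\geq 1$ and the relevant exponent is negative.

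The analytically hard work is entirely upstream --- the oscillatory-integral bounds of Section~\ref{sec:osc} (Propositions~\ref{prop:oscillatorysingle} and~\ref{prop:differencenonlinear}) and the contraction argument of Theorem~\ref{thm:contraction} --- so for the statement itself the only genuinely delicate point is the \emph{uniformity in $k$} of~(\ref{eq:xkfinalest}). The difficulty is that the per-root asymptotics~(\ref{eq:xkexpexcited}) degrade as $k\to+\infty$, its implicit constant diverging, so one must avoid it and instead exploit the weighted $L^\infty_s$ control of $\la^*$ right up to the critical weight $s=-\frac{1+\alpha}{2\alpha}$, checking that the resulting powers of $\xi_k$ combine into the stated decay of $C$ simultaneously across the whole range of admissible $k$, from $k=O(p)$ near the free boundary to $k\to+\infty$.
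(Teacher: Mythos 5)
Your proposal is correct and follows essentially the same route as the paper's own proof: existence and uniqueness via Proposition~\ref{prop:strongz}, Theorem~\ref{thm:contraction} and Proposition~\ref{prop:uniqueness}; the non-uniform estimate read off from~(\ref{eq:xkexpexcited}); and the uniform estimate obtained, exactly as in the paper, by mean-value inversion of the identity $\bar l(\hat\xi_k)-\bar l(\xi_k)=\la^*(\xi_k)$ using the lower bound $\inf_{\xi\geq1}\bigl(\xi^{-\frac{1+\alpha}{2\alpha}}D_\xi\bar l\bigr)\gtrsim p$ from Lemma~\ref{lem:linearisedground}(2i) together with the weighted bound~(\ref{eq:esol}) at weights $s$ approaching $-\frac{1+\alpha}{2\alpha}$, and then converting $\hat\xi_k^{\frac{1+\alpha}{2\alpha}}\sim\frac{k+\frac12}{p}$ to get $C(x)=O\bigl(x^{-2+\e}\bigr)$. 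No gaps beyond those the paper itself leaves implicit.
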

\begin{proof}
(1) After Proposition \ref{prop:strongz}, solutions of the BAE (\ref{eq:BAEQ}) whose set of holes number has sector $0$ are in bijection with
strictly monotone solutions of the standard IBA equation satisfying the constraint (\ref{eq:IBAStka}).
The existence of a solution of the latter problem was established in Theorem \ref{thm:contraction}, its uniqueness
in Proposition \ref{prop:uniqueness}.
(2) Now we prove the estimate (\ref{eq:xkfinalest}).
We use the notation $\hat{\xi}_k(p)=\frac{\hat{x}_k(p)}{ \omega_H}$, $\xi_k(p)=\hat{\xi}_k(p)\big(1+\delta_k(p)\big)$.
We let $\bar{l}(\xi):=l(\xi \omega_H;\omega_H,p)$ and $\la\left(\xi\right):=z(\xi \omega_H)-\bar{l}(\xi)$.
After Lemma \ref{lem:linearisedground} (2)(i) we have that
\begin{equation}\label{eq:barlpproof} \tag{a}
 \inf_{\xi\geq 1} \left(\xi^{1-\frac{\alpha+1}{2\alpha}}\bar{l}'(\xi)\right) \gtrsim p\;,
\end{equation}
and after Theorem \ref{thm:contraction} (2), we have that
\begin{equation}\label{eq:lasproof} \tag{b}
 \|\la\|_{\infty,-\frac{1+\alpha}{2\alpha}+\e} \lesssim_\e p^{-1}\;,\quad \forall  \e \in \left(0,\frac{1+\alpha}{\alpha}\right].
\end{equation}
Using the mean value Theorem, together with (\ref{eq:lasproof}) and (\ref{eq:barlpproof}), we obtain that
\begin{equation*}
     \left|\delta_k(p)\right|=\left|\frac{\la\left(\hat{\xi}_k (1+\delta_k) \right)}{\xi_k \bar{l}' \left(\hat{\xi}_k (1+\delta'_k )\right)}\right|
     \lesssim_{\e} p^{-2} \hat{\xi}_k^{-\frac{1+\alpha}{\alpha}+\e} \;,
\end{equation*}
where in the above equation $\delta'_k$ is a real number such that $|\delta'_k|\leq |\delta_k|$.
Since $\xi_k\geq 1$ for every $k$ and, when $k/p$ is large $\hat{\xi}_k^{\frac{1+\alpha}{2\alpha}} \sim \frac{k+\frac12}{p}$,
we obtain the thesis.\\
(3) Estimate (\ref{eq:xkfinal}) was already proven in Theorem \ref{thm:contraction} (5).
\end{proof}

\section{On the ODE/IM correspondence}
\label{sec:ODEIMcorr}
In this Section we briefly introduce the ODE/IM correspondence for the Quantum KdV/Monster potentials; the interested reader
can read more details in \cite{doreyreview}, which is the standard reference
for early results in this theory, or in \cite{BLZ04} or in \cite{coma20}.

\subsection{Quantum KdV}
The Quantum KdV model was constructed by Bazhanov-Lukyanov-Zamolodchikov in \cite{baluzaI,baluzaII}
as an example of an infinite dimensional quantum theory integrable by the Bethe Ansatz.
In the free-field representations,
we start with the Fock representation $\mc{F}_{p}$
of the Heisenberg algebra with quasi-momentum $p$ and Planck constant
$\beta^2/2$ ($p\geq0$ and $\beta>0$). This induces
a (generically irreducible) Virasoro module $\mc{V}_{c,\Delta}$ with central charge $c:=c(\beta,p)$
and highest weight $\Delta:=\Delta(\beta,p)$, given by the expression
\begin{equation}\label{eq:cdelta}
 c=13-6\left(\beta^{2}+\beta^{-2}\right)\;,\quad \Delta=\frac12+\frac{p^2}{\beta^2}-\frac{1}{4}\left(\beta^{2}+\beta^{-2}\right)\;.
\end{equation}
The Hilbert space has further decomposition in spaces of level $N$
\begin{equation}
 \mc{V}_{c,\Delta}=\underset{N\geq 0}{\oplus}\mc{V}_{c,\Delta}^{(N)}\;,
\end{equation}
where  $\dim \mc{V}_{c,\Delta}^{(N)}$ is the number of integer partitions of $N$.

The level subspaces support the action of the operator-valued function (Q-transfer matrix) $\mc{Q}_+(x)$
\begin{equation}
 \mc{Q}_+(x): \mc{V}_{c,\Delta}^{(N)} \to \mc{V}_{c,\Delta}^{(N)}\;,
\end{equation}
whose eigenvalues $Q(x)$ satisfy the BAE (\ref{eq:BAEQ}),
provided
\begin{equation}\label{eq:alphabeta}
 \alpha=\beta^{-2}-1\;.
\end{equation}
Mathematically speaking, not much is known about
the eigenvalues of the operator-valued-function $\mc{Q}_+(x)$, however, as discussed in
\cite{BLZ04}, they should satisfy the following properties:
\begin{itemize}
 \item $Q(x)$ is a real entire function of $x$, of order $\frac{1+\alpha}{2\alpha}$, whose
 zeroes are almost all simple, real and positive, and it satisfies the normalisation (see \cite{BLZ04} eq. (A.5));
 \begin{align*}
&\lim_{x \to +\infty} x^{-\frac{1+\alpha}{2\alpha}}n_Q(x)=1,
 \end{align*}
 where by $n_Q$ we denote the counting function of $Q$, as per (\ref{eq:counting});
\item The zeroes of $Q$ are simple, real and positive if $2p\geq N+\frac12$.
\end{itemize}

\subsection{Monster potentials}
Let us now consider the Schr\"odinger equation
\begin{equation}\label{eq:schrintro}
\psi''(t)=\left(V(t)-E\right)\psi(t) \;,\quad E \in \C \;.
\end{equation}
We say that $V$ is a monster potential if it has the following form
\begin{equation}\label{eq:V}
V(t)=\frac{\ell(\ell+1)}{t^2}+t^{2\alpha}-2 \frac{d^2}{d t^2} \sum_{k=1}^N\,  \log\left(t^{2\alpha+2}-z_k\right) \;,
\end{equation}
In the above formula, $\{z_1,\dots,z_N\}$ are non-zero and pairwise distinct complex numbers chosen in such a way that
the monodromy about each value of $t$ solution of $t^{2\alpha+2}=z_k$ is trivial for every value of $E \in \bb{C}$
(in other words, all these singularities are apparent).
The latter requirement is equivalent to the following system of $N$ algebraic equations for the $N$ complex
unknowns $\{z_1,\dots,z_N\}$,
\begin{multline}\label{eq:algblz}
\sum_{j \neq k} \frac{z_k\left( z_k^2 +(3+\alpha)(1+2\alpha)z_k z_j+
\alpha (1+2 \alpha)z_j^2 \right)}{ (z_k-z_j)^3}- \frac{\alpha z_k}{ 4 (1+\alpha)} \\ 
= - \frac{4\ell\left(\ell+1\right)+1-4\alpha^2}{16\left(\alpha+1\right)} \;,\quad k\in\{1,\dots,N\}\;.
\end{multline}
The monster potential, as well as the algebraic system (\ref{eq:algblz}),
were introduced by Bazhanov-Lukyanov-Zamolodchikov in \cite{BLZ04} (see also \cite{fioravanti05})
to generalise the ODE/IM conjecture of Dorey-Tateo (corresponding to the case $N=\ell=0$) \cite{dorey98}.
In fact, the spectral determinant $D(E)$ of the central connection problem for the equation (\ref{eq:schrintro}) satisfies
the following properties:
\begin{itemize}
 \item $D(E)$ is an entire function of $E$, of order $\frac{1+\alpha}{2\alpha}$
 satisfying the normalisation (see \cite{BLZ04} eq. (22)) 
  \begin{align}
&\lim_{E \to +\infty} E^{-\frac{1+\alpha}{2\alpha}}n_D(E)=
\left(\frac{2\sqrt{\pi}\,\Gamma\left(\frac32+\frac{1}{2\alpha}\right)}
{\Gamma\left(1+\frac{1}{2\alpha}\right)}\right)^{\frac{2\alpha}{1+\alpha}} : =  \eta\;,\label{eq:etadef}
 \end{align}
 where $n_D$ is the counting function of $D$.
 \item $D(E)$ satisfies the BAE (\ref{eq:BAEQ}), provided
 \begin{equation}
\label{eq:ell}
p=\frac{2\ell+1}{\alpha+1}\;.
\end{equation}
 \end{itemize}
In our previous paper \cite{coma20} we studied the monster potentials in the large $\ell$ limit and we obtained a complete
classification -- up to the technical
Conjecture 5.9, on the existence of a certain Puiseaux series, that we were not able to prove in full generality.
To illustrate our results, we must introduce a nice mathematical object known as Wronskian of Hermite polynomials, which is associated
to any partition $[\nu]=(\nu_1,\dots,\nu_H)$ of $N$. It is the following polynomial of degree $N$
\begin{equation}\label{eq:Pnu}
P^{\snu}(t)=\mbox{Wr}[H_{\nu_H}(t),H_{\nu_{H-1}+1}(t),\dots,H_{\nu_1+H-1}(t)] \;,
\end{equation}
where $\displaystyle{H_{n}(t)=(-1)^ne^{\frac{t^2}{2}}\frac{d^n}{dt^n}e^{-\frac{t^2}{2}}}$
is the $n-$th Hermite polynomial and $\mbox{Wr}$ is the Wronskian of $H$ functions. We also denote by
\begin{equation}\label{eq:vnu}
 \underline{v}^{\snu}=(v_1^{\snu},\dots,v_N^{\snu}) \in \C^N \;,
\end{equation}
the, not necessarily pairwise distinct, roots of $P^{\snu}(t)$.

The main result of \cite{coma20} (transcribed in the notation of the present paper) can be condensed in the following statements.
Assuming that the partition $[\nu]$ satisfies the Conjecture 5.9 in \cite{coma20},
if $\ell$ is sufficiently large there exists a unique (up to symmetries of the $v_k^{\snu}$'s)
solution of the algebraic system (\ref{eq:algblz}) with asymptotics
\begin{align}
& \label{eq:zkexpansion}
z_k^{\snu}=\frac{\ell^2}{\alpha}+ \frac{(2\alpha+2)^{\frac34}}{\alpha}  v_k^{\snu} 
\ell^{\frac32}+ o\left(\ell^{-\frac32}\right) , \quad k\in\{1,\dots,N\}\;.
\end{align}
Moreover, denoting by $D^{\snu}(E;\ell)$ the corresponding spectral determinant (which we assume to be normalised
so that $D^{\snu}(0;\ell)=1$) then its zeroes (i.e. the energy levels) are simple and positive, and
have the following -- non-uniform -- asymptotic behaviour as $\ell\to+\infty$
\begin{multline}
\label{eq:Ekexpfinal}
E_k^{\snu}(\ell) =
\frac{\left(1+\alpha\right)}{\alpha^{\frac{\alpha}{1+\alpha}}} \ell^{\frac{2\alpha}{1+\alpha}}\left[1+\frac{\alpha}{1+\alpha}
\left(1+2\left(2+2\alpha\right)^{\frac12}\left(k+\frac12\right)\right)\ell^{-1}+O(\ell^{-2})\right]\\
\forall k\in\bb{Z}_p\setminus \bb{H}_{\snu}.
\end{multline}

\begin{remark}
The reason beyond the asymptotics (\ref{eq:zkexpansion}) and (\ref{eq:Ekexpfinal}) lie in
the following discovery of the authors \cite{coma20}: in the large momentum limit
the monster potentials converge, when appropriately scaled, to monodromy-free rational extensions of the
harmonic oscillators. The latter are potentials of the form $t^2-2 \frac{d^2}{dt^2} \ln P(t)$,
where $P$ is a polynomial such that the corresponding stationary Schr\"odinger equation has trivial monodromy about
every pole for every value of the energy
\footnote{The problem of studying these potentials, or the potentials isospectral with them, is very interesting in itself,
and there is a vast literature dedicate to it, see e.g. \cite{mckean82} and \cite{felder12}.}.
Every such a potential, as proven in \cite{oblomkov99}, can be obtained via a chain of Darboux transformations. Hence it is necessarily of the form
\begin{equation}\label{eq:oblomkov}
t^2-2 \frac{d^2}{dt^2} \ln P^{[\nu]}(t)\;,
\end{equation}
for some partition $\nu \vdash N$; moreover, see \cite{coma20}, the energy levels -- when imposing that the eigenfuctions decay at $\pm \infty$ -- for the above potential are
\begin{equation}\label{eq:comasp}
 \mathcal{E}_k^{[\nu]}= 2k+1 \;,\quad \forall k\in\bb{Z}_{\frac{N}{2}}\setminus \bb{H}_{[\nu]}\;.
\end{equation}
As it is proven in \cite{coma20}, (\ref{eq:zkexpansion}) follows from (\ref{eq:oblomkov}) and (\ref{eq:Ekexpfinal}) follows from (\ref{eq:comasp}).
\end{remark}

\subsection{The bijection between solutions of the BAE and monster potentials}
Let $Q^{\snu}(x;p)$ be the (unique, for $p$ sufficiently large)
solution of the BAE (\ref{eq:BAEQ}) with set of hole-numbers $\bb{H}_{\snu}$.
After Theorem \ref{cor:main}, its roots have the asymptotic expansion
\begin{align}
&x_k(p)=
A\,p^{\frac{2\alpha}{1+\alpha}}\left[1+\frac{\sqrt{2}\alpha}{\left(1+\alpha\right)^{\frac32}}\left(k+\frac12\right)p^{-1} +O(p^{-2})\right] \;,\quad k\in\bb{Z}_{p}\backslash\bb{H}_{\snu}\;,\notag \\
&\mbox{with } A=(1+\alpha) \left(\frac{1}{\sqrt{\pi\alpha}}
 \frac{\Gamma \left(\frac{1}{2 \alpha }\right)}{\Gamma \left(\frac{1+\alpha}{2 \alpha }\right)}\right)^{\frac{2 \alpha }{\alpha +1}} \;.
\label{eq:xkfinalm}
\end{align}
 Therefore comparing (\ref{eq:xkfinalm}) with (\ref{eq:Ekexpfinal}) using the relation (\ref{eq:ell}), we obtain the following immediate
 result 
 \begin{equation}\label{eq:DvsQ}
  D^{\snu}\left(E;\ell\right)=Q^{\snu}\left(
  \frac{E}{\eta}; \frac{2\ell+1}{\alpha+1}\right)\;,
 \end{equation}
 with $\eta$ as per (\ref{eq:etadef}).
The latter equation establishes the bijection between normalised solutions of the BAE (\ref{eq:BAEQ}) and monster potentials,
if $\alpha>1$ and $p$ large.

\begin{remark}
 The present paper only deals with the case $\alpha>1$ but we expect that the exact ODE/IM correspondence (6.17) holds for all $\alpha>0$ provided
 $p$ is sufficiently large (depending on $\alpha$).
 In particular, the case $\alpha=1$ can be checked explicitly. In fact,
 according to \cite[Equation (5.24)]{coma20}, the exact energy levels of the monster potentials with $\alpha=1$ are
\begin{equation}\label{eq:specharmonic}
 E^{[\nu]}_k(\ell)= 2 \ell + 3 + 4 k \;,\quad \forall k\in\bb{Z}_p\setminus \bb{H}_{[\nu]}.
\end{equation}
These numbers coincide with the Bethe roots for the solutions of the BAE of Quantum KdV with $\beta^2=\frac12$ (that is $\alpha=1$),
as computed in \cite[Equations (4.30,4.33)]{baluzaII}, provided
the multiplicative constant, discussed in \cite[Footnote 1 and Equation (15)]{BLZ04}, relating the quantum mechanical energy
$E$ with the Quantum KdV spectral parameter $\la^2$ is taken into
account.
\end{remark}

\begin{remark}
We can now explain the origin of the formula for the solution of the linearised IBA equation in terms of WKB integrals,
see Theorem \ref{prop:fproperty}.
Let us in fact consider the anharmonic oscillator
\begin{equation}\label{eq:anharmonic}
\psi''(t)=\left(t^{2\alpha}+ \frac{\ell(\ell+1)}{t^2}-E\right)\psi(t) \;,\quad E \in \C \;,
\end{equation}
which corresponds to the case of an empty partition.
A standard WKB analysis leads us to the following quantisation condition in the large $\ell$ limit
\begin{equation*}
 I(E_k;\ell)\sim 2k+1\;,
\end{equation*}
where
\begin{equation*}
 I(E;\ell)= \frac{2}{\pi} \int_{t_1(E,\ell)}^{t_2(E,\ell)} \sqrt{E t^2-t^{2\alpha+2}-\left(\ell+\frac12\right)^2} \frac{dt}{t}\;,
\end{equation*}
with $0<t_1(E,\ell)<t_2(E,\ell)$ the two roots of 
$E t^2-t^{2\alpha+2}-\left(\ell+\frac12\right)^2$ on the positive real axis.
A simple computation yields
\begin{align*}
I(E;\ell)= \left(2\ell+1\right) S\left(E \left(\ell+\frac12\right)^{-\frac{2\alpha}{\alpha+1}}\right),
\end{align*}
where the function $S$ is as per (\ref{eq:Jfun}). Hence the WKB formula for the eigenvalues eventually reads
\begin{equation}\label{eq:WKBquant}
 S\left(E \left(\ell+\frac12\right)^{-\frac{2\alpha}{\alpha+1}}\right) \sim \frac{2 k+1}{2\ell+1} \;.
\end{equation}
In Theorem \ref{cor:main}, we have found that the same approximation formula holds -- provided $H=0$, $p=\frac{2\ell+1}{\alpha+1}$ and
$x_k= E_k \eta $ with $\eta$ as per (\ref{eq:etadef}) -- in terms of a 
particular solution of the linearised IBA equation (\ref{eq:IBAStL}). The two formulae must coincide and in fact
$S$ coincides with a particular solution of the linearised IBA equation, as we have proven in Proposition \ref{prop:tauJ}.
In other words, under the ODE/IM correspondence, in the large momentum limit
the linearisation of the IBA equation corresponds
to the WKB approximation!
A similar phenomenon was already observed in \cite[Section 3.2]{dorey20}.
 \end{remark}
\begin{remark}
 The ODE/IM between the Quantum KdV model and the monster potentials have been vastly generalised,
 see \cite{doreyreview,dorey20} and references therein. Of these generalisations
 there are families which contain the original one as a particular case. These correspond to
 \begin{enumerate}
  \item Massive deformations
 of Quantum KdV \cite{lukyanov10,dorey20};
 \item Higher rank generalisations of Quantum KdV that are associated to any affine algebra, see \cite{marava15,marava17,fh16} inspired by \cite{FF11}.
 \item Fields theories corresponding to the thermodynamic limit
 of the inhomogeneous XXZ chain \cite{baz21}.
 \end{enumerate}
In the cases (2) and (3) the nature of the BAE and of the Hilbert space of the theory is very similar to the one of Quantum KdV.
Moreover the analogue of the DDV equations are known \cite{zinn98}
and the analogue of monster potentials were defined \cite{FF11,fh16}
(even though its explicit realisation may look really \textit{monstrous}, see
\cite{mara18}). It is not too optimistic to think that a study of the ODE/IM correspondence
can be developed along the same lines as in \cite{coma20} and the present paper, and that an explicit
correspondence such as (\ref{eq:DvsQ}) can be established.
\end{remark}

\def\cprime{$'$} \def\cprime{$'$} \def\cprime{$'$} \def\cprime{$'$}
  \def\cprime{$'$} \def\cprime{$'$} \def\cprime{$'$} \def\cprime{$'$}
  \def\cprime{$'$} \def\cprime{$'$} \def\cydot{\leavevmode\raise.4ex\hbox{.}}
  \def\cprime{$'$} \def\cprime{$'$} \def\cprime{$'$}


\begin{thebibliography}{10}

\bibitem{avila04}
A.~Avila.
\newblock Convergence of an exact quantization scheme.
\newblock {\em Communications in mathematical physics}, 249(2):305--318, 2004.

\bibitem{baxter16}
R.~Baxter.
\newblock {\em Exactly solved models in statistical mechanics}.
\newblock Elsevier, 2016.

\bibitem{baz21}
V.~Bazhanov, G.~Kotousov, S.~Koval, and S.~Lukyanov.
\newblock Scaling limit of the z2 invariant inhomogeneous six-vertex model.
\newblock {\em Nuclear Physics B}, 965:115337, 2021.

\bibitem{baluzaI}
V.~Bazhanov, S.~Lukyanov, and A.~Zamolodchikov.
\newblock Integrable structure of conformal field theory, quantum {K}d{V}
  theory and thermodynamic {B}ethe ansatz.
\newblock {\em Communications in Mathematical Physics}, 177(2), 1996.

\bibitem{baluzaII}
V.~Bazhanov, S.~Lukyanov, and A.~Zamolodchikov.
\newblock Integrable structure of conformal field theory {II}. {Q}-operator and
  {DDV} equation.
\newblock {\em Communications in Mathematical Physics}, 190(2):247--278, 1997.

\bibitem{bazhanov01}
V.~Bazhanov, S.~Lukyanov, and A.~Zamolodchikov.
\newblock {Spectral determinants for Schrodinger equation and Q operators of
  conformal field theory}.
\newblock {\em J.Statist.Phys.}, 102:567--576, 2001.

\bibitem{BLZ04}
V.~Bazhanov, S.~Lukyanov, and A.~Zamolodchikov.
\newblock Higher-level eigenvalues of {Q}-operators and {S}chroedinger
  equation.
\newblock {\em Adv. Theor. Math. Phys.}, 7:711, 2004.

\bibitem{bethe31}
H.~Bethe.
\newblock Zur theorie der metalle.
\newblock {\em Zeitschrift f{\"u}r Physik}, 71(3):205--226, 1931.

\bibitem{coma20}
R.~Conti and D.~Masoero.
\newblock Counting monster potentials.
\newblock {\em JHEP}, 02:059, 2021.

\bibitem{destri92}
C.~Destri and H.J. de~Vega.
\newblock {New thermodynamic Bethe ansatz equations without strings}.
\newblock {\em Phys.Rev.Lett.}, 69:2313--2317, 1992.

\bibitem{dorey20}
P.~Dorey, C.~Dunning, S.~Negro, and R.~Tateo.
\newblock Geometric aspects of the ode/im correspondence.
\newblock {\em Journal of Physics A: Mathematical and Theoretical},
  53(22):223001, 2020.

\bibitem{doreyreview}
P.~Dorey, C.~Dunning, and R.~Tateo.
\newblock The {ODE}/{IM} correspondence.
\newblock {\em Journal of Physics. A. Mathematical and Theoretical}, 40(32),
  2007.

\bibitem{dorey98}
P.~Dorey and R.~Tateo.
\newblock On the relation between {S}tokes multipliers and the {T}-{Q} systems
  of conformal field theory.
\newblock {\em Nuclear Phys. B}, 563(3):573--602, 1999.

\bibitem{dunning00}
C.~Dunning.
\newblock PhD thesis, Durham University, 2000.

\bibitem{bateman1}
A.~Erd{\'e}lyi, W.~Magnus, F.~Oberhettinger, and F.~Tricomi.
\newblock {\em Higher transcendental functions. {V}ol. {I}}.
\newblock McGraw-Hill Book Company, Inc., New York-Toronto-London, 1953.

\bibitem{eremenko20}
A.~Eremenko.
\newblock Entire functions, {PT}-symmetry and {V}oros’s quantization scheme.
\newblock {\em Matematychni Studii}, 54(2):203--210, 2020.

\bibitem{faddeev95}
L~Faddeev.
\newblock Instructive history of the quantum inverse scattering method.
\newblock In {\em Kd{V} 95}, pages 69--84. Springer, 1995.

\bibitem{FF11}
B.~Feigin and E.~Frenkel.
\newblock Quantization of soliton systems and {L}anglands duality.
\newblock In {\em Exploring new structures and natural constructions in
  mathematical physics}, volume~61 of {\em Adv. Stud. Pure Math.}, pages
  185--274. Math. Soc. Japan, Tokyo, 2011.

\bibitem{felder12}
G.~Felder, A.~Hemery, and A.~Veselov.
\newblock Zeros of wronskians of hermite polynomials and young diagrams.
\newblock {\em Physica D: Nonlinear Phenomena}, 241(23-24):2131--2137, 2012.

\bibitem{fioravanti05}
D.~Fioravanti.
\newblock Geometrical loci and {CFT}s via the {V}irasoro symmetry of the
  m{K}d{V}-{SG} hierarchy: an excursus.
\newblock {\em Phys. Lett. B}, 609(1-2):173--179, 2005.

\bibitem{fioravanti97}
D~Fioravanti, A~Mariottini, E~Quattrini, and F~Ravanini.
\newblock Excited state {D}estri-{D}e vega equation for sine-gordon and
  restricted sine-{G}ordon models.
\newblock {\em Physics Letters B}, 390(1-4):243--251, 1997.

\bibitem{fioravanti03}
D.~Fioravanti and M.~Rossi.
\newblock Exact conserved quantities on the cylinder {I}: conformal case.
\newblock {\em Journal of High Energy Physics}, 2003(07):031, 2003.

\bibitem{fh16}
E.~Frenkel and D.~Hernandez.
\newblock Spectra of quantum {K}dv hamiltonians, {L}anglands duality, and
  affine opers.
\newblock {\em Communications in Mathematical Physics}, 362(2):361--414, 2018.

\bibitem{gaudin14}
Michel Gaudin.
\newblock {\em The Bethe Wavefunction}.
\newblock Cambridge University Press, 2014.

\bibitem{hilfiker20}
L.~Hilfiker and I.~Runkel.
\newblock Existence and uniqueness of solutions to y-systems and tba equations.
\newblock {\em Annales Henri Poincar{\'e}}, 21(3):941--991, 2020.

\bibitem{hulthen38}
L~Hulth{\'e}n.
\newblock {\"U}ber das {A}ustauschproblem eines {K}ristalles.
\newblock {\em Ark. {F}ys. {A}}, 26:11, 1938.

\bibitem{klumper91}
A.~Klumper, M.~Batchelor, and P.~Pearce.
\newblock Central charges of the 6-and 19-vertex models with twisted boundary
  conditions.
\newblock {\em Journal of Physics A: Mathematical and General}, 24(13):3111,
  1991.

\bibitem{kozlowski18}
K.~Kozlowski.
\newblock On condensation properties of bethe roots associated with the xxz
  chain.
\newblock {\em Communications in Mathematical Physics}, 357(3):1009--1069,
  2018.

\bibitem{levin96}
B.~Levin.
\newblock {\em Lectures on entire functions}, volume 150.
\newblock American Mathematical Soc., 1996.

\bibitem{lieb1967}
E~Lieb.
\newblock Residual entropy of square ice.
\newblock {\em Physical Review}, 162(1):162, 1967.

\bibitem{lukyanov10}
S.L. Lukyanov and A.B. Zamolodchikov.
\newblock {Quantum Sine(h)-Gordon Model and Classical Integrable Equations}.
\newblock {\em JHEP}, 1007:008, 2010.

\bibitem{mara18}
D.~Masoero and A.~Raimondo.
\newblock Opers for higher states of quantum kdv models.
\newblock {\em Communications in Mathematical Physics}, 378(1):1--74, 2020.

\bibitem{marava15}
D.~Masoero, A.~Raimondo, and D.~Valeri.
\newblock {B}ethe {A}nsatz and the {S}pectral {T}heory of {A}ffine {L}ie
  {A}lgebra-{V}alued {C}onnections {I}. {T}he simply-laced {C}ase.
\newblock {\em Comm. Math. Phys.}, 344(3):719--750, 2016.

\bibitem{marava17}
D.~Masoero, A.~Raimondo, and D.~Valeri.
\newblock Bethe {A}nsatz and the {S}pectral {T}heory of {A}ffine {L}ie
  algebra--{V}alued {C}onnections {II}: {T}he {N}on {S}imply--{L}aced {C}ase.
\newblock {\em Comm. Math. Phys.}, 349(3):1063--1105, 2017.

\bibitem{mckean82}
H.~McKean and E.~Trubowitz.
\newblock The spectral class of the quantum-mechanical harmonic oscillator.
\newblock {\em Communications in Mathematical Physics}, 82(4):471--495, 1982.

\bibitem{stefanoprivate}
S.~Negro.
\newblock Excited states {DDV}. {P}rivate communication.

\bibitem{oblomkov99}
A.~Oblomkov.
\newblock Monodromy-free schr{\"o}dinger operators with quadratically
  increasing potentials.
\newblock {\em Theoretical and Mathematical Physics}, 121(3):1574--1584, 1999.

\bibitem{orbach58}
R~Orbach.
\newblock Linear antiferromagnetic chain with anisotropic coupling.
\newblock {\em Physical Review}, 112(2):309, 1958.

\bibitem{takhtadzhyan79}
L.~Takhtadzhyan and L.~Faddeev.
\newblock The quantum method of the inverse problem and the heisenberg xyz
  model.
\newblock {\em Uspekhi Matematicheskikh Nauk}, 34(5):13--63, 1979.

\bibitem{voros97}
A.~Voros.
\newblock Exact anharmonic quantization condition (in one dimension).
\newblock In {\em Quasiclassical methods}, pages 189--224. 1997.

\bibitem{yangyang}
C.~N. Yang and C.~P. Yang.
\newblock One-dimensional chain of anisotropic spin-spin interactions. i. proof
  of bethe's hypothesis for ground state in a finite system.
\newblock {\em Phys. Rev.}, 150:321--327, 1966.

\bibitem{zinn98}
P~Zinn-Justin.
\newblock Nonlinear integral equations for complex affine {T}oda models
  associated with simply laced lie algebras.
\newblock {\em Journal of Physics A: Mathematical and General}, 31(31):6747,
  1998.

\end{thebibliography}
\end{document}